\documentclass[
    aps,
    onecolumn,
    notitlepage,
    nofootinbib,
    superscriptaddress,
    longbibliography,
    floatfix
]{revtex4-2}

\usepackage{comment}
\usepackage[T1]{fontenc}
\usepackage[utf8]{inputenc}
\usepackage{lmodern}

\usepackage{amsmath,amssymb,amsthm,mathtools,bm}

\usepackage{graphicx}
\usepackage{booktabs}
\usepackage{array}[=2016-10-06]
\usepackage{tabularx}
\usepackage{multirow}
\usepackage{times}

\usepackage{enumitem}
\usepackage{microtype}
\usepackage{xcolor}
\usepackage{needspace}

\setlist{itemsep=3pt,topsep=5pt}
\renewcommand{\arraystretch}{1.17}

\usepackage{physics}

\usepackage{tikz}
\usetikzlibrary{arrows.meta,positioning,calc}

\makeatletter
\newcommand{\appendixcontents}{%
  \section*{CONTENTS OF THE APPENDICES}%
  \begingroup
    \c@tocdepth=2\relax
    \hypersetup{linktoc=all}%
    \@starttoc{apc}%
  \endgroup
  \let\apc@addcontentsline\addcontentsline
  \def\addcontentsline##1##2##3{%
    \apc@addcontentsline{##1}{##2}{##3}%
    \begingroup
      \def\apc@ext{##1}\def\apc@toc{toc}%
      \def\apc@kind{##2}%
      \def\apc@section{section}\def\apc@subsection{subsection}%
      \def\apc@write{%
        \addtocontents{apc}{%
          \protect\contentsline{##2}{##3}{\thepage}{\@currentHref}%
          \protected@file@percent}}%
      \ifx\apc@ext\apc@toc
        \ifx\apc@kind\apc@section
          \apc@write
        \else
          \ifx\apc@kind\apc@subsection
            \apc@write
          \fi
        \fi
      \fi
    \endgroup}%
  \clearpage
}
\makeatother

\theoremstyle{plain}
\newtheorem{theorem}{Theorem}[section]
\newtheorem{proposition}[theorem]{Proposition}
\newtheorem{lemma}[theorem]{Lemma}
\newtheorem{corollary}[theorem]{Corollary}

\theoremstyle{definition}
\newtheorem{definition}[theorem]{Definition}

\theoremstyle{remark}

\numberwithin{equation}{section}

\DeclareMathOperator{\diag}{diag}
\DeclareMathOperator{\Ran}{ran}
\DeclareMathOperator{\Span}{span}
\DeclareMathOperator{\Spec}{spec}
\DeclareMathOperator{\Sym}{Sym}

\renewcommand{\dd}{\mathrm d}
\newcommand{\e}{\mathrm e}

\renewcommand{\ip}[2]{\langle #1,#2\rangle}

\newcommand{\GF}{\mathcal G_{\mathrm F}}
\newcommand{\GFs}{\mathcal G_{\mathrm F}^{\sigma}}
\newcommand{\Sl}{\mathcal S}
\newcommand{\Gzero}{\mathcal G_0}
\newcommand{\Coh}{\mathcal C}
\newcommand{\GB}{\mathcal G}

\renewcommand{\epsilon}{\varepsilon}   

\newcommand{\aqa}{$\langle aQa ^L\rangle $ Applied Quantum Algorithms, Universiteit Leiden, Leiden, Netherlands}
\newcommand{\liacs}{LIACS, Universiteit Leiden, Leiden, Netherlands}
\newcommand{\epflip}{Institute of Physics, Ecole Polytechnique F\'{e}d\'{e}rale de Lausanne (EPFL), CH-1015 Lausanne, Switzerland}
\newcommand{\epflcqse}{Centre for Quantum Science and Engineering, Ecole Polytechnique F\'{e}d\'{e}rale de Lausanne (EPFL), CH-1015 Lausanne, Switzerland}
\newcommand{\fub}{Dahlem Center for Complex Quantum Systems, Freie Universität Berlin, 14195 Berlin, Germany}
\newcommand{\hhi}{Fraunhofer Heinrich Hertz Institute, 10587 Berlin, Germany}
\newcommand{\hzb}{Helmholtz-Zentrum Berlin f{\"u}r Materialien und Energie, 14109 Berlin, Germany}

\usepackage[pdfpagelabels]{hyperref}

\hypersetup{
    colorlinks=true,
    linkcolor = [rgb]{0.70,0.13,0.13},
    citecolor = [rgb]{0.13,0.55,0.13},
    urlcolor  = [rgb]{0.25,0.41,0.88},
    bookmarksnumbered=true,
    pdftitle={The geometry of robust Gaussianity tests},
    pdfsubject={Geometric stability, non-tolerant testing, and tolerant testing of pure Gaussian states},
    pdfkeywords={Gaussian states, fermions, bosons, property testing, geometric stability}
}

\usepackage[nameinlink,noabbrev]{cleveref}

\begin{document}

\title{Testing quantum Gaussianity with constant sample complexity} 

\author{Mahtab Yaghubi Rad}
\affiliation{\aqa}
\affiliation{\liacs}

\author{Ricard Puig}
\affiliation{\epflip}
\affiliation{\epflcqse}

\author{Saksham Hassanandani}
\affiliation{Department of Mathematics, University of Colorado Boulder, Boulder, CO 80309, USA}

\author{Luke~Coffman}
\affiliation{Department of Physics, Harvard University, Cambridge, MA 02138, USA}
\affiliation{School of Engineering and Applied Sciences, Harvard University, Cambridge, MA 02138, USA}

\author{Jens Eisert}
\affiliation{\fub}
\affiliation{\hhi}
\affiliation{\hzb}

\author{Carlos Bravo-Prieto}
\thanks{These authors contributed equally to the supervision of this work.}
\affiliation{\fub}

\author{Antonio Anna Mele}
\thanks{These authors contributed equally to the supervision of this work.}
\affiliation{\fub}

\begin{abstract}
Efficiently testing whether a quantum state possesses a given structure is both a fundamental and a practical task in quantum information. Among the most important structured families are Gaussian states, which underpin quantum optics and many-body physics while defining paradigmatic regimes of efficient classical simulation. Despite the central role of Gaussian states, optimal tests of Gaussianity have remained unknown. Here we give optimal and robust algorithms for testing fermionic and bosonic Gaussian states, with sample complexity independent of system size, particle number, and energy. Remarkably, the protocols implement experimentally feasible measurements associated with known exact symmetry characterizations of Gaussianity, using Bell sampling on two copies for fermions and passive interferometry with photon counting on three copies for bosons. We further obtain tolerant testers requiring joint measurements on a larger, but still constant, number of copies.
At the heart of our analysis is a unified approach---which we call the \emph{gradient-flow method}---that applies across all structured families considered here and underlies our optimal testers: representation theory turns approximate satisfaction of the defining symmetry into quantitative control of the gradient of the rejection probability, allowing a continuous descent from the input state to the target family whose length bounds their distance. 
\end{abstract}
 
\maketitle

\vspace*{-0.7cm}
\section{Introduction}\label{sec:introduction}

Quantum mechanics allows physical systems to exhibit properties which enable them to go beyond familiar classical methods in computation, sensing, communication, and simulation. Therefore, a central experimental task is to verify that an unknown state prepared in the laboratory has the property needed for the task at hand. Gaussianity in particular has a 
long and rich history as a guiding principle across physics and in recent years quantum information. In quantum field theory, Gaussian theories provide a tractable starting point for studying interacting systems through mean-field approximations and Wick’s theorem \cite{Stevenson1985}. 
In fermionic systems, they describe ground and thermal states of quadratic Hamiltonians and the quasiparticle states underlying much of quantum many-body physics~\cite{hackl,bravyi,terhal_divincenzo}. In bosonic systems, they encompass coherent, squeezed, thermal, and multimode Gaussian states, and are ubiquitous in quantum optics and continuous-variable quantum information~\cite{braunstein_vanloock,weedbrook_gaussian,eisert_plenio_cv}. 

At the same time, Gaussianity is closely tied to classical tractability: free-fermionic dynamics and bosonic Gaussian states and processes can be efficiently simulated classically, while non-Gaussian resources are needed to go beyond these tractable regimes and provide key ingredients for universal quantum computation in both settings~\cite{terhal_divincenzo,bravyi,bartlett_cv_simulation,bartlett_optical,weedbrook_gaussian,demelo,bell,lloyd_braunstein,albarelli_resource,mari_eisert}. With Gaussian-like properties being exhibited by stabilizer states~\cite{gross_hudsons_2006,bu_discrete_2023,bu_quantum_2023,bu_quantum_2025,bu_stabilizer_2023} and the broader group theoretic formulation of resource theories~\cite{somma_quantum_2005}, they have acted as a unifying folklore concept in quantum information theory~\cite{bauer_quadratic_2026}. It is therefore conceptually natural and practically highly relevant to ask whether Gaussianity itself can be efficiently certified from experimental data.

In technical terms, this question is most naturally formulated as a quantum property-testing problem~\cite{montanaro_dewolf,harrow_montanaro}: given copies of an unknown pure $n$-mode state vector $\ket\psi$, distinguish whether it is Gaussian or at least $\varepsilon$ far from every pure Gaussian state in trace distance with failure probability at most $\delta$. In contrast to tomography, which aims to learn a classical description of the unknown state~\cite{bittel_fermionic_tomography,chen_gaussian_tomography}, property testing only asks for this binary decision and can therefore potentially require far fewer copies.

Using the symmetries of Gaussian states, multiple few-copy procedures for testing have been identified. For fermions, pure Gaussian states satisfy a two-copy Majorana constraint~\cite{bravyi,demelo}, which can be measured 
in an experimentally friendly way using Bell sampling~\cite{haug,bell}. For bosons, passive interferometry gives analogous experimentally feasible tests using two copies for centered pure Gaussian states and three copies for general pure Gaussian states~\cite{girardi}. These tests are known to be \emph{exact}: they accept every Gaussian state with certainty, while every non-Gaussian state is detected with nonzero probability. Exactness, however, does not give a quantitative guarantee on this probability, which could in principle become arbitrarily small as the number of modes grows, even for states that remain a fixed distance from the Gaussian family. We call a test \emph{robust} if every state at distance at least $\varepsilon$ from the Gaussian family is detected with probability bounded below by a function of $\varepsilon$ alone. For fermions, the known bound for the Bell-sampling robust test scales with the number of modes~\cite{haug}, while for bosons the global guarantee requires additional energy assumptions~\cite{girardi}. This leads to our main question:
\begin{center}
    \emph{Can Gaussianity be tested with a constant number of samples that is independent of the number of modes?}
\end{center}

While believed to be possible, closing the gap between 
known constant lower bounds and mode-dependent upper bounds has remained a technical challenge, casting doubt on the true answer~\cite{bittel_fermionic_tomography,girardi,haug,bell}. In this 
work, we show that the answer to this central question is affirmative. Pure fermionic and bosonic Gaussianity can be tested using $\Theta(\varepsilon^{-2}\log(1/\delta))$ samples, independently of the number of modes and, for bosons, without any energy dependence. Our results cover pure fermionic Gaussian states and their particle-number-preserving subclass of Slater determinants, as well as general pure bosonic Gaussian states and the important special cases of centered Gaussian and coherent states. An overview of our framework is presented in Figure~\ref{F1}. 

\begin{figure}[t]
  \centering
  \includegraphics[width=0.97\textwidth]{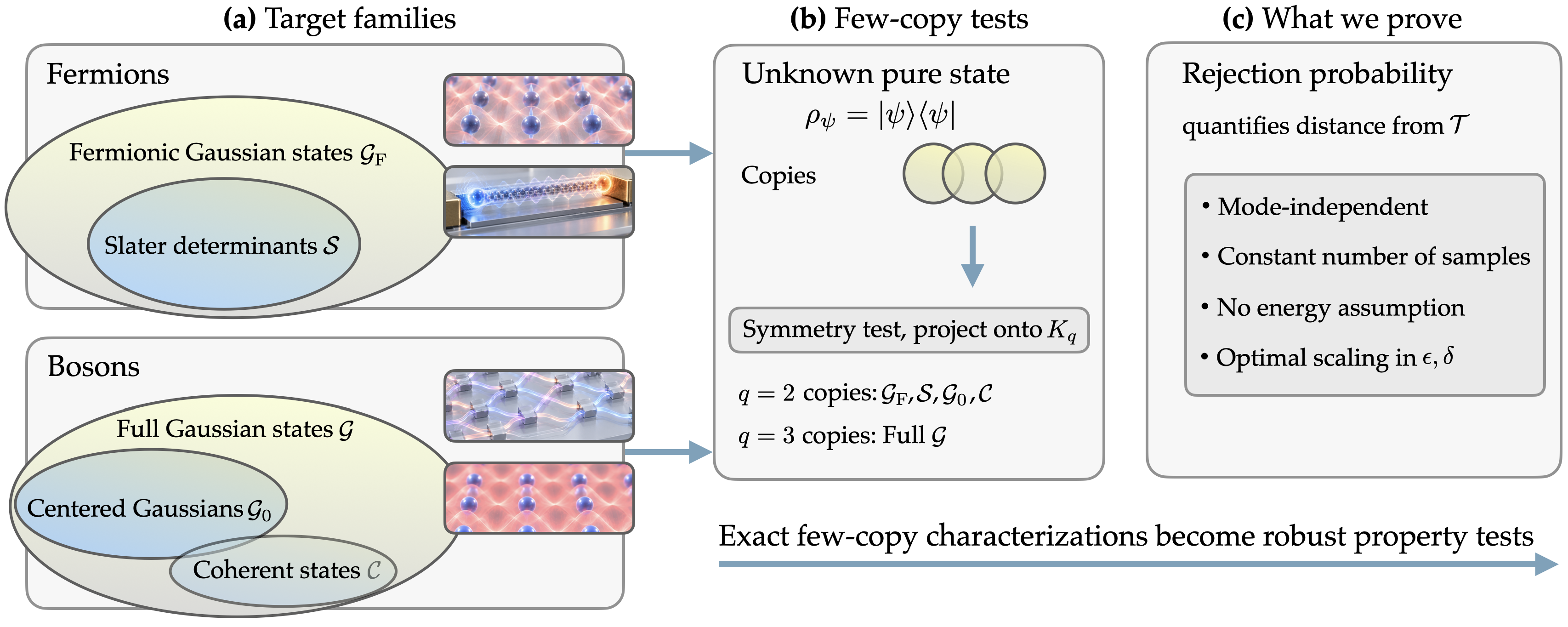}
  \caption{\textbf{Framework for robust few-copy Gaussianity testing.}
    \textbf{(a)} We consider five target families of pure states: fermionic Gaussian states $\mathcal G_F$ and their number-conserving subclass of Slater determinants $\mathcal S$, as well as full bosonic Gaussian states $\mathcal G$, centered Gaussian states $\mathcal G_0$, and coherent states $\mathcal C$. Also shown a representations of Majorana fermions and cold fermionic atoms in optical lattices as typical physical architectures for fermions, and linear optical networks and
    cold bosonic in optical lattices as typical physical platforms for bosons. \textbf{(b)} Given identical copies of an unknown state vector $\rho_\psi=\ket{\psi}\!\bra{\psi}$, the tests project onto a symmetry-defined acceptance subspace $K_q$. Two copies suffice for $\mathcal G_F$, $\mathcal S$, $\mathcal G_0$, and $\mathcal C$, whereas testing full bosonic Gaussianity requires three copies.
  \textbf{(c)} We prove that the resulting rejection probability robustly controls the distance from the respective target family, with constants independent of the number of modes. The corresponding tests have constant sample complexity, require no energy assumption in the bosonic setting, and achieve optimal dependence on the accuracy and confidence parameters $\varepsilon$ and $\delta$. Thus, exact few-copy characterizations of Gaussian states give rise to robust property tests.}
  \label{F1}
\end{figure}

The resulting protocols are the same experimentally friendly few-copy measurements suggested by the exact characterizations of the target families. Thus, our main contribution is to prove that these measurements are indeed robust. 

At a high level, the proof combines representation theory with the geometry of the rejection probability at a local and global level. Our main task is to turn exact few-copy characterizations into robust tests. Taking several identical copies gives a useful simplification: membership in the nonlinear target family becomes a condition that the tensor power lies in a fixed linear accepted subspace. Rejection is the squared distance to this subspace in the multiple-copy space, while robustness concerns the input's distance to the target family in the one-copy space. We must therefore relate distances in two different spaces: the one-copy space and the Hilbert space of $q$ copies. 
Because the input tensor power and its projection are symmetric
under copy permutations, we can work within the symmetric $q$-copy subspace.
The linear description gives us the advantage of analyzing acceptance through an orthogonal projector. However, not every vector in this linear subspace represents  $q$-identical copies of a single state. The challenge is to show that a tensor power close to this subspace comes from an input close to the target family. 

We introduce a general framework
based on \emph{gradient-flow} to establish this connection. Representation-theoretic estimates provide the gradient bounds needed. When initial rejection is
sufficiently small, this bound guarantees that the normalized
input converges to an exactly accepted state and, importantly, bounds the
total length of its path in the one-copy space. This length is
controlled by the initial rejection, showing that an input with
small rejection travels only a short distance before reaching
a target. Separately, expansions around closest targets
determine how sensitively the tests detect small deviations.
Combining these local estimates with the global bounds gives
the higher-copy tolerant-testing guarantees in the stated
small-distance regime. 

Our geometric contribution is a general mechanism for turning exact few-copy characterizations into robust tests. The framework can extend to other structured families of quantum states whenever analogous exact characterizations and gradient bounds can be established.

\begin{figure}[t]
 \centering
\includegraphics[width=0.97\textwidth]{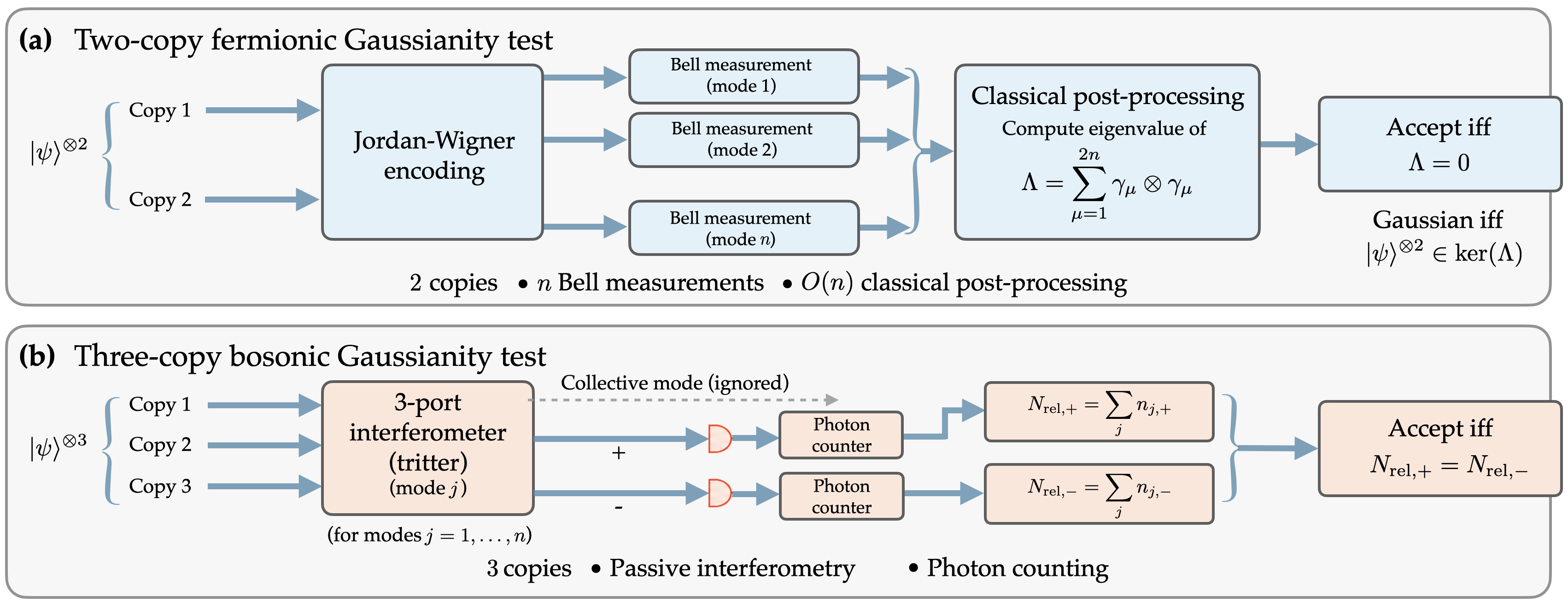}
\caption{
\textbf{Operational implementations of the few-copy Gaussianity tests.}
\textbf{(a)} Fermionic Gaussianity is tested on two copies by projecting onto the kernel of
$\Lambda=\sum_{\mu=1}^{2n}\gamma_\mu\otimes\gamma_\mu$.
Under a Jordan--Wigner encoding, this can be implemented by Bell measurements on corresponding orbital pairs, followed by linear-time classical post-processing.
\textbf{(b)} General bosonic Gaussianity is tested on three copies.
For each physical mode, a passive three-port interferometer separates the collective copy direction from two relative directions.
After photon counting on the two relative output families, the test accepts iff their total photon numbers satisfy
$N_{\mathrm{rel},+}=N_{\mathrm{rel},-}$.
The collective outputs are ignored.
}
\label{fig:protocols}
\end{figure}

\section{Main results and techniques}\label{sec:main-results-and-techniques}

We now present our main results for testing pure fermionic and bosonic Gaussian states. For an unknown pure state $\ket\psi$, let $d_{\mathcal G}(\psi)$ denote its trace distance from the corresponding pure Gaussian family $\mathcal G$. We begin with the non-tolerant problem: distinguishing an exactly Gaussian state from a state at distance at least $\varepsilon$ from every Gaussian state.

\begin{theorem}[Non-tolerant testing of Gaussian states]
\label{thm:main-nontolerant} Let $\mathcal G$ denote either the family of pure fermionic Gaussian states or the family of pure bosonic Gaussian states. There exists a test that accepts every $\ket\psi\in\mathcal G$ with certainty and rejects every pure state $\ket\psi$ satisfying $d_{\mathcal G}(\psi)\ge\varepsilon$ with probability at least $1-\delta$, using $O(\varepsilon^{-2}\log(1/\delta))$ copies, independently of the number of modes. The test consists of repeated measurements on independent blocks of two copies for fermions and three copies for bosons.
\end{theorem}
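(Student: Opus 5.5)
The plan is to reduce the statement to a single-block robustness inequality and then amplify it. For each family I fix the exact few-copy characterization. For fermions, $\ket\psi$ is Gaussian iff $\Lambda\ket\psi^{\otimes 2}=0$ with $\Lambda=\sum_{\mu}\gamma_\mu\otimes\gamma_\mu$. For bosons, $\ket\psi$ is Gaussian iff $\ket\psi^{\otimes 3}$ lies in the subspace where the relative photon numbers satisfy $N_{\mathrm{rel},+}=N_{\mathrm{rel},-}$. In both cases let $P_q$ be the projector onto the accepted subspace $K_q$. The block test measures $\{P_q,\mathbb 1-P_q\}$ on $\ket\psi^{\otimes q}$, so the rejection probability is $r(\psi)=\|(\mathbb 1-P_q)\ket\psi^{\otimes q}\|^2$. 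Completeness is immediate from exactness: $r(\psi)=0$ for $\psi\in\mathcal G$.

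The core claim I would establish is a mode-independent inequality $d_{\mathcal G}(\psi)^2\le C\,r(\psi)$ for a universal constant $C$. From it the theorem follows directly. If $d_{\mathcal G}(\psi)\ge\varepsilon$, each block rejects with probability at least $\varepsilon^2/C$. Running $m=O(\varepsilon^{-2}\log(1/\delta))$ independent blocks and rejecting if any block rejects gives failure probability $(1-\varepsilon^2/C)^m\le\delta$. Perfect completeness is kept because Gaussian inputs never trigger a rejection.

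To prove $d^2\le C r$, I would use the gradient-flow method. Regard $r$ as a smooth function on the unit sphere (or projective space) of one-copy states and run normalized gradient descent, $\dot\psi=-\nabla r(\psi)$, projected onto the tangent space. The key input is a Łojasiewicz-type gradient inequality $\|\nabla r(\psi)\|^2\ge c\,r(\psi)$ with $c$ independent of $n$, valid at least whenever $r(\psi)\le r_0$ for a constant $r_0$. Given this, $\frac{d}{dt}\sqrt{r}=-\|\nabla r\|^2/(2\sqrt r)\le -\tfrac{\sqrt c}{2}\|\nabla r\|$. Hence the arc length of the flow is at most $(2/\sqrt c)\sqrt{r(\psi_0)}$. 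Moreover $r$ is nonincreasing along the flow, so the inequality stays valid, and $r\to0$. The limit point therefore satisfies the exact characterization and lies in $\mathcal G$, using closedness of the family; for bosons this needs care with non-normalizable limits, handled through an energy-free compactness argument on the orbit. The trace distance is bounded by the path length, which gives $d\le (2/\sqrt c)\sqrt r$. When $r>r_0$, the bound $d\le1\le r/r_0$ is trivial, so the inequality holds with $C=\max(4/c,1/r_0)$.

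The main obstacle is the gradient inequality, and I would attack it with representation theory. Write $\nabla r$ in terms of $(\mathbb 1-P_q)$ applied to $\ket\psi^{\otimes q}$ with one tensor factor differentiated. For fermions, $\Lambda$ is the Casimir-type operator of the $\mathrm{so}(2n)$ action on two copies. The rejected component decomposes into irreducible isotypic pieces with eigenvalues bounded away from zero by a constant gap, so $\Lambda$ restricted to symmetric two-copy vectors has an $n$-independent spectral gap above $0$. I would then exploit that $\langle\psi|\cdot|\psi\rangle$ contracted against the first copy maps rejected components to tangent directions generated by quadratic (Lie algebra) deformations. The concrete target is to show that moving $\psi$ along a Gaussian-unitary direction cannot decrease $r$, while moving along the projected non-Gaussian component does so at rate comparable to $r$. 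For bosons the analogue uses the three-copy passive $\mathrm U(3)$ mixing and the $\mathrm{sp}$/Heisenberg symmetry, with the gap coming from integrality of the relative photon-number difference. I expect the hardest part to be controlling cross terms so that the constant $c$ does not degrade with $n$ or with energy.
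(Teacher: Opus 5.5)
Your outer structure matches the paper's. You obtain a per-block bound $r\gtrsim d_{\mathcal G}^2$ from gradient flow under a {\L}ojasiewicz-type inequality (your arc-length estimate is Theorem~\ref{thm:restoration}), then reject if any of $O(\varepsilon^{-2}\log(1/\delta))$ blocks rejects. The gap is the one step that carries all the content: you never give a mechanism for $\|\nabla r\|^2\ge c\,r$ with $c$ independent of $n$ and of energy, and the mechanism you point to cannot supply it. For $q=2$ one has $\nabla r=-4\ket{\chi_\psi}$ with $\ket{\chi_\psi}=r\ket{\psi}-(\bra{\psi}\otimes I)Q\ket{\psi}^{\otimes 2}$. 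So what must be shown is that the rejected component $Q\ket{\psi}^{\otimes2}$, after contraction with $\bra{\psi}$ on one copy, keeps a constant fraction of its squared norm $r$. A spectral gap of the block observable on $\Ran Q$ says nothing about this contraction. Concretely, the even-integer spectrum of $\Lambda$ only gives $r\le\frac14\|\Lambda\ket{\psi}^{\otimes2}\|^2=\frac14\bigl(2n-\|\Gamma(\psi)\|_{\mathrm F}^2\bigr)$, which is the wrong direction for robustness. The direction you need picks up $\|\Lambda\|^2=4n^2$. For bosons $N_{\mathrm{rel},+}-N_{\mathrm{rel},-}$ is unbounded, so that route cannot be energy-free at all. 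The observation that Gaussian-unitary directions leave $r$ unchanged is true but gives no lower bound. The claim that moving along the projected non-Gaussian component decreases $r$ at a rate comparable to $r$ restates the inequality rather than proving it.

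The missing idea is to evaluate this contraction on one extra copy. Take $z=\ket{\psi}^{\otimes3}$ and $P$ swap invariant. Contracting the unshared copies gives $\langle P_{12}z,P_{13}z\rangle=p^2+\|\chi_\psi\|^2$. Hence for $T=\frac13(P_{12}+P_{13}+P_{23})$ one gets $\langle z,T(I-T)z\rangle=\frac23\bigl(pr-\|\chi_\psi\|^2\bigr)$. A bound $\Spec(T)\subset\{1\}\cup[0,\beta]$ with $\beta<2/3$ then yields $\|\chi_\psi\|^2\ge r\bigl(1-\frac32\beta-r\bigr)$ (Theorem~\ref{thm:overlap}). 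Uniformity comes from this three-copy spectral bound, not from any gap of $\Lambda$. The three pair projectors are the invariants of rotations about three orthogonal axes of an $SO(3)$ acting on copy labels. Each irrep contains one zonal vector per axis, with mutual overlaps $P_\ell(0)\in[-\frac12,\frac38]$, which gives $\beta=7/12$ in every spin sector (Lemma~\ref{lem:so3}). This extends to all of Fock space because copy mixing preserves each finite-dimensional photon-number sector.

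For displaced bosonic Gaussians your sketch also has no counterpart to the displacement problem. The paper does not prove a gradient inequality for the three-copy test directly. It runs the flow on the joint pure state $B\ket{\psi}^{\otimes2}$ with the centered test on the difference outputs, with zero set $\{\ket{\chi}\otimes\ket{g}:\ket{g}\in\Gzero\}$. It then composes with the coherent-state bound via $d_{\GB}(\psi)^2=\inf_U d_{\Coh}(U\psi)^2$. Finally it transfers the result to three copies by the tetrahedral comparison $R_{\GB,3}\ge\frac49R_{\mathrm{diff},4}$ (Lemma~\ref{lem:boson-tetrahedron}).

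Your endpoint argument also needs no compactness, which is just as well since the bosonic orbits are not compact. Finite path length gives norm convergence to a unit vector, and continuity gives $r=0$ there. Exactness for every normalized Fock vector, proved with Bargmann functions and no moment assumption, then places the limit in $\mathcal G$.
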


\noindent The tests admit simple implementations through Bell sampling for fermions and passive interferometry with photon counting for bosons, as shown in Figure~\ref{fig:protocols}. Our main contribution is to establish their global robustness: every pure input at distance at least $\varepsilon$ from the target family is rejected in each round with probability $\Omega(\varepsilon^2)$, uniformly in the number of modes. No parity promise is needed for fermions, and no energy or moment assumption is needed for bosons. Matching lower bounds show that the dependence on $\varepsilon$ and $\delta$ is optimal in the worst case, even when arbitrary collective measurements are allowed. The same sample complexity and perfect completeness hold for Slater determinants, centered (zero-mean) bosonic Gaussian states, and coherent states, with two-copy measurements sufficient in each case.  

To allow for small preparation errors, we next consider \emph{tolerant testing}: accepting states within distance $\varepsilon_A$ of the Gaussian family while rejecting states at distance at least $\varepsilon_B>\varepsilon_A$. Our second result retains mode-independent sample complexity when the near-Gaussian threshold is sufficiently small compared with the relative gap between the two thresholds.

\begin{theorem}[Tolerant testing of Gaussian states]
  \label{thm:main} 
  Let $\mathcal G$ denote either the family of pure fermionic Gaussian states or the family of pure bosonic Gaussian states. Let $0\le\varepsilon_A<\varepsilon_B\le1$ and set gap $0<\eta\le1$ such that $\varepsilon_B>(1+\eta)\varepsilon_A$ and $\varepsilon_A\le c\eta$ for a universal constant $c>0$. Then there exists a test that, given copies of an unknown pure state $\ket\psi$, distinguishes $d_{\mathcal G}(\psi)\le\varepsilon_A$ from $d_{\mathcal G}(\psi)\ge\varepsilon_B$ with success probability at least $1-\delta$, using $O((\varepsilon_B-\varepsilon_A)^{-2}\log(1/\delta))$ copies, independently of the number of modes.
\end{theorem}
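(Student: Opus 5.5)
We plan to reduce tolerant testing to estimating the acceptance probability of a suitable few-copy projective measurement. The target is a two-sided quantitative relation between rejection and distance, sharp near the Gaussian family. Let $P$ be the symmetry-defined acceptance projector on $q$ copies ($q=2$ for fermions, $q=3$ for bosons), and write $r(\psi)=1-\langle\psi^{\otimes q}|P|\psi^{\otimes q}\rangle$. Estimating $r(\psi)$ by repetition alone cannot separate $\varepsilon_A$ from $\varepsilon_B$ at rate $(\varepsilon_B-\varepsilon_A)^{-2}$. The reason is that the global robustness bound $r\ge c_1 d^2$ from the non-tolerant analysis and the trivial upper bound $r\le C d^2$ carry different constants. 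We therefore use a larger but constant number of copies $k$ to amplify the test so that the constants become asymptotically matched.

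\textbf{Step 1 (local expansion).} Near a closest Gaussian target $\phi$, write $\psi=\sqrt{1-d^2}\,\phi+d\,\chi$ with $\chi\perp\phi$. Then expand $r(\psi)=d^2\,\langle\chi|A_\phi|\chi\rangle+O(d^3)$, where $A_\phi$ is an operator on one copy. By Gaussian covariance, $A_\phi$ is unitarily equivalent to $A_{\mathrm{vac}}$. Using the fact that $\phi$ is a closest point, $\chi$ is orthogonal to the tangent space of the Gaussian manifold at $\phi$. On that normal space $A_{\mathrm{vac}}$ acts diagonally in the "excitation degree" grading: for fermions it is the number of Majorana quasi-particle pairs, and for bosons the polynomial degree in creation operators. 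It has spectrum bounded below by a constant $\lambda_{\min}>0$ and bounded above.

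\textbf{Step 2 (amplification).} To make $r$ behave like $d^2$ with a nearly exact constant, we replace $P$ by the $k$-copy projector onto the span of $\{\phi^{\otimes k}:\phi\in\mathcal G\}$, or equivalently the symmetric joint kernel. The acceptance probability is then $\int|\langle\phi|\psi\rangle|^{2k}$-type overlaps. Our tentative choice is the projector $P_k$ onto $\mathrm{span}\{\phi^{\otimes k}\}$, for which $1-\mathrm{acc}=1-(1-d^2)^k$ on the leading term plus corrections of relative size $O(d)$ and $O(1/k)$. Here the gradient-flow bound ensures that no distant Gaussian state contributes spuriously. Choosing $k=O(1/\eta)$ and using $\varepsilon_A\le c\eta$, the ratio between the rejection probabilities at distances $\varepsilon_A$ and $\varepsilon_B$ exceeds $1+\Omega(\eta)$.

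\textbf{Step 3 (statistics).} Combining the local expansion (valid for $d\le c\eta$) with the global robustness bound, which covers the far regime so that states with $d\ge\varepsilon_B$ but outside the local chart are still rejected with probability at least the threshold, we obtain a separation of $\Omega(\varepsilon_B^2-\varepsilon_A^2)\ge\Omega(\varepsilon_B(\varepsilon_B-\varepsilon_A))$ between the two rejection rates. A Bernstein/Chernoff estimate exploiting that the variance is of order the mean $\sim\varepsilon_B^2$ then gives $O((\varepsilon_B-\varepsilon_A)^{-2}\log(1/\delta))$ samples, with $k$ absorbed into the constant.

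\textbf{Main obstacle.} The hard step is controlling the cubic remainder in Step 1 uniformly in the number of modes and, for bosons, without energy bounds. The $O(d^3)$ term involves one-copy operators that are unbounded in the bosonic case. We would first try to bound it using the gradient-flow length estimate, which gives the distance from the flow endpoint without energy assumptions. We would then use Gaussian covariance to transport $\phi$ to the vacuum and bound the remainder by $d^3$ times a norm of $A_{\mathrm{vac}}$ restricted to the symmetric subspace. That norm is mode-independent because $A_{\mathrm{vac}}$ is a sum of commuting single-mode terms.
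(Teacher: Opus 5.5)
Your architecture is the paper's: amplify to $q=\Theta(1/\eta)$ copies so that the universal upper bound $R_q\le 1-(1-d^2)^q\le qd^2$ and a closest-target lower bound $\approx\lambda_q d^2$ have nearly matching constants, handle far inputs with a global bound, and count with a variance-sensitive estimate. The gap is that the load-bearing estimate is only asserted in Step 2 (``leading term $1-(1-d^2)^k$ \ldots\ corrections of relative size $O(1/k)$''). That estimate is
\begin{align*}
 \bigl\|(I-P_q)\,D(g^{\otimes q})[\chi]\bigr\|_2^2\ \ge\ q-O(1)
\end{align*}
for every unit $\chi$ normal to the family at a closest target $g$, uniformly in the number of modes. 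Step 1 gives only $\lambda_{\min}>0$ for the basic test, which reproduces exactly the constant mismatch you set out to remove. Nothing in the proposal computes how much of the first-order term $D(g^{\otimes q})[\chi]$ your $q$-copy projector accepts, and for the span of $\{\phi^{\otimes q}\}$ there is no direct handle on this. The paper obtains it with group-averaged copy-rotation projectors: $SO(q)$ for $\GF,\Gzero$, $SU(q)$ for $\Sl$, and the $SO(q-1)$ fixing $\mathbf u$ for $\GB$. The transition identity $J_{j,m}^\dagger\Gamma(O)J_{i,m}=O_{j,i}^m I$ (Lemma~\ref{lem:normal-transition}) shows that a degree-$m$ component is accepted with probability $\frac1q\sum_{i,j}\mathbb E[O_{j,i}^m]$; the $SU(q)$ analogue uses $\mathbb E|U_{j,i}|^{2r}$ for $r$ orbital replacements. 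Haar moments at the lowest degree allowed by Lemma~\ref{lem:normal-stationarity} give acceptances $3/(q+2)$, $2/(q+1)$, $1/q$, $(3q-2)/q^2$, hence $\lambda_{\mathcal T,q}\ge q-3$. The degree restriction is indispensable: degree-one and degree-two components have acceptance exactly $1$ under the full bosonic $q$-copy test. Your span projector can be salvaged, since perfect completeness gives $P_{\rm span}\le P_{\mathcal T,q}$ and hence larger rejection, but only after this calculation.

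The obstacle you single out is not one, and your proposed fix would not work. $I-P_q$ is a bounded projector, and with $\ip g\chi=0$ the expansion $\ket\psi^{\otimes q}=(1-d^2)^{q/2}\ket g^{\otimes q}+\sqrt q\,d(1-d^2)^{(q-1)/2}\ket{\tilde\chi}+\ket w$ has $\ip{\tilde\chi}{w}=0$. Moreover $\|\ket w\|_2^2$ is an exact binomial tail depending only on $q$ and $d$. A single Cauchy--Schwarz step therefore yields $R_q\ge h_{q,\lambda}(d^2)$ with no energy or mode dependence (Lemma~\ref{lem:normal-bound}), and no unbounded operators enter. Neither the gradient-flow endpoint, which need not be the closest target, nor a norm of a ``sum of commuting single-mode terms'' controls this remainder.

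Step 3 actually needs two things. First, this lower bound must be monotone on an explicit window ($d^2\le 1/(3q)$ in Lemma~\ref{lem:testing-local-window}) that extends well beyond $\varepsilon_B$. Second, beyond that window a global bound for the $q$-copy test must exceed the threshold $L$. That global bound comes from the operator inequality $P_q\le\prod_j P^{(\text{block }j)}$ over disjoint basic blocks, giving $R_q\ge1-(1-3d^2/32)^{\lfloor q/4\rfloor}$ (Theorem~\ref{thm:normal-hierarchy}), not from the gradient flow directly.

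Finally, $k=\Theta(1/\eta)$ cannot be absorbed into a constant. The mean rejection of far inputs near $\varepsilon_B$ is $\sim k\varepsilon_B^2$ and the separation is $\gtrsim k\eta\varepsilon_A^2$. These factors of $k$ must cancel in $kL/(L-\mathcal U)^2$ to leave $O((\varepsilon_B-\varepsilon_A)^{-2}\log(1/\delta))$ plus an additive $k$. As written, your count carries an extra factor $1/\eta$.
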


\noindent The conditions on $\varepsilon_A, \varepsilon_B$ in terms of the gap $\eta$ and constant $c$ are to ensure the two regimes cannot become arbitrarily close to each other and hence make a tester's complexity diverge. As in the non-tolerant setting, no bosonic energy or moment assumption is required. The additional tolerance comes at the cost of a more demanding measurement: our construction acts jointly on $q=O(1/\eta)$ copies per round, rather than the fixed two- or three-copy blocks used above. Thus, resolving smaller relative gaps requires larger collective measurements in our protocol, while the total copy bound remains independent of the number of modes. The same tolerant guarantee extends to Slater determinants, centered bosonic Gaussian states, and coherent states.

\subsection{Testers and implementation}

Before introducing the testers formally, we explain the intuition
behind their construction, which will also guide the proofs of
robustness and tolerance. The central idea is to recognize the
target states through symmetries of several identical copies.

We consider transformations that mix corresponding physical modes
across the copies, using the same mixing coefficients for every
physical mode. We call this operation copy mixing.
A familiar classical example is that the product
of two identical centered Gaussian densities depends only on
$x^2+y^2$, so rotating the two variables leaves it unchanged.
Maxwell's characterization~\cite{maxwell2003illustrations} gives the converse: a rotationally
invariant product of two identical continuous probability
densities must be centered Gaussian. Related quantum
characterizations identify Gaussian states through invariance
under suitable transformations of their copies.
Intuitively, any centered pure bosonic Gaussian state is obtained
from vacuum using a Hamiltonian quadratic in the ladder operators
$a_i,a_i^\dagger$; the sum of two identical copies of this
Hamiltonian, acting on modes $(a,b)$, is invariant under real
rotations mixing the copies.

Each test accepts a linear subspace of the multiple-copy
Hilbert space. Its defining property is that a tensor power
belongs to this subspace precisely when the input belongs
to the target family. The acceptance conditions are different between target families.
We describe them together with their measurements below;
the fermionic and pure bosonic Gaussian protocols are
illustrated in Figure~\ref{fig:protocols}.

In optics, a unitary that rotates two copies is a beam splitter
interaction whose Hermitian generator is given by
\begin{align}
 \Lambda_0
 &:=i\sum_i
   \left(a_i^\dagger b_i-a_i b_i^\dagger\right),
 \qquad
 \Lambda_0\ket\psi^{\otimes2}=0
 \quad\Longleftrightarrow\quad
 \ket\psi\in\Gzero.
\end{align}
This gives a necessary and sufficient condition on a pure input
being centered Gaussian~\cite{girardi}.
An analogous two-copy characterization holds in the fermionic
setting and is called the matchgate
identity~\cite{bravyi,demelo}.
Let $\gamma_1,\ldots,\gamma_{2n}$ denote the Majorana operators
of an $n$-mode fermionic system. Then
\begin{align}
 \Lambda_{\mathrm F}
 :=\sum_{\mu=1}^{2n}\gamma_\mu\otimes\gamma_\mu,
 \label{eq:main-majorana}\quad\hspace{1.4cm}
 \Lambda_{\mathrm F}\ket\psi^{\otimes2}=0
 \quad\Longleftrightarrow\quad
 \ket\psi\in\GF.
\end{align}
This immediately defines a perfectly complete two-copy test:
measure $\Lambda_{\mathrm F}$ and accept whenever the outcome
is zero. The criterion applies to Gaussian states of either
parity (Lemma~\ref{lem:fermions-exact}).
Related copy-symmetry conditions give different acceptance
projectors for Slater determinants, centered bosonic Gaussian
states, and coherent states.

For fermions, this tester can be implemented efficiently using
Bell sampling~\cite{haug,bell}.
Indeed, the operators $\gamma_\mu\otimes\gamma_\mu$ mutually
commute and can therefore be measured simultaneously.
Under the Jordan--Wigner representation, Bell measurements
on corresponding qubit pairs of the two copies determine
their eigenvalues. The eigenvalue of $\Lambda_{\mathrm F}$
can then be reconstructed from the $n$ Bell outcomes using
$O(n)$ classical post-processing.
Hence, one round of the test requires only two copies,
$n$ parallel Bell measurements, and linear-time classical
processing. In particular, no covariance-matrix estimation
or knowledge of the parity sector is required.
Moreover, closely related Bell-sampling measurements of
fermionic non-Gaussianity have already been implemented on
quantum hardware~\cite{haug}, providing an experimental
realization of the main measurement primitive required here.
For Slater determinants, the pure fermionic Gaussian states
with definite particle number, the test projects onto the
corresponding copy-spin singlet subspace~\cite{oszmaniec_kus},
and our guarantee holds without assuming that the input itself
has a fixed particle number.

For centered bosonic Gaussian states, a fixed passive
interferometer diagonalizes this beam splitter generator
as a total photon-number difference, so the test reduces
to passive interferometry and photon counting.
Coherent states admit an even simpler two-copy condition.
A balanced beam splitter maps two identical coherent states
to a collective coherent output and vacuum in the difference
output. The test therefore accepts precisely when no photons
are detected in the difference port, as in coherent-state
comparison protocols~\cite{andersson_coherent_comparison}.

For general pure bosonic Gaussian states, we additionally
need to take care of the mean. A closely analogous
characterization holds for general pure bosonic Gaussian
states, now using three copies~\cite{girardi}.
Let $a_{j,r}$ denote the annihilation operator of physical
mode $j$ in copy $r\in\{1,2,3\}$, and define the quadratic
operator
\begin{align}
 \Lambda_{\mathrm B}:=\frac{i}{\sqrt3}\sum_{j=1}^{n}
 \Bigl[(a_{j,1}^{\dagger}a_{j,2}-a_{j,2}^{\dagger}a_{j,1})+(a_{j,2}^{\dagger}a_{j,3}-a_{j,3}^{\dagger}a_{j,2})+(a_{j,3}^{\dagger}a_{j,1}-a_{j,1}^{\dagger}a_{j,3})\Bigr].
 \label{eq:main-bosonic-generator}
\end{align}
For a pure input,
\begin{align*}
 \Lambda_{\mathrm B}\ket\psi^{\otimes3}=0
 &\quad\Longleftrightarrow\quad
 \ket\psi\in\GB.
\end{align*}
The measurement again has a simple implementation.
For each physical mode, the three corresponding copy modes
are mixed by the same passive three-port interferometer,
or \emph{tritter}. This transformation separates the
collective copy direction from two relative directions.
In particular, an equal displacement
$(\alpha_j,\alpha_j,\alpha_j)$ of the three copies is mapped
to $(\sqrt3\alpha_j,0,0)$, so the common displacement appears
only in the collective output.
In the tritter basis, $\Lambda_{\mathrm B}$ is diagonal and,
up to the convention for labeling the two relative outputs,
becomes their total photon-number difference,
$\Lambda_{\mathrm B}=N_{\mathrm{rel},+}-N_{\mathrm{rel},-}$.
The required optical ingredients are also experimentally
established: passive multiport transformations can be
decomposed into beam splitters and phase
shifters~\cite{Reck1994Unitary,Clements2016Optimal},
integrated three-port interferometers have been demonstrated
experimentally~\cite{spagnolo_tritter}, and high-efficiency
photon-number-resolving detection is
available~\cite{Lita2008Counting}.

\subsection{Robustness and tolerant bounds}
With the testers in place, we are now ready to outline the proof ideas of their robustness and tolerance. There are two related yet distinct proof techniques utilized which combine representation theory with geometry at a local and global level. The first is what we refer to as the \textit{gradient-flow method} and the second is \textit{closest-target geometry method}. 
For robustness, we show that the testers rejection probability $R(\psi)$ obeys $R(\psi)\ge \Omega(d_{\mathcal{G}}(\psi)^{2})$. Crucially, the constants are independent of the number of modes and, in the bosonic setting, without any energy or moment assumption. Repeating the corresponding few-copy measurement on independent blocks therefore gives the $O(\varepsilon^{-2}\log(1/\delta))$ sample complexity of our non-tolerant tests. For tolerance, we extend the rotational-invariance test to $q$ copies, sharpening its sensitivity to distance from the Gaussian family so that larger $q$ allows us to resolve smaller relative gaps between the acceptance and rejection thresholds.

In order to proceed, let $P_{q}$ denote the acceptance projector of a $q$-copy test such that $R_{q}(\psi)=1-\Tr(P_{q}\psi^{\otimes q})$. We first explain the argument for the two-copy tests.
On three identical copies, place the acceptance projector
on each of the pairs $12$, $13$, and $23$, leaving the remaining
copy untouched, and consider
\begin{align}
 T&=\frac13\left(P_{12}+P_{13}+P_{23}\right).
 \label{eq:main-overlap}
\end{align}
Every pair has the same acceptance probability, so the
expectation of $T$ is still $1-R_2(\psi)$.
Averaging allows us to study how the three accepted
subspaces fit together. The overlap between the tested pairs is the crucial ingredient.
When we compare their accepted components, taking overlaps
with the input on the other copies leaves a vector on their
shared copy. This is the one-copy vector that determines
how rejection changes when the input varies.
Consequently, information about the overlapping projectors
can be converted into information about the rejection gradient. 

Representation theory supplies a sufficiently strong spectral
bound for $T$, separating the common accepted space from the
rest of the spectrum. Our overlapping-projection theorem
converts this estimate into
\begin{align}
 \|\nabla R_2(\psi)\|_2^2
 &\ge16R_2(\psi)\bigl(\kappa-R_2(\psi)\bigr),
 \label{eq:main-gradient}
\end{align}
where $\kappa>0$ is independent of the number of physical modes
(Theorem~\ref{thm:overlap}).
Thus, small positive rejection cannot occur where the
rejection function is too flat. 

Starting with $R_2(\psi)<\kappa$, we follow the negative gradient
on the sphere of normalized one-copy inputs.
The gradient bound guarantees convergence to zero rejection
and bounds the total path length by $O(\sqrt{R_2(\psi)})$
(Theorem~\ref{thm:restoration}).
Exactness identifies the endpoint as a target, and the path
length bounds the input's distance from the target family:
\begin{align*}
 d_{\mathcal T}(\psi)
 &\le \text{path length}
 \lesssim\sqrt{R_2(\psi)}.
\end{align*}
This gives the desired robustness bound.
Inputs with rejection at least $\kappa$ already have a constant
rejection lower bound, so the guarantee holds at every distance.
The third copy is used only in the proof; the measured test
still acts on two copies.

\paragraph{Why the bound is independent of system size.}
For fermionic and centered bosonic Gaussian states,
copy rotations reduce the spectral analysis to representations
of a fixed rotation group. We prove a bound uniform over the
representations arising from different numbers of physical modes.
For bosons, the bound also holds in every sector of fixed total
photon number. Since copy mixing preserves these sectors,
the estimate extends to the full Fock space without
a photon-number cutoff.
The same overlapping-projector framework gives the bounds
for Slater determinants and coherent states.

\paragraph{Tolerant testing.}
Tolerance testing requires finer information than a global lower bound.
We separate the rejection probabilities of every input
within distance $\varepsilon_A$ from those of every input
beyond distance $\varepsilon_B$.
For the few-copy tests, different directions away from the
target family can have different rejection strengths,
which can obscure a small gap between these distances.

We therefore impose the appropriate copy symmetries jointly
on $q$ copies. For fermionic Gaussians, this means enforcing
the Majorana constraint on every pair; for bosons, it means
imposing the corresponding $q$-copy rotation symmetry.
We then expand around a closest target to understand their response.

\section{Discussion and conclusion}

\subsection{Related work}\label{subsec:related-work}

\paragraph{Fermionic Gaussianity testing.}
The two-copy test studied here is based on the Majorana characterization of pure fermionic Gaussian states~\cite{bravyi,demelo}. Other approaches characterize or test fermionic Gaussianity using fermionic convolution~\cite{lyu_bu,coffman}, correlation-matrix methods~\cite{bittel_fermionic_tomography}, and symmetry-adapted random purification~\cite{walterwitteveen}.  The closest results to ours analyze the same two-copy Majorana constraint through Bell sampling~\cite{haug,bell}. Bell sampling provides an experimentally simple implementation and yields a sample complexity $O(n\varepsilon^{-2}\log(1/\delta))$ for pure-state Gaussianity testing~\cite{haug}; moreover, the corresponding acceptance projector is optimal among perfectly complete two-copy tests for pure inputs of definite parity~\cite{bell}. Our result shows that the same measurement is in fact robust independently of the number of modes: its rejection probability is $\Omega(\varepsilon^2)$ for every pure state at distance at least $\varepsilon$ from the Gaussian family. This reduces the sample complexity to $O(\varepsilon^{-2}\log(1/\delta))$ without changing the measurement or imposing a parity promise.

Related works also introduce efficiently measurable notions of fermionic non-Gaussianity, including convolution-based quantities, violations of Wick's theorem and the matchgate identity, Bell-sampling monotones, and fermionic entropy~\cite{coffman,bell,leone_bittel}. These quantities provide useful witnesses and resource measures, but their efficient estimation does not by itself imply constant-sample testing in trace distance, since the corresponding witness gap at fixed distance from the Gaussian family may still decrease with the number of modes. The experimental relevance of Bell sampling has also been demonstrated on superconducting hardware, where two-copy Bell data were used to measure a purity-corrected fermionic non-Gaussianity witness~\cite{haug}.

\paragraph{Bosonic Gaussianity testing.}
For bosons, the relation between Gaussianity and multi-copy interference goes back to factorization properties under beam-splitter mixing~\cite{springer}. Quantitative stability has been studied through a quantum Darmois--Skitovich theorem, which relates approximate output factorization to proximity to Gaussian states under moment assumptions and in Hilbert--Schmidt distance~\cite{cuesta}. 
Early work studied certification of given bosonic Gaussian states with homodyne measurements~\cite{aolita_photonic_certification}. More recently, passive copy symmetries have been used to construct few-copy tests for pure Gaussian states: two copies suffice for centered Gaussian states, while three copies suffice when arbitrary displacements are allowed~\cite{girardi}. These tests are locally robust with constants independent of the number of modes, but the available global guarantees require additional assumptions, in particular energy control. Our results establish global trace-distance robustness for the associated invariant-space projectors for arbitrary pure inputs, without 
any energy or moment bound.

Several related proposals use multi-copy optical interference to witness or quantify bosonic non-Gaussianity~\cite{bu_li_bosonic,hahn_takagi_correlation,dai_zhang}. Coherent-state comparison via a beam splitter also provides an earlier instance of the difference-output vacuum test used here~\cite{andersson_coherent_comparison}. Experimentally, non-Gaussianity has been witnessed using photon counting and homodyne measurements~\cite{jezek_non_gaussian,jezek_squeezed}, and multi-photon interference in integrated tritters has been demonstrated~\cite{spagnolo_tritter}. These works establish closely related measurement primitives and non-Gaussianity witnesses, but not a uniform worst-case trace-distance guarantee to the pure Gaussian family.

\paragraph{Hardness of testing against mixed Gaussian states.}
The fact that our testing analysis focus on the set of pure Gaussian states, and not on the more general mixed-Gaussian class, is essential to get constant sample-complexity. In fact it has been shown that testing arbitrary mixed fermionic states against the full mixed Gaussian family requires exponentially many copies~\cite{bittel_fermionic_tomography}, with analogous bosonic lower bounds known under suitable energy and separation assumptions~\cite{girardi}.  

\paragraph{Testing other structured quantum states.}
Dimension-independent property testing has important precedents in quantum information. The two-copy product test relates rejection probability directly to distance from the set of product states, independently of the number and dimensions of the subsystems~\cite{harrow_montanaro}, and its quantitative behavior has subsequently been sharpened~\cite{soleimanifar_wright,beckey_jeronimo_wu}. For qubit stabilizer states, symmetries of stabilizer tensor powers give a six-copy test whose copy complexity is independent of the number of qubits~\cite{gross_nezami_walter}, providing a particularly close precedent for dimension-independent testing of a symmetry-defined state family. Convolution-based stabilizer tests provide a related approach~\cite{bu_stabilizer_2023}. Representation-theoretic projector characterizations apply more broadly to structured pure-state families, including Slater determinants~\cite{oszmaniec_kus}. These examples share the same general theme as the present work: a few-copy symmetry can characterize a structured family, but obtaining a property tester additionally requires quantitative control of how the acceptance probability changes with distance from that family.

\paragraph{Learning fermionic Gaussian and near-Gaussian states.}
Gaussian-state learning asks for substantially more information than property testing. For fermions, efficient learning has been developed for free-fermionic states using tomography, fermionic shadows, and matchgate-based measurements~\cite{aaronson_grewal,ogorman,zhao_fermionic_shadows,wan_matchgate_shadows}. Trace-distance bounds in terms of correlation matrices further improve tomography of pure and mixed fermionic Gaussian states~\cite{bittel_fermionic_tomography}. Efficient learning also extends beyond the Gaussian family: states prepared using a limited number of bounded-locality non-Gaussian fermionic gates can be learned using single-copy measurements~\cite{mele_herasymenko}. Closely related work studies the agnostic problem of finding a Slater determinant with nearly maximal overlap with an arbitrary fixed-particle-number input~\cite{paul_zhao_dai}.

\paragraph{Learning bosonic Gaussian and near-Gaussian states.}
For bosons, tomography guarantees based on first and second moments have been developed for Gaussian and structured non-Gaussian states~\cite{mele_cv_tomography,bosonic_trace_distance}. Adaptive protocols reduce the dependence on the energy scale to doubly logarithmic factors~\cite{energy_independent_tomography}, while recent work characterizes the limitations imposed by restricting to Gaussian measurements~\cite{chen_gaussian_measurements,gaussian_loglog}. The unrestricted sample complexity of fermionic and bosonic Gaussian tomography is now known to be $\Theta((n^2+\log(1/\delta))/\varepsilon^2)$ for both pure and mixed states, without an energy bound in the bosonic case~\cite{chen_gaussian_tomography}. Our result therefore gives a separation between testing and tomography in their dependence on the number of modes: reconstructing an unknown Gaussian state requires quadratically many copies in $n$, whereas testing whether an unknown pure state is Gaussian requires a number of copies independent of $n$ at fixed accuracy and confidence.

\subsection{Conclusion and open questions}
\label{sec:discussion}

In this work, we close the gap between exact few-copy characterizations of pure Gaussian states and robust Gaussianity testing. For both fermions and bosons, the known few-copy measurements already provide experimentally simple non-tolerant tests: Bell measurements on two copies for fermions, and passive interferometry with photon counting on two or three copies for bosons. We show that these tests are globally robust, with sample complexity independent of the number of modes and, for bosons, without any energy assumption. We also obtain tolerant testers, although these require collective measurements on a number of copies that grows as the two testing thresholds approach each other.

This leaves several natural questions. First, what is possible with single-copy measurements? The tests studied here exploit genuinely multi-copy observables. Can Gaussianity still be tested with mode-independent sample complexity using adaptive single-copy measurements, or is coherent access to several copies fundamentally necessary? More generally, it would be interesting to characterize the tradeoff between sample complexity and quantum memory, interpolating between single-copy protocols and unrestricted collective measurements. Closely related questions have recently been studied for stabilizer-state testing: single-copy protocols require a number of copies growing with the number of qubits~\cite{hinsche_helsen_singlecopy}, while bounded coherent quantum memory leads to explicit tradeoffs between memory and sample complexity~\cite{arunachalam_schatzki_memory}.

Another interesting question is whether the present guarantees can be extended beyond pure inputs, and can the tolerant regime be sharpened? Our tolerant result applies only in the small-threshold regime $\varepsilon_A\le c\eta$. Determining the optimal relation between the two thresholds, and understanding which analogous guarantees remain possible for mixed inputs, would give a more complete picture of Gaussianity testing.

Finally, the geometric proof techniques developed here rely on only a few properties of the target family and may therefore be of independent interest beyond Gaussianity testing, including other property-testing problems and in broader settings. We hope that the present work stimulates further research in this direction.

\subsection*{Acknowledgments}

\textbf{Concurrent work.}
Toward the completion of this manuscript, we became aware of concurrent and independent work by Iyer, O'Gorman, Parekh, Thompson, and Zhao~\cite{iyer2026optimal}, entitled ``Optimal testing of fermionic and bosonic Gaussian states.'' Their work also proves dimension-independent sample complexity for testing the classes of Gaussian states considered here. We have planned for both manuscripts to appear on the arXiv simultaneously. We thank the authors for the friendly communication.

\textbf{Support and funding.}
M. Y. R., R. P., and C. B.-P. thank the Centre for Quantum Technologies in Singapore for hosting Hackamonth 2026, where some of the ideas contributing to this work were first developed.
A. A. M. is grateful to Lennart Bittel, Lorenzo Leone, and Francesco Anna Mele for discussions in earlier projects related to this problem.

M. Y. R. is supported by the project Divide \& Quantum (Project No.~1389.20.241) of the research programme NWA-ORC, which is partly financed by the Dutch Research Council (NWO).
R. P. acknowledges support from the SNF Quantum Flagship Replacement Scheme (Grant No.~215933).
L. C. acknowledges support from the National Science Foundation Graduate Research Fellowship under Grant No.~2140743.
J. E., C. B.-P., and A. A. M. are supported by the BMFTR (Hybrid++, PasQuops, PraktiQOM, QuSol, QSolid, MuniQC-Atoms), the Munich Quantum Valley, Berlin Quantum, the Quantum Flagship (Millenion, PasQuans2), QuantERA, the European Research Council (DebuQC), the Clusters of Excellence (MATH+, ML4Q), and the DFG (CRC 183, SPP 2514, and BoLaCo).
A. A. M. further acknowledges support from a 2025 Google PhD Fellowship.

Any opinions, findings, conclusions, or recommendations expressed in this material are those of the authors and do not necessarily reflect the views of the National Science Foundation.

\textbf{AI statement.}
The authors conjectured the main statement underlying the fermionic Gaussianity test and used ChatGPT 5.5 Pro, together with substantial human intuition and mathematical guidance, to develop an initial proof. ChatGPT 5.6 Pro, ChatGPT 6.0 Pro Astra, and Claude were subsequently used as interactive research tools to strengthen these results and develop extensions to pure states far from the fermionic Gaussian family and, later, to the bosonic setting. These models played a valuable role in the development of the work. All final arguments were critically reviewed, verified, and rewritten by the authors, with particular attention to making the presentation and technical machinery accessible to the quantum information community.

\bibliography{gaussian_geometry_references,BigReferences70}

\onecolumngrid
\appendix

\newpage

\setcounter{figure}{0}
\setcounter{theorem}{0}

\begin{center}
\large{Supplementary Material for ``Testing quantum Gaussianity with constant sample complexity''
}
\end{center}
\counterwithin{figure}{section}

\appendixcontents

\newpage

\section{Reader's guide}\label{sec:guide}

The proof starts from a symmetry of several identical copies that holds exactly for the target states. The test checks this symmetry, so perfect acceptance characterizes the target family. Our task is to show that a state which almost always passes must be close to that family. Small rejection places the copies near the accepted subspace, but projection onto this subspace need not preserve their identical-copy form. We therefore need a way to reach a target by changing the original input. Figure~\ref{fig:guide-argument} shows how the proof steps fit together. Appendix~\ref{sec:math_tools} collects the preliminaries. The general argument is developed in Appendix~\ref{sec:preliminaries}, with its fermionic and bosonic ingredients established in Appendices~\ref{sec:fermions} and~\ref{sec:bosons}.

The left branch of Figure~\ref{fig:guide-argument} establishes global robustness. We first compare tests on overlapping groups of copies to find a direction that reduces rejection. For a two-copy test, we consider the three pairs within three copies. Rotations of the copy labels describe how their accepted spaces fit together. Representation theory gives uniform bounds on these overlaps. Since the tests share a copy, the overlap bounds control how rejection changes when the original input moves. Theorem~\ref{thm:overlap} makes this connection precise: sufficiently small positive rejection guarantees a direction of decrease whose strength is quantitatively controlled.

We then follow the direction of steepest decrease. Theorem~\ref{thm:restoration} shows that this gradient flow converges to a perfectly accepted state and bounds the entire path length in terms of the initial rejection. Exactness identifies the endpoint as a target. The input cannot be farther from this target than the length of the path. Thus small rejection forces a nearby target, giving a global robustness bound. Applying this argument gives the fermionic Gaussian and Slater bounds in Theorem~\ref{thm:fermions-calibration}, and the centered-Gaussian and coherent-state bounds in Theorem~\ref{thm:boson-base}. The constants are independent of the number of modes. The bosonic estimates also hold uniformly across photon numbers, so no energy cutoff is needed.

Unknown displacement requires one additional step. Mixing two identical inputs cancels their common displacement in the difference output. This output can be mixed, so we retain the joint pure state of both outputs. Gradient flow then shows that small rejection makes the difference output close to a pure centered Gaussian and nearly unentangled with the sum output. Combining this conclusion with the coherent-state bound gives a four-copy guarantee for the original input. A comparison of copy symmetries transfers that guarantee to three copies. Theorem~\ref{thm:boson-full} gives the global distance bound for the full bosonic Gaussian family.

The right branch of Figure~\ref{fig:guide-argument} gives a more precise estimate near the target family. We choose a closest target and study the deviation from it. The deviation has no component along an allowed first-order change within the target family. Otherwise, moving the target would make it closer. This restriction, formalized in Lemma~\ref{lem:normal-stationarity}, leaves only directions that depart from the family. We calculate how strongly the test detects these directions and control the remaining terms in the expansion. Lemma~\ref{lem:normal-bound} turns this local sensitivity into a rejection bound. This is a parallel argument: it sharpens the estimate near the family, while gradient flow supplies the guarantee at arbitrary distance.

The local and global bounds now give the testing guarantees. Perfect acceptance of target states bounds rejection from above for nearby inputs. Global robustness bounds it from below for farther inputs, and repetition gives ordinary testing as in Theorem~\ref{thm:main-nontolerant}. For tolerant testing with closely spaced thresholds, we impose the copy symmetries jointly on more copies. Their local response becomes sufficiently uniform to separate the near and far promises. Theorem~\ref{thm:normal-hierarchy} transfers the global bounds to these larger tests, ensuring that more distant inputs also remain detectable. Combining both estimates and repeating the measurement gives Theorem~\ref{thm:testing-optimal-gap}, within its stated small-distance regime.

Finally, explicit deviations show that the local sensitivity bounds are sharp. Examples with known target distances also provide a near input and a far input that have large overlap. Any successful tester must distinguish this pair. Theorem~\ref{thm:testing-lower-bound} turns their overlap into a lower bound on the number of copies required by any measurement strategy. Comparing it with our testing guarantees establishes optimal copy dependence in the families and distance regimes covered by these results.

\clearpage
\begin{figure}[!ht]
\centering
\begingroup
\hyphenpenalty=10000
\exhyphenpenalty=10000
\begin{tikzpicture}[
  box/.style={draw=#1!65!black, fill=#1!6, rounded corners=2pt,
    line width=0.55pt, align=center, font=\small,
    inner xsep=7pt, inner ysep=7pt, text width=6.6cm},
  box/.default=gray,
  wide/.style={box=gray, text width=14.65cm},
  arr/.style={-{Stealth[length=2mm]}, line width=0.7pt,
    draw=gray!80!black},
  localarr/.style={arr, draw=blue!65!black}
]
\node[wide, anchor=north, minimum height=1.25cm] (tests) at (0,0)
  {\textbf{Exact symmetries define the tests}\\[2pt]
  Perfectly accepted inputs are exactly the target states.\\
  Fermionic and bosonic characterizations in Appendices~\ref{sec:fermions} and~\ref{sec:bosons}};

\node[box=teal, anchor=north, minimum height=2.1cm] (overlap) at (-4,-1.95)
  {\textbf{Compare overlapping tests}\\[3pt]
  Rotations of copy labels describe the accepted spaces. Their overlaps admit uniform spectral bounds.\\[3pt]
  Lemmas~\ref{lem:so3}, \ref{lem:fermions-su3}, and~\ref{lem:boson-overlaps}};
\node[box=blue, anchor=north, minimum height=2.1cm] (closest) at (4,-1.95)
  {\textbf{Choose a closest target}\\[3pt]
  The deviation has no component along the allowed motions within the target family.\\[3pt]
  Lemma~\ref{lem:normal-stationarity}};

\node[box=teal, anchor=north, minimum height=2.65cm] (flow) at (-4,-4.75)
  {\textbf{Turn spectral bounds into a short path}\\[3pt]
  Overlaps bound the rejection gradient.\\
  Follow the direction of steepest decrease.\\
  Small rejection bounds the path length to a perfectly accepted target.\\[3pt]
  Theorems~\ref{thm:overlap} and~\ref{thm:restoration}};
\node[box=blue, anchor=north, minimum height=2.65cm] (local) at (4,-4.75)
  {\textbf{Measure local sensitivity}\\[3pt]
  Copy rotations determine the response to allowed deviations. Controlling the remaining terms gives a lower bound near the family.\\[3pt]
  Lemmas~\ref{lem:normal-transition} and~\ref{lem:normal-bound}};

\node[box=teal, anchor=north, minimum height=3.3cm] (global) at (-4,-8.1)
  {\textbf{Obtain global distance bounds}\\[3pt]
  Fermions, centered Gaussians, and coherent states: Theorems~\ref{thm:fermions-calibration} and~\ref{thm:boson-base}.\\[5pt]
  Displaced Gaussians: start with proving the four-copy bound then transfer this guarantee to three copies.\\[3pt]
  Theorem~\ref{thm:boson-full}};
\node[box=blue, anchor=north, minimum height=3.3cm] (examples) at (4,-8.1)
  {\textbf{Construct examples with known distance}\\[3pt]
  Explicit deviations give the local sensitivity bounds.\\[5pt]
  Appendices~\ref{sec:fermions} and~\ref{sec:bosons}, and Lemma~\ref{lem:testing-exact-curves}};

\node[box=teal, anchor=north, minimum height=2.45cm] (testing) at (-4,-12.1)
  {\textbf{Combine bounds and repeat the test}\\[3pt]
  Larger joint tests combine global control with local sensitivity to give tolerant testing.\\[3pt]
  Theorems~\ref{thm:normal-hierarchy} and~\ref{thm:testing-optimal-gap}};
\node[box=blue, anchor=north, minimum height=2.45cm] (lower) at (4,-12.1)
  {\textbf{Bound the copies needed by any tester}\\[3pt]
  Distinguishing the hard pair already requires many copies, even with arbitrary joint measurements.\\[3pt]
  Theorem~\ref{thm:testing-lower-bound}};

\node[wide, anchor=north, minimum height=1.2cm] (optimal) at (0,-15.25)
  {\textbf{Compare the upper and lower copy bounds}\\[3pt]
  Optimal dependence on the distance gap and confidence in the stated regimes};

\draw[arr] ($(tests.south)+(-4,0)$) -- (overlap.north);
\draw[arr] ($(tests.south)+(4,0)$) -- (closest.north);
\draw[arr] (overlap.south) -- (flow.north);
\draw[arr] (flow.south) -- (global.north);
\draw[arr] (global.south) -- (testing.north);
\draw[arr] (closest.south) -- (local.north);
\draw[arr] (local.south) -- (examples.north);
\draw[arr] (examples.south) -- (lower.north);
\draw[localarr] (local.west) -- (0,0 |- local.west) |- (testing.east);
\draw[arr] (testing.south) -- ($(optimal.north)+(-4,0)$);
\draw[arr] (lower.south) -- ($(optimal.north)+(4,0)$);
\end{tikzpicture}
\endgroup
\caption{The proof has two complementary parts. The global argument turns approximate copy symmetry into a nearby target. The local argument quantifies sensitivity to small departures. Their bounds meet in the testing step, while explicit examples give independent lower bounds. Appendix~\ref{sec:preliminaries} supplies the general implications, and Appendices~\ref{sec:fermions} and~\ref{sec:bosons} establish the ingredients for each family. The full bosonic reduction keeps the joint pure output throughout the gradient-flow argument.}
\label{fig:guide-argument}
\end{figure}

\newpage

\section{Notation table}

\begin{table}[htbp]

\centering
\small
\renewcommand{\arraystretch}{1.12}
\caption{Notation used across multiple subsections. Subscripts specifying the target family or copy number are omitted when clear from context.}
\label{tab:notation}
\begin{tabularx}{\linewidth}{
    @{} >{\raggedright\arraybackslash}p{0.27\linewidth} X @{}
}
\toprule
Notation & Meaning \\
\midrule

$n$ & Number of physical modes. \\

$\mathcal H$
& One-copy Hilbert space. \\

$\ket\psi,\ket\phi$
& Pure-state vectors. $\ket\psi$ usually denotes the unknown input. \\

$\ket g$
& Target or reference state. In closest-target arguments, a state maximizing the overlap with $\ket\psi$. \\

$\ket\Omega$
& Vacuum state, or the reference Slater determinant in Section~\ref{sec:fermions}. \\

\midrule

$\mathcal T$
& Target family of unit vectors, including their global phases. \\

$\mathcal G_{\mathrm F}$
& Pure fermionic Gaussian states. \\

$\mathcal S$
& Slater determinant states, allowing arbitrary particle number. \\

$\mathcal G_0$
& Centered pure bosonic Gaussian states. \\

$\mathcal C$
& Coherent states. \\

$\mathcal G$
& All pure bosonic Gaussian states, including displacements. \\

\midrule

$q$
& Number of identical input copies measured in one block. \\

$P$, $P_q$, $P_{\mathcal T,q}$
& Orthogonal acceptance projector, with the copy number and target family indicated when needed. \\

$Q$
& Rejection projector: $Q=I-P$. \\

$R(\psi)$, $p(\psi)$
& Rejection and acceptance probabilities. For a projector test, $R(\psi)=\|Q\ket\psi^{\otimes q}\|_2^2$ and $p(\psi)=1-R(\psi)$. \\

$\mathcal Z$
& Perfectly accepted one-copy states: $\mathcal Z=\{\ket\psi:\|\ket\psi\|_2=1,\ R(\psi)=0\}$. For an exact test, $\mathcal Z=\mathcal T$. \\

\midrule

$F_{\mathcal T}(\psi)$
& Best squared overlap with the target: $\sup_{\ket g\in\mathcal T}|\ip g\psi|^2$. \\

$d_{\mathcal T}(\psi)$
& Half trace distance to the target family: $d_{\mathcal T}=\sqrt{1-F_{\mathcal T}}$. \\

$\theta_{\mathcal T}(\psi)$
& Angular distance to the target: $\theta_{\mathcal T}=\arccos\sqrt{F_{\mathcal T}}$, so $d_{\mathcal T}=\sin\theta_{\mathcal T}$. \\

\midrule

$\ket\eta$ (closest-target expansion)
& Normalized deviation in $\ket\psi=\sqrt{1-d_{\mathcal T}(\psi)^2}\ket g +d_{\mathcal T}(\psi)\ket\eta$, with $\ip g\eta=0$. \\

$D(g^{\otimes q})[\eta]$
& First-order change of the tensor power at $\ket g$ in the direction $\ket\eta$. \\

$\lambda$, $\lambda_{\mathcal T,q}$
& A positive lower bound on $\|Q(D(g^{\otimes q})[\eta])\|_2^2$, and the sharp uniform quadratic rejection coefficient for a specified target family and test, respectively. \\

$\ket{\chi_\psi}$
& Unnormalized correction in Eq.~\eqref{eq:restoration-contraction}. For permutation-invariant $P$, $\nabla R(\psi)=-2q\ket{\chi_\psi}$. \\

$\nabla R$
& Gradient on the one-copy unit sphere, using the real metric $\Re\ip uv$. \\

$\kappa$
& Constant in the gradient inequality $\|\nabla R\|_2^2\ge4q^2R(\kappa-R)$ on $0<R<\kappa$. \\

$\mathcal U_q$
& Universal rejection upper bound: $\mathcal U_q(d_{\mathcal T}^2)=1-(1-d_{\mathcal T}^2)^q$. \\

$\ell_\kappa$, $\ell_{\mathrm B}$
& Global rejection lower-bound functions defined in Eqs.~\eqref{eq:ell-definition} and \eqref{eq:ellB-definition}. \\

$B_q$, $h_{q,\lambda}$
& Squared norm of the higher-order tensor remainder, and the resulting rejection lower bound from the closest-target expansion, see Eqs.~\eqref{eq:normal-B} and \eqref{eq:normal-h}. \\

\midrule

$\varepsilon$, $\delta$
& Distance threshold for ordinary testing, and error parameter. \\

$a$, $b$ (tolerant testing)
& Near and far distance thresholds, with $0\le a<b\le1$. \\

$\eta$ (tolerant testing)
& Scalar relative gap, defined by $b=(1+\eta)a$, distinct from the deviation vector $\ket\eta$. \\

$M$ (testing statistics)
& Number of independent measured blocks. $qM$ copies are used. \\

$\mathcal U$, $L$ (testing statistics)
& Upper rejection bound for near inputs and lower rejection bound for far inputs, with $\mathcal U<L$. \\

\midrule

Units and conventions
& $\hbar=1$. 
\\

\bottomrule
\end{tabularx}
\end{table}

\newpage
\section{Preliminaries}\label{sec:math_tools}

This appendix introduces the mathematical tools used in the rest of the supplemental material: the rejection probability and the statistics of repeated tests (Sections~\ref{sec:tensor-powers} and~\ref{sec:tolerant_testing_intro}), groups, Lie groups, and representations (Section~\ref{sec:group-theory}), calculus on the unit sphere (Section~\ref{sec:math-geometry}), and the fermionic and bosonic settings themselves (Section~\ref{sec:math-fb}).

\subsection{Tensor powers}
\label{sec:tensor-powers}
For a nonempty phase-invariant family $\mathcal T$ of unit vectors, put
\begin{subequations}\label{eq:distances}
\begin{align}
 F_{\mathcal T}(\psi)&=\sup_{\ket g\in\mathcal T}|\ip g\psi|^2, \nonumber\\
 d_{\mathcal T}(\psi)&=\sqrt{1-F_{\mathcal T}(\psi)}, \nonumber\\
 \theta_{\mathcal T}(\psi)&=\arccos\sqrt{F_{\mathcal T}(\psi)}. \label{eq:d-sin}
\end{align}
\end{subequations}
We call $F_{\mathcal T}$ the \emph{target fidelity}. Thus $d_{\mathcal T}(\psi)=\sin\theta_{\mathcal T}(\psi)$. We suppress the target subscript and the input argument when unambiguous. For normalized states, the trace distance and the phase-adjusted vector distance satisfy
\begin{subequations}\label{eq:prelim-trace-distance}
\begin{align}
 \tfrac12\|\ket\psi\bra\psi-\ket g\bra g\|_1
 &=\sqrt{1-|\ip g\psi|^2}, \nonumber\\
 \inf_\alpha\|\ket\psi-\e^{i\alpha}\ket g\|_2^2
 &=2(1-|\ip g\psi|).
\end{align}
\end{subequations}
Moreover $\bigl||\ip g\psi|^2-|\ip g\phi|^2\bigr| \le2\|\ket\psi-\ket\phi\|_2$. Taking suprema shows that $F,d,\theta$ are continuous, without requiring an attained supremum. For a positive integer $q$, let $P$ be an orthogonal projector on $\mathcal H^{\otimes q}$ and set $Q=I-P$. Define the rejection and acceptance probabilities as
\begin{subequations}\label{eq:main-residual}
\begin{align}
 R(\psi)&=\|Q\ket\psi^{\otimes q}\|_2^2, \nonumber\\
 p(\psi)&=1-R(\psi)\label{eq:acceptance}.
\end{align}
\end{subequations}
Here $\Ran P$ denotes the range of $P$, its accepted subspace. Exactness for $\mathcal T$ means that, for unit $\ket\psi$, $\ket\psi^{\otimes q}\in\Ran P$ if and only if $\ket\psi\in\mathcal T$. Perfect completeness means $R(g)=0$ for all $\ket g\in\mathcal T$ and implies the universal bound
\begin{align}
 R(\psi)&\le1-F_{\mathcal T}(\psi)^q\le qd_{\mathcal T}(\psi)^2,
 \label{eq:universal-upper}
\end{align}
since $p(\psi)=\|P\ket\psi^{\otimes q}\|_2^2 \ge\sup_{\ket g\in\mathcal T}|\ip g\psi|^{2q} =F_{\mathcal T}(\psi)^q=(1-d_{\mathcal T}(\psi)^2)^q$. The map $\ket\psi\mapsto\ket\psi^{\otimes q}$ is a bounded polynomial on bounded sets, so $R$ is smooth.

\subsection{Tolerant testing}\label{sec:tolerant_testing_intro}
Here we present a tolerant tester. Indeed, we introduce a procedure that allows us to obtain a tolerant tester. This is a standard result, presented for example in~\cite{boucheron_concentration}. For $0\le a<b\le1$, an $(a,b)$ tolerant test accepts inputs with $d_{\mathcal T}(\psi)\le a$ and rejects inputs with $d_{\mathcal T}(\psi)\ge b$, with bounded error in each case. There is no requirement for inputs between these two distances. The case $a=0$ is non-tolerant testing. 

We apply the same $q$-copy measurement to $M$ disjoint blocks of the input $\ket\psi^{\otimes qM}$. Write $Y_i=1$ when block $i$ rejects and $Y_i=0$ when it accepts. The outcomes are independent and satisfy
\begin{align}
 \Pr\{Y_i=1\}&=r=R(\psi), \nonumber \\
 \overline Y&=\frac1M\sum_{i=1}^M Y_i.
 \label{eq:testing-empirical-mean}
\end{align}
To construct a tolerant test, we need an upper bound $\mathcal{U}$ on rejection for every near input and a lower bound $L>\mathcal{U}$ for every far input. The following proposition converts this separation into a copy count.

\begin{proposition}[From rejection gaps to testing guarantees]\label{prop:testing-statistics}
Suppose a perfectly complete $q$-copy test rejects every far input with probability at least $L>0$. In the non-tolerant case, accepting only when every block accepts has zero error on targets and error at most $\delta$ on far inputs, using
\begin{align}
 q\left\lceil L^{-1}\log(1/\delta)\right\rceil
 \label{eq:testing-exact-count}
\end{align}
copies. For tolerant testing, suppose instead that near inputs have rejection probability in $[0,\mathcal{U}]$ and far inputs in $[L,1]$, where $0\le \mathcal{U}<L\le1$. Accept when $\overline Y<(L+\mathcal{U})/2$ and reject otherwise. Each error is at most $\delta/2$ using
\begin{align}
 q\left\lceil\frac{8L}{(L-\mathcal{U})^2}\log\frac2\delta\right\rceil
 \label{eq:testing-tolerant-count}
\end{align}
copies, for $0<\delta<1$.
\end{proposition}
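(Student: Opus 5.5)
The proof splits into the two regimes, and both are standard concentration arguments for independent Bernoulli block outcomes $Y_1,\dots,Y_M$ with mean $r=R(\psi)$.

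\textbf{Non-tolerant case.} Perfect completeness gives $R(g)=0$ for every $\ket g\in\mathcal T$. Hence every block accepts with certainty, and the rule ``accept iff all blocks accept'' never errs on targets. For a far input, $r\ge L$. By independence, the probability that all $M$ blocks accept is
\begin{align*}
(1-r)^M\le(1-L)^M\le \e^{-LM}.
\end{align*}
Choosing $M=\lceil L^{-1}\log(1/\delta)\rceil$ makes this at most $\delta$. Since each block consumes $q$ copies, this gives the count~\eqref{eq:testing-exact-count}.

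\textbf{Tolerant case.} Set the threshold $\tau=(L+\mathcal U)/2$ and the half-gap $t=(L-\mathcal U)/2$. I will use the multiplicative Chernoff bounds (as in~\cite{boucheron_concentration}), rather than Hoeffding, because the claimed count scales as $L/(L-\mathcal U)^2$ rather than $1/(L-\mathcal U)^2$.

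For a far input, $r\ge L$. Lowering $r$ can only make $\{\overline Y<\tau\}$ more likely (stochastic monotonicity of the binomial), so it suffices to take $r=L$. The lower-tail bound $\Pr\{\overline Y\le r-t\}\le \exp(-Mt^2/(2r))$ then gives
\begin{align*}
\exp\!\bigl(-M(L-\mathcal U)^2/(8L)\bigr).
\end{align*}

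For a near input, $r\le\mathcal U$, and by the same monotonicity we may take $r=\mathcal U$. The upper-tail Bernstein/Chernoff bound gives
\begin{align*}
\Pr\{\overline Y\ge r+t\}\le \exp\!\Bigl(-\frac{Mt^2}{2(r+t/3)}\Bigr).
\end{align*}
Here $r+t/3\le \mathcal U+t\le L$, so this is at most $\exp(-Mt^2/(2L))=\exp(-M(L-\mathcal U)^2/(8L))$.

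With $M=\lceil 8L(L-\mathcal U)^{-2}\log(2/\delta)\rceil$, both error probabilities are at most $\delta/2$, and multiplying by $q$ copies per block yields~\eqref{eq:testing-tolerant-count}.

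The only delicate step is matching the constant $8$ in both tails. This requires the variance-sensitive Chernoff form with denominator $\le L$, handled via the monotone reduction to the extremal values $r=L$ and $r=\mathcal U$. If the constants come out tighter or looser, I would adjust using the bound $r(1-r)\le L$ in Bernstein's inequality, which gives the same scaling.
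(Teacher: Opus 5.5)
Your proof is correct. The non-tolerant bound $(1-L)^M\le\exp(-LM)$ handles the first case, and in the tolerant case you reduce to the extremal rates $r=L$ and $r=\mathcal U$, then apply the multiplicative lower-tail Chernoff bound and Bernstein's upper-tail bound, both of which give $\exp\bigl(-M(L-\mathcal U)^2/(8L)\bigr)$ and hence exactly the stated counts. The paper gives no argument of its own beyond citing a standard concentration reference, so your variance-sensitive Chernoff/Bernstein route is essentially the intended proof.
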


\subsection{Group theory}\label{sec:group-theory}

To develop the argument, we first introduce the group-theoretic notions on which it relies. Recall that a group $G$ is a set with an associative binary operation (we write $gh$ for $g\circ h$) that has an identity $1_G$ and an inverse $g^{-1}$ for each $g\in G$; the identity and inverses are then unique and two-sided. A subgroup $H\le G$ is a subset closed under products and inverses. If $K\le G$ and $gKg^{-1}=K$ for all $g\in G$, then $K$ is \emph{normal} ($K\unlhd G$) and the cosets $G/K=\{gK:g\in G\}$ form a group. 

A homomorphism $f:G\to G'$ satisfies $f(g_1g_2)=f(g_1)f(g_2)$, which forces $f(1_G)=1_{G'}$ and $f(g^{-1})=f(g)^{-1}$; its kernel is normal, and the first isomorphism theorem gives $G/\ker f\cong\operatorname{im}f$.\\

\paragraph{Global phase.}
The first quotient we meet is global phase: the scalars $\{\e^{i\alpha}I\}\cong U(1)$ are normal in $U(\mathcal H)$ and do not change $\rho_\psi$, so a unitary only matters up to a phase, and $F_{\mathcal T}$, $d_{\mathcal T}$, and $R$ are all blind to it; the double cover $\mathrm{Spin}(2n)\to SO(2n)$ of Table~\ref{tab:math-orbits}, whose kernel is $\{\pm I\}$, is a quotient of the same kind.\\

\paragraph{Actions and orbits.}
A group $G$ \emph{acts} on a set $X$ if there is a map $G\times X\to X$, $(g,x)\mapsto g\cdot x$, with (i) $1_G\cdot x=x$ and (ii) $(gh)\cdot x=g\cdot(h\cdot x)$. Define the \emph{orbit} $\mathrm{orb}_G(x)=Gx=\{g\cdot x:g\in G\}$ and the \emph{stabilizer} $\mathrm{stab}_G(x)=G_x=\{g\in G:g\cdot x=x\}\le G$. The orbit-stabilizer theorem says $gG_x\mapsto g\cdot x$ is a well-defined bijection $G/G_x\to Gx$. Every target family in this paper (for $\GF$ each parity sector, for $\Sl$ each particle number) is a single orbit of a reference state $\ket\Omega$ under a group $G$ of free unitaries, the Gaussian unitaries or a subgroup of them (Table~\ref{tab:math-orbits}),
\begin{align*}
 \mathcal T=\{U_h\ket\Omega:\ h\in G\}\cong G/G_\Omega,
\end{align*}
so $d_{\mathcal T}$ is a distance to an orbit. Two things follow. First, orbits are not subspaces: $\ket{0,0,0,0}$ and $\ket{1,1,1,1}$ both lie in $\mathcal G_{\mathrm F}^{+}$, but $\sqrt{1-t^2}\ket{0,0,0,0}+t\ket{1,1,1,1}$ does not for $0<t\le1/\sqrt2$.

We can see that $d_{\mathcal T}$ is a distance to a curved set (so we can't linearly project), and this is why we need the geometry in Section~\ref{sec:math-geometry}. Second, for $V\in G$, $V\mathcal T=\mathcal T$ gives $F_{\mathcal T}(V\psi)=F_{\mathcal T}(\psi)$, and if $V^{\otimes q}$ commutes with $P$ then $R(V\psi)=R(\psi)$ as well.

\subsubsection{Lie groups}\label{sec:math-lie}
Continuous symmetries of a physical system are generated by their velocity at the identity; by Noether's theorem, that generator is a conserved quantity. Lie theory formalizes this: the tangent space at the identity, with its commutator, linearizes the group, and exponentiating a tangent vector recovers a one-parameter family of symmetries. For us, this is the dictionary between Gaussian unitaries and the quadratic Hamiltonians that generate them, and it is how the tangent directions of a target family, which control the closest-target expansion, are computed.

A {Lie group} is a smooth manifold with a group structure whose multiplication and inversion are smooth. The groups in Table~\ref{tab:math-orbits} are closed subgroups of $GL(m,\mathbb R)$ or $GL(m,\mathbb C)$, so by Cartan's closed subgroup theorem~\cite{hall_lie} they are Lie groups, and so are their stabilizers; in particular $G/G_x$ is a smooth manifold. The \emph{Lie algebra} is the tangent space at the identity,
\begin{align}
 \mathfrak g=T_{1}G=\{X:\e^{tX}\in G\text{ for all }t\in\mathbb R\},
\end{align}
closed under $[X,Y]=XY-YX$. The exponential map $\mathfrak g\to G$ is a diffeomorphism near $0$, and $t\mapsto\e^{tX}$ are exactly the one-parameter subgroups. For unitary groups, take $X=-iH$ with $H$ self-adjoint, so $\e^{-itH}$ is the familiar time evolution; on bosonic Fock space, $H$ is unbounded and Stone's theorem still matches self-adjoint generators with strongly continuous one-parameter unitary groups. A smooth action of $G$ on a manifold has orbits $Gx\cong G/G_x$ that are immersed submanifolds (embedded and compact when $G$ is compact), with
\begin{align}
 T_x(Gx)=\mathfrak g\cdot x=\Bigl\{\tfrac{\dd}{\dd t}\e^{tX}\!\cdot x\big|_{t=0}:X\in\mathfrak g\Bigr\},
 \qquad
 \dim Gx=\dim G-\dim G_x.
 \label{eq:math-orbit-tangent}
\end{align}
The fermionic Gaussian unitaries act on the finite-dimensional Fock space, so this applies directly. The bosonic ones act on an infinite-dimensional space, only strongly continuously, but the orbit map $h\mapsto U_h\ket\Omega$ is still smooth because the vacuum is an analytic vector for quadratic generators, and Eq.~\eqref{eq:math-orbit-tangent} holds there as well. Finally, a compact group carries a unique left- and right-invariant probability measure $\dd h$ (\emph{Haar measure}), and for a unitary representation $h\mapsto U_h$,
\begin{align}
 P_G=\int_GU_h\,\dd h
 \label{eq:math-haar}
\end{align}
is the orthogonal projector onto the vectors fixed by every $U_h$: invariance gives $U_kP_G=P_G=P_GU_k$, hence $P_G^2=P_G$, and invariance under $h\mapsto h^{-1}$ gives $P_G^\dagger=P_G$. 
For $U(1)$ acting by $\e^{i\theta A}$ with integer spectrum, $P_G=\mathbf 1_{\{0\}}(A)$ projects onto $\ker A$, which is how the bosonic tests are measured by photon counting (Section~\ref{subsec:boson-measurements}, Eq.~\eqref{eq:boson-optical-generator}).

\begin{table}[ht]
 \centering
 \small
 \begin{tabular}{@{}llllc@{}}
 \toprule
 Family & Free group $G$ & Generators ($\mathfrak g$) & Orbit, real dimension & Compact\\
 \midrule
 $\GFs$ & $\mathrm{Spin}(2n)\to SO(2n)$ & $\gamma_\mu\gamma_\nu$ & $SO(2n)/U(n)$, \ $n(n-1)$ & yes\\
 $\Sl$, $k$ particles & $U(n)$ (passive) & $a_i^\dagger a_j$ & $\mathrm{Gr}(k,\mathbb C^n)$, \ $2k(n-k)$ & yes\\
 $\Coh$ & displacements $\mathbb C^n$ & $a_j^\dagger,\ a_j$ & $\mathbb C^n$, \ $2n$ & no\\
 $\Gzero$ & $Sp(2n,\mathbb R)$ (metaplectic) & $a_i^\dagger a_j,\ a_i^\dagger a_j^\dagger,\ a_ia_j$ & $Sp(2n,\mathbb R)/U(n)$, \ $n(n+1)$ & no\\
 $\GB$ & both of the above & all of the above & $\bigl(Sp(2n,\mathbb R)\ltimes\mathbb C^n\bigr)/U(n)$, \ $n(n+3)$ & no\\
 \bottomrule
 \end{tabular}
 \caption{Each target family is the orbit of a reference state ($\ket\Omega$, or $e_1\wedge\cdots\wedge e_k$ for $\Sl$) under a group of free unitaries, generated by operators quadratic (for displacements, linear) in the mode operators. The stabilizer in $SO(2n)$, $U(n)$, or $Sp(2n,\mathbb R)$ is the number-conserving part, $U(n)$ or $U(k)\times U(n-k)$; for $\Coh$ it is trivial up to phase. Compactness decides whether closest targets exist automatically.}
 \label{tab:math-orbits}
\end{table}

It is worth noting that $\mathrm{Spin}(2n)$ and $U(n)$ are compact, so the fermionic families, as continuous images of compact groups (one orbit per parity or particle number), are compact, and the continuous function $\ket g\mapsto|\ip g\psi|$ attains its maximum on them; this is why a closest fermionic target always exists. The bosonic groups are not compact, and there a closest target comes from Lemma~\ref{lem:normal-attainment}.

\subsubsection{Representation theory}\label{sec:math-rep}
Linear algebra splits a space by the eigenvalues of one operator; representation theory splits it under a whole group at once, into pieces the group cannot mix, and anything built from the group respects the split. This is how we structure our tests; each accepted subspace is the set of vectors fixed by a second group, acting on copies rather than modes. For bosons this group rotates the copy labels orthogonally (Eq.~\eqref{eq:boson-group-average}); for fermionic Gaussian states it rotates the Majorana copy labels, and for Slater determinants it transfers particles between copies (below). Decomposing $\mathcal H^{\otimes q}$ under this action measures how far $\ket\psi^{\otimes q}$ is from fixed, one irreducible piece at a time, each finite-dimensional; Lemmas~\ref{lem:so3} and~\ref{lem:fermions-su3} bound a spectral gap uniformly over all pieces, which is why the bounds do not depend on the number of modes or, for bosons, on photon number.

A (unitary) \emph{representation} of $G$ on $V$~\cite{hall_lie} is a continuous homomorphism $h\mapsto U_h\in U(V)$. A subspace is invariant if every $U_h$ preserves it, and $V$ is \emph{irreducible} (an irrep) if its only closed invariant subspaces are $0$ and $V$. For unitary representations the orthogonal complement of an invariant subspace is invariant, and every irrep of a compact group is finite-dimensional, so $V$ splits into irreps (a Hilbert direct sum in infinite dimensions); grouping equivalent ones gives the \emph{isotypic decomposition}
\begin{align}
 V\cong\bigoplus_iV_i\otimes M_i,
 \qquad
 \dim V=\sum_im_i\dim V_i,
 \label{eq:math-isotypic}
\end{align}
where $G$ acts on the irrep $V_i$ and trivially on the multiplicity space $M_i\cong \mathbb{C}^{m_i}$ (note this is just the vector space indexing $m_i$ copies of $V_i$ in $V$ and isn't itself a subspace of $V$). Schur's lemma says a map between irreps that commutes with $G$ is zero or an isomorphism, and an endomorphism of an irrep commuting with $G$ is a scalar. Hence inequivalent isotypic components are orthogonal, operators commuting with $G$ act as $\bigoplus_iI_{V_i}\otimes A_i$, and operators built from $G$ (group elements, generators, averages) act as $\bigoplus_iB_i\otimes I_{M_i}$. The fixed vectors form the trivial isotypic component, onto which Eq.~\eqref{eq:math-haar} projects. 

\paragraph{Symmetric subspace.}
The simplest copy group is the symmetric group $S_q$, acting on $\mathcal H^{\otimes q}$ by permuting the tensor factors, $W_\pi(v_1\otimes\cdots\otimes v_q)=v_{\pi^{-1}(1)}\otimes\cdots\otimes v_{\pi^{-1}(q)}$. Averaging it as in Eq.~\eqref{eq:math-haar}, a finite sum here, gives
\begin{align}
 P_{\Sym}=\frac1{q!}\sum_{\pi\in S_q}W_\pi,
 \label{eq:math-sym}
\end{align}
the orthogonal projector onto the vectors fixed by every permutation, which form the \emph{symmetric subspace} $\Sym^q(\mathcal H)=\Ran P_{\Sym}$; weighting each $W_\pi$ by the sign of $\pi$ gives the antisymmetric subspace $\Lambda^q(\mathcal H)$ instead. We use two properties. First, $\ket\psi^{\otimes q}\in\Sym^q(\mathcal H)$ for every $\ket\psi$, since permuting identical copies does nothing. Second, an operator that commutes with every $W_\pi$, such as $U^{\otimes q}$ or a permutation-invariant acceptance projector (the setting of Lemma~\ref{lem:first-variation}), maps $\Sym^q(\mathcal H)$ into itself, so a test can be studied on $\Sym^q(\mathcal H)$ alone. The same construction builds the bosonic and fermionic particle spaces (Section~\ref{sec:math-fock}), where it symmetrizes particles rather than copies.

\paragraph{Spin.}
The irreps we need are those of $SU(2)$ and $SO(3)$, the groups of angular momentum~\cite{hall_lie}. Intuitively, an irrep is labeled by one number, its spin $\ell$, and inside it the angular momentum along any axis takes each value $-\ell,\ldots,\ell$ exactly once. Concretely, take a basis $J_x,J_y,J_z$ of the Lie algebra and the ladder operators $J_\pm=J_x\pm iJ_y$, so that
\begin{align}
 [J_z,J_\pm]=\pm J_\pm,\qquad[J_+,J_-]=2J_z
 \label{eq:math-su2}
\end{align}
(the relations of the Slater transfer operators, Eq.~\eqref{eq:fermions-slater-generators}). The first relation says $J_\pm$ raise and lower the $J_z$-eigenvalue, called the \emph{weight}, by one; a finite-dimensional irrep $V_\ell$ is a single ladder from a top vector of weight $\ell$ (a \emph{highest weight vector}, killed by $J_+$) down to $-\ell$, with one state per weight,
\begin{align}
 J_z\ket{\ell,m}=m\ket{\ell,m},\qquad m=-\ell,-\ell+1,\ldots,\ell,
 \label{eq:math-weight-basis}
\end{align}
so $\dim V_\ell=2\ell+1$ and $\ell\in\frac12\mathbb N$. The Casimir $J^2=J_x^2+J_y^2+J_z^2$ commutes with every generator, so by Schur's lemma it equals $\ell(\ell+1)$ on $V_\ell$; its kernel is the spin-$0$ part, the \emph{singlets}, which is what the Slater test accepts (Eq.~\eqref{eq:fermions-slater-test}). The irreps of $SO(3)$ are those with integer $\ell$ (a $2\pi$ rotation acts as $(-1)^{2\ell}$), and $V_\ell$ can be realized as the degree-$\ell$ spherical harmonics on the unit sphere. Finally, the component $J_u=u\cdot J$ along a unit axis $u$ is conjugate to $J_z$ by a rotation, so it has spectrum $\{-\ell,\ldots,\ell\}$ on $V_\ell$ with each eigenvalue once; for integer $\ell$ the vectors in $V_\ell$ fixed by all rotations about $u$ therefore form a line, spanned by a \emph{zonal} state. This is the fact behind Lemma~\ref{lem:so3}, where $J_u$ is written $L_u$.

\paragraph{Copy symmetries of the tests.}
Each test accepts the vectors fixed by a group acting on the copy labels:
\begin{itemize}
 \item for $\GF$, the rotations $SO(q)$ of the Majorana copy labels; on two copies the generator is $J_{1,2}=-\tfrac12\Pi_1\Lambda$ (Eq.~\eqref{eq:normal-fermion-generators}), so Bravyi's criterion $\Lambda\ket\psi^{\otimes2}=0$~\cite{bravyi} says exactly that $\ket\psi^{\otimes2}$ is fixed (Lemma~\ref{lem:normal-fermion-fixed-spaces});
 \item for $\Sl$, the transfers $SU(q)$ of particles between copies; on two copies this is the $SU(2)$ above, and the test accepts its singlets;
 \item for bosons, subgroups $G\le O(q)$ acting through $\Gamma(I_n\otimes O)$ (Eq.~\eqref{eq:boson-group-average}).
\end{itemize}
The global bounds use an average of overlapping projectors, $T=\frac1m\sum_iP_i$. Since each $P_i$ is a projector, the eigenvalue-$1$ eigenspace of $T$ is exactly $\bigcap_i\Ran P_i$, the states every $P_i$ accepts, and a bound $\Spec(T)\subset\{1\}\cup[0,\beta]$ with $\beta<1$ is a spectral gap below it; Theorem~\ref{thm:overlap} turns this gap into the gradient inequality. Here $m=3$: on three copies the three pairwise tests become rotations about three orthogonal axes ($SO(3)$, for $\GF$ and for bosons) or the three pairwise $SU(2)$ subgroups of $SU(3)$ (for $\Sl$). Lemmas~\ref{lem:so3} and~\ref{lem:fermions-su3} bound the spectrum of the average of the three pair projectors by $\{1\}\cup[0,7/12]$ and $\{1\}\cup[0,5/9]$ in every representation, so the bounds depend on neither the spin nor, through it, the number of modes or photons. For $SO(3)$ this comes from the zonal states above, whose overlaps are Legendre polynomials, $\ip{z_u}{z_v}=P_\ell(u\cdot v)$.

\subsection{Geometry}\label{sec:math-geometry}

An input is a unit vector, so all of the geometry takes place on the unit sphere $S=\{\psi\in\mathcal H:\|\psi\|_2=1\}$; the target families are curved subsets of it (orbits, Section~\ref{sec:group-theory}). The rejection probability is a smooth function $R:S\to[0,1]$ that vanishes exactly on the target family, by exactness (Section~\ref{sec:tensor-powers}), and robustness asks how fast $R$ grows as one leaves $\mathcal T$. Near $\mathcal T$ this is a second-derivative question; far from it, it is a question about critical points and gradients. Both only need $S$ as a smooth manifold with the round metric $\Re\ip uv$, whose geodesics are the great circles $\cos s\,\psi+\sin s\,v$ ($\Re\ip\psi v=0$, $\|v\|_2=1$). Since $\mathcal T$ contains every phase of its elements, the geodesic distance from $\psi$ to $\mathcal T$ is the angle of Eq.~\eqref{eq:distances},
\begin{align*}
 \theta_{\mathcal T}=\arccos\sqrt{F_{\mathcal T}},\qquad d_{\mathcal T}=\sin\theta_{\mathcal T}.
\end{align*} For bosons $S$ is infinite-dimensional, and the same calculus applies. The one other geometric object we use is the set of tensor powers $\mathcal M_q=\{\psi^{\otimes q}\}\subset\Sym^q(\mathcal H)$, where the tests act. The tensor-power map turns homogeneous degree-$q$ polynomial conditions on $\psi$ into linear conditions on $\psi^{\otimes q}$ (it is the Veronese map into $\Sym^q(\mathcal H)$), which is why a projector can test a curved family. It is also why one copy is never enough: for $q=1$ the map is the identity, so a perfectly complete one-copy test accepts every unit vector in the span of $\mathcal T$, and the span contains non-targets (such as the superposition in Section~\ref{sec:group-theory}). On two copies the quadratic equations that cut out $\GF$, $\Sl$, $\Gzero$, and $\Coh$ become linear, while $\GB$ needs three (Proposition~\ref{prop:boson-two-copy-impossible}). The map also scales lengths by exactly $\sqrt q$,
\begin{align*}
 \|D(\psi^{\otimes q})[v]\|_2=\sqrt q\,\|v\|_2\qquad(\ip\psi v=0),
\end{align*}
since the $q$ summands of Eq.~\eqref{eq:tensor-derivative} are then orthogonal. So geometry on $\mathcal M_q$ is geometry on $S$ but rescaled.

\subsubsection{Calculus on the sphere}\label{sec:math-sphere}
Viewing $\mathcal H$ as a real inner product space with $\Re\ip uv$ (the real metric), a curve on the unit sphere through $x$ has its velocity in the tangent plane $x^\perp$, and the gradient of $f$ along the sphere is the tangent vector whose inner product with each velocity is the rate of change of $f$ (similar to $\mathbb R^3$). A \emph{smooth curve of unit vectors} through $\psi$ is a family $\ket{\psi(t)}\in S$, differentiable in a real parameter $t$ near $0$ (not necessarily physical time), with $\ket{\psi(0)}=\ket\psi$ and velocity $\ket{\dot\psi(0)}=\ket v$, so $\ket{\psi(t)}=\ket\psi+t\ket v+O(t^2)$. Differentiating $\ip{\psi(t)}{\psi(t)}=1$ at $t=0$ gives $2\Re\ip\psi v=0$, so the velocities form the \emph{tangent space}
\begin{align}
 T_\psi S=\{v:\Re\ip\psi v=0\},
 \label{eq:math-tangent-space}
\end{align}
and each such $v$ occurs: along a great circle below, or, for $\ip\psi v=0$, along $\ket{\psi(t)}=(\ket\psi+t\ket v)/\sqrt{1+t^2\|v\|_2^2}$. Normalization only forces $\Re\ip\psi v=0$; the remaining $\Im\ip\psi v$ is a first-order change of global phase, which the stronger condition $\ip\psi v=0$ also removes. For a smooth $f:S\to\mathbb R$, such as $p$ or $R$, the directional derivative $Df(\psi)[v]=\frac{\dd}{\dd t}f(\psi(t))\big|_{t=0}$ depends only on $v$ and is real-linear in it, so it is represented by a unique tangent vector, the \emph{gradient}:
\begin{align}
 Df(\psi)[v]=\Re\ip{\nabla f(\psi)}v\quad\text{for all }v\in T_\psi S,
 \qquad
 \nabla f(\psi)=\tilde\nabla f(\psi)-\Re\ip\psi{\tilde\nabla f(\psi)}\,\psi.
 \label{eq:math-gradient}
\end{align}
Here $\tilde\nabla f$ is the Euclidean gradient of any smooth extension of $f$ to $\mathcal H$ (for $p$, the function $\ip{\psi^{\otimes q}}{P\psi^{\otimes q}}$ on all of $\mathcal H$), and the second formula subtracts the normal component. Lemma~\ref{lem:first-variation} computes this gradient for the rejection probability, $\nabla R=-2q\ket{\chi_\psi}$, which is tangent since $\ip\psi{\chi_\psi}=0$ (Eq.~\eqref{eq:restoration-contraction}).

Because $R(\e^{i\alpha}\psi)=R(\psi)$, the phase direction $i\psi\in T_\psi S$ is always flat, and it suffices to vary along directions with $\ip\psi v=0$, which is the convention of Lemma~\ref{lem:first-variation}. A \emph{critical point} is a $\psi$ with $\nabla R(\psi)=0$. At a critical point, the second derivative of $R$ along a curve in $S$ with velocity $v$ depends only on $v$ (two such curves differ in acceleration by a tangent vector, which pairs with the vanishing gradient), which defines $\operatorname{Hess}_SR(\psi)[v,v]$; as in calculus, a negative value rules out a local minimum.

At a target $g$, $R$ takes its minimum value $0$, so $g$ is critical. Along a great circle $\psi_s=\cos s\,g+\sin s\,\eta$, with unit $\eta$ orthogonal to $g$ and to the tangent directions $\mathfrak g\cdot g$ of the family (Section~\ref{sec:math-closest}), $Q\psi_s^{\otimes q}=s\,Q\,D(g^{\otimes q})[\eta]+O(s^2)$, so
\begin{align}
 \operatorname{Hess}_SR(g)[\eta,\eta]=2\bigl\|Q\,D(g^{\otimes q})[\eta]\bigr\|_2^2\ge2\lambda_{\mathcal T,q}
 \label{eq:math-normal-hessian}
\end{align}
by Eq.~\eqref{eq:higher-copy-coefficient-bound}, while the Hessian vanishes along $\mathcal T$, where $R\equiv0$. So $R$ is flat along the family and curves up at a uniform rate across it, with the mode-independent constants of Table~\ref{table:normal-coefficients}. This is the local picture. The global question is how steep $R$ is away from $\mathcal T$, which gradient flow answers (Section~\ref{sec:math-flow}); for fermions we also show that $R$ has no other local minima (Section~\ref{sec:math-closest}).

\subsubsection{Closest-point projections}\label{sec:math-closest}
In a vector space, the closest point of a subspace is the orthogonal projection, and the error is orthogonal to the subspace. For a curved set this survives to first order. Let $g$ maximize $|\ip g\psi|$ over $\mathcal T$ (it exists when $\mathcal T$ is compact, and by Lemma~\ref{lem:normal-attainment} for bosons) and write $\psi=\sqrt{1-d_{\mathcal T}^2}\,g+d_{\mathcal T}\,\eta$ as in Eq.~\eqref{eq:normal-frame}. Then $\eta$ is orthogonal to the tangent space of the orbit at $g$, which by Eq.~\eqref{eq:math-orbit-tangent} is $\mathfrak g\cdot g$; its part orthogonal to $g$ is a complex subspace for all our groups, so
\begin{align*}
 \ip v\eta=0\qquad\text{for every }v\in\mathfrak g\cdot g\text{ with }\ip gv=0
\end{align*}
(Lemma~\ref{lem:normal-stationarity}). After moving $g$ to the reference state, $\mathfrak g\cdot\ket\Omega$ is spanned, beyond the phase, by what the generators of Table~\ref{tab:math-orbits} create from the vacuum: pairs $a_i^\dagger a_j^\dagger\ket\Omega$ for $\GF$ and $\Gzero$, single photons $a_j^\dagger\ket\Omega$ for $\Coh$, both for $\GB$, and single orbital replacements for $\Sl$ (annihilators kill $\ket\Omega$, number-conserving quadratics act on it by a scalar, and on the Slater reference $a_i^\dagger a_j$ with $j$ occupied and $i$ empty is one replacement). The deviation therefore has no components in those degrees, which gives the restrictions of Table~\ref{table:normal-allowed-degrees}; and its second-order consequence is Eq.~\eqref{eq:math-normal-hessian}. Expanding the tests to leading order in the allowed deviations gives the local bound of Lemma~\ref{lem:normal-bound} and the sharp coefficients $\lambda_{\mathcal T,q}$ of Table~\ref{table:normal-coefficients}, which are what tolerant testing uses.

For fermions the same calculus also describes the rest of the landscape. Within a fixed parity (for Slaters, particle-number) sector, the Hessian identity of Theorem~\ref{thm:compact-maximum} gives, at every critical point with $R>0$, directions $v_j$ with $\sum_j\operatorname{Hess}_SR(\psi)[v_j,v_j]<0$ (Theorem~\ref{thm:fermion-hessian}), and Corollary~\ref{cor:fermion-unpromised-landscape} extends this to superpositions of sectors; so the targets are the only local minima of $R$ (Section~\ref{subsec:fermion-first-proof}). The distance bounds do not need this; they come from gradient flow (Section~\ref{sec:math-flow}).

\subsubsection{Gradient flow on the sphere}\label{sec:math-flow}
In $\mathbb R^d$, following $\dot x=-\nabla f$ decreases $f$ at rate $\frac{\dd}{\dd t}f=-\|\nabla f\|^2$. If moreover $\|\nabla f\|\ge c\sqrt f$ wherever $f>0$ (a {\L}ojasiewicz-type inequality), then per unit of arc length $s$,
\begin{align}
 \frac{\dd\sqrt f}{\dd s}=-\frac{\|\nabla f\|}{2\sqrt f}\le-\frac c2,
\end{align}
so $\sqrt f$ reaches $0$ within length $2\sqrt{f(x_0)}/c$, and the zero set lies within that distance of $x_0$. On the sphere one follows the tangential gradient instead, so the flow stays on $S$, and the argument is otherwise the same. Our distance bounds do this with $f=R$ (Section~\ref{sec:restoration}, and Section~\ref{subsec:fermion-second-proof} for fermions). The flow $\dot\psi=2q\ket{\chi_\psi}=-\nabla R$ (Lemma~\ref{lem:first-variation}) is tangent and orthogonal to $\psi$, the gradient inequality (Eq.~\eqref{eq:restoration-slope}) comes from the spectral gap of Theorem~\ref{thm:overlap}, and the endpoint lies in the zero set $\mathcal Z=\mathcal T$ by exactness. Integrating the inequality, $\arcsin\sqrt{R/\kappa}$ decreases at rate at least $q$ per unit length, so for $r_0<\kappa$
\begin{align*}
 \|\nabla R\|_2^2\ge4q^2R(\kappa-R)
 \quad\Longrightarrow\quad
 \theta_{\mathcal T}(\psi)\le\text{path length}\le\tfrac1q\arcsin\sqrt{r_0/\kappa},
\end{align*}
which is the envelope of Theorem~\ref{thm:restoration}. This uses neither compactness nor a closest target, which is why the same argument covers the infinite-dimensional, noncompact bosonic setting.

\subsection{Fermionic and bosonic quantum computation}\label{sec:math-fb}
Identical particles in three dimensions are either fermions or bosons. Fermions model matter; electrons in molecules and materials, and cold atoms in optical lattices, which qubit devices simulate through the Jordan--Wigner encoding. Bosons model light and vibrations, and underlie photonic and continuous-variable quantum technologies. In both, the Gaussian states are the free, noninteracting states: ground and thermal states of quadratic Hamiltonians, including the Slater determinants of Hartree--Fock theory~\cite{Arute2020HartreeFock} and the coherent and squeezed states of quantum optics~\cite{weedbrook_gaussian}. They are also classically easy in the right model: matchgate circuits with occupation-basis inputs and measurements~\cite{terhal_divincenzo}, and bosonic Gaussian circuits with Gaussian measurements, can be simulated efficiently. Table~\ref{tab:math-fb} summarizes how these two settings compare.
\begin{table}[ht]
 \centering
 \small
 \begin{tabular}{@{}lll@{}}
 \toprule
 & Fermions & Bosons\\
 \midrule
 One mode & $\mathbb C^2$ (occupation $0$ or $1$) & $\ell^2(\mathbb N)\cong L^2(\mathbb R)$\\
 Relations & $\{a_i,a_j^\dagger\}=\delta_{ij}$, $\{a_i,a_j\}=0$ & $[a_i,a_j^\dagger]=\delta_{ij}$, $[a_i,a_j]=0$\\
 $n$ modes & $\bigoplus_{k=0}^n\Lambda^k\mathbb C^n\cong\mathbb C^{2^n}$ & completion of $\bigoplus_{k\ge0}\Sym^k\mathbb C^n$\\
 $k$ particles & $\Lambda^k\mathbb C^n$, dimension $\binom nk$ & $\Sym^k\mathbb C^n$, dimension $\binom{n+k-1}k$\\
 Coordinates & Majoranas $\gamma_\mu$; Jordan--Wigner qubits & Bargmann functions $f(z)$, $z\in\mathbb C^n$\\
 Gaussian data & covariance $\Gamma(\psi)=-\Gamma(\psi)^T$; pure iff orthogonal & $Z=Z^T$, $\|Z\|_\infty<1$; and $b\in\mathbb C^n$\\
 Superselection & parity $(-1)^N$ & none; displacements allowed\\
 Test & $\ker\Lambda$ on 2 copies; Bell sampling & copy rotations (2 or 3 copies); photon counting\\
 \bottomrule
 \end{tabular}
 \caption{The fermionic and bosonic settings. Free groups and compactness are in Table~\ref{tab:math-orbits}.}
 \label{tab:math-fb}
\end{table}
\subsubsection{Fermions}
The creation and annihilation operators $a_j^\dagger,a_j$ satisfy
\begin{align}
 \{a_i,a_j\}&=0, \qquad
 \{a_i,a_j^\dagger\}=\delta_{i,j}I.
 \label{eq:fermions-car}
\end{align}
Here $\{A,B\}=AB+BA$. The particle-number and parity operators are
\begin{align}
 N&=\sum_{j=1}^n a_j^\dagger a_j, \qquad
 \Pi=(-1)^N.
\end{align}
The eigenvalues $+1$ and $-1$ of $\Pi$ correspond to even and odd particle number, respectively. The self-adjoint Majorana operators are

\begin{align}
 \gamma_{2j-1}=a_j+a_j^\dagger, \qquad
 \gamma_{2j}=-i(a_j-a_j^\dagger), \qquad
 \{\gamma_\mu,\gamma_\nu\}&=2\delta_{\mu,\nu}I.
 \label{eq:fermions-majorana-car}
\end{align}

It follows from \eqref{eq:fermions-majorana-car} that Fock space is the unique irrep of the Clifford algebra on $\mathbb R^{2n}$, and Jordan--Wigner writes the $\gamma_\mu$ as Pauli strings (Section~\ref{subsec:fermion-measurements}). Parity superselection is why no free Hamiltonian is linear in the $\gamma_\mu$, and Fock space is finite-dimensional, so the free group is compact. The Gaussian data is the covariance matrix of Eq.~\eqref{eq:fermions-covariance-definition}: by Wick's theorem the higher Majorana moments of a Gaussian state are Pfaffians of it, so it determines the state, and $\psi$ is pure Gaussian iff it is orthogonal, $\|\Gamma(\psi)\|_{\mathrm F}^2=2n$ (Eq.~\eqref{eq:fermions-covariance}). Matchgates are the nearest-neighbor Gaussian unitaries on the Jordan--Wigner line~\cite{terhal_divincenzo}.

\subsubsection{Bosons}
For $n$ bosonic modes, the creation and annihilation operators satisfy
\begin{align}
 [a_i,a_j]&=0, \qquad
 [a_i,a_j^\dagger]=\delta_{i,j}I.
 \label{eq:bosons-ccr}
\end{align}
These relations have no finite-dimensional representation (the trace of $[a,a^\dagger]$ would vanish, while that of $I$ does not), so $a_j,a_j^\dagger$ are unbounded. Fock space has the orthonormal number basis
\begin{align}
 \ket\alpha&=\prod_{j=1}^n\frac{(a_j^\dagger)^{\alpha_j}}{\sqrt{\alpha_j!}}\ket\Omega,
 \qquad \alpha\in\mathbb N^n,
\end{align}
where the vacuum $\ket\Omega$ is annihilated by every $a_j$; in each mode $a^\dagger\ket k=\sqrt{k+1}\ket{k+1}$ and $a\ket k=\sqrt k\ket{k-1}$, and $N=\sum_ja_j^\dagger a_j$ counts photons. The Gaussian unitaries are generated by Hamiltonians of degree at most two in the $a_j,a_j^\dagger$~\cite{weedbrook_gaussian}: the quadratic part squeezes and mixes modes, and the linear part displaces. The pure Gaussian states $\GB$ are the images of $\ket\Omega$ under Gaussian unitaries; the centered ones $\Gzero$ are those reached without the linear part, and the coherent states $\Coh$ are those reached with the linear part alone. Squeezing and displacement are both unbounded, and so is the photon number along the families, which is why no photon-number cutoff can be assumed (Section~\ref{sec:testing-implementation}). In phase-space language, with quadratures $a_j=(x_j+ip_j)/\sqrt2$, the Gaussian unitaries act on $(x,p)\in\mathbb R^{2n}$ affinely, by symplectic maps and translations, just as the fermionic ones rotate the Majorana span; this is where the groups of Table~\ref{tab:math-orbits} come from. Appendix~\ref{sec:bosons} works instead in the Bargmann representation, where a state is an entire function on $\mathbb C^n$, $a_j^\dagger$ acts as multiplication by $z_j$, and $a_j$ as $\partial_{z_j}$ (Eq.~\eqref{eq:boson-bargmann}); there the Gaussian states become Gaussian functions (Lemma~\ref{lem:boson-gaussian-coordinates}), and the tests become identities between functions.

\subsubsection{Fock space}\label{sec:math-fock}
A single particle lives in a Hilbert space $\mathfrak h$; the space in which any number of identical particles live is called \emph{Fock space}: 
\begin{equation}
    \mathcal F(\mathfrak h)=\bigoplus_{m=0}^\infty\mathfrak h_m
    \label{eq:fock-def},\qquad \mathfrak h_m\text{ is the $m$-particle Hilbert space.}
\end{equation}
Both fermionic and bosonic Fock space are built from the one-particle space $\mathbb C^n$ by second quantization; for $n$ modes (all fermionic or all bosonic), the Fock space is
\begin{align}
 \text{Fermions: }\bigoplus_{k=0}^{n}\Lambda^k\mathbb C^n,\qquad\text{Bosons: }\overline{\bigoplus_{k=0}^\infty\Sym^k\mathbb C^n}
\end{align}

where $\Lambda^k\mathbb C^n$ and $\Sym^k\mathbb C^n$ are the antisymmetric and symmetric subspaces of $(\mathbb C^n)^{\otimes k}$ (Section~\ref{sec:math-rep}), now (anti)symmetrizing particles instead of copies, and the bar is the Hilbert-space completion (the fermionic sum stops at $k=n$ by the Pauli principle, the bosonic one does not). A one-particle unitary $u$ acts on Fock space as $\Gamma(u)=\bigoplus_ku^{\otimes k}$, the map in Eq.~\eqref{eq:boson-group-average}; these are the passive unitaries, generated by the number-conserving quadratics $a_i^\dagger a_j$. Every test here commutes with total particle number on $\mathcal H^{\otimes q}$, so it acts sector by sector, and each sector is finite-dimensional (finitely many particles in finitely many modes); this is how the spectral statements of Section~\ref{sec:math-rep} apply even though $a$ and $a^\dagger$ are unbounded. For bosons, $q$ copies of $n$ modes are $nq$ modes, $\mathcal F_n^{\otimes q}=\bigoplus_k\Sym^k(\mathbb C^n\otimes\mathbb C^q)$, with the copy group acting on $\mathbb C^q$. The main difference is that fermions have finitely many sectors ($N\le n$) and bosons infinitely many, so every bosonic spectral bound must hold uniformly over all of them, as Lemma~\ref{lem:so3} does.

\paragraph{Grassmannians and pure spinors.}
The families are cut out by quadratic equations, which is why two copies suffice: each particle-number sector of $\Sl$ is a Grassmannian, cut out by the quadratic Pl\"ucker relations~\cite{oszmaniec_kus}, and the fermionic Gaussian states are Cartan's pure spinors, cut out by $\Lambda\ket\psi^{\otimes2}=0$~\cite{bravyi}. Quadratic equations in $\psi$ are linear in $\psi^{\otimes2}$ (Section~\ref{sec:math-geometry}).

\paragraph{Copy rotations.} 
The copy rotations behind the tests are the quantum version of Maxwell's characterization in the main text. If $X$ and $Y$ are independent with the same density $f$, their joint density $f(x)f(y)$ is unchanged by every rotation of the plane exactly when $f$ is a centered Gaussian. Convolution is one slice of this: the rotation by $\pi/4$ sends $X$ to $(X+Y)/\sqrt2$, so a rotation-invariant pair is in particular a fixed point of the self-convolution step that the central limit theorem iterates. Quantumly, the two copies are two registers, and rotating them is a beam splitter acting on every mode at once, $(a_j,b_j)\mapsto(a_j\cos\theta+b_j\sin\theta,\,-a_j\sin\theta+b_j\cos\theta)$. The pure states whose two copies are unchanged by every such rotation are exactly the centered Gaussian states (Lemma~\ref{lem:boson-centered-zero}); for fermions the rotation acts on the copy label of each Majorana operator, and the fixed states are the Gaussian ones (Lemma~\ref{lem:normal-fermion-fixed-spaces}, with Bravyi's criterion~\cite{bravyi} in the form of Lemma~\ref{lem:fermions-exact}). The convolution version, which keeps one output and traces out the other, fixes exactly the centered Gaussian states of finite energy by the quantum central limit theorem~\cite{becker_convergence_2021} (related characterizations through beam-splitter factorization are in~\cite{springer,cuesta}), and~\cite{coffman} use its fermionic analogue to quantify non-Gaussian magic. Our tests use the closely related pure-state condition that $\ket\psi^{\otimes2}$ itself be fixed, which needs no partial trace and, by Lemma~\ref{lem:boson-centered-zero}, no assumption on the moments. Displacements are where the third copy comes in: a displacement shifts every copy by the same amount, and a rotation of two copies does not keep the shifts $(1,1)$ equal, so displaced states fail the two-copy rotation test; on three copies, the rotations about the axis $(1,1,1)$ do keep an equal shift, and averaging over them is the test of Section~\ref{subsec:boson-tetrahedron}.

\newpage

\section{General proof strategy}\label{sec:preliminaries}
This appendix develops the common framework used to turn exact
characterizations into robust tests. The gradient-flow argument
gives global bounds relating rejection probability to distance
from the target family. Expansions around closest targets
determine how sensitively the tests detect small deviations.
Together, these results give tolerant-testing guarantees, which
we compare with lower bounds on the number of copies required.

\begin{enumerate}
\item \textbf{Section~\ref{sec:first-order}: Determine how rejection
changes with the input.}
The tensor-power map connects changes of the one-copy input to
changes of the state measured by the test. We compute the gradient
of rejection as a one-copy vector
(Lemma~\ref{lem:first-variation}). This identifies the direction
in which the input will move in the gradient-flow argument.

\item \textbf{Section~\ref{sec:normals}: Analyze small deviations
from a closest target.}
Choosing a closest target restricts the possible deviations from
it (Lemma~\ref{lem:normal-stationarity}). We expand the tensor
power and control the remaining terms to obtain a local rejection
bound (Lemma~\ref{lem:normal-bound}). Examples with known distance
from the target family show that the leading coefficients are
sharp (Corollary~\ref{cor:normal-distance-sharpness}). These local
estimates will be used for tolerant testing.

\item \textbf{Section~\ref{sec:restoration}: Turn
overlapping-projector estimates into global distance bounds.}
A spectral condition on overlapping acceptance projectors
(Eq.~\eqref{eq:overlap-spectrum}) gives a lower bound on the norm
of the rejection gradient (Theorem~\ref{thm:overlap}). For an input
with sufficiently small rejection, the gradient flow then
converges to a perfectly accepted state along a path of bounded
length (Theorem~\ref{thm:restoration}). When the test exactly
characterizes the target family, this one-copy path bounds the
input's distance from that family
(Eq.~\eqref{eq:restoration-angle-bound}). We also prove the
rotation-projector estimate used in the Gaussian applications
(Lemma~\ref{lem:so3}).

\item \textbf{Section~\ref{sec:higher-copy-tests}: Construct and
analyze higher-copy tests.}
We define tests acting jointly on more copies
(Definition~\ref{def:normal-projectors}) and show how copy
rotations determine their response to small deviations
(Lemma~\ref{lem:normal-transition}). Comparing these tests with
tests on smaller groups of copies transfers the basic global
distance bounds to the higher-copy setting
(Theorem~\ref{thm:normal-hierarchy}). The fermionic and bosonic
appendices provide the corresponding coefficient calculations
(Sections~\ref{subsec:fermion-higher-response}
and~\ref{subsec:boson-higher-response}), summarized in
Table~\ref{table:normal-coefficients}.

\item \textbf{Section~\ref{sec:testing}: Convert rejection bounds
into tolerant-testing guarantees.}
We bound rejection from above for inputs close to the target
family and from below for inputs farther away. When these bounds
are separated, repeated measurements distinguish the two cases
(Proposition~\ref{prop:testing-statistics}). Combining the local
and global estimates gives mode-independent copy bounds for
closely spaced distance thresholds in a specified small-distance
regime (Theorem~\ref{thm:testing-optimal-gap}). We also give the
separation conditions available from the basic few-copy tests
(Proposition~\ref{prop:testing-baseline-radii}).

\item \textbf{Section~\ref{sec:testing-lower-bounds}: Prove lower
bounds on the required number of copies.}
We choose a near input and a far input whose target distances
are known exactly but whose overlap is large
(Lemma~\ref{lem:testing-exact-curves}). Distinguishing these two
states already requires many copies. This gives lower bounds
for any testing procedure
(Theorem~\ref{thm:testing-lower-bound}). Together with our upper
bounds, these establish the optimal dependence on the distance
gap and error probability in the regimes covered by those bounds.

\item \textbf{Section~\ref{sec:testing-implementation}: Account for
measurement inaccuracies.}
We bound how errors in the implemented acceptance operator change
rejection probabilities. This determines how accurately the
measurement must be implemented to preserve the separation
between near and far inputs, and how many repetitions are then
sufficient (Proposition~\ref{prop:testing-measurement-error}).
\end{enumerate}

\subsection{First-order changes}
\label{sec:first-order}

We first study how the rejection probability changes when the
input state varies. This will identify the direction in which
rejection decreases most rapidly, providing the starting point
for the gradient-flow argument in Section~\ref{sec:restoration}.
The test acts on several copies, while the state we vary belongs
to the one-copy Hilbert space $\mathcal H$. The tensor-power map
connects these two descriptions:
\begin{align*}
 \ket\psi
 &\longmapsto \ket\psi^{\otimes q}
 \in \Sym^q(\mathcal H)
 \subseteq \mathcal H^{\otimes q}.
\end{align*}
Here $\Sym^q(\mathcal H)$ consists of vectors invariant under
permutations of the copies. Identical copies always lie in
this subspace.

Let $P$ be the orthogonal projector describing acceptance,
and set $Q=I-P$. For a normalized input, the acceptance and
rejection probabilities are
\begin{align*}
 p(\psi)
 &=\|P\ket\psi^{\otimes q}\|_2^2,
 &
 R(\psi)
 &=\|Q\ket\psi^{\otimes q}\|_2^2
 =1-p(\psi).
\end{align*}
Perfect acceptance means that the tensor power lies entirely
in the linear accepted space $\Ran P$.
For a test that exactly characterizes a target family
$\mathcal T$,
\begin{align*}
 R(\psi)=0
 \quad\Longleftrightarrow\quad
 \ket\psi^{\otimes q}\in\Ran P
 \quad\Longleftrightarrow\quad
 \ket\psi\in\mathcal T.
\end{align*}
The first equivalence follows from the projector definition;
the second must be proved for the particular test.
Since the tensor-power map is polynomial, the condition
$Q\ket\psi^{\otimes q}=0$ gives homogeneous polynomial
constraints of degree $q$ on the input. Exactness says that
their normalized solutions are precisely the target states.

Robustness requires a quantitative version of this
characterization. Two distances matter:
\begin{align*}
 \sqrt{R(\psi)}
 &=\inf_{\ket\xi\in\Ran P}
   \|\ket\psi^{\otimes q}-\ket\xi\|_2,\\
 d_{\mathcal T}(\psi)
 &=\inf_{\ket g\in\mathcal T}
   \sqrt{1-|\ip g\psi|^2}.
\end{align*}
The first is the distance from the tensor power to the accepted
linear subspace, where the comparison vector $\ket\xi$ need
not be normalized. The second is the trace distance from the
original input to the target family. Our goal is to show that
small rejection forces the second distance to be small.
Exactness alone does not provide this quantitative bound.

The nearest vector in the accepted subspace is
$P\ket\psi^{\otimes q}$, but it need not itself be a tensor
power. We therefore vary the one-copy input and study how its rejection changes. The gradient estimate established later will allow us to follow these changes along a short path to an exactly accepted input, which exactness identifies as a target state. The estimate bounds the
length of the path from $\ket\psi$ to $\ket g$ in the one-copy
space. Although $\ket g$ need not be a closest target, this
path length bounds the input's distance to the target family.

\begin{figure}[t]
\centering
\includegraphics[width=0.97\textwidth]
  {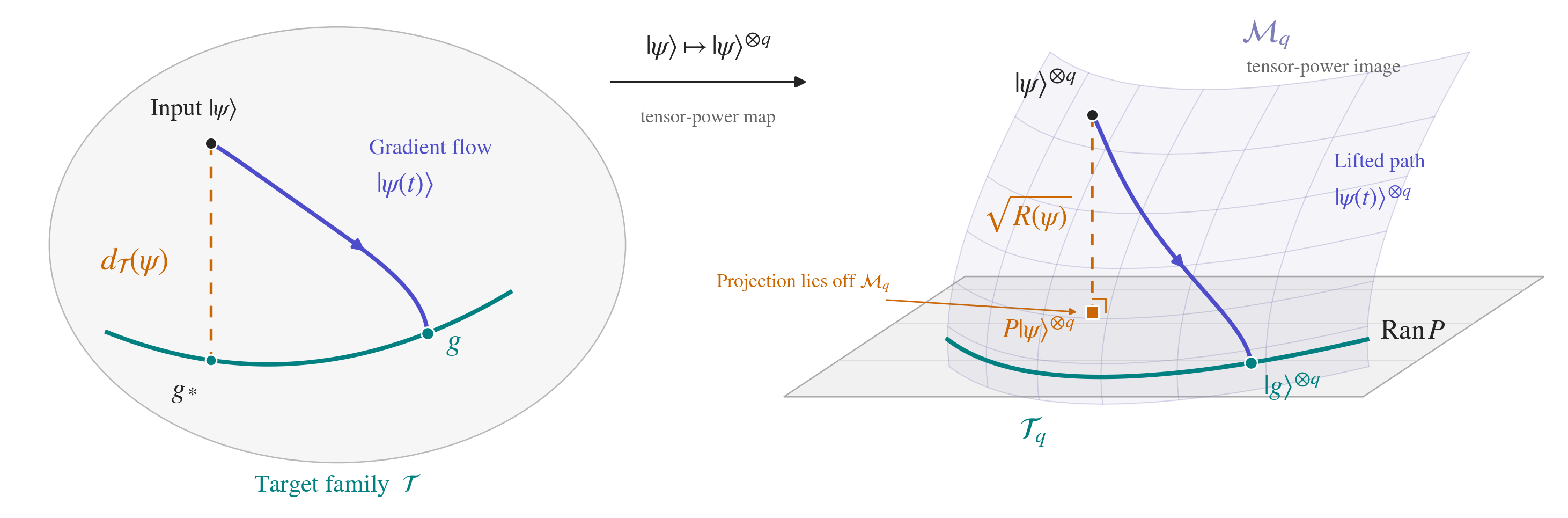}

\caption{\textbf{From small rejection to a nearby target.}
Left: a normalized one-copy input $\ket\psi$ and the target
family $\mathcal T\subseteq\mathcal H$.
Right: normalized tensor powers form the nonlinear set
$\mathcal M_q=\{\ket\phi^{\otimes q}:
\|\ket\phi\|_2=1\}$ inside $\Sym^q(\mathcal H)$.
The gray plane represents the accepted linear subspace
$\Ran P$.
Exactness identifies its intersection with $\mathcal M_q$
as the target tensor powers:
$\mathcal T_q=\{\ket g^{\otimes q}:\ket g\in\mathcal T\}
=\mathcal M_q\cap\Ran P$.
The orange dashed lines indicate the two distances being
compared: the one-copy trace distance
$d_{\mathcal T}(\psi)=\inf_{\ket g\in\mathcal T}
\sqrt{1-|\ip g\psi|^2}$,
and the Hilbert-space distance
$\sqrt{R(\psi)}=\|\ket\psi^{\otimes q}
-P\ket\psi^{\otimes q}\|_2$
to the accepted subspace.
The projection $P\ket\psi^{\otimes q}$ need not itself
be a tensor power.
The blue curves show the one-copy gradient flow and its
tensor-power image, which remains in $\mathcal M_q$.
When the gradient estimate holds,
the flow converges to a target $\ket g$ along a path
whose one-copy length is controlled by the initial rejection.
This length bounds $d_{\mathcal T}(\psi)$, although
$\ket g$ need not be the closest target $\ket{g_*}$.
The geometry and distances are schematic, not to scale.}
\label{fig:tensor-power-geometry}
\end{figure}
For the first-variation calculation below, assume that $P$
commutes with copy permutations. It then preserves
$\Sym^q(\mathcal H)$. For fixed $q$, the tensor-power map is
a continuous polynomial, and $R$ is smooth. The product rule
gives
\begin{align}
 D(\psi^{\otimes q})[v]
 &:=\left.\frac{\dd}{\dd t}
   (\ket\psi+t\ket v)^{\otimes q}\right|_{t=0}
 \label{eq:tensor-directional-derivative}\\
 &=\sum_{a=1}^q
   \ket\psi^{\otimes(a-1)}
   \otimes\ket v
   \otimes\ket\psi^{\otimes(q-a)}.
 \label{eq:tensor-derivative}
\end{align}
Each summand changes one copy in the direction $\ket v$,
leaving the other copies unchanged. First-order changes of
a normalized input satisfy
$\Re\ip\psi v=0$;
the stronger condition $\ip\psi v=0$
also removes the global-phase direction.

To express the change in acceptance using a one-copy vector,
take the overlap of $P\ket\psi^{\otimes q}$ with $\ket\psi$
in the first $q-1$ copies. The resulting vector has overlap
$p(\psi)$ with $\ket\psi$. Subtracting this parallel component
defines
\begin{subequations}
\label{eq:restoration-contraction}
\begin{align}
 \ket{\chi_\psi}
 &=(\bra\psi^{\otimes(q-1)}\otimes I)
   P\ket\psi^{\otimes q}
   -p(\psi)\ket\psi \nonumber\\
 &=\Tr_{1,\ldots,q-1}
   \!\left[P\rho_\psi^{\otimes q}\right]\ket\psi
   -p(\psi)\ket\psi,
 \label{eq:ketchi_psi}
\end{align}
where $\rho_\psi=\ket\psi\bra\psi$, and the partial trace
leaves an operator on the remaining copy. Consequently,
\begin{align}
 \ip\psi{\chi_\psi}
 &=0, \nonumber\\
 \|p(\psi)\ket\psi+\ket{\chi_\psi}\|_2^2
 &=p(\psi)^2+\|\ket{\chi_\psi}\|_2^2.
\end{align}
\end{subequations}
The correction $\ket{\chi_\psi}$ is not necessarily normalized
and may vanish. The following lemma shows that it determines
the gradient of rejection.

\begin{lemma}[First variation of the rejection probability]
\label{lem:first-variation}
Suppose that the acceptance projector $P$ commutes with
copy permutations.
On the one-copy unit sphere, equipped with the real metric
$\Re\ip uv$, the gradient of the rejection probability is
\begin{align}
 \nabla R(\psi)&=-2q\ket{\chi_\psi}.
\end{align}
\end{lemma}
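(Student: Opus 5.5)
The plan is a direct computation. We differentiate the acceptance probability $p$ along a curve of unit vectors, then use permutation invariance of $P$ to collapse the $q$ terms of the product rule into a single one-copy contraction. Take a smooth curve $\ket{\psi(t)}$ on the unit sphere with $\ket{\psi(0)}=\ket\psi$ and velocity $\ket v\in T_\psi S$, so $\Re\ip\psi v=0$. Since $P$ is self-adjoint and $p(\psi(t))=\ip{\psi(t)^{\otimes q}}{P\psi(t)^{\otimes q}}$, the product rule together with Eq.~\eqref{eq:tensor-derivative} gives
\begin{align*}
 Dp(\psi)[v]=2\Re\ip{D(\psi^{\otimes q})[v]}{P\psi^{\otimes q}}
 =2\Re\sum_{a=1}^q\ip{\psi^{\otimes(a-1)}\otimes v\otimes\psi^{\otimes(q-a)}}{P\psi^{\otimes q}}.
\end{align*}

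The next step uses the symmetry. $P$ commutes with every copy permutation $W_\pi$, and $\psi^{\otimes q}$ is fixed by $W_\pi$, so $P\psi^{\otimes q}\in\Sym^q(\mathcal H)$. Each summand therefore equals the one with $v$ in the last slot, namely $\ip{\psi^{\otimes(q-1)}\otimes v}{P\psi^{\otimes q}}=\ip v{(\bra\psi^{\otimes(q-1)}\otimes I)P\psi^{\otimes q}}$. Using the definition in Eq.~\eqref{eq:restoration-contraction}, this inner product is $\ip v{\chi_\psi}+p(\psi)\ip v\psi$. Hence
\begin{align*}
 Dp(\psi)[v]=2q\Re\ip{\chi_\psi}v+2q\,p(\psi)\Re\ip\psi v=2q\Re\ip{\chi_\psi}v,
\end{align*}
because the tangent condition kills the second term.

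It remains to identify the gradient. The vector $\chi_\psi$ is itself tangent, since $\ip\psi{\chi_\psi}=0$ and in particular $\Re\ip\psi{\chi_\psi}=0$. By the uniqueness in Eq.~\eqref{eq:math-gradient}, $\nabla p(\psi)=2q\ket{\chi_\psi}$. Since $R=1-p$, we get $\nabla R(\psi)=-2q\ket{\chi_\psi}$.

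The only point needing care is the infinite-dimensional bosonic case, where one must justify differentiating under the inner product. The map $\psi\mapsto\psi^{\otimes q}$ is a bounded polynomial on bounded sets and $P$ is bounded, so $p$ is a smooth extension to all of $\mathcal H$. The computation above is then just the Euclidean derivative of that extension restricted to tangent directions. I expect no genuine obstacle beyond this bookkeeping, and beyond checking the orthogonality $\ip\psi{\chi_\psi}=0$, which follows because contracting the remaining copy with $\bra\psi$ returns $\ip{\psi^{\otimes q}}{P\psi^{\otimes q}}=p(\psi)$.
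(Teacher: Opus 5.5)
Your proposal is correct and follows essentially the same route as the paper's proof. In both, you differentiate $p$ by the product rule and use permutation invariance of $P$ to collapse the $q$ summands into one contraction on the last copy. You then identify that contraction with $p(\psi)\ket\psi+\ket{\chi_\psi}$, remove the parallel term using $\Re\ip\psi v=0$, and read off the tangent gradient from $\ip\psi{\chi_\psi}=0$.
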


\begin{proof}
The directional derivative $Dp(\psi)[v]$ measures how the acceptance
probability changes when the normalized input state moves away from
$\ket\psi$ with initial velocity $\ket v$. Choose a smooth curve of
unit vectors satisfying $\ket{\psi(0)}=\ket\psi$ and
$\ket{\dot\psi(0)}=\ket v$, and define
\begin{align}
 Dp(\psi)[v]
 &:=\left.\frac{\dd}{\dd t}p(\psi(t))\right|_{t=0}.
 \label{eq:acceptance-directional-derivative}
\end{align}
Equivalently, for small $t$
\begin{align}
     p(\psi(t))=p(\psi)+t\,Dp(\psi)[v]+O(t^2).
\end{align}
Indeed, this first-order change depends only on the initial state $\ket\psi$
and velocity $\ket v$, not on the particular curve chosen.

We can compute the directional derivative above by just writing the acceptance probability $p(\psi)$ in Eq.~\eqref{eq:acceptance} explicitly as an inner product. Indeed, using the derivative of the product rule we immediately see that 
\begin{subequations}\label{eq:acceptance-first-variation}
\begin{align}
 Dp(\psi)[v]
 =&2\Re\ip{P\psi^{\otimes q}}{D(\psi^{\otimes q})[v]} \nonumber\\
 =&2q\Re\ip{P\psi^{\otimes q}}{\psi^{\otimes(q-1)}\otimes v
   } \nonumber\\
 =&2q\Re\ip{p(\psi)\psi+\chi_\psi}v \nonumber\\
 =&2q\Re\ip{\chi_\psi}v.
\end{align}
\end{subequations}
In the second equality we substituted Eq.~\eqref{eq:tensor-derivative} and use the permutation invariance property of $P$.   In the third equality we used Eq.~\eqref{eq:restoration-contraction}, and in the last equality we use $\Re\ip\psi v=0$, which follows from normalization of $\ket{\psi(t)}$. The gradient $\nabla R$ describes how the rejection probability
responds to a small change of the normalized input state.
For any variation $\ket v$ satisfying $\Re\ip\psi v=0$,
which preserves normalization to first order, it satisfies 
\begin{align}
 DR(\psi)[v]&=\Re\ip{\nabla R}v.
\end{align}
On the other hand, since $R=1-p$,
Eq.~\eqref{eq:acceptance-first-variation} gives
\begin{align}
 DR(\psi)[v]
 &=-Dp(\psi)[v] \nonumber\\
 &=-2q\Re\ip{\chi_\psi}v \nonumber\\
 &=\Re\ip{-2q\chi_\psi}v.
\end{align}
Thus the real overlap of $-2q\ket{\chi_\psi}$ with any allowed
variation $\ket v$ gives the first-order change in rejection
probability. Moreover, $\ip\psi{\chi_\psi}=0$ by
Eq.~\eqref{eq:restoration-contraction}, so changing the input
in this direction preserves normalization to first order. Consequently $\nabla R(\psi) = -2q\ket \chi_\psi$.

\end{proof}

Lemma~\ref{lem:first-variation} describes how rejection changes
under small variations of the input. We next seek a lower bound
on rejection in terms of the distance to the target family.
To do so, we expand the input around a closest target state
and use the restrictions imposed by this choice to determine
which components of the deviation are detected by the test.

\subsection{Expansion around a closest target}\label{sec:normals}

By expanding our state around the closest target in $\mathcal{T}$, we can obtain several important properties. First, it allows us to impose restrictions on some properties of the state. Indeed, we will see in Lemma~\ref{lem:normal-stationarity} that the deviation from that target cannot be arbitrary. Not only that, but we use this analysis in Lemma~\ref{lem:normal-bound}, to find a local lower-bound on the rejection probability of the tester. In both cases we need to use specific properties of the groups we are testing on, however, the generality of our proof method allows us to construct a general proof strategy that we can later just ``plug'' the necessary properties at the last step. Finally, Corollary~\ref{cor:normal-distance-sharpness} shows that the leading coefficients of these local lower bounds are optimal uniformly over the number of modes, by identifying input states that attain them.

Before we proceed, there could be concerns on wether the closest target even exists. When the target state is a fermionic state, this is ensured by the compactness of the group. For bosons, it follows from Lemma~\ref{lem:normal-attainment}. Now that we have clarified this concern, we are ready to present our state $\psi$ and its closest target $g$, that without loss of generality we can choose to have a positive (and thus real) with the input state, i.e. $0<d_{\mathcal T}(\psi)<1$, 
\begin{align}
 \ket\psi
 &=\sqrt{1-d_{\mathcal T}^2}\,\ket g+d_{\mathcal T}\ket\eta,
 \label{eq:normal-frame}\\
 \ip g\eta&=0,\qquad \|\ket\eta\|_2=1.
 \label{eq:normal-frame-normalization}
\end{align}

The same maximization argument is also used in the local bosonic analysis of \cite[App.~B.3]{girardi}. Table~\ref{table:normal-allowed-degrees} below summarizes the restrictions on $\ket\eta$ that we obtain by applying Lemma~\ref{lem:normal-stationarity} to each target family. In each case, we first transform the closest target to the reference state shown, applying the same transformation to the input and the deviation. The final column gives the justification for this transformation; the excitation restrictions follow from the orthogonality argument in the lemma below.

\begin{table}[ht]
\centering
\small
\caption{Allowed excitation components after transforming the closest target to the listed reference state and applying the same transformation to the input. The vacuum $\ket\Omega$ has no particles in any mode. The Slater determinant $\ket{e_1\wedge\cdots\wedge e_k}$ is the normalized antisymmetric product of $k$ orthonormal occupied orbitals, relative to which replacements are counted. The target families are pure fermionic Gaussian states with fixed parity $\sigma\in\{+,-\}$ ($\GFs$), pure fermionic Gaussian states of either parity ($\GF$), Slater determinants allowing all particle numbers ($\Sl$), coherent states ($\Coh$), centered pure bosonic Gaussian states with zero displacement ($\Gzero$), and all pure bosonic Gaussian states including arbitrary displacement ($\GB$). For odd-parity fermionic Gaussian targets, the reduction to vacuum includes a parity-changing unitary.}
\begin{tabularx}{\linewidth}{@{}l @{\hspace{2em}} >{\raggedright\arraybackslash}p{0.14\linewidth} >{\raggedright\arraybackslash}X >{\raggedright\arraybackslash}p{0.16\linewidth}@{}}
\toprule
Target family & Reference state & Allowed components of $\ket\eta$ & Reduction proved in \\
\midrule
$\GFs$ & $\ket\Omega$ & Even particle degrees $k\ge4$ & Lemma~\ref{lem:fermion-closest} \\
$\GF$ & $\ket\Omega$ & Particle degrees $k\ne0,2$ & Proof of Lemma~\ref{lem:fermion-closest} \\
$\Sl$ & $\ket{e_1\wedge\cdots\wedge e_k}$ & At least two orbital replacements, or particle number different from $k$ & Lemma~\ref{lem:fermion-closest} \\
$\Coh$ & $\ket\Omega$ & Photon degrees $k\ge2$ & Lemma~\ref{lem:boson-frames} \\
$\Gzero$ & $\ket\Omega$ & Photon degrees $k\ne0,2$ & Lemma~\ref{lem:boson-frames} \\
$\GB$ & $\ket\Omega$ & Photon degrees $k\ge3$ & Lemma~\ref{lem:boson-frames} \\
\bottomrule
\end{tabularx}
\label{table:normal-allowed-degrees}
\end{table}

We now prove the common statement behind these restrictions: choosing a closest target forces the deviation to be orthogonal to the allowed first-order changes within the target family.

\begin{lemma}[Orthogonality at a closest target]\label{lem:normal-stationarity}
Let $\ket g$ be a closest target to $\ket\psi$, and let $\ket\eta$ be the deviation in Eq.~\eqref{eq:normal-frame}, with $0<d_{\mathcal T}<1$. Consider normalized target states $\ket{g(t)}\in\mathcal T$ close to $\ket g$. For a small real parameter $t$, write
\begin{align}
 \ket{g(t)}&=\ket g+t\ket v+O(t^2),
 \qquad \ip gv=0,
\end{align}
where the last condition is obtained by choosing the $t$-dependent global phase of $\ket{g(t)}$. Suppose that every allowed change $\ket v$ can also be made with $\ket v$ replaced by $i\ket v$. Then the deviation $\ket{\eta}$ in $\ket{\psi}$ cannot have components of $\ket{v}$, i.e. 
\begin{align}
    \ip v\eta=0
\end{align}
for every such $\ket v$.
\end{lemma}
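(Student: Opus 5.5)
The plan is a first-order stationarity argument for the maximized overlap. Since $\ket g$ maximizes $|\ip{g'}{\psi}|^2$ over $g'\in\mathcal T$, and $\ket{g(t)}\in\mathcal T$ for all small real $t$, the function
\begin{align*}
 f(t)=|\ip{g(t)}{\psi}|^2
\end{align*}
has a local maximum at $t=0$. Because the tensor-power and orbit maps are smooth near $t=0$, as discussed around Eq.~\eqref{eq:math-orbit-tangent}, $f$ is differentiable there and $f'(0)=0$.

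I would compute $f'(0)$ from the expansion $\ket{g(t)}=\ket g+t\ket v+O(t^2)$. This gives
\begin{align*}
 \ip{g(t)}{\psi}=\ip g\psi+t\ip v\psi+O(t^2),
\end{align*}
and therefore
\begin{align*}
 f'(0)=2\Re\bigl(\overline{\ip g\psi}\,\ip v\psi\bigr).
\end{align*}
Next I substitute the frame of Eq.~\eqref{eq:normal-frame}. The phase was chosen so that $\ip g\psi=\sqrt{1-d_{\mathcal T}^2}$ is real and positive, and the condition $\ip gv=0$ gives
\begin{align*}
 \ip v\psi=d_{\mathcal T}\ip v\eta.
\end{align*}
Hence
\begin{align*}
 0=f'(0)=2\sqrt{1-d_{\mathcal T}^2}\,d_{\mathcal T}\,\Re\ip v\eta .
\end{align*}
The hypothesis $0<d_{\mathcal T}<1$ makes the prefactor nonzero, so $\Re\ip v\eta=0$.

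To get the imaginary part, I would use the hypothesis that $i\ket v$ is also an allowed change. There is then a curve $\ket{\tilde g(t)}=\ket g+it\ket v+O(t^2)$ in $\mathcal T$, and it still satisfies $\ip g{iv}=0$. Repeating the argument along this curve gives $\Re\ip{iv}\eta=0$. Since $\ip{iv}\eta=-i\ip v\eta$, this says $\Im\ip v\eta=0$. Together with the real part, this yields $\ip v\eta=0$.

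Most of this is mechanical. The one point that needs care is justifying differentiability of $f$ and the $O(t^2)$ remainder in the infinite-dimensional bosonic setting. I would handle this by citing the smoothness of the orbit map, which follows from analyticity of the vacuum for quadratic generators. That smoothness makes $t\mapsto\ket{g(t)}$ differentiable in norm, so $t\mapsto\ip{g(t)}{\psi}$ is differentiable for every fixed $\ket\psi$.

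A second point is that $\ket g$ is a global maximizer, which Lemma~\ref{lem:normal-attainment} and compactness provide. A global maximizer is a fortiori a local maximizer along every such curve.

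A third point is the phase normalization $\ip gv=0$. If the original curve has $\ip gv\neq0$, multiplying it by $\e^{-i\alpha(t)}$ with $\alpha'(0)=\Im\ip gv$ keeps it inside the phase-invariant family $\mathcal T$. I would first make $\Re\ip gv=0$, which holds automatically because the curve consists of unit vectors. This rephasing leaves $f$ unchanged, so the derivative computation above applies to the normalized curve.
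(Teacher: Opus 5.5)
Your proposal is correct and is essentially the paper's own proof. You differentiate $|\ip{g(t)}{\psi}|^2$ at $t=0$, use the frame of Eq.~\eqref{eq:normal-frame} and $\ip gv=0$ to reduce the stationarity condition to $2d_{\mathcal T}\sqrt{1-d_{\mathcal T}^2}\,\Re\ip v\eta=0$, and then repeat the argument along the curve with velocity $i\ket v$ to obtain $\Im\ip v\eta=0$. Your remarks on differentiability and rephasing are harmless extras, since the lemma already takes the first-order expansion of $\ket{g(t)}$ and the phase choice $\ip gv=0$ as hypotheses.
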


\begin{proof}
Let $\ket{g(t)}$ be a smoothly varying normalized target state, where $t$ is a real parameter, and let $\ket v$ be its derivative at zero. Normalization makes the real part of $\ip gv$ vanish. We can choose the $t$-dependent global phase of $\ket{g(t)}$ so that the imaginary part also vanishes. Thus
\begin{align}
 \ket{g(0)}&=\ket g,\qquad
 \left.\frac{\dd}{\dd t}\ket{g(t)}\right|_{t=0}=\ket v,
 \label{eq:normal-target-curve}\\
 \ip gv&=0.
 \label{eq:normal-horizontal-velocity}
\end{align}
Since $\ket g$ is a closest target, the squared overlap has a maximum at zero, so its derivative vanishes. The product rule and Eq.~\eqref{eq:normal-frame} give
\begin{align}
 0&=\left.\frac{\dd}{\dd t}|\ip{g(t)}\psi|^2\right|_{t=0}
 \label{eq:normal-stationarity-derivative}\\
 &=2\Re\!\left(\overline{\ip g\psi}\,\ip v\psi\right)
 \nonumber\\
 &=2d_{\mathcal T}\sqrt{1-d_{\mathcal T}^2}\,\Re\ip v\eta.
 \label{eq:normal-stationarity-real-part}
\end{align}
The prefactor is positive, so the real part of the overlap vanishes. By assumption, we can also vary the target so that its derivative at zero is $i\ket v$. Repeating the overlap calculation for this variation gives
\begin{align}
 \Re\ip v\eta&=0,\\
 \Im\ip v\eta&=\Re\ip{iv}\eta=0,
\end{align}
which means that
\begin{align}
 \ip v\eta&=0.
 \label{eq:normal-stationarity-orthogonality}
\end{align}
\end{proof}

Now that Lemma~\ref{lem:normal-stationarity} has identified the possible deviations from a closest target, we analyze how the tester detects them. Suppose the test accepts $\ket g$ perfectly. For a small real parameter $t$, consider the normalized state $(\ket g+t\ket\eta)/\sqrt{1+t^2}$. Expanding its tensor power as in Eq.~\eqref{eq:tensor-derivative} gives a rejection probability $R$ that looks like
\begin{align}
 R\!\left(\frac{\ket g+t\ket\eta}{\sqrt{1+t^2}}\right)
 &=t^2\left\|Q\left(\sum_{a=1}^q
 \ket g^{\otimes(a-1)}\otimes\ket\eta\otimes\ket g^{\otimes(q-a)}
 \right)\right\|_2^2+O(t^3).
 \label{eq:normal-W}
\end{align}
The coefficient of $t^2$ measures how quickly rejection increases as we add a small component along $\ket\eta$. It is determined by the rejected part of the sum containing exactly one 
copy of $\ket\eta$.

For a state at distance $d_{\mathcal T}$, the tensor expansion contains terms with two or more copies of $\ket\eta$. Let $\ket w$ denote the sum of these terms, including their expansion coefficients. Since $\ip g\eta=0$, terms with $\ket\eta$ in different sets of copy positions are orthogonal, so
\begin{align}
 \|\ket w\|_2^2
 &=\sum_{r=2}^q\binom qr
 d_{\mathcal T}^{2r}(1-d_{\mathcal T}^2)^{q-r}
 \label{eq:normal-remainder-norm}\\
 &=1-(1-d_{\mathcal T}^2)^q
 -qd_{\mathcal T}^2(1-d_{\mathcal T}^2)^{q-1}.
 \label{eq:normal-B}
\end{align}
In particular, this norm 
depends only on $q$ and $d_{\mathcal T}$. After applying $Q$, the remainder can partly cancel the contribution from the terms containing exactly one copy of $\ket\eta$. To account for this, let $0<\lambda\le q$ be a lower bound on the coefficient of $t^2$ in Eq.~\eqref{eq:normal-W}, and define
\begin{align}
 h_{q,\lambda}(d_{\mathcal T}^2)
 &=\left[
 \sqrt{\lambda}\,d_{\mathcal T}(1-d_{\mathcal T}^2)^{(q-1)/2}
 -\sqrt{1-\lambda/q}\,\|\ket w\|_2
 \right]_+^2,
 \label{eq:normal-h}\\
 [r]_+&=\max\{r,0\}.
 \label{eq:normal-positive-part}
\end{align}

The next lemma shows that $h_{q,\lambda}(d_{\mathcal T}^2)$ is a lower bound on the rejection probability, including the possible cancellation from the remainder. The argument applies to any valid coefficient bound $\lambda$. Table~\ref{table:normal-coefficients} lists the optimal lower-bounds $\lambda_{\mathcal T,q}$. Their derivations are deferred to the family-specific sections cited in the final column.

\begin{table}[ht]
\centering
\small
\renewcommand{\arraystretch}{1.6}
\caption{Quadratic rejection coefficients for pure fermionic Gaussian states with fixed parity $\sigma\in\{+,-\}$ ($\GFs$), pure fermionic Gaussian states of either parity ($\GF$), Slater determinants allowing all particle numbers ($\Sl$), coherent states ($\Coh$), centered pure bosonic Gaussian states ($\Gzero$), and all pure bosonic Gaussian states including arbitrary displacement ($\GB$). For $\GFs$, the input has the specified parity. These coefficients are optimal uniformly over the number of modes; smaller fermionic sectors can admit larger coefficients.}
\label{table:normal-coefficients}
\begin{tabularx}{\linewidth}{@{}l *{6}{>{\centering\arraybackslash}X}@{}}
\toprule
Target family & $\GFs$ & $\GF$ & $\Sl$ & $\Coh$ & $\Gzero$ & $\GB$ \\
\midrule
$\lambda_{\mathcal T,q}$
& $\dfrac{q(q-1)}{q+2}$
& $\dfrac{q(q-1)}{q+2}$
& $\dfrac{q(q-1)}{q+1}$
& $q-1$
& $\dfrac{q(q-1)}{q+2}$
& $\dfrac{(q-1)(q-2)}q$ \\
Copies
& $q\ge2$
& $q\ge2$
& $q\ge2$
& $q\ge2$
& $q\ge2$
& $q\ge3$ \\
Proved in
& Section~\ref{subsec:fermion-higher-response}
& Section~\ref{subsec:fermion-higher-response}
& Section~\ref{subsec:fermion-higher-response}
& Section~\ref{subsec:boson-higher-response}
& Section~\ref{subsec:boson-higher-response}
& Section~\ref{subsec:boson-higher-response} \\
\bottomrule
\end{tabularx}
\end{table}

Next we introduce the lemma. 

\begin{lemma}[Rejection bound from a closest-target expansion]\label{lem:normal-bound}
Let $P$ be an orthogonal projector satisfying
\begin{align}
 P\ket g^{\otimes q}&=\ket g^{\otimes q},
 \label{eq:normal-bound-completeness}\\
 \bigl\|Q\bigl(D(g^{\otimes q})[\eta]\bigr)\bigr\|_2^2
 &\ge\lambda>0.
 \label{eq:normal-bound-coefficient}
\end{align}
Then the state in Eq.~\eqref{eq:normal-frame} satisfies
\begin{align}
 R(\psi)&\ge h_{q,\lambda}(d_{\mathcal T}^2).
 \label{eq:normal-bound-conclusion}
\end{align}
\end{lemma}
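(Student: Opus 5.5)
The plan is to expand the tensor power of the closest-target decomposition~\eqref{eq:normal-frame} by the number of copies of $\ket\eta$. Then apply $Q$ and reduce the claim to a two-dimensional Cauchy--Schwarz optimisation.

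\textbf{Expansion.} Write $c=\sqrt{1-d_{\mathcal T}^2}$ and $d=d_{\mathcal T}$. Expanding gives
\begin{align*}
 \ket\psi^{\otimes q}=c^q\ket g^{\otimes q}+c^{q-1}d\,\ket D+\ket w,
 \qquad \ket D:=D(g^{\otimes q})[\eta].
\end{align*}
Here $\ket w$ collects the terms with at least two copies of $\ket\eta$. Since $\ip g\eta=0$, terms with $\ket\eta$ in different sets of positions are orthogonal. Hence $\ket D$, $\ket w$ and $\ket g^{\otimes q}$ are mutually orthogonal, $\|\ket D\|_2^2=q$, and $\|\ket w\|_2$ is given by Eq.~\eqref{eq:normal-B}. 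By~\eqref{eq:normal-bound-completeness}, $Q\ket g^{\otimes q}=0$. Therefore
\begin{align*}
 \sqrt{R(\psi)}=\bigl\|a\,Q\ket D+Q\ket w\bigr\|_2,\qquad a:=c^{q-1}d .
\end{align*}

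\textbf{Key estimate.} Set $X=\|Q\ket D\|_2$, so that $X\ge\sqrt\lambda$ by~\eqref{eq:normal-bound-coefficient} and $\|P\ket D\|_2=\sqrt{q-X^2}$. Write $W=\|\ket w\|_2$.
\begin{enumerate}[label=(\roman*)]
 \item Project $a Q\ket D+Q\ket w$ onto the unit vector $Q\ket D/X$. This gives $\sqrt{R}\ge aX+\Re\ip{QD}{w}/X$.
 \item The only danger is cancellation from $\Re\ip{QD}{w}$, which we control with the orthogonality $\ip Dw=0$. Splitting into $P$- and $Q$-parts, $\ip{QD}{Qw}=-\ip{PD}{Pw}$.
 \item Cauchy--Schwarz on both pieces bounds the cancellation. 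Writing $|\ip{QD}{w}|\le X\|Q w\|_2$ and $|\ip{QD}{w}|\le\sqrt{q-X^2}\,\|P w\|_2$, and using $\|Pw\|_2^2+\|Qw\|_2^2=W^2$, the worst case is $\|Qw\|_2=W\sqrt{q-X^2}/\sqrt q$.
 \item This yields $|\ip{QD}{w}|/X\le W\sqrt{1-X^2/q}$. Equivalently, minimise $\|aQ\ket D+Q\ket w\|$ over $\ket w\perp\ket D$ with $\|\ket w\|_2=W$; the minimiser puts $\ket w$ in the span of $Q\ket D$ and $P\ket D$.
\end{enumerate}
The result is
\begin{align*}
 \sqrt{R(\psi)}\ \ge\ aX-W\sqrt{1-X^2/q}.
\end{align*}

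\textbf{Conclusion.} The right-hand side is increasing in $X$ on $[0,\sqrt q]$, so we may replace $X$ by its lower bound $\sqrt\lambda$. This gives
\begin{align*}
 \sqrt{R(\psi)}\ge\sqrt\lambda\,d\,(1-d^2)^{(q-1)/2}-\sqrt{1-\lambda/q}\,\|\ket w\|_2 .
\end{align*}
Since $\sqrt R\ge0$ in any case, taking the positive part and squaring gives exactly $h_{q,\lambda}(d_{\mathcal T}^2)$, which is~\eqref{eq:normal-bound-conclusion}.

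The main obstacle is step~(iii). A naive triangle inequality, $\|Qw\|\le\|w\|$, loses the factor $\sqrt{1-\lambda/q}$. Projecting only along $Q\ket D$ and bounding $\ip{PD}{w}$ by $\|PD\|\|w\|$ also gives the weaker factor $\sqrt{q/\lambda-1}$. The sharp constant requires using both Cauchy--Schwarz bounds simultaneously, as in the constrained minimisation.
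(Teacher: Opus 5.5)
Your proof is correct and follows the paper's argument. The paper also projects $Q\ket\psi^{\otimes q}$ onto the normalized rejected direction $z=Q\ket{\widetilde\eta}/\sqrt s$, where $\ket{\widetilde\eta}=D(g^{\otimes q})[\eta]/\sqrt q$ and $s=X^2/q$, controls the cross term using $\ip{\widetilde\eta}{w}=0$, and finishes with the monotone substitution $s\ge\lambda/q$. The only difference is your step (iii): the paper obtains the factor $\sqrt{1-s}$ directly by noting that $z$ has overlap $\sqrt s$ with $\ket{\widetilde\eta}$, so only its orthogonal part, of norm $\sqrt{1-s}$, can pair with $\ket w$, whereas your balancing of the two Cauchy--Schwarz bounds reaches the same constant less directly.
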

\begin{proof}
Define the normalized first-order change
\begin{align}
 \ket{\widetilde\eta}
 &:=\frac{1}{\sqrt q}D(g^{\otimes q})[\eta]
 \nonumber\\
 &=\frac{1}{\sqrt q}\sum_{a=1}^q
 \ket g^{\otimes(a-1)}\otimes\ket\eta\otimes\ket g^{\otimes(q-a)}.
 \label{eq:normal-single-replacement}
\end{align}
Separating the terms containing zero, one, and at least two copies of $\ket\eta$ gives
\begin{align}
 \ket\psi^{\otimes q}
 &=(1-d_{\mathcal T}^2)^{q/2}\ket g^{\otimes q}
 +\sqrt q\,d_{\mathcal T}(1-d_{\mathcal T}^2)^{(q-1)/2}\ket{\widetilde\eta}
 +\ket w,
 \label{eq:normal-tensor-expansion}\\
 \|\ket{\widetilde\eta}\|_2&=1,
 \qquad \ip{\widetilde\eta}w=0.
 \label{eq:normal-remainder-orthogonality}
\end{align}
The remainder $\ket w$ contains all terms with at least two tensor factors equal to $\ket\eta$. Terms with $\ket\eta$ in different sets of copy positions are orthogonal because $\ip g\eta=0$. Its squared norm therefore equals one minus the contributions from terms containing zero or one copy of $\ket\eta$. Normalize the rejected component of $\ket{\widetilde\eta}$ by setting
\begin{align}
 s&=\|Q\ket{\widetilde\eta}\|_2^2,
 \qquad
 \ket z=\frac{Q\ket{\widetilde\eta}}{\sqrt{s}}.
 \label{eq:normal-rejected-direction}
\end{align}
By Eqs.~\eqref{eq:normal-single-replacement} and \eqref{eq:normal-bound-coefficient}, we have $s\ge\lambda/q>0$. Using the tensor expansion in Eq.~\eqref{eq:normal-tensor-expansion}, the rejection amplitude satisfies
\begin{align}
 \|Q\ket\psi^{\otimes q}\|_2
 &\ge\Re\ip z{Q\psi^{\otimes q}}
 \label{eq:normal-rejection-first-bound}\\
 &=\sqrt q\,d_{\mathcal T}(1-d_{\mathcal T}^2)^{(q-1)/2}\sqrt{s}
 +\Re\ip zw
 \nonumber\\
 &\ge\sqrt q\,d_{\mathcal T}(1-d_{\mathcal T}^2)^{(q-1)/2}\sqrt{s}
 -\sqrt{1-s}\,\|\ket w\|_2
 \nonumber\\
 &\ge\sqrt{\lambda}\,d_{\mathcal T}(1-d_{\mathcal T}^2)^{(q-1)/2}
 -\sqrt{1-\lambda/q}\,\|\ket w\|_2.
 \label{eq:normal-rejection-final-bound}
\end{align}
The first inequality is Cauchy--Schwarz. The first equality uses $Q\ket g^{\otimes q}=0$ and $Q\ket z=\ket z$. For the next inequality, $\ip{\widetilde\eta}w=0$, so only the component of $\ket z$ orthogonal to $\ket{\widetilde\eta}$ contributes to $\ip zw$. Since $\ip{\widetilde\eta}z=\sqrt{s}$, that component has norm $\sqrt{1-s}$. The final inequality uses $s\ge\lambda/q$ since $s=\frac1q\left\|Q\bigl(D(g^{\otimes q}[\eta]\bigr)\right\|_2^2\ge\frac{\lambda}{q}$.

The norm on the left is nonnegative, so we may replace the final right-hand side by its positive part. Squaring and using Eq.~\eqref{eq:normal-h} gives $R(\psi)\ge h_{q,\lambda}(d_{\mathcal T}^2)$.
\end{proof}

For fixed $q$ and $0<\lambda\le q$, the contribution containing exactly one copy of $\ket\eta$ has amplitude proportional to $d_{\mathcal T}$, while the remainder $\ket w$ has norm of order $d_{\mathcal T}^2$. Their possible cancellation therefore changes the squared amplitude only at order $d_{\mathcal T}^3$ or higher. Expanding Eq.~\eqref{eq:normal-h} gives
\begin{align}
h_{q,\lambda}(d_{\mathcal T}^2)
&=\lambda d_{\mathcal T}^2
-2\sqrt{\lambda(1-\lambda/q)\binom q2}\,d_{\mathcal T}^3
+O_q(d_{\mathcal T}^4).
\label{eq:normal-h-expansion}
\end{align}
The correction terms vanish faster than $d_{\mathcal T}^2$, so
\begin{align}
\frac{h_{q,\lambda}(d_{\mathcal T}^2)}
{\lambda d_{\mathcal T}^2}
&\longrightarrow 1
\qquad\text{as }d_{\mathcal T}\to0.
\nonumber
\end{align}

Thus, sufficiently close to the target family, Lemma~\ref{lem:normal-bound} gives a rejection lower bound with leading term $\lambda d_{\mathcal T}^2$. For the coefficients $\lambda_{\mathcal T,q}$ in Table~\ref{table:normal-coefficients}, the following corollary shows that suitable input states attain this leading behavior. These examples establish that the coefficients cannot be increased uniformly over the number of modes.

\begin{corollary}[Sharpness of the rejection coefficients]
\label{cor:normal-distance-sharpness}
For each family, whenever the required modes and particle-number sector are available, there are states $\ket{\psi_t}$ satisfying
\begin{align}
 d_{\mathcal T}(\psi_t)&=t, \nonumber \\
 R_{\mathcal T,q}(\psi_t)&=\lambda_{\mathcal T,q}t^2+O_q(t^3)
 \qquad(t\to0^+).
 \label{eq:higher-copy-distance-sharpness}
\end{align}
Thus the constants in Table~\ref{table:normal-coefficients} cannot be increased uniformly over the number of modes.
\end{corollary}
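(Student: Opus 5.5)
The plan is to build, for each family, an explicit one-parameter curve $\ket{\psi_t}=\sqrt{1-t^2}\,\ket g+t\ket{\eta_*}$. The reference state $\ket g$ is the one from Table~\ref{table:normal-allowed-degrees}. The unit vector $\ket{\eta_*}$ is chosen inside the allowed excitation degrees so that it attains the minimum defining $\lambda_{\mathcal T,q}$, i.e.\ $\|Q\,D(g^{\otimes q})[\eta_*]\|_2^2=\lambda_{\mathcal T,q}$. The proof then has two independent parts: showing that $d_{\mathcal T}(\psi_t)=t$ exactly, and showing that $R(\psi_t)=\lambda_{\mathcal T,q}t^2+O_q(t^3)$.

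The rejection asymptotics are routine. Expand $\ket{\psi_t}^{\otimes q}$ exactly as in Eq.~\eqref{eq:normal-tensor-expansion}. Since $Q\ket g^{\otimes q}=0$, we have $Q\ket{\psi_t}^{\otimes q}=t(1-t^2)^{(q-1)/2}Q\,D(g^{\otimes q})[\eta_*]+Q\ket w$, with $\|\ket w\|_2=O_q(t^2)$ by Eq.~\eqref{eq:normal-B}. Squaring gives $\lambda_{\mathcal T,q}t^2+O_q(t^3)$, because the cross term is $O(t^3)$. This is the computation behind Eq.~\eqref{eq:normal-W}, so this part reduces to exhibiting the extremal $\eta_*$. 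I would take $\eta_*$ from the coefficient calculations in Sections~\ref{subsec:fermion-higher-response} and~\ref{subsec:boson-higher-response}, where $\lambda_{\mathcal T,q}$ is defined as an infimum. The natural candidates are the lowest allowed excitations:
\begin{itemize}
\item a degree-four Majorana/pair excitation such as $\ket{1111}$ on four modes for $\GFs,\GF$;
\item a double orbital replacement for $\Sl$;
\item a two-photon Fock state for $\Coh$;
\item a four-photon or degree-one excitation for $\Gzero$, whichever achieves $q(q-1)/(q+2)$;
\item a three-photon excitation for $\GB$.
\end{itemize}
For each candidate the response is computed in the copy-rotation irrep decomposition. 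The value $\lambda_{\mathcal T,q}$ should appear as $q$ times one minus the weight of the fixed (trivial) component of the normalized single-replacement vector. If the infimum is only approached rather than attained, I would instead take a sequence $\eta_m$ using more modes, since the statement says ``whenever the required modes are available.'' In that case one $\eta$ per fixed mode count must still give exactly $\lambda_{\mathcal T,q}$, so I would look for an attaining vector in sufficiently many modes.

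The main obstacle is the exact distance identity $d_{\mathcal T}(\psi_t)=t$, i.e.\ $F_{\mathcal T}(\psi_t)=1-t^2$. This means showing that no target has overlap with $\psi_t$ exceeding $\sqrt{1-t^2}$, which is a global statement over a curved, possibly noncompact orbit. I would first try a symmetry-plus-weight argument. Choose $\eta_*$ to lie in a sector orthogonal to every target's "reachable" weight at the relevant order, for instance $\eta_*$ invariant under a subgroup $K$ of the stabilizer of $g$ acting with a nontrivial character. The case of $\ket{\Omega}+t\ket{1111}$, which is exactly the non-Gaussian superposition noted in Section~\ref{sec:group-theory}, is the cleanest. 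For fermions the maximal Gaussian overlap of $\sqrt{1-t^2}\ket{0000}+t\ket{1111}$ is explicitly computable via the four-mode Gaussian family: even four-mode Gaussians are parametrized by $SO(8)/U(4)$, and the pure spinor constraint on $(c_0,c_{1111})$ forces the remaining two-particle amplitudes. That computation gives $F=\max(1-t^2,t^2)$, which equals $1-t^2$ for small $t$. For bosons I would use Bargmann coordinates (Lemma~\ref{lem:boson-gaussian-coordinates}). There the overlap with a Gaussian $\exp(\tfrac12 z^TZz+b^Tz)$ is explicit, and it remains to maximize over $(Z,b)$, using Lemma~\ref{lem:normal-attainment} for existence and Lemma~\ref{lem:normal-stationarity} to show the maximizer is the vacuum for small $t$. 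If exact equality fails, a fallback is to reparametrize the curve by the true distance $d_{\mathcal T}(\psi_t)=t+O(t^2)$, which preserves the asymptotic form of Eq.~\eqref{eq:higher-copy-distance-sharpness}.
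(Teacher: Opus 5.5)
\textbf{The rejection half is fine; the distance half is the gap.} Your plan follows the paper's: an explicit curve $\sqrt{1-t^2}\ket g+t\ket{\eta_*}$ with $\eta_*$ attaining the coefficient bound, the rejection asymptotics read off from the tensor expansion (correct as you wrote it), and an exact distance computation showing that $g$ remains a closest target for small $t$.

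Two small corrections first.
\begin{itemize}
\item For $\Gzero$ there is no choice to make. A one-photon deviation has $a_1=0$, so its coefficient is $q>q(q-1)/(q+2)$. Only the four-photon vector attains $\lambda_{\Gzero,q}$, so the ``lowest allowed excitation'' heuristic fails here.
\item Your fermionic distance sketch works once made explicit as in the paper. The coefficient of $\ket1\otimes\ket{2,3,4}$ in $\Lambda\ket g^{\otimes2}=0$ gives $g_0g_4=g_{1,2}g_{3,4}-g_{1,3}g_{2,4}+g_{1,4}g_{2,3}$, hence $|g_0|+|g_4|\le1$.
\end{itemize}

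\textbf{Why your bosonic and Slater plan does not close.} Lemma~\ref{lem:normal-stationarity} has no converse. The fact that $\eta_*$ is orthogonal to the tangent directions at $\Omega$ shows only that $\Omega$ is a critical point of $g\mapsto|\ip g{\psi_t}|^2$ on the family, not its global maximizer. Existence (Lemma~\ref{lem:normal-attainment}) plus stationarity does not tell you which critical point the maximizer is. Your own formula $F=\max(1-t^2,t^2)$ already shows two stationary targets competing.

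\textbf{The missing idea.} You need a comparison, valid for every target $g$, between its amplitude on $\eta_*$ and its vacuum deficit:
\begin{align*}
 |\ip{\eta_*}g|\le K\bigl(1-|\ip\Omega g|\bigr)
 \quad\Longrightarrow\quad
 |\ip g{\psi_t}|\le\sqrt{1-t^2}\,|\ip\Omega g|+tK\bigl(1-|\ip\Omega g|\bigr)\le\sqrt{1-t^2}
\end{align*}
whenever $tK\le\sqrt{1-t^2}$. This holds because $\eta_*$ lies outside the first-order tangent space, so its amplitude is quadratic in the parameters, just like the deficit. The paper's one-mode estimates amount to exactly this:
\begin{itemize}
\item $K=\sqrt6$ for $\Gzero$, from $(1-|Z|^2)^{1/4}\le1-|Z|^2/4$;
\item $K=\sqrt2$ for $\Coh$;
\item $K=10/\sqrt6$ for $\GB$, from $|C_{Z,b}|\le\e^{-y/4}$ with $y=|Z|^2+|b|^2$.
\end{itemize}
These give its intervals $t^2\le1/7,\ 1/3,\ 3/53$. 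For $\Sl$ the paper uses instead $|\ip{u\wedge v}{\psi_t}|=|\ip u{M\bar v}|\le\|M\|_\infty=\sqrt{1-t^2}$.

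A local repair of your route would need two steps you do not state. First, maximizers satisfy $|\ip\Omega g|\ge1-t/\sqrt{1-t^2}$, so they approach $\Omega$. Second, the overlap has a nondegenerate maximum at $\Omega$ when $t=0$, so no other critical point survives nearby for small $t$.

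\textbf{The fallback does not avoid this.} The bound $d_{\mathcal T}(\psi_t)\le t$ is automatic, and it only gives $R/d_{\mathcal T}^2\ge\lambda_{\mathcal T,q}+O(t)$. Sharpness needs the reverse, $d_{\mathcal T}(\psi_t)\ge t(1-o(1))$, which is the same upper bound on target overlaps.

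\textbf{Padding with extra modes.} To produce examples for every available mode count, you must also show that padding with vacuum preserves the maximal overlap. Contracting a target with vacuum on the extra modes yields a subnormalized target of the same family. For bosons this is Lemma~\ref{lem:boson-vacuum-projection}; for fermions it follows from $(V^\dagger\otimes V^\dagger)\Lambda_n=\Lambda_m(V^\dagger\otimes V^\dagger)$.
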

\begin{proof}
The examples constructed later in Appendices~\ref{subsec:fermion-distance-curves} and \ref{subsec:boson-distance-curves} have the form
\begin{align}
 \ket{\psi_t}&=\sqrt{1-t^2}\ket g+t\ket\eta,
 \qquad \ip g\eta=0,\quad\|\ket\eta\|_2=1.
 \label{eq:higher-copy-sharpness-states}
\end{align}
They use four fermions above a Gaussian vacuum, two occupied-orbital replacements for Slaters, four photons for centered Gaussians, two photons for coherent states, and three photons for full bosonic Gaussians. Those calculations prove that $g$ remains a closest target for sufficiently small $t\ge0$, and that Eq.~\eqref{eq:higher-copy-coefficient-bound} is attained with equality. The first fact gives $d_{\mathcal T}(\psi_t)=t$ directly from Eq.~\eqref{eq:distances}. Expanding the tensor power in Eq.~\eqref{eq:higher-copy-sharpness-states} gives
\begin{align}
 (I-P_{\mathcal T,q})\ket{\psi_t}^{\otimes q}
 &=t(I-P_{\mathcal T,q})\bigl(D(g^{\otimes q})[\eta]\bigr)+O_q(t^2), \nonumber \\
 R_{\mathcal T,q}(\psi_t)
 &=t^2\left\|(I-P_{\mathcal T,q})
       \bigl(D(g^{\otimes q})[\eta]\bigr)\right\|_2^2+O_q(t^3) \nonumber \\
 &=\lambda_{\mathcal T,q}t^2+O_q(t^3).
 \label{eq:higher-copy-sharpness-expansion}
\end{align}
The constant term vanishes by perfect completeness. The first remainder is bounded in the 2-norm, and its cross term with the leading vector is $O_q(t^3)$ after squaring. This proves Eq.~\eqref{eq:higher-copy-distance-sharpness}. The fermionic examples require sufficiently many occupied and unoccupied modes. No equality claim is made when the specified deviation is unavailable.
\end{proof}

In Subsection~\ref{sec:restoration}, we develop another way to relate rejection to distance. We continuously change the input in a direction that decreases rejection and use a bound on the gradient to control the total distance travelled before reaching a perfectly accepted state.

\subsection{Gradient flow}\label{sec:restoration}

In this section we develop a method for obtaining lower bounds on the rejection probability. For a test whose perfectly accepted states are exactly the target family $\mathcal T$, the spectral condition introduced below will imply
\begin{align}
R(\psi)&\ge
\begin{cases}
\kappa\sin^2\!\bigl(q\theta_{\mathcal T}(\psi)\bigr),
&\theta_{\mathcal T}(\psi)\le\pi/(2q),\\
\kappa,
&\theta_{\mathcal T}(\psi)\ge\pi/(2q).
\end{cases}
\nonumber
\end{align}
Here $\theta_{\mathcal T}(\psi)=\arcsin d_{\mathcal T}(\psi)$ is the angular distance to the target family, and $0<\kappa\le1$ is determined by the spectral estimate. This bound guarantees a positive rejection probability for inputs at positive distance from the target family.

We first prove Theorem~\ref{thm:overlap}, which derives a lower bound on the norm of the gradient from a spectral condition on overlapping acceptance projectors. We then apply this gradient bound in Theorem~\ref{thm:restoration}. Starting from an input with sufficiently small rejection, we continuously vary the normalized state in a direction that decreases rejection. The gradient bound controls the total distance travelled before convergence to a perfectly accepted state, yielding the rejection bound above. Finally, Lemma~\ref{lem:so3} establishes the required spectral estimate for projectors associated with rotations about orthogonal axes.

The idea is to continuously vary the normalized input in a direction that decreases rejection. For an input with sufficiently small rejection, the gradient bound controls the total distance travelled before the state converges to a perfectly accepted state. When the test is exact, this limiting state belongs to the target family. The resulting bound on the distance to the target family then gives a lower bound on rejection.

To formulate the spectral condition, let $P$ be a permutation-invariant acceptance projector. Choose $m\ge2$ sets of copy labels $B_i$, each containing $q$ copies, with exactly one shared copy between any two distinct sets. Let $P_{B_i}$ apply $P$ to the copies in $B_i$ and the identity to all remaining copies, and define their average
\begin{align}
T&=\frac1m\sum_{i=1}^m P_{B_i}.
\label{eq:overlap-operator}
\end{align}
For two-copy tests, we can use the three pairs of three copies. The required spectral estimates follow from Lemma~\ref{lem:so3} for the fermionic and centered bosonic Gaussian tests, Lemma~\ref{lem:fermions-su3} for the Slater test, and Lemma~\ref{lem:boson-overlaps} for the coherent-state test. The following theorem shows how such a spectral estimate yields the gradient bound. 

\begin{theorem}[Overlapping projections]\label{thm:overlap}
If the spectrum of $T$ satisfies
\begin{align}
 \Spec(T)&\subset\{1\}\cup[0,\beta],\qquad
 0\le\beta<(m-1)/m,
 \label{eq:overlap-spectrum}\\
 \kappa&=1-\frac m{m-1}\beta,
 \label{eq:overlap-gradient-constant}
\end{align}
then every state $\psi$ obeys that the norm of $\ket{\chi_\psi}$ from Eq.~\eqref{eq:ketchi_psi} is lower-bounded by
\begin{align}
 \|\ket{\chi_\psi}\|_2^2
 &\ge R(\psi)\bigl(\kappa-R(\psi)\bigr),
 \label{eq:overlap-e}\\
\end{align}
which using Lemma~\ref{lem:first-variation} we can find the bound
\begin{align}
 \|\nabla R(\psi)\|_2^2
 &\ge4q^2R(\psi)\bigl(\kappa-R(\psi)\bigr).
 \label{eq:overlap-gradient-bound}
\end{align}
\end{theorem}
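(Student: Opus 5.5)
The plan is to compute $\|T\ket\Psi\|_2^2$ for the big tensor power in two ways, once via the overlap structure of the sets $B_i$ and once via the spectral hypothesis, and compare. Let $N=|\bigcup_iB_i|$ and $\ket\Psi=\ket\psi^{\otimes N}$. Since $P$ is permutation invariant, each term satisfies $\ip{\Psi}{P_{B_i}\Psi}=p(\psi)$, so $\ip\Psi{T\Psi}=p$.

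\textbf{Step 1 (exact formula).} We expand
\[
\|T\ket\Psi\|_2^2=\tfrac1{m^2}\sum_{i,j}\ip{P_{B_i}\Psi}{P_{B_j}\Psi}.
\]
The $m$ diagonal terms each equal $p$. For $i\neq j$, the sets share exactly one copy $c$. The copies outside $B_i\cup B_j$ contribute a factor $1$. In $P_{B_i}\ket\Psi$ the copies of $B_j\setminus\{c\}$ carry $\ket\psi$, and symmetrically in $P_{B_j}\ket\Psi$. Contracting the $q-1$ copies of $B_i\setminus\{c\}$ against $\ket\psi$ turns $P\ket\psi^{\otimes q}$ into $(\bra\psi^{\otimes(q-1)}\otimes I)P\ket\psi^{\otimes q}=p\ket\psi+\ket{\chi_\psi}$ on copy $c$, by permutation invariance and Eq.~\eqref{eq:restoration-contraction}. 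The same happens on the $B_j$ side. Hence each off-diagonal term equals $\|p\ket\psi+\ket{\chi_\psi}\|_2^2=p^2+\|\ket{\chi_\psi}\|_2^2$, and
\[
\|T\ket\Psi\|_2^2=\tfrac pm+\tfrac{m-1}m\bigl(p^2+\|\ket{\chi_\psi}\|_2^2\bigr).
\]
This bookkeeping, checking that exactly one shared copy gives precisely the vector $p\psi+\chi_\psi$ and nothing else, is the only delicate point.

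\textbf{Step 2 (spectral lower bound).} On $\{1\}\cup[0,\beta]$ the scalar inequality $x^2\ge(1+\beta)x-\beta$ holds, since it is equivalent to $(x-1)(x-\beta)\ge0$. By the spectral theorem for the self-adjoint operator $T$, this gives $T^2\ge(1+\beta)T-\beta I$, so
\[
\|T\ket\Psi\|_2^2\ge(1+\beta)p-\beta.
\]

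\textbf{Step 3 (combine).} Substituting Step 2 into Step 1 gives
\[
\tfrac{m-1}m\|\ket{\chi_\psi}\|_2^2\ge(1+\beta)p-\beta-\tfrac pm-\tfrac{m-1}mp^2.
\]
With $R=1-p$ the right-hand side simplifies to $\tfrac{m-1}m pR-\beta R=R\bigl(\tfrac{m-1}mp-\beta\bigr)$. Dividing by $\tfrac{m-1}{m}$ yields
\[
\|\ket{\chi_\psi}\|_2^2\ge R\Bigl(1-R-\tfrac m{m-1}\beta\Bigr)=R(\kappa-R),
\]
which is Eq.~\eqref{eq:overlap-e}. The condition $\beta<(m-1)/m$ ensures $\kappa>0$, so the bound is nontrivial. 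Finally, Lemma~\ref{lem:first-variation} gives $\|\nabla R\|_2^2=4q^2\|\ket{\chi_\psi}\|_2^2$, which is Eq.~\eqref{eq:overlap-gradient-bound}.
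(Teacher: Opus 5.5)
Your proposal is correct and follows essentially the same route as the paper's proof: you compute $\ip{z}{T^2z}$ via the one-shared-copy contraction to get $p/m+\tfrac{m-1}{m}\bigl(p^2+\|\ket{\chi_\psi}\|_2^2\bigr)$, and then compare with the spectral hypothesis. Your operator inequality $T^2\ge(1+\beta)T-\beta I$ is just a rearrangement of the paper's $T(I-T)\le\beta(I-T)$, and the final step through Lemma~\ref{lem:first-variation} is identical.
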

\begin{proof}
Apply these operators to the product state
\begin{align}
 \ket z&=\ket\psi^{\otimes|\cup_iB_i|}.
 \label{eq:overlap-input}
\end{align}
Each $P_{B_i}$ has acceptance probability $p(\psi)$ on this state. For $i\ne j$, contract the copies that occur in only one of $B_i,B_j$. On their shared copy, each contraction leaves the vector $p(\psi)\ket\psi+\ket{\chi_\psi}$ from Eq.~\eqref{eq:restoration-contraction}. Thus
\begin{align}
\ip{P_{B_i}z}{P_{B_j}z}
&=\ip{\xi_\psi}{\xi_\psi}
\nonumber\\
\nonumber
&=p(\psi)^2\ip\psi\psi
+2p(\psi)\Re\ip\psi{\chi_\psi}
+\|\ket{\chi_\psi}\|_2^2\\
&=p(\psi)^2+\|\ket{\chi_\psi}\|_2^2,
 \label{eq:overlap-cross-term}
\end{align}
where we have used $\ip\psi{\chi_\psi}=0$ to compute the squared norm of the contracted vector.  We can use Eq.~\eqref{eq:overlap-operator} to see that
\begin{align}
 \ip z{T^2z}
 &=\frac{p(\psi)}m+\frac{m-1}{m}
       \bigl(p(\psi)^2+\|\ket{\chi_\psi}\|_2^2\bigr).
 \label{eq:overlap-second-moment}
\end{align}
Where we sum the $m$ diagonal terms and $m(m-1)$ off-diagonal terms in $T^2$. Since $\ip z{Tz}=p(\psi)$, subtracting gives
\begin{align}
 \ip z{T(I-T)z}
 &=\frac{m-1}{m}\bigl(p(\psi)R(\psi)-\|\ket{\chi_\psi}\|_2^2\bigr).
 \label{eq:overlap-moments}
\end{align}
For every spectral value $t$ allowed by Eq.~\eqref{eq:overlap-spectrum}, $t(1-t)\le\beta(1-t)$. Hence $T(I-T)\le\beta(I-T)$. Taking its expectation in $\ket z$ and rearranging yields
\begin{align}
 \frac{m-1}{m}\bigl(p(\psi)R(\psi)-\|\ket{\chi_\psi}\|_2^2\bigr)
 &\le\beta R(\psi), \nonumber \\
 \|\ket{\chi_\psi}\|_2^2
 &\ge R(\psi)\left(p(\psi)-\frac m{m-1}\beta\right) \nonumber \\
 &=R(\psi)\bigl(\kappa-R(\psi)\bigr).
 \label{eq:overlap-correction-bound}
\end{align}
The first inequality uses $\ip z{(I-T)z}=1-p(\psi)=R(\psi)$. The final equality substitutes $p(\psi)=1-R(\psi)$ and the definition of $\kappa$. The gradient bound follows from $\nabla R(\psi)=-2q\chi_\psi$ in Lemma~\ref{lem:first-variation}.
\end{proof}

Theorem~\ref{thm:overlap} gives a lower bound on the norm of the gradient. We now use this bound to control the distance from the input to the set of perfectly accepted states,
\begin{align}
\mathcal Z&=\{\ket g:\|\ket g\|_2=1,\ R( g)=0\}.
\label{eq:zero-rejection-set}
\end{align}
Starting from an input with $R(\psi)<\kappa$, we continuously change the normalized state along $2q\ket{\chi_\psi}=-\nabla R(\psi)$, decreasing its rejection probability. The next theorem shows that the gradient bound guarantees convergence to a state in $\mathcal Z$ and bounds the total length of this path. This gives an upper bound on the distance to $\mathcal Z$, which can be rearranged into a lower bound on rejection. For an exact test, $\mathcal Z=\mathcal T$, so the bound measures distance to the target family.

\begin{theorem}[Gradient flow and distance]\label{thm:restoration}
Suppose $q\ge2$, $P$ is permutation invariant, $\mathcal Z\ne\varnothing$, and, for some $0<\kappa\le1$ assume that the gradient fulfills
\begin{align}
 \|\nabla R(\psi)\|_2^2
 &\ge4q^2R(\psi)\bigl(\kappa-R(\psi)\bigr)
 \quad\text{whenever }\|\ket\psi\|_2=1,\quad 0<R(\psi)<\kappa.
 \label{eq:restoration-slope}
\end{align}
Every input with rejection $r_0<\kappa$ can be continuously changed into an element of $\mathcal Z$ along a path of length at most $q^{-1}\arcsin\sqrt{r_0/\kappa}$. Consequently,
\begin{equation}
\begin{aligned}
 R(\psi)&\ge
 \begin{cases}
 \kappa\sin^2\!\bigl(q\theta_{\mathcal Z}(\psi)\bigr),
   &\theta_{\mathcal Z}(\psi)\le\pi/(2q), \nonumber \\
 \kappa,&\theta_{\mathcal Z}(\psi)\ge\pi/(2q).
 \end{cases}
 \label{eq:restoration-envelope}
\end{aligned}
\end{equation}
\end{theorem}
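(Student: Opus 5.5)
We follow the negative gradient flow $\dot\psi(s)=-\nabla R(\psi(s))/\|\nabla R(\psi(s))\|_2$, parametrized by arc length on the unit sphere $S$, starting from an input $\psi_0$ with $r_0=R(\psi_0)<\kappa$. By Lemma~\ref{lem:first-variation}, $\nabla R=-2q\ket{\chi_\psi}$ with $\ip\psi{\chi_\psi}=0$, so the vector field is tangent and the flow stays on $S$. The vector field is smooth where $\nabla R\neq0$; this holds whenever $0<R<\kappa$ by \eqref{eq:restoration-slope}. Along the flow we have $\frac{\dd}{\dd s}R=-\|\nabla R\|_2$, so $R$ is nonincreasing and stays in $(0,r_0]$ until it hits $0$. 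Introduce $\phi(s)=\arcsin\sqrt{R(\psi(s))/\kappa}\in[0,\pi/2)$. A direct computation gives
\begin{align*}
 \frac{\dd\phi}{\dd s}=\frac{\dot R}{2\sqrt{R(\kappa-R)}}=-\frac{\|\nabla R\|_2}{2\sqrt{R(\kappa-R)}}\le-q,
\end{align*}
by \eqref{eq:restoration-slope}. Hence $\phi$ reaches $0$ within arc length $s_*\le q^{-1}\arcsin\sqrt{r_0/\kappa}$.

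The main obstacle is making this rigorous: we must show that the flow exists up to $s_*$ and converges to a limit point $g$ with $R(g)=0$, in a possibly infinite-dimensional Hilbert space. The plan is as follows. The field is locally Lipschitz on the open set $\{0<R<\kappa\}\cap S$, because $R$ is a smooth polynomial-type function with bounded derivatives on bounded sets (for bosons, work sector by sector or use the restriction to the sphere with the bound $\|D(\psi^{\otimes q})\|\le q$). Picard--Lindel\"of then gives a maximal solution on $[0,s_{\max})$, and the monotonicity of $\phi$ forces $s_{\max}\le s_*$ unless $R$ hits zero earlier. Since the curve is unit-speed and has finite length, it is Cauchy: $\|\psi(s)-\psi(s')\|_2\le|s-s'|$. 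Completeness of $\mathcal H$ therefore gives a limit $g\in S$. Continuity of $R$ and the fact that the solution cannot be extended imply $R(g)=0$: if $R(g)>0$, the field would be defined near $g$ and the flow could be continued, contradicting maximality. We conclude $g\in\mathcal Z$, joined to $\psi_0$ by a path of length at most $q^{-1}\arcsin\sqrt{r_0/\kappa}$. If $r_0=0$ the statement is trivial.

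For the envelope, the geodesic distance on $S$ to the phase-invariant set $\mathcal Z$ equals $\theta_{\mathcal Z}$ (see Section~\ref{sec:math-geometry}), so $\theta_{\mathcal Z}(\psi)\le d_S(\psi,g)\le$ path length. Thus whenever $R(\psi)<\kappa$ we get $q\,\theta_{\mathcal Z}(\psi)\le\arcsin\sqrt{R/\kappa}<\pi/2$, which gives $R\ge\kappa\sin^2(q\theta_{\mathcal Z})$ and forces $\theta_{\mathcal Z}<\pi/(2q)$. In the contrapositive: if $\theta_{\mathcal Z}\ge\pi/(2q)$ then $R\ge\kappa$. If $\theta_{\mathcal Z}\le\pi/(2q)$ and $R\ge\kappa$, the bound $R\ge\kappa\ge\kappa\sin^2(q\theta_{\mathcal Z})$ holds trivially. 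Together these two cases give the stated envelope.
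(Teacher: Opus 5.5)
Your proof is correct, but it uses a different parametrization of the flow from the paper. The paper runs the unnormalized flow $\dot\psi=2q\ket{\chi_\psi}=-\nabla R$ in time $t$. It extends $\chi_u$ to non-unit vectors and proves an explicit Lipschitz bound on bounded subsets of all of $\mathcal H$. From the speed bound $\|\dot\psi\|_2\le4q$ it gets a solution for all $t\ge0$, and it shows that $R$ decays exponentially. It then bounds the total length by the change of variables $\int\|\nabla R\|_2\,\dd t\le\int_0^{r_0}\dd r/(2q\sqrt{r(\kappa-r)})$, so the endpoint is reached only as $t\to\infty$. Finally, it checks directly that $\arccos\Re\ip{\psi(0)}{\psi(t)}$ grows no faster than the speed.

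You instead normalize the field so that time equals arc length. Your estimate $\dot\phi\le-q$ for $\phi=\arcsin\sqrt{R/\kappa}$ is the same computation as the paper's change of variables. In your setting, though, it appears directly as a bound on the existence time, and the zero set is reached within finite arc length. The price is that your field is singular where $\nabla R=0$. You therefore only have a locally Lipschitz field on the open set $\{0<R<\kappa\}$, and you need the maximal-solution argument you give: a bounded field, a Cauchy limit $g$, and extension whenever $R(g)>0$. To run Picard--Lindel\"of in $\mathcal H$, you should also extend the field off the sphere, for instance by $u\mapsto F(u/\|u\|_2)$, which keeps norms constant. The paper's unnormalized field has no singular set and needs none of this, but it must handle the infinite-time limit instead. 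Your appeal to the standard fact that curve length on the sphere dominates great-circle distance replaces the paper's explicit angular estimate.

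One small correction. Working ``sector by sector'' for bosons would not make the flow rigorous. Contracting $P\ket\psi^{\otimes q}$ against $\bra\psi^{\otimes(q-1)}$ generates photon-number sectors absent from a superposition input, so the flow does not preserve sectors. Your alternative is what is actually needed and suffices: Lipschitz continuity of the polynomial map $u\mapsto\nabla R(u)$ on bounded sets, which the paper verifies explicitly in Eq.~\eqref{eq:restoration-lipschitz}.
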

\begin{proof}
If $R(\psi(0))=0$, the input is already in $\mathcal Z$. Otherwise, assume $0<r_0=R(\psi(0))<\kappa$ and evolve the state according to
\begin{align}
 \frac{\dd}{\dd t}\ket{\psi(t)}&=2q\ket{\chi_{\psi(t)}}.
 \label{eq:restoration-flow}
\end{align}
We first check that this equation defines normalized states for all $t\ge0$. To prove that the evolution has a unique solution, we first define the correction for vectors whose norm may differ from one. We then show that an initially normalized state remains normalized,
\begin{align}
 p(u)&=\|P\ket u^{\otimes q}\|_2^2,
 \label{eq:restoration-extended-acceptance}\\
 \ket{\chi_u}
 &=(\bra u^{\otimes(q-1)}\otimes I)P\ket u^{\otimes q}
   -p(u)\ket u.
 \label{eq:restoration-extended-correction}
\end{align}
For unit vectors this agrees with Eq.~\eqref{eq:restoration-contraction}. We now bound the change in the correction when $\|\ket u\|_2,\|\ket v\|_2\le B$.

First, for any positive integer $k$, the difference of tensor powers can be expanded
\begin{align}
 \ket u^{\otimes k}-\ket v^{\otimes k}
 &=\sum_{a=1}^k
 \ket u^{\otimes(a-1)}
 \otimes(\ket u-\ket v)
 \otimes\ket v^{\otimes(k-a)}
\end{align}
by changing one factor at a time. The triangle inequality and multiplicativity of the norm under tensor products give
\begin{align}
 \|\ket u^{\otimes k}-\ket v^{\otimes k}\|_2
 &\le\sum_{a=1}^k
 \|\ket u\|_2^{a-1}
 \|\ket u-\ket v\|_2
 \|\ket v\|_2^{k-a} \nonumber \\
 &\le\sum_{a=1}^k B^{k-1}\|\ket u-\ket v\|_2 \nonumber \\
 &=kB^{k-1}\|\ket u-\ket v\|_2.
\end{align}

For the contraction term, add and subtract $(\bra v^{\otimes(q-1)}\otimes I)P\ket u^{\otimes q}$. Since applying $P$ cannot increase the norm, we obtain
\begin{align}
 &\left\|
 (\bra u^{\otimes(q-1)}\otimes I)P\ket u^{\otimes q}
 -(\bra v^{\otimes(q-1)}\otimes I)P\ket v^{\otimes q}
 \right\|_2 \nonumber \\
 &\quad\le
 \|\ket u^{\otimes(q-1)}-\ket v^{\otimes(q-1)}\|_2
 \|\ket u\|_2^q
 +\|\ket v\|_2^{q-1}
 \|\ket u^{\otimes q}-\ket v^{\otimes q}\|_2 \nonumber \\
 &\quad\le
 \bigl[(q-1)B^{q-2}B^q+B^{q-1}qB^{q-1}\bigr]
 \|\ket u-\ket v\|_2 \nonumber \\
 &\quad=(2q-1)B^{2q-2}\|\ket u-\ket v\|_2.
\end{align}
The first contribution changes the $q-1$ bra factors, and the second changes the $q$ ket factors.

Next, use the difference of squares and the reverse triangle inequality
\begin{align}
 \left|
 \|P\ket u^{\otimes q}\|_2^2
 -\|P\ket v^{\otimes q}\|_2^2
 \right|&=\bigl(\|P\ket u^{\otimes q}\|_2
         +\|P\ket v^{\otimes q}\|_2\bigr)
   \left|\|P\ket u^{\otimes q}\|_2
         -\|P\ket v^{\otimes q}\|_2\right| \nonumber \\
 &\le 2B^q
 \left\|P\bigl(\ket u^{\otimes q}-\ket v^{\otimes q}\bigr)\right\|_2 \nonumber \\
 &\le 2qB^{2q-1}\|\ket u-\ket v\|_2.
\end{align}
Together with $p(u,v) = \|P\ket{u,v}^{\otimes q}\|_2^2$ and $p(u)\le B^{2q}$, this gives
\begin{align}
 \|p(u)\ket u-p(v)\ket v\|_2
 &=\left\|
 p(u)(\ket u-\ket v)+(p(u)-p(v))\ket v
 \right\|_2 \nonumber \\
 &\le p(u)\|\ket u-\ket v\|_2
      +|p(u)-p(v)|\,\|\ket v\|_2 \nonumber \\
 &\le B^{2q}\|\ket u-\ket v\|_2
      +2qB^{2q-1}B\,\|\ket u-\ket v\|_2 \nonumber \\
 &=(2q+1)B^{2q}\|\ket u-\ket v\|_2.
\end{align}

Finally, the difference between the corrections is bounded by the sum of the two contributions
\begin{align}
 \|\ket{\chi_u}-\ket{\chi_v}\|_2
 &\le (2q-1)B^{2q-2}\|\ket u-\ket v\|_2
      +(2q+1)B^{2q}\|\ket u-\ket v\|_2 \nonumber \\
 &=\bigl[(2q-1)B^{2q-2}+(2q+1)B^{2q}\bigr]
   \|\ket u-\ket v\|_2.
 \label{eq:restoration-lipschitz}
\end{align}
Thus, on any set of vectors with bounded norm, the change in the correction is bounded by a constant times the change in the input. The Picard--Lindel\"of theorem therefore gives a unique solution on a short time interval. The definition of the correction also gives
\begin{align}
 \ip\psi{\chi_\psi}
 &=p(\psi)\bigl(1-\|\ket\psi\|_2^2\bigr),
 \label{eq:restoration-norm-overlap}\\
 \frac{\dd}{\dd t}\|\ket{\psi(t)}\|_2^2
 &=4q\,p(\psi(t))\bigl(1-\|\ket{\psi(t)}\|_2^2\bigr).
 \label{eq:restoration-norm-preservation}
\end{align}
Starting from norm one, the last equation keeps the norm equal to one. On unit vectors, the contraction in Eq.~\eqref{eq:restoration-extended-correction} has norm at most one and $0\le p(\psi)\le1$. Hence
\begin{align}
 \|\ket{\chi_{\psi(t)}}\|_2&\le2, \nonumber \\
 \left\|\frac{\dd}{\dd t}\ket{\psi(t)}\right\|_2&\le4q.
 \label{eq:restoration-speed-bound}
\end{align}
The speed bound implies that the distance between the states at times $t$ and $s$ is at most $4q|t-s|$. If the solution were defined only up to a finite time, the states would therefore converge in the 2-norm as that time is approached. Starting the differential equation at this limiting state extends the solution. Thus the evolution exists for every $t\ge0$, also in an infinite-dimensional Hilbert space.

Since the state follows the negative gradient, the chain rule and \eqref{eq:restoration-slope} give
\begin{align}
 \dot R(\psi(t))&=-\|\nabla R(\psi(t))\|_2^2
 \label{eq:restoration-decay}\\
 &\le-4q^2R(\psi(t))\bigl(\kappa-R(\psi(t))\bigr) \\
 &\le-4q^2(\kappa-r_0)R(\psi(t)), \\
 R(\psi(t))&\le r_0\exp\!\bigl[-4q^2(\kappa-r_0)t\bigr].
 \label{eq:restoration-exponential-decay}
\end{align}
The rejection decreases, so $R(\psi(t))\le r_0<\kappa$ throughout the evolution and the assumed gradient bound remains applicable. The final inequality follows by integrating the preceding differential inequality. In particular, rejection tends to zero.

The speed equals $\|\nabla R(\psi(t))\|_2$, while the rejection decreases at the square of this speed. Changing the integration variable from $t$ to $R(\psi(t))$, recalling that $\frac{\dd}{\dd t}\ket{\psi(t)} =-\nabla R(\psi(t)) $, and using Eq.~\eqref{eq:restoration-slope} therefore gives
\begin{align}
 \int_0^\infty\left\|\frac{\dd}{\dd t}\ket{\psi(t)}\right\|_2\,\dd t
 &\le\int_0^{r_0}\frac{\dd r}{2q\sqrt{r(\kappa-r)}}
 \label{eq:restoration-integral}\\
 &=\frac1q\arcsin\sqrt{r_0/\kappa}. 
\end{align}
For the equality, substitute $r=\kappa\sin^2\alpha$. If rejection reaches zero at a finite time, the gradient then vanishes and the state stays fixed. Otherwise the same estimate applies to the entire evolution. For $t_2\ge t_1$, the difference between the states is bounded by the length travelled between those times
\begin{align}
 \|\ket{\psi(t_2)}-\ket{\psi(t_1)}\|_2
 &\le\int_{t_1}^{t_2}
 \left\|\frac{\dd}{\dd t}\ket{\psi(t)}\right\|_2\,\dd t.
 \label{eq:restoration-state-convergence}
\end{align}
Since the total integral is finite, the right-hand side tends to zero as $t_1\to\infty$, uniformly for $t_2\ge t_1$. The states therefore converge in the 2-norm to a unit vector $\ket g$. Continuity of $R$ and Eq.~\eqref{eq:restoration-exponential-decay} give $R(g)=0$.

It remains to compare the path length with the angular distance in Eq.~\eqref{eq:distances}. Differentiating the normalization gives $\Re\ip{\psi(t)}{\dot\psi(t)}=0$, and hence
\begin{align}
\left|\frac{{\rm d}}{{\rm d} t}\Re\ip{\psi(0)}{\psi(t)}\right|
 &=\left|\Re\ip{\psi(0)-(\Re\ip{\psi(0)}{\psi(t)})\psi(t)}{\dot\psi(t)}\right| \nonumber \\
 &\le\|\ket{\psi(0)}-(\Re\ip{\psi(0)}{\psi(t)})\ket{\psi(t)}\|_2\,
       \|\ket{\dot\psi(t)}\|_2 \nonumber \\
 &=\sqrt{1-\bigl(\Re\ip{\psi(0)}{\psi(t)}\bigr)^2}\,\|\ket{\dot\psi(t)}\|_2.
 \label{eq:restoration-angular-speed}
\end{align}
Here $\ket{\dot\psi(t)}$ denotes the derivative of $\ket{\psi(t)}$. The inequality bounds an inner product by the product of the two norms. The last equality follows by expanding the first squared norm and using $\|\ket{\psi(0)}\|_2=\|\ket{\psi(t)}\|_2=1$. Dividing by $\sqrt{1-\bigl(\Re\ip{\psi(0)}{\psi(t)}\bigr)^2}$ bounds the absolute derivative of $\arccos \Re\ip{\psi(0)}{\psi(t)}$ by the speed. Integrating from $\Re\ip{\psi(0)}{\psi(0)}=1$ and taking endpoint limits gives
\begin{align}
 \theta_{\mathcal Z}(\psi(0))
 &\le\arccos|\ip{\psi(0)}g| \nonumber \\
 &\le\arccos\Re\ip{\psi(0)}g \nonumber \\
 &\le\int_0^\infty\left\|\frac{\dd}{\dd t}\ket{\psi(t)}\right\|_2\,\dd t \nonumber \\
 &\le\frac1q\arcsin\sqrt{r_0/\kappa}.
 \label{eq:restoration-angle-bound}
\end{align}
The first inequality uses $g\in\mathcal Z$. The second uses $|\ip{\psi(0)}g|\ge\Re\ip{\psi(0)}g$ and the fact that $\arccos$ is decreasing. The last two inequalities use Eqs.~\eqref{eq:restoration-angular-speed} and \eqref{eq:restoration-integral}, respectively. Thus $q\theta_{\mathcal Z}(\psi(0))<\pi/2$ whenever $R(\psi(0))<\kappa$, and taking the sine and squaring gives the first branch of Eq.~\eqref{eq:restoration-envelope}. A state with $\theta_{\mathcal Z}(\psi)\ge\pi/(2q)$ must therefore have $R(\psi)\ge\kappa$, giving the second branch. States with $R(\psi)\ge\kappa$ also satisfy the first branch directly.
\end{proof}

For two-copy tests, the theorem gives a bound in squared distance
\begin{align}
 R(\psi)&\ge\ell_\kappa\bigl(d_{\mathcal Z}(\psi)^2\bigr).
 \label{eq:two-copy-flow-bound}
\end{align}
For exact tests, $\mathcal Z=\mathcal T$. Writing $d_{\mathcal T}=d_{\mathcal T}(\psi)$, the function is
\begin{equation}
\begin{aligned}
 \ell_\kappa(d_{\mathcal T}^2)&=
 \begin{cases}
 4\kappa d_{\mathcal T}^2(1-d_{\mathcal T}^2),
    &0\le d_{\mathcal T}^2\le\tfrac12,  \\
 \kappa,&\tfrac12\le d_{\mathcal T}^2\le1,
 \end{cases}
 \qquad 0<\kappa\le1.
 \label{eq:ell-definition}
\end{aligned}
\end{equation}
The first branch follows from $d_{\mathcal T}=\sin\theta_{\mathcal T}$ and the double-angle identity for the sine. The full bosonic Gaussian argument in Section~\ref{sec:boson-full} combines a coherent-state distance bound with a centered-Gaussian distance bound. For that argument, define their composition
\begin{equation}
\begin{aligned}
 \ell_{\mathrm B}(d_{\mathcal T}^2)
 &=\ell_{1/8}\!\left(\ell_{1/4}(d_{\mathcal T}^2)\right) \nonumber \\
 &=\begin{cases}
 \tfrac12d_{\mathcal T}^2(1-d_{\mathcal T}^2)
     (1-d_{\mathcal T}^2+d_{\mathcal T}^4),
     &0\le d_{\mathcal T}^2\le\tfrac12, \nonumber \\
 \tfrac3{32},&\tfrac12\le d_{\mathcal T}^2\le1.
 \end{cases}
 \label{eq:ellB-definition}
\end{aligned}
\end{equation}
Both functions are continuous and nondecreasing on $[0,1]$ and satisfy
\begin{align}
 \ell_\kappa(d_{\mathcal T}^2)&\ge\kappa d_{\mathcal T}^2,
 \label{eq:global-envelope-linear-bounds}\\
 \ell_{\mathrm B}(d_{\mathcal T}^2)&\ge\frac{3d_{\mathcal T}^2}{32}.
 \label{eq:bosonic-envelope-linear-bound}
\end{align}
The first inequality follows from $4(1-d_{\mathcal T}^2)\ge1$ on the first branch and $d_{\mathcal T}^2\le1$ on the second. For the second inequality, the definition of $\ell_{1/4}$ gives
\begin{align}
 \nonumber
 \frac{d_{\mathcal T}^2}{4}
 &\le\ell_{1/4}(d_{\mathcal T}^2)\le\frac14, \\
 \nonumber
 \ell_{\mathrm B}(d_{\mathcal T}^2)
 &=\frac12\ell_{1/4}(d_{\mathcal T}^2)
       \left[1-\ell_{1/4}(d_{\mathcal T}^2)\right] \\
        \nonumber
 &\ge\frac38\ell_{1/4}(d_{\mathcal T}^2) \\
 &\ge\frac{3d_{\mathcal T}^2}{32}.
 \label{eq:bosonic-envelope-linear-bound-proof}
\end{align}

\subsubsection{Rotation-invariant subspaces}
\label{sec:copy-angles}

Theorem~\ref{thm:overlap} proves the gradient inequality in Eq.~\eqref{eq:overlap-gradient-bound} from the spectral condition in Eq.~\eqref{eq:overlap-spectrum}. For two-copy Gaussian tests, the three projectors acting on pairs of three copies can be identified with projectors onto states unchanged by rotations about three orthogonal axes. The following lemma bounds the spectrum of their average. Its bound is independent of the angular-momentum sector, which allows us to apply it also to the bosonic Fock space.

\begin{lemma}[Spectral bound for orthogonal rotation axes]
\label{lem:so3}
Consider a unitary representation of $SO(3)$ on a finite-dimensional Hilbert space, or on an orthogonal direct sum of finite-dimensional spaces. Let $u_1,u_2,u_3\in\mathbb R^3$ be orthogonal unit vectors, and let $P_j$ project onto the states unchanged by every rotation about $u_j$. Then
\begin{align}
 \Spec\!\left(\frac{P_1+P_2+P_3}{3}\right)
 &\subseteq\{1\}\cup[0,7/12].
 \label{eq:fermions-so3-bound}
\end{align}
More generally, in a single spin-$\ell$ representation of dimension $2\ell+1$, with integer $\ell\ge0$, there is a unique normalized state $\ket{z_u}$ up to phase with zero angular momentum along each unit axis $u$. These are precisely the states unchanged by rotations about $u$. Their phases can be chosen so that, for any unit axes $u,v$,
\begin{align}
 \ip{z_u}{z_v}&=P_\ell(u\cdot v),
 \label{eq:fermions-zonal-overlap}\\
 \|\ket{z_u}\|_2&=1.
 \label{eq:so3-axis-state-normalization}
\end{align}
Here $P_\ell(t)$ is the degree-$\ell$ Legendre polynomial, normalized by $P_\ell(1)=1$.
\end{lemma}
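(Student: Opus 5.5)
The plan is to decompose the representation into irreducible pieces and reduce the spectral question on each piece to a $3\times3$ Gram matrix of zonal states. First note that $T:=\frac13(P_1+P_2+P_3)$ commutes with nothing in particular, but each $P_j$ is built from the group: it is the Haar average over the circle subgroup of rotations about $u_j$. By the Schur discussion in Section~\ref{sec:math-rep}, each $P_j$ therefore preserves every isotypic component and acts as $\bigoplus_\ell B_\ell^{(j)}\otimes I_{M_\ell}$. Hence $T$ is block diagonal over the irreps $V_\ell$ in a Hilbert direct sum. Its spectrum is the closure of the union of the spectra of the finite-dimensional blocks. It therefore suffices to prove the bound with a constant uniform in $\ell$. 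Only integer $\ell$ occurs, since the representation is of $SO(3)$ rather than $SU(2)$.

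Next comes the single-irrep structure. In $V_\ell$, the states fixed by all rotations about $u$ form $\ker L_u$. The operator $L_u$ is conjugate by a rotation to $L_z$, whose spectrum on $V_\ell$ is $\{-\ell,\dots,\ell\}$ with multiplicity one. Hence this kernel is a line spanned by a unit zonal vector $\ket{z_u}$, and $P_j=\ket{z_{u_j}}\bra{z_{u_j}}$ on $V_\ell$. For the overlap formula I will realize $V_\ell$ as the degree-$\ell$ spherical harmonics on $S^2$ with the $L^2$ inner product, rotations acting by $f\mapsto f\circ R^{-1}$. I will take
\begin{align*}
z_u(x)=\sqrt{\tfrac{2\ell+1}{4\pi}}\,P_\ell(u\cdot x).
\end{align*}
This function is manifestly invariant under rotations about $u$ and real, which fixes the phases consistently for all axes. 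The addition theorem $\sum_m Y_{\ell m}(x)\overline{Y_{\ell m}(y)}=\frac{2\ell+1}{4\pi}P_\ell(x\cdot y)$ then gives both $\ip{z_u}{z_v}=P_\ell(u\cdot v)$ and $\|z_u\|_2=1$, because $P_\ell(1)=1$. This proves Eqs.~\eqref{eq:fermions-zonal-overlap} and~\eqref{eq:so3-axis-state-normalization}.

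The spectral computation follows. On $V_\ell$ the nonzero spectrum of $T=\frac13\sum_j\ket{z_j}\bra{z_j}$ coincides with that of $\frac13G$, where $G_{ij}=\ip{z_i}{z_j}$. Orthogonality of the axes gives $G=(1-c)I+cJ$, with $c=P_\ell(0)$ and $J$ the all-ones matrix. Its eigenvalues are $(1+2c)/3$ and $(1-c)/3$, the latter with multiplicity two. The cases are as follows.
\begin{itemize}
\item For $\ell=0$ all $P_j=I$ and $T=I$, giving the eigenvalue $1$.
\item For odd $\ell$, $c=0$ and the eigenvalues are $1/3$.
\item For even $\ell$, $c=(-1)^{\ell/2}\binom{\ell}{\ell/2}2^{-\ell}$, and $|c|$ is strictly decreasing in even $\ell$.
\end{itemize}
In the even case, negative $c$ occurs first at $\ell=2$, with $c=-1/2$, giving eigenvalues $0$ and $1/2$. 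Positive $c$ occurs first at $\ell=4$, with $c=3/8$, giving $(1+3/4)/3=7/12$ and $5/24$. Every other even $\ell$ has a smaller $|c|$ and so yields eigenvalues in $[0,7/12]$. Combining the cases, every nontrivial irrep contributes spectrum in $[0,7/12]$, and the eigenvalue $1$ arises only from the trivial isotypic component. Together with the block decomposition, this gives Eq.~\eqref{eq:fermions-so3-bound}.

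The main obstacle is the careful justification of the monotonicity of $|P_\ell(0)|$ and of the direct-sum/closure step in infinite dimensions; both are routine. The monotonicity follows from the ratio $|P_{\ell+2}(0)/P_\ell(0)|=(\ell+1)/(\ell+2)<1$.
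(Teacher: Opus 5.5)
Your proposal is correct and takes essentially the same route as the paper. Both arguments block-diagonalize over integer-spin sectors and identify each axis projector with the rank-one projector onto the zonal harmonic $\sqrt{(2\ell+1)/(4\pi)}\,P_\ell(u\cdot\hat{\mathbf r})$. Both then use the addition theorem to obtain the Gram matrix with off-diagonal entries $P_\ell(0)$, and bound its eigenvalues using $P_2(0)=-1/2$, $P_4(0)=3/8$ and the decreasing ratio $(\ell+1)/(\ell+2)$. The only difference is that the paper derives the closed form of $P_{2j}(0)$ from Rodrigues' formula, while you take it as known.
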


\begin{proof}
Decompose the representation into irreducible angular-momentum
sectors. In a spin-$\ell$ sector, with integer $\ell\ge0$,
the states unchanged by rotations about an axis $u$ form
a one-dimensional space: the zero-angular-momentum eigenspace
along that axis. Averaging these rotations therefore gives
\begin{align}
 \ket{z_u}\bra{z_u}
 &=\frac1{2\pi}\int_0^{2\pi}
       \e^{-i\varphi L_u}\,\dd\varphi,
 \label{eq:so3-axis-projector}
\end{align}
where $L_u$ is the angular-momentum operator along $u$.

To compute the overlaps, realize this sector as the degree-$\ell$
spherical harmonics on the unit sphere, with its usual surface
measure. The normalized state unchanged by rotations about $u$
can be chosen as
\begin{align}
 z_u(\hat{\mathbf r})
 &=\sqrt{\frac{2\ell+1}{4\pi}}\,
   P_\ell(u\cdot\hat{\mathbf r}).
 \label{eq:so3-axis-wavefunctions}
\end{align}
This is the standard spherical harmonic with zero angular momentum
along $u$. Its normalization and the spherical-harmonic addition
theorem give
$\ip{z_u}{z_v}=P_\ell(u\cdot v)$,
as stated in Eq.~\eqref{eq:fermions-zonal-overlap};
see~\cite[\S14.30]{dlmf}.

For orthogonal axes, Eq.~\eqref{eq:fermions-zonal-overlap} gives the same overlap between every pair of distinct states. Let $A$ be the matrix whose three columns are the states $\ket{z_{u_1}},\ket{z_{u_2}},\ket{z_{u_3}}$. Its matrix of column inner products is
\begin{align}
 G_\ell=A^\dagger A
 &=\begin{pmatrix}
 1&P_\ell(0)&P_\ell(0) \\
 P_\ell(0)&1&P_\ell(0) \\
 P_\ell(0)&P_\ell(0)&1
 \end{pmatrix},
 \label{eq:so3-overlap-matrix}\\
 \left.\frac{P_1+P_2+P_3}{3}\right|_{\text{spin }\ell}
 &=\frac13\sum_{j=1}^3\ket{z_{u_j}}\bra{z_{u_j}}
 =\frac13AA^\dagger.
 \label{eq:so3-average-factorization}
\end{align}
Recall that by definition, $P_j$ keeps the components unchanged by rotations about $u_j$, thus $P_j$ and $\ketbra{z_{u_j}}$ act identically in every state in the sector. The operators $AA^\dagger$ and $A^\dagger A$ have the same nonzero eigenvalues. Indeed, an eigenvector of $A^\dagger A$ with eigenvalue $\mu>0$ is mapped by $A$ to an eigenvector of $AA^\dagger$ with the same eigenvalue. The converse follows using $A^\dagger$. The vector $(1,1,1)^{\mathsf T}$ is an eigenvector of $G_\ell$ with eigenvalue $1+2P_\ell(0)$. Every vector orthogonal to it has eigenvalue $1-P_\ell(0)$. Therefore the eigenvalues of $G_\ell/3$ are
\begin{align}
 \frac{1+2P_\ell(0)}{3}&\quad\text{with multiplicity one},
 \label{eq:fermions-three-lines}\\
 \frac{1-P_\ell(0)}{3}&\quad\text{with multiplicity two}.
 \label{eq:so3-repeated-eigenvalue}
\end{align}
Every nonzero eigenvalue of the averaged projector in Eq.~\eqref{eq:so3-average-factorization} is among these values.
To evaluate these eigenvalues, we use Rodrigues' formula
\cite[Eq.~(14.7.10) with $m=0$]{dlmf}:
\begin{align}
 P_\ell(t)
 &=\frac{1}{2^\ell\ell!}
   \frac{\dd^\ell}{\dd t^\ell}(t^2-1)^\ell.
 \label{eq:fermions-rodrigues}
\end{align}
Evaluating Eq.~\eqref{eq:fermions-rodrigues} at zero gives
\begin{align}
 P_{2j+1}(0)&=0, \\
 P_{2j}(0)&=(-1)^j\frac{\binom{2j}{j}}{4^j},\qquad j\ge0,
 \label{eq:so3-legendre-zero-values}\\
 \frac{|P_{2j+2}(0)|}{|P_{2j}(0)|}
 &=\frac{(2j+2)(2j+1)}{4(j+1)^2} \\
 &=\frac{2j+1}{2j+2}<1.
 \label{eq:so3-overlap-magnitude-ratio}
\end{align}
The odd values vanish because $(t^2-1)^\ell$ contains only even powers. For $\ell=2j$, its coefficient of $t^{2j}$ is $(-1)^j\binom{2j}{j}$, giving the second line after differentiation. The remaining lines show that the magnitudes decrease with $j$. Since $P_2(0)=-1/2$ and $P_4(0)=3/8$, it follows that
\begin{align}
 -\frac12&\le P_\ell(0)\le\frac38\qquad(\ell\ge1),
 \label{eq:so3-overlap-interval}\\
 \frac{1+2P_\ell(0)}{3}&\le\frac{1+2(3/8)}3=\frac7{12}, \\
 \frac{1-P_\ell(0)}{3}&\le\frac{1+1/2}{3}=\frac12.
 \label{eq:so3-eigenvalue-upper-bounds}
\end{align}
These are upper bounds on all nonzero eigenvalues by Eqs.~\eqref{eq:fermions-three-lines} and \eqref{eq:so3-repeated-eigenvalue}. All eigenvalues are nonnegative because the operator is an average of positive projectors. The value $7/12$ is attained at $\ell=4$. At $\ell=0$, all rotations act as the identity, so the average has only the eigenvalue one. This proves Eq.~\eqref{eq:fermions-so3-bound} on each sector. For an orthogonal direct sum, the operator is block diagonal and its spectrum is the closure of the union of the block spectra. The same closed set $\{1\}\cup[0,7/12]$ contains every block spectrum, so the bound also holds on the full space.
\end{proof}

When the three projectors in Theorem~\ref{thm:overlap} have this rotation representation, Eq.~\eqref{eq:fermions-so3-bound} verifies Eq.~\eqref{eq:overlap-spectrum} with $m=3$ and $\beta=7/12$. In particular, $7/12<2/3$, as required there, and Eq.~\eqref{eq:overlap-gradient-constant} gives
\begin{align}
 \kappa&=1-\frac32\frac7{12}=\frac18.
 \label{eq:so3-gradient-constant}
\end{align}
For nonorthogonal axes, Eq.~\eqref{eq:fermions-zonal-overlap}
requires values of $P_\ell(t)$ at $t=u\cdot v$.
We will use the standard integral representation
\cite[Eq.~(18.10.5)]{dlmf}:
\begin{align}
 P_\ell(t)
 &=\frac1{2\pi}\int_0^{2\pi}
   \bigl(t+i\sqrt{1-t^2}\cos\varphi\bigr)^\ell\,\dd\varphi,
 \qquad -1\le t\le1.
 \label{eq:fermions-legendre-integral}
\end{align}

\color{black}
\subsection{Higher-copy tests}
\label{sec:higher-copy-tests}
The overall goal of this section is to establish explicit upper and lower bounds on the rejection probability in terms of the distance to the target family and the number of copies $q$, providing the estimates needed for tolerant testing. 

We begin with Definition~\ref{def:normal-projectors}, which specifies the $q$-copy acceptance projectors for each target family. Lemma~\ref{lem:normal-transition} then shows how copy rotations act on the first-order tensor terms, giving a way to calculate their acceptance probabilities. Theorem~\ref{thm:normal-hierarchy} gives the resulting rejection coefficients $\lambda_{\mathcal T,q}$ and combines the closest-target expansion with comparisons to tests on smaller groups of copies to obtain upper and lower bounds on the rejection probability, both near the target family and at larger distances.

\begin{definition}[Higher-copy acceptance projectors]
\label{def:normal-projectors}
For the bosonic tests below, let $a_{\mu,j}$ denote the annihilation operator for physical mode $\mu$ in copy $j$. Define
\begin{align}
 \mathbf u&=\frac{(1,\ldots,1)}{\sqrt q},
 \label{eq:higher-copy-collective-vector}\\
 a_{\mu,\mathbf u}&=\frac1{\sqrt q}\sum_{j=1}^q a_{\mu,j}.
 \label{eq:higher-copy-collective-mode}
\end{align}
The entries of $\mathbf u$ are the coefficients used to combine the same physical mode across all $q$ copies with equal weights. The remaining $q-1$ combinations use an orthonormal basis of coefficient vectors orthogonal to $\mathbf u$. For example, for two copies these are the sum $(a_{\mu,1}+a_{\mu,2})/\sqrt2$ and the difference $(a_{\mu,1}-a_{\mu,2})/\sqrt2$.

For $q\ge2$, define:
\begin{enumerate}[label=(\roman*)]
\item $P_{\GF,q}$ projects onto the simultaneous kernels of the two-copy Gaussian constraints
 \begin{align}
  \Lambda_{a,b}&=\sum_{\mu=1}^{2n}\gamma_{\mu,a}\gamma_{\mu,b},
  \qquad 1\le a<b\le q.
  \label{eq:higher-copy-gaussian-constraints}
 \end{align}
Here $\gamma_{\mu,a}$ is Majorana operator $\mu$ acting on copy $a$ of the ordinary Hilbert-space tensor product. Thus $\Lambda_{a,b}$ applies the two-copy operator $\sum_\mu\gamma_\mu\otimes\gamma_\mu$ to copies $a,b$ and the identity to all other copies.
\item $P_{\Sl,q}$ projects onto the intersection of the accepted spaces of the two-copy Slater test over all pairs of copies. For each pair, the accepted space consists of the states unchanged by all $SU(2)$ transformations mixing the two fermionic copies, with the fermionic parity factors included. These states are called \emph{singlets}.
\item $P_{\Gzero,q}$ is the average of the unitary action of $SO(q)$ mixing the $q$ copy labels, with the same rotation applied to every physical mode.
\item $P_{\Coh,q}$ checks that all $q-1$ combinations orthogonal to $\mathbf u$ are in vacuum, for every physical mode $\mu$. It acts as the identity on each equally weighted mode $a_{\mu,\mathbf u}$.
\end{enumerate}
For $q\ge3$, $P_{\GB,q}$ averages real copy rotations that leave $\mathbf u$ unchanged. These rotations form an $SO(q-1)$ group acting on the orthogonal complement of $\mathbf u$. In each case, define
\begin{align}
 R_{\mathcal T,q}(\psi)
 &=\|(I-P_{\mathcal T,q})\ket\psi^{\otimes q}\|_2^2.
 \label{eq:higher-copy-rejection}
\end{align}
Every group average uses Haar measure normalized to total mass one. It is the orthogonal projector onto the states unchanged by every unitary in the group. 
\end{definition}

The projectors in Definition~\ref{def:normal-projectors} are perfectly complete. For a closest target $\ket g$ and a unit deviation $\ket\eta$ as in Eq.~\eqref{eq:normal-frame}, they satisfy
\begin{align}
 \left\|(I-P_{\mathcal T,q})
       \bigl(D(g^{\otimes q})[\eta]\bigr)\right\|_2^2
 &\ge\lambda_{\mathcal T,q}.
\label{eq:higher-copy-coefficient-bound}
\end{align}
with $\lambda_{\mathcal T,q}$ summarized in Table~\ref{table:normal-coefficients}. 

The next lemma explains why moments of individual rotation-matrix entries determine acceptance of the first-order tensor terms. Write $\ket\Omega$ for the vacuum of one copy. For $k\ge1$, define the map $J_{i,k}$ by placing a $k$-particle state in copy $i$ and vacuum in all other copies
\begin{align}
 J_{i,k}\ket{\eta_k}
 &=\ket\Omega^{\otimes(i-1)}\otimes\ket{\eta_k}
   \otimes\ket\Omega^{\otimes(q-i)}.
 \label{eq:higher-copy-insertion-map}
\end{align}
The map preserves norms. Different copy positions give orthogonal states because $\ket{\eta_k}$ is orthogonal to vacuum.

\begin{lemma}[Transition amplitudes under copy rotations]
\label{lem:normal-transition}
Let $O$ be a real orthogonal matrix mixing the copy labels, and let $\Gamma(O)$ be its unitary action on bosonic or fermionic Fock space. Then, on the one-copy $k$-particle space,
\begin{align}
 J_{j,k}^{\dagger}\Gamma(O)J_{i,k}&=O_{j,i}^{\,k}I.
 \label{eq:normal-transition}
\end{align}
Consequently, if $P=\mathbb E_O[\Gamma(O)]$ averages a compact group of such rotations and $\|\ket{\eta_k}\|_2=1$, then
\begin{align}
 \frac1q\left\|P\bigl(D(\Omega^{\otimes q})[\eta_k]\bigr)\right\|_2^2
 &=\frac1q\sum_{i,j=1}^q\mathbb E_O [O_{j,i}^{\,k}].
 \label{eq:higher-copy-first-order-acceptance}
\end{align}
This is the acceptance probability of the normalized vector $D(\Omega^{\otimes q})[\eta_k]/\sqrt q$.
\end{lemma}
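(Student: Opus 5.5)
The first claim, Eq.~\eqref{eq:normal-transition}, is a computation in second quantization, and we do it first. A one-copy $k$-particle state with the vacuum in the other copies can be written as $J_{i,k}\ket{\eta_k}=F(a^\dagger_{\cdot,i})\ket\Omega^{\otimes q}$. Here $F$ is a homogeneous degree-$k$ polynomial in the creation operators of copy $i$ (in the fermionic case, a polynomial in the Majorana/creation operators, and the product involves $\gamma_{\mu,i}$ with parity factors absorbed). The key input is that $\Gamma(O)$ fixes the global vacuum and acts linearly on the mode operators, $\Gamma(O)a^\dagger_{\mu,i}\Gamma(O)^\dagger=\sum_j O_{j,i}a^\dagger_{\mu,j}$, with the same matrix for every physical mode $\mu$. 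Hence
\begin{align*}
\Gamma(O)J_{i,k}\ket{\eta_k}=F\Bigl(\sum_jO_{j,i}a^\dagger_{\cdot,j}\Bigr)\ket\Omega^{\otimes q}.
\end{align*}
Expanding this multinomially, the term in which all $k$ factors land in copy $j$ is $O_{j,i}^k\,F(a^\dagger_{\cdot,j})\ket\Omega^{\otimes q}=O_{j,i}^kJ_{j,k}\ket{\eta_k}$. Every other term carries total particle number $k$ but fewer than $k$ particles in copy $j$, so it lies in a different copy-$j$ particle-number sector, and $J_{j,k}^\dagger$ annihilates it. This gives $J_{j,k}^\dagger\Gamma(O)J_{i,k}=O_{j,i}^kI$ on the $k$-particle space.

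In the fermionic case, we must check that reordering operators between copies introduces no sign. This holds because moving all $k$ factors into copy $j$ keeps their relative order, and $J_{j,k}$ is defined with the vacuum in the other copies. We must also check that the Jordan--Wigner parity strings between copies do not contribute. Copies preceding $j$ are in vacuum, so they have even parity and the strings act trivially. This sign bookkeeping is the step needing care.

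For Eq.~\eqref{eq:higher-copy-first-order-acceptance}, note that $D(\Omega^{\otimes q})[\eta_k]=\sum_iJ_{i,k}\ket{\eta_k}$. Since $P=\mathbb E_O[\Gamma(O)]$ is an orthogonal projector, $\|Pv\|^2=\ip v{Pv}$, so
\begin{align*}
\|Pv\|^2=\sum_{i,j}\mathbb E_O\bra{\eta_k}J_{j,k}^\dagger\Gamma(O)J_{i,k}\ket{\eta_k}=\sum_{i,j}\mathbb E_O[O_{j,i}^k]\,\|\eta_k\|^2 .
\end{align*}
Dividing by $q$ gives the formula. Since $\|v\|^2=q$ by the orthogonality of the insertions, this is the acceptance probability of $v/\sqrt q$.
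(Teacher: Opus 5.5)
Your proposal is correct and follows essentially the same route as the paper. Both arguments conjugate the creation operators through the vacuum-preserving $\Gamma(O)$ and keep only the term with all $k$ particles in copy $j$, whose coefficient is $O_{j,i}^k$. Both also observe that the Jordan--Wigner strings act on vacuum copies and are therefore trivial, and then expand $\ip{v}{Pv}$ with $v=\sum_i J_{i,k}\ket{\eta_k}$ and $\|v\|_2^2=q$.
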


\begin{proof}
Let $a_{\mu,i}^\dagger$ create a particle in physical mode $\mu$ and copy $i$. Copy mixing preserves vacuum and transforms each creation operator as
\begin{align}
 \Gamma(O)a_{\mu,i}^\dagger\Gamma(O)^\dagger
 &=\sum_{j=1}^q O_{j,i}a_{\mu,j}^\dagger,
 \label{eq:higher-copy-creation-action}\\
 \Gamma(O)\ket\Omega^{\otimes q}&=\ket\Omega^{\otimes q}.
 \label{eq:higher-copy-vacuum-invariance}
\end{align}
Therefore a state with all $k$ particles in copy $i$ transforms as
\begin{align}
 &\Gamma(O)a_{\mu_1,i}^\dagger\cdots a_{\mu_k,i}^\dagger
   \ket\Omega^{\otimes q} \nonumber \\
 &\qquad=\prod_{r=1}^k
   \left(\sum_{j_r=1}^q O_{j_ri}a_{\mu_r,j_r}^\dagger\right)
   \ket\Omega^{\otimes q} \nonumber \\
 &\qquad=\sum_{j_1,\ldots,j_k=1}^q
   \left(\prod_{r=1}^k O_{j_ri}\right)
   a_{\mu_1,j_1}^\dagger\cdots a_{\mu_k,j_k}^\dagger
   \ket\Omega^{\otimes q}.
 \label{eq:higher-copy-particle-expansion}
\end{align}

The adjoint $J_{j,k}^\dagger$ checks that every copy except $j$ is in vacuum and then returns the $k$-particle state in copy $j$. Consequently, on each term of the expansion above, 
\begin{align}
 &J_{j,k}^\dagger
 a_{\mu_1,j_1}^\dagger\cdots a_{\mu_k,j_k}^\dagger
 \ket\Omega^{\otimes q}  \\
 &\qquad=
 \begin{cases}
 a_{\mu_1}^\dagger\cdots a_{\mu_k}^\dagger\ket\Omega,
 &j_1=\cdots=j_k=j,  \\
 0,&\text{otherwise}.
 \end{cases}
 \label{eq:higher-copy-insertion-adjoint}
\end{align}
Indeed, if any particle occupies another copy, that copy is orthogonal to vacuum, so the term vanishes. Here $a_\mu^\dagger$ denotes the creation operator in the one-copy space.

Each of the $k$ creation operators contributes a factor $O_{j,i}$ to the surviving term. Therefore,
\begin{align}
 &J_{j,k}^\dagger\Gamma(O)J_{i,k}
 a_{\mu_1}^\dagger\cdots a_{\mu_k}^\dagger\ket\Omega \nonumber \\
 &\qquad=
 \left(\prod_{r=1}^k O_{j,i}\right)
 a_{\mu_1}^\dagger\cdots a_{\mu_k}^\dagger\ket\Omega \nonumber \\
 &\qquad=
 O_{j,i}^k
 a_{\mu_1}^\dagger\cdots a_{\mu_k}^\dagger\ket\Omega.
 \label{eq:higher-copy-transition-on-products}
\end{align}
The physical mode labels and the order of the creation operators are unchanged. Since these vectors span the one-copy $k$-particle space, linearity gives
\begin{align}
 J_{j,k}^\dagger\Gamma(O)J_{i,k}\ket{\eta_k}
 &=O_{j,i}^k\ket{\eta_k}
 \qquad\text{for every }\ket{\eta_k}.
 \label{eq:higher-copy-transition-on-states}
\end{align}
This proves Eq.~\eqref{eq:normal-transition}.

For fermions, the creation operators acting on different copies include the Jordan--Wigner parity factors in Eq.~\eqref{eq:higher-copy-creation-action}. In the surviving term, every creation operator acts on copy $j$, while all preceding copies remain in vacuum. Their parity is therefore $+1$ throughout. The same argument applies to the initial state in copy $i$.

By Eqs.~\eqref{eq:tensor-derivative} and \eqref{eq:higher-copy-insertion-map},
\begin{align}
 D(\Omega^{\otimes q})[\eta_k]
 &=\sum_{i=1}^q J_{i,k}\ket{\eta_k}, \nonumber \\
 \|D(\Omega^{\otimes q})[\eta_k]\|_2^2
 &=\sum_{i=1}^q\|J_{i,k}\ket{\eta_k}\|_2^2=q.
 \label{eq:higher-copy-derivative-norm}
\end{align}
The norm identity uses the orthogonality of different copy positions and $\|\ket{\eta_k}\|_2=1$. Since $P^\dagger P=P$, its acceptance probability is
\begin{align}
 \frac1q\left\|P\bigl(D(\Omega^{\otimes q})[\eta_k]\bigr)\right\|_2^2
 &=\frac1q\sum_{i,j=1}^q
       \ip{J_{j,k}\eta_k}{P J_{i,k}\eta_k} \nonumber \\
 &=\frac1q\sum_{i,j=1}^q\mathbb E_O
       \ip{\eta_k}{J_{j,k}^\dagger\Gamma(O)J_{i,k}\eta_k} \nonumber \\
 &=\frac1q\sum_{i,j=1}^q\mathbb E_O
       \bigl[O_{j,i}^{\,k}\ip{\eta_k}{\eta_k}\bigr] \nonumber \\
 &=\frac1q\sum_{i,j=1}^q\mathbb E_O O_{j,i}^{\,k}.
 \label{eq:higher-copy-acceptance-expansion}
\end{align}
The first equality expands the sum of copy positions. The second uses $P=\mathbb E_O\Gamma(O)$ and moves $J_{j,k}$ to the other side of the inner product. The third applies Eq.~\eqref{eq:normal-transition}, and the last uses normalization of $\ket{\eta_k}$.
\end{proof}

We next combine the response near a target with the bounds from smaller copy blocks. The coefficients for each family require the moment calculations in the Fermionic and Bosonic appendices. We state their values here and identify those deferred calculations in the proof.

\begin{theorem}[Upper and lower bounds for higher-copy tests]
\label{thm:normal-hierarchy}
The projectors in Definition~\ref{def:normal-projectors} are perfectly complete. Let $\ket\psi$ be a normalized pure input. For every target family and copy count, the rejection probability satisfies
\begin{align}
 R_{\mathcal T,q}(\psi)
\le 1-(1-d_{\mathcal T}^2(\psi))^q.
\label{eq:normal-hierarchy-calibration}
\end{align}

For a lower bound, let $\ell_{\GF}=\ell_{\Gzero}=\ell_{1/8}$, $\ell_{\Sl}=\ell_{1/6}$, $\ell_{\Coh}=\ell_{1/4}$, and $\ell_{\mathrm B}(x)=\ell_{1/8}(\ell_{1/4}(x))$, using the functions defined in Eqs.~\eqref{eq:ell-definition} and \eqref{eq:ellB-definition}. Combining the block comparison with the bounds for the basic tests gives
\begin{align}
R_{\mathcal T,q}(\psi)
&\ge1-\left[1-\ell_{\mathcal T}(d_{\mathcal T}^2(\psi))\right]^{\lfloor q/2\rfloor},
\quad \mathcal T\in\{\GF,\Sl,\Gzero,\Coh\},\quad q\ge2,
\label{eq:normal-inherited-pairs}\\
R_{\GB,3}(\psi)
&\ge\tfrac49\ell_{\mathrm B}(d_{\GB}^2(\psi)),
\label{eq:normal-inherited-three}\\
R_{\GB,q}(\psi)
&\ge1-\left[1-\ell_{\mathrm B}(d_{\GB}^2(\psi))\right]^{\lfloor q/4\rfloor},
\quad q\ge4.
\label{eq:normal-inherited-four}
\end{align}

In particular, every family satisfies, for $q\ge4$,
\begin{align}
 R_{\mathcal T,q}(\psi)
 &\ge1-\left(1-\frac{3d_{\mathcal T}^2(\psi)}{32}\right)^{\lfloor q/4\rfloor}.
 \label{eq:normal-universal-global}
\end{align}
\end{theorem}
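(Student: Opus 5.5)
The plan has four steps: perfect completeness, the upper bound, the block-comparison lower bound, and the consequences.

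\emph{Perfect completeness and the upper bound.} Perfect completeness follows from exactness of the basic tests applied blockwise. For $\GF$ and $\Sl$, every pair of copies of $\ket g^{\otimes q}$ is $\ket g^{\otimes 2}$, which lies in the kernel of $\Lambda_{a,b}$ (Lemma~\ref{lem:fermions-exact}) or in the pairwise singlet space. For $\Gzero$, $\Coh$ and $\GB$, I would write a target as $U\ket\Omega$ with $U$ Gaussian. For centered targets $U^{\otimes q}$ commutes with $\Gamma(I_n\otimes O)$ for real orthogonal $O$, since the quadratic generator summed over copies is copy-rotation invariant. A common displacement is mapped into the collective mode $a_{\mu,\mathbf u}$, so it is invariant under rotations fixing $\mathbf u$, and the relative modes stay in vacuum. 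The vacuum itself is invariant. The upper bound \eqref{eq:normal-hierarchy-calibration} is then exactly the universal estimate \eqref{eq:universal-upper}.

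\emph{Block comparison for the lower bound.} The key structural fact is an operator inequality. Partition the $q$ copies into $k$ disjoint blocks $B_1,\dots,B_k$ of a basic size $s$ (with $s=2$ for the first four families, $s=4$ for $\GB$). I would show that $P_{\mathcal T,q}\le\bigotimes_{i} P^{(s)}_{B_i}\otimes I$, where $P^{(s)}$ is a perfectly complete $s$-copy projector already known to satisfy a global bound $R^{(s)}\ge\ell_{\mathcal T}$.

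For $\GF$ and $\Sl$ this holds by definition: the $q$-copy space is the intersection of all pairwise accepted spaces, hence contained in the intersection over a perfect matching, and those constraints act on disjoint pairs.

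For $\Gzero$ and $\Coh$, the $SO(2)$ rotations of each pair form a subgroup of $SO(q)$. Invariance under the larger group therefore implies invariance under the product of disjoint pair rotations. For $\Coh$, vacuum in every relative mode implies vacuum in every pairwise difference mode.

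For $\GB$ with $q\ge4$, I would use disjoint blocks of four and the four-copy guarantee of Theorem~\ref{thm:boson-full}, obtained before its transfer to three copies. The subgroup $SO(3)$ acting on the complement of $(1,1,1,1)$ within a block lies inside the $SO(q-1)$ stabilizer of $\mathbf u$. The reason is that a rotation fixing a block's own uniform vector and acting trivially elsewhere fixes $\mathbf u$.

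Given this inequality, $p_q(\psi)\le\prod_i p^{(s)}(\psi)=(1-R^{(s)}(\psi))^{k}$, because the input is a product over the blocks. Substituting $R^{(s)}\ge\ell_{\mathcal T}$ from Theorems~\ref{thm:fermions-calibration}, \ref{thm:boson-base} and \ref{thm:boson-full} gives \eqref{eq:normal-inherited-pairs} and \eqref{eq:normal-inherited-four} with $k=\lfloor q/s\rfloor$.

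\emph{The three-copy $\GB$ case.} For \eqref{eq:normal-inherited-three}, I would use the copy-symmetry comparison from Theorem~\ref{thm:boson-full}, which relates the three-copy $SO(2)$ test to the four-copy bound $\ell_{\mathrm B}$. The constant $4/9$ is meant to come from an averaging step over the embedding of three copies into four. I expect this step to be the main obstacle, because it is not a simple subgroup inclusion. I would try first to express the four-copy projector as dominated by an average of three-copy projectors applied to the $\binom43$ triples, then invert using a spectral estimate on that average.

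\emph{Universal consequence.} Finally, \eqref{eq:normal-universal-global} follows from monotonicity. We have $\ell_{\mathcal T}(x)\ge\ell_{\mathrm B}(x)\ge 3x/32$ by \eqref{eq:global-envelope-linear-bounds}, \eqref{eq:bosonic-envelope-linear-bound}, and $\kappa\ge1/8$. We also have $\lfloor q/2\rfloor\ge\lfloor q/4\rfloor$, and the map $k\mapsto 1-(1-\ell)^k$ is increasing.
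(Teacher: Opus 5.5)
Your proposal is correct and is essentially the paper's proof: perfect completeness plus Eq.~\eqref{eq:universal-upper} for the upper bound, then the inclusion $P_{\mathcal T,q}\le\prod_j P^{(B_j)}$ over disjoint pair or four-copy blocks (by pairwise intersection for $\GF,\Sl$ and by subgroup containment for the bosonic families), factorization over the product input, the basic-test bounds from Theorems~\ref{thm:fermions-calibration}, \ref{thm:boson-base} and~\ref{thm:boson-full}, and monotonicity for Eq.~\eqref{eq:normal-universal-global}. Two small adjustments: the four-copy guarantee $R_{\mathrm{diff},4}\ge\ell_{\mathrm B}$ concerns only the average over rotations in the difference plane, so either embed that $SO(2)$ directly (as the paper does) or note that $P_{\rm all}\le P_{\mathrm{diff},4}$; and the three-copy case needs no new argument, because it is exactly Eq.~\eqref{eq:boson-three-calibration} (proved via Lemma~\ref{lem:boson-tetrahedron}), whose content is the spectral gap $T_{\rm tet}\le P_{\rm all}+\tfrac59(I-P_{\rm all})$ for the tetrahedral average, not the trivial domination $P_{\rm all}\le T_{\rm tet}$.
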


\begin{proof}

Perfect completeness follows from the family-specific constructions established in the fermionic and bosonic appendices. Eq.~\eqref{eq:universal-upper} therefore gives the upper bound in Eq.~\eqref{eq:normal-hierarchy-calibration}.

For the bounds at arbitrary distance, the additional family-specific inputs are the following inequalities for the basic tests
\begin{align}
 R_{\mathcal T,2}(\psi)&\ge\ell_{\mathcal T}(d_{\mathcal T}^2),
 \qquad \mathcal T\in\{\GF,\Sl,\Gzero,\Coh\},
 \label{eq:higher-copy-basic-pair-bounds}\\
 R_{\mathrm{diff},4}(\psi)&\ge\ell_{\mathrm B}(d_{\GB}^2),
 \label{eq:higher-copy-basic-four-bound}\\
 R_{\GB,3}(\psi)&\ge\tfrac49 R_{\mathrm{diff},4}(\psi).
 \label{eq:higher-copy-basic-three-comparison}
\end{align}
Their proofs are given later in Eqs.~\eqref{eq:fermions-calibration}, \eqref{eq:fermions-slater-calibration}, \eqref{eq:boson-centered-calibration}, and \eqref{eq:boson-coherent-calibration} for the pairs, and Eqs.~\eqref{eq:boson-four-calibration} and \eqref{eq:boson-three-calibration} for the full bosonic family. Here $P_{\mathrm{diff},4}$ is the projector obtained by averaging rotations in the plane spanned by $(1,-1,0,0)/\sqrt2$ and $(0,0,1,-1)/\sqrt2$ in four-copy space, and $R_{\mathrm{diff},4}$ is its rejection probability. This is the four-copy test used in Eq.~\eqref{eq:higher-copy-basic-four-bound}.

We first prove Eq.~\eqref{eq:normal-inherited-pairs}. Every state accepted by $P_{\mathcal T,q}$ is accepted by the two-copy projector on each pair. For fermions this follows from the intersection in Definition~\ref{def:normal-projectors}. For centered bosons, invariance under all copy rotations includes invariance under rotations of any pair. For coherent states, vacuum in every coordinate orthogonal to $\mathbf u$ includes vacuum in the difference of any pair of copies.

Choose $m=\lfloor q/2\rfloor$ disjoint pairs. Their projectors act on separate tensor factors, so they commute and their product projects onto their common accepted space. The inclusion just described gives
\begin{align}
 P_{\mathcal T,q}
 &\le\prod_{j=1}^m P_{\mathcal T,2}^{(2j-1,2j)}.
 \label{eq:higher-copy-pair-inclusion}
\end{align}
The superscript specifies the two copies acted on. All other copies are left unchanged. Taking the expectation in the product input and using Eq.~\eqref{eq:higher-copy-basic-pair-bounds} yields
\begin{align}
 1-R_{\mathcal T,q}(\psi)
 &=\ip{\psi^{\otimes q}}{P_{\mathcal T,q}\psi^{\otimes q}} \nonumber \\
 &\le\ip{\psi^{\otimes q}}
       {\left(\prod_{j=1}^m P_{\mathcal T,2}^{(2j-1,2j)}\right)
        \psi^{\otimes q}} \nonumber \\
 &=\prod_{j=1}^m
       \ip{\psi^{\otimes2}}{P_{\mathcal T,2}\psi^{\otimes2}} \nonumber \\
 &=\bigl[1-R_{\mathcal T,2}(\psi)\bigr]^m \nonumber \\
 &\le\left[1-\ell_{\mathcal T}(d_{\mathcal T}^2)\right]^m.
 \label{eq:higher-copy-pair-acceptance}
\end{align}
The first equality writes acceptance as an expectation of its projector. The inequality applies Eq.~\eqref{eq:higher-copy-pair-inclusion}. The next equality factors the expectation over disjoint pairs. An unused copy contributes one because $\ket\psi$ is normalized. Each factor is the acceptance probability of the same two-copy input. The last line uses Eq.~\eqref{eq:higher-copy-basic-pair-bounds}. Subtracting from one proves Eq.~\eqref{eq:normal-inherited-pairs}.

For full bosonic Gaussians, each rotation defining $P_{\mathrm{diff},4}$ leaves the equal-superposition coordinate unchanged. When embedded in any four of the $q$ copies, it therefore belongs to the group defining $P_{\GB,q}$. Thus a state accepted by $P_{\GB,q}$ is accepted by this four-copy test on every block. For $m=\lfloor q/4\rfloor$ disjoint blocks, the same inclusion and factorization give
\begin{align}
 P_{\GB,q}
 &\le\prod_{j=1}^m
       P_{\mathrm{diff},4}^{(4j-3,4j-2,4j-1,4j)}, \nonumber \\
 1-R_{\GB,q}(\psi)
 &\le\bigl[1-R_{\mathrm{diff},4}(\psi)\bigr]^m \nonumber \\
 &\le\left[1-\ell_{\mathrm B}(d_{\GB}^2)\right]^m.
 \label{eq:higher-copy-four-acceptance}
\end{align}
The last inequality uses Eq.~\eqref{eq:higher-copy-basic-four-bound}. Subtracting from one proves Eq.~\eqref{eq:normal-inherited-four}. Combining Eqs.~\eqref{eq:higher-copy-basic-three-comparison} and \eqref{eq:higher-copy-basic-four-bound} gives the three-copy bound Eq.~\eqref{eq:normal-inherited-three}.

Finally, Eqs.~\eqref{eq:global-envelope-linear-bounds} and \eqref{eq:bosonic-envelope-linear-bound} imply
\begin{align}
 \ell_{\mathcal T}(d_{\mathcal T}^2)
 &\ge\frac{3d_{\mathcal T}^2}{32},
 \qquad\mathcal T\in\{\GF,\Sl,\Gzero,\Coh\}, \nonumber \\
 \ell_{\mathrm B}(d_{\GB}^2)&\ge\frac{3d_{\GB}^2}{32}.
 \label{eq:higher-copy-common-linear-bound}
\end{align}
For the first line, the smallest of the four constants defining $\ell_{\mathcal T}$ is $1/8\ge3/32$. Substitution into Eqs.~\eqref{eq:normal-inherited-pairs} and \eqref{eq:normal-inherited-four}, together with $\lfloor q/2\rfloor\ge\lfloor q/4\rfloor$, proves Eq.~\eqref{eq:normal-universal-global}.

For every allowed copy count, the bounds in Eqs.~\eqref{eq:normal-inherited-pairs}, \eqref{eq:normal-inherited-three}, and \eqref{eq:normal-inherited-four} are strictly positive when $d_{\mathcal T}>0$. At $d_{\mathcal T}=0$, a closest target has squared overlap one with the input by Eq.~\eqref{eq:distances}, so the two states differ only by a global phase. Since each target family includes global phases, the input belongs to $\mathcal T$. Together with perfect completeness, this identifies precisely the pure inputs with zero rejection.
\end{proof}

\subsection{Tolerant testing}\label{sec:testing}

For $0\le a<b\le1$, an $(a,b)$ tolerant test accepts inputs with $d_{\mathcal T}(\psi)\le a$ and rejects inputs with $d_{\mathcal T}(\psi)\ge b$, with bounded error in each case. There is no requirement for inputs between these two distances. The case $a=0$ is non-tolerant testing. 
From Proposition~\ref{prop:testing-statistics} in Section~\ref{sec:tolerant_testing_intro} we need an upper bound $\mathcal{U}$ and a lower bound $L$ on the rejection probability. We then obtain $\mathcal{U}$ and $L$ from the distance bounds in Theorem~\ref{thm:normal-hierarchy}.

\color{black}

\color{red}

\color{black}
\subsubsection{Small-gap tolerant testing}
\label{subsec:testing-optimal-gap}

We next give explicit choices of $q$ and the total number of copies. The first step is to control $h_{q,q-3}$ on a specified interval of squared distances.

\begin{lemma}[Local lower bound for higher-copy rejection]
\label{lem:testing-local-window}
For every integer $q\ge12$, the function $h_{q,q-3}$ is nondecreasing on $[0,1/(3q)]$. For $0\le d_{\mathcal T}^2\le1/(3q)$, it satisfies
\begin{align}
 h_{q,q-3}(d_{\mathcal T}^2)
 &\ge qd_{\mathcal T}^2\left[
 1-\frac2q-\frac{q-1}{2}d_{\mathcal T}^2
 -\sqrt{\frac{3(q-1)}{2q}}\,d_{\mathcal T}\right]_+^2.
 \label{eq:testing-local-window-bound}
\end{align}
\end{lemma}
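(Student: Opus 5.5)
We will work with the explicit formula for $h_{q,\lambda}$ in Eq.~\eqref{eq:normal-h} at $\lambda=q-3$. Write $x=d_{\mathcal T}^2$ and $d=d_{\mathcal T}$. Then $\sqrt{1-\lambda/q}=\sqrt{3/q}$, and Eq.~\eqref{eq:normal-B} gives $\|\ket w\|_2^2=W(x):=1-(1-x)^q-qx(1-x)^{q-1}$. This is the probability that a $\mathrm{Binomial}(q,x)$ variable is at least $2$. Set
\begin{align*}
 f(d)=\sqrt{q-3}\,d(1-x)^{(q-1)/2}-\sqrt{3/q}\,\sqrt{W(x)},
 \qquad h_{q,q-3}(x)=[f(d)]_+^2 .
\end{align*}
Both claims then reduce to elementary estimates on $(1-x)^{(q-1)/2}$ and on $W$. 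Two binomial facts do most of the work. The union bound gives $W(x)\le\binom q2x^2$, and keeping only the $r=2$ term of Eq.~\eqref{eq:normal-remainder-norm} gives $W(x)\ge\binom q2x^2(1-x)^{q-2}$. Differentiating also gives the identity $W'(x)=q(q-1)x(1-x)^{q-2}$.

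\emph{Lower bound.} Factor $\sqrt q\,d$ out of $f$:
\begin{align*}
 f=\sqrt q\,d\Bigl[\sqrt{1-3/q}\,(1-x)^{(q-1)/2}-\tfrac{\sqrt3}{q\,d}\sqrt{W(x)}\Bigr].
\end{align*}
For the first term I will use two inequalities. The first is $\sqrt{1-3/q}\ge1-2/q$, which holds for $q\ge4$ because $(1-2/q)^2=1-4/q+4/q^2$. The second is Bernoulli's inequality $(1-x)^{(q-1)/2}\ge1-\tfrac{q-1}2x$, valid since the exponent is at least $1$. On the window $x\le1/(3q)$ both factors are nonnegative, so their product is at least $1-\tfrac2q-\tfrac{q-1}2x$. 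For the second term, the union bound gives $\tfrac{\sqrt3}{qd}\sqrt W\le\tfrac{\sqrt3}{qd}\sqrt{q(q-1)/2}\,x=\sqrt{3(q-1)/(2q)}\,d$. Hence $f\ge\sqrt q\,d\,[\,1-\tfrac2q-\tfrac{q-1}2d^2-\sqrt{3(q-1)/(2q)}\,d\,]$. Since $r\mapsto[r]_+^2$ is nondecreasing, squaring yields Eq.~\eqref{eq:testing-local-window-bound}.

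\emph{Monotonicity.} This is the step I expect to need the most care. It suffices to show that $f$ is nondecreasing in $d$ on $0\le d\le(3q)^{-1/2}$, because $[f]_+^2$ then inherits the monotonicity, and $x=d^2$ is increasing.

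The first term has $d$-derivative $\sqrt{q-3}\,(1-x)^{(q-3)/2}(1-qx)$. On the window $1-qx\ge2/3$, and Bernoulli gives $(1-x)^{(q-3)/2}\ge1-\tfrac{q-3}{6q}\ge5/6$. So this derivative is at least $\tfrac59\sqrt{q-3}$.

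For the second term I use the identity for $W'$ together with the lower bound on $W$:
\begin{align*}
 \frac{\dd}{\dd d}\sqrt{W}=\frac{d\,W'(x)}{\sqrt W}\le\frac{d\,q(q-1)x(1-x)^{q-2}}{d\sqrt{q(q-1)/2}\,(1-x)^{(q-2)/2}}\le\sqrt{2q(q-1)}\,x .
\end{align*}
Multiplying by $\sqrt{3/q}$ and using $x\le1/(3q)$, the second term's derivative is at most $\sqrt{6(q-1)}/(3q)$. It remains to check $\tfrac59\sqrt{q-3}\ge\sqrt{6(q-1)}/(3q)$. For $q\ge12$ the left side exceeds $1.6$ and the right side is below $0.25$, so this holds with room to spare. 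The stated threshold $q\ge12$ is therefore comfortably sufficient for this argument.
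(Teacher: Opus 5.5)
Your route is the paper's: the same decomposition $h_{q,q-3}=[f]_+^2$, the same lower-bound argument ($\sqrt{1-3/q}\ge1-2/q$, Bernoulli, and the union bound $W\le\binom q2x^2$), and the same monotonicity strategy (compare derivatives using $W'(x)=q(q-1)x(1-x)^{q-2}$ and $W\ge\binom q2x^2(1-x)^{q-2}$). The one difference is that you differentiate in $d$ rather than in $x=d^2$. Your lower-bound half is correct. In the monotonicity step, however, your displayed estimate for $\frac{\dd}{\dd d}\sqrt W$ is false. The binomial lower bound gives $\sqrt W\ge\sqrt{q(q-1)/2}\,x\,(1-x)^{(q-2)/2}$, so the denominator should contain $x=d^2$, not $d$. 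You can check this directly near $d=0$: there $\sqrt W\approx\sqrt{\binom q2}\,d^2$, so its $d$-derivative is $\approx\sqrt{2q(q-1)}\,d$. This exceeds your claimed bound $\sqrt{2q(q-1)}\,x=\sqrt{2q(q-1)}\,d^2$ by a factor $1/d$.

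Redoing the step correctly on the window $d\le(3q)^{-1/2}$ gives
\[
\frac{\dd}{\dd d}\Bigl(\sqrt{3/q}\,\sqrt{W}\Bigr)\le\sqrt{6(q-1)}\,d\,(1-x)^{(q-2)/2}\le\sqrt{\frac{2(q-1)}{q}}<\sqrt2 ,
\]
not $\sqrt{6(q-1)}/(3q)<0.25$. This is exactly $2d$ times the paper's bound $\sqrt{3(q-1)/2}\,(1-x)^{(q-2)/2}$ on the $x$-derivative. The conclusion still holds, because your first-term bound gives $\tfrac59\sqrt{q-3}\ge\tfrac53>\sqrt2$ for $q\ge12$, so $f$ is nondecreasing. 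Your remark that $q\ge12$ suffices ``with room to spare'' is wrong, though. The real margin is about $5/3$ versus $\sqrt2$, and the corrected comparison $\tfrac59\sqrt{q-3}\ge\sqrt{2(q-1)/q}$ already fails at $q=8$. Finally, your derivative formula for $\sqrt W$ is undefined at $d=0$, where $W=0$. You should therefore prove monotonicity on $(0,(3q)^{-1/2}]$ and extend it to the closed window by continuity at $0$, as the paper does.
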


\begin{proof}
Write the expression inside the positive part in $h_{q,q-3}$, Eq.~\eqref{eq:normal-h} as
\begin{align}
 S(d_{\mathcal T}^2)
 &=\sqrt{q-3}\,d_{\mathcal T}(1-d_{\mathcal T}^2)^{(q-1)/2}
   -\sqrt{\frac3q \|\ket w\|_2^2},
 \label{eq:testing-S}
\end{align}
and thus
\begin{align}
 h_{q,q-3}(d_{\mathcal T}^2)&=[S(d_{\mathcal T}^2)]_+^2.
 \label{eq:testing-S-positive-part}
\end{align}
We show that the first term of $S$ increases faster than the term being subtracted. Throughout the derivative calculation, $0<d_{\mathcal T}^2\le1/(3q)$, and primes denote differentiation with respect to squared distance.

The derivative of the first term is
\begin{align}
 &\frac{\dd}{\dd(d_{\mathcal T}^2)}
   \left[\sqrt{q-3}\,d_{\mathcal T}
              (1-d_{\mathcal T}^2)^{(q-1)/2}\right] \nonumber \\
 &\qquad=\frac{\sqrt{q-3}}{2d_{\mathcal T}}
       (1-d_{\mathcal T}^2)^{(q-3)/2}(1-qd_{\mathcal T}^2).
 \label{eq:testing-first-term-derivative}
\end{align}
Each factor has a uniform lower bound on this interval
\begin{align}
 \frac{\sqrt{q-3}}{2d_{\mathcal T}}
 &\ge\frac{\sqrt{3q(q-3)}}2\ge\frac{3q}{4}, \nonumber \\
 (1-d_{\mathcal T}^2)^{(q-3)/2}
 &\ge1-\frac{q-3}{2}d_{\mathcal T}^2
 \ge1-\frac{q-3}{6q}\ge\frac56, \nonumber \\
 1-qd_{\mathcal T}^2&\ge\frac23.
 \label{eq:testing-derivative-factors}
\end{align}
The first line uses $q\ge12$. The second uses Bernoulli's inequality, with exponent $(q-3)/2\ge1$. Multiplying these bounds shows that Eq.~\eqref{eq:testing-first-term-derivative} is at least $5q/12$.

From its definition in Eq.~\eqref{eq:normal-B}, $\|\ket w\|_2^2$ is the probability of at least two successes in $q$ independent Bernoulli trials, each with success probability $d_{\mathcal T}^2$. Differentiating $\|\ket w\|_2^2$ with respect to $d_{\mathcal T}^2$ gives
\begin{align}
 \frac{\dd}{\dd (d_{\mathcal T}^2)}\|\ket w\|_2^2
 &=q(q-1)d_{\mathcal T}^2(1-d_{\mathcal T}^2)^{q-2}, 
 \end{align}
 and the probability of exactly two successes gives the second inequality.
 \begin{align}
     \|\ket w\|_2^2
 &\ge\binom q2d_{\mathcal T}^4(1-d_{\mathcal T}^2)^{q-2}.
 \label{eq:testing-binomial-lower}
\end{align}
Consequently, the derivative of the subtracted term satisfies
\begin{align}
 \frac{\dd}{\dd(d_{\mathcal T}^2)}
       \sqrt{\frac3q }\|\ket w\|_2
 &=\frac{\sqrt{3/q}}
         {2\|\ket w\|_2}\frac{\dd}{\dd (d_{\mathcal T}^2)}\|\ket w\|_2^2 \nonumber \\
 &\le\sqrt{\frac{3(q-1)}2}
       (1-d_{\mathcal T}^2)^{(q-2)/2} \nonumber \\
 &\le\sqrt{\frac{3q}2} \nonumber \\
 &\le\frac q{\sqrt8}<\frac{5q}{12}.
 \label{eq:testing-second-term-derivative}
\end{align}
The penultimate inequality uses $q\ge12$. Comparing Eqs.~\eqref{eq:testing-first-term-derivative} and \eqref{eq:testing-second-term-derivative} gives $S'>0$. Since $S(0)=0$ and $S$ is continuous at zero, $S$ is nonnegative and increasing throughout $[0,1/(3q)]$. Equation~\eqref{eq:testing-S-positive-part} then proves the asserted monotonicity of $h_{q,q-3}$.

To obtain Eq.~\eqref{eq:testing-local-window-bound}, use the union bound over the $\binom q2$ pairs of trials. At least two successes require at least one successful pair, and each pair succeeds with probability $d_{\mathcal T}^4$. Hence
\begin{align}
 \|\ket w\|_2^2&\le\binom q2d_{\mathcal T}^4.
 \label{eq:testing-binomial-upper}
\end{align}
For $d_{\mathcal T}>0$, substitution into Eq.~\eqref{eq:testing-S} gives
\begin{align}
 \frac{S(d_{\mathcal T}^2)}{\sqrt q\,d_{\mathcal T}}
 &=\sqrt{1-3/q}(1-d_{\mathcal T}^2)^{(q-1)/2}
   -\sqrt{\frac{3\|\ket w\|_2^2}{q^2d_{\mathcal T}^2}} \nonumber \\
 &\ge\left(1-\frac2q\right)
       \left(1-\frac{q-1}{2}d_{\mathcal T}^2\right)
       -\sqrt{\frac{3(q-1)}{2q}}\,d_{\mathcal T} \nonumber \\
 &\ge1-\frac2q-\frac{q-1}{2}d_{\mathcal T}^2
       -\sqrt{\frac{3(q-1)}{2q}}\,d_{\mathcal T}.
 \label{eq:testing-S-lower}
\end{align}
The first inequality also uses $\sqrt{1-3/q}\ge1-2/q$ for $q\ge4$ and Bernoulli's inequality. Both factors on the second line are nonnegative on the interval under consideration. The last inequality drops their nonnegative product correction. Taking the positive part, squaring, and multiplying by $qd_{\mathcal T}^2$ proves Eq.~\eqref{eq:testing-local-window-bound}. At $d_{\mathcal T}=0$ both sides vanish.
\end{proof}

\begin{theorem}[Small-gap tolerant testing]
\label{thm:testing-optimal-gap}
Consider any of the five families
$\GF,\Sl,\Gzero,\Coh,\GB$ with more than one state up to global
phase. The same statement holds for the fixed-parity family
$\GFs$, provided that the input is promised to have parity $\sigma$. Let
\begin{subequations}\label{eq:testing-gap-parameters}
\begin{align}
 0<\eta&\le1,\qquad 0<a\le\frac{\eta}{100}, \nonumber \\
 b&=(1+\eta)a, \nonumber \\
 q&=\left\lceil\frac{12}{\eta}\right\rceil.
\end{align}
\end{subequations}
The $q$-copy projector test distinguishes $d_{\mathcal T}(\psi)\le a$ from $d_{\mathcal T}(\psi)\ge b$, with each error at most $\delta/2$, using at most
\begin{align}
 q+\frac{16}{(b-a)^2}\log\frac2\delta
 \label{eq:testing-gap-count}
\end{align}
copies, for $0<\delta<1$.
\end{theorem}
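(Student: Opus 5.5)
The plan is to feed Proposition~\ref{prop:testing-statistics} with an upper bound $\mathcal U$ on rejection for near inputs ($d_{\mathcal T}\le a$) and a lower bound $L$ for far inputs ($d_{\mathcal T}\ge b$). Both bounds will come from the $q$-copy projector $P_{\mathcal T,q}$ of Definition~\ref{def:normal-projectors}, with $q=\lceil 12/\eta\rceil\ge12$.

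\textbf{Upper bound for near inputs.} Eq.~\eqref{eq:normal-hierarchy-calibration} and monotonicity in $d_{\mathcal T}$ give
\begin{align*}
 \mathcal U=1-(1-a^2)^q\le qa^2 .
\end{align*}

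\textbf{Lower bound for far inputs.} First I check that every coefficient in Table~\ref{table:normal-coefficients} is at least $q-3$:
\begin{align*}
 \frac{q(q-1)}{q+2}=q-3+\frac{6}{q+2},\qquad
 \frac{q(q-1)}{q+1}\ge q-3,\qquad
 q-1\ge q-3,\qquad
 \frac{(q-1)(q-2)}{q}=q-3+\frac2q .
\end{align*}
So Eq.~\eqref{eq:higher-copy-coefficient-bound} holds with $\lambda=q-3$. A closest target exists: by compactness for fermions, and by Lemma~\ref{lem:normal-attainment} for bosons. Lemma~\ref{lem:normal-bound} then gives $R_{\mathcal T,q}(\psi)\ge h_{q,q-3}(d_{\mathcal T}^2)$. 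For the $\GFs$ case the parity promise keeps the deviation inside the fixed-parity sector, so the same argument applies.

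I split far inputs into two ranges.
\begin{itemize}
 \item \emph{Window range}, $b^2\le d_{\mathcal T}^2\le 1/(3q)$. The window does contain $b^2$: from $a\le\eta/100$ and $\eta\le1$ we get $b\le 2\eta/100$, and $q\le 13/\eta$, so $3qb^2\le 156\eta/10^4\ll1$. The monotonicity in Lemma~\ref{lem:testing-local-window} reduces everything to $d_{\mathcal T}=b$, and Eq.~\eqref{eq:testing-local-window-bound} gives
 \begin{align*}
  L_{\mathrm{loc}}=qb^2\bigl[1-\epsilon\bigr]^2,\qquad
  \epsilon=\frac2q+\frac{q-1}{2}b^2+\sqrt{\tfrac{3(q-1)}{2q}}\,b .
 \end{align*}
 Here $2/q\le\eta/6$, while the other two terms are $O(\eta/100)$ times small constants. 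So $\epsilon\le\eta/4$, say.
 \item \emph{Outside the window}, $d_{\mathcal T}^2>1/(3q)$. I use the global bound Eq.~\eqref{eq:normal-universal-global} with $\lfloor q/4\rfloor\ge q/5$. It gives a rejection of at least roughly $1-e^{-1/160}\approx 1/161$, an absolute constant. This exceeds $qb^2\le 52\eta\cdot 10^{-4}\le 0.0052$, so it also exceeds $L_{\mathrm{loc}}$.
\end{itemize}
Hence $L=L_{\mathrm{loc}}$ is a valid lower bound for all far inputs.

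\textbf{The gap.} Using $b=(1+\eta)a$ and $\epsilon\le\eta/4$,
\begin{align*}
 L-\mathcal U\ge qa^2\bigl[(1+\eta)^2(1-\eta/4)^2-1\bigr]\ge c\,q a^2\eta
\end{align*}
for some absolute $c$, roughly $c\approx 1/2$ once $\eta\le1$ is used. In particular $L>\mathcal U$. Also $L\le qb^2\le 4qa^2$.

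\textbf{Copy count.} Proposition~\ref{prop:testing-statistics} uses $q\lceil X\rceil\le q+qX$ copies, where
\begin{align*}
 qX=\frac{8qL}{(L-\mathcal U)^2}\log\frac2\delta\le\frac{32q^2a^2}{c^2q^2a^4\eta^2}\log\frac2\delta=\frac{32}{c^2\,\eta^2a^2}\log\frac2\delta .
\end{align*}
Since $(b-a)^2=\eta^2a^2$, matching the constant $16$ in Eq.~\eqref{eq:testing-gap-count} needs $c^2\ge2$. That is, $L-\mathcal U\ge\sqrt2\,q a^2\eta$, or equivalently a tighter use of $L\le(1+\eta)^2qa^2$. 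So the expansion $(1+\eta)^2(1-\epsilon)^2-1$ must be tracked carefully.

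\textbf{Expected obstacle.} I expect this constant-chasing to be the main difficulty: showing that $2/q\approx\eta/6$ together with the window corrections leaves a gap at least $\sqrt2\,\eta qa^2$ relative to $L$. If the crude bounds fail, I would first refine $\mathcal U$ to $qa^2-\binom q2a^4+\dots$ and use $\sqrt{1-3/q}$ exactly rather than $1-2/q$. A secondary check is that the outside-window constant dominates $L$ uniformly in $\eta\le1$.
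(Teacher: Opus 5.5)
Your architecture is the paper's: $\mathcal U=qa^2$ from Eq.~\eqref{eq:normal-hierarchy-calibration}; $R_{\mathcal T,q}\ge h_{q,q-3}(d_{\mathcal T}^2)$ from $\lambda_{\mathcal T,q}\ge q-3$ and Lemma~\ref{lem:normal-bound}; monotonicity on $[b^2,1/(3q)]$ via Lemma~\ref{lem:testing-local-window}; and Eq.~\eqref{eq:normal-universal-global} beyond that window. Your $\lfloor q/4\rfloor\ge q/5$ and $1/161$ are fine in place of the paper's $q/6$ and $1/384$.

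The gap is the last step, which carries the whole content of the constant $16$. You leave it open, and the bookkeeping you set up does not close it. With $\epsilon\le\eta/4$,
\[
(1+\eta)^2(1-\eta/4)^2-1=\eta\bigl(\tfrac34-\tfrac\eta4\bigr)\bigl(2+\tfrac{3\eta}4-\tfrac{\eta^2}4\bigr),
\]
so the best constant available is $c=5/4$, attained at $\eta=1$ (not $c\approx 1/2$). Then your estimate gives $32/c^2=512/25>16$. Replacing $4qa^2$ by $qb^2=(1+\eta)^2qa^2$ does not rescue it: you would need $c\ge(1+\eta)/\sqrt2$, which fails for $\eta$ near $1$. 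The loss is structural. In the numerator you bound $L$ above by $qb^2$, while in the denominator you bound $L-\mathcal U$ below through the strictly smaller quantity $qb^2(1-\epsilon)^2$. Your first remedy, refining $\mathcal U$, cannot compensate, since $\binom q2a^4\le\frac{qa^2}{2\eta}\,q\eta a^2<10^{-3}\,q\eta a^2$.

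The missing idea is to commit to one explicit value of $L$ and use that same number in numerator and denominator. This costs nothing, because $L\mapsto L/(L-\mathcal U)^2$ is decreasing for $L>\mathcal U$, so any smaller explicit valid lower bound is no worse. The paper takes $L:=qb^2(1-\eta/2)$, which is valid since $(1-\eta/4)^2\ge1-\eta/2$. Then $L=qa^2(1+\eta)^2(1-\eta/2)=qa^2\bigl(1+\tfrac{3\eta}{2}-\tfrac{\eta^3}{2}\bigr)$, and the polynomial lies in $[1+\eta,2]$ for $0<\eta\le1$. This gives $L\le2qa^2$ and $L-\mathcal U\ge q\eta a^2$ at once. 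The first also settles the outside-window comparison immediately, since $2qa^2\le26\eta/10^4<1/384$. Together they give $q\lceil 8L(L-\mathcal U)^{-2}\log(2/\delta)\rceil\le q+16(\eta a)^{-2}\log(2/\delta)=q+16(b-a)^{-2}\log(2/\delta)$.

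Your other remedy, keeping $\sqrt{1-3/q}$ instead of $1-2/q$, would also work. It gives $1-\sqrt{1-3/q}\le0.134\,\eta$, hence $\epsilon\lesssim0.16\,\eta$ and $c>\sqrt2$. But it requires reopening the last step of Lemma~\ref{lem:testing-local-window} and is unnecessary.
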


\begin{proof}
We will bound rejection by $\mathcal{U}$ on all near inputs and by $L>\mathcal{U}$ on all far inputs, then apply Eq.~\eqref{eq:testing-tolerant-count}. All five coefficients in Table~\ref{table:normal-coefficients} satisfy $\lambda_{\mathcal T,q}\ge q-3$. In Eq.~\eqref{eq:normal-h}, increasing $\lambda$ increases the first term and decreases the subtracted term. The positive part and square preserve this ordering. Therefore Lemma~\ref{lem:normal-bound}, applied using
perfect completeness and
Eq.~\eqref{eq:higher-copy-coefficient-bound}, gives
\begin{align}
 R_{\mathcal T,q}(\psi)
 &\ge h_{q,\lambda_{\mathcal T,q}}(d_{\mathcal T}^2) \nonumber \\
 &\ge h_{q,q-3}(d_{\mathcal T}^2).
 \label{eq:testing-common-local-bound}
\end{align}
At $d_{\mathcal T}=0$ or $1$, the same bound holds
because both functions on the right vanish.
First consider rejection at distance $b$. The parameter choices in Eq.~\eqref{eq:testing-gap-parameters} give $q\le13/\eta$ and $b\le2\eta/100$, so
\begin{align}
 qb^2&\le\frac{52\eta}{10000}<\frac13.
 \label{eq:testing-gap-inside-window}
\end{align}
Thus $b^2$ lies in the interval covered by Lemma~\ref{lem:testing-local-window}. The three terms subtracted from one in Eq.~\eqref{eq:testing-local-window-bound} obey
\begin{align}
 \frac2q&\le\frac\eta6, \nonumber \\
 \frac{q-1}{2}b^2&\le\frac{26\eta}{10000}, \nonumber \\
 \sqrt{\frac{3(q-1)}{2q}}\,b&\le\frac{\sqrt6\eta}{100}.
 \label{eq:testing-gap-losses}
\end{align}
Their sum is less than $\eta/4$. Hence
\begin{align}
 h_{q,q-3}(b^2)&\ge qb^2(1-\eta/4)^2 \nonumber \\
 &\ge qb^2(1-\eta/2)=:L.
 \label{eq:testing-L}
\end{align}
The second line uses $(1-\eta/4)^2\ge1-\eta/2$. By monotonicity in Lemma~\ref{lem:testing-local-window}, Eqs.~\eqref{eq:testing-common-local-bound} and \eqref{eq:testing-L} now give rejection at least $L$ for every input with $b^2\le d_{\mathcal T}^2\le1/(3q)$.

For the remaining far inputs, with $d_{\mathcal T}^2\ge1/(3q)$, use the global bound in Eq.~\eqref{eq:normal-universal-global}. Writing $m=\lfloor q/4\rfloor$, we have $m\ge q/6$ for $q\ge12$. It follows that
\begin{align}
 R_{\mathcal T,q}(\psi)
 &\ge1-\left(1-\frac{3d_{\mathcal T}^2}{32}\right)^m \nonumber \\
 &\ge1-\e^{-3md_{\mathcal T}^2/32} \nonumber \\
 &\ge1-\e^{-1/192} \nonumber \\
 &\ge\frac1{384}.
 \label{eq:testing-distant-far}
\end{align}
The second line uses $1-z\le \e^{-z}$. The third uses both $m\ge q/6$ and $d_{\mathcal T}^2\ge1/(3q)$. The last uses $1-\e^{-z}\ge z/2$ for $0\le z\le1$, at $z=1/192$. This bound is larger than the same $L$ chosen in Eq.~\eqref{eq:testing-L}, because
\begin{align}
 (1-\eta/2)(1+\eta)^2
 &=1+\frac{3\eta}{2}-\frac{\eta^3}{2}\le2,
 \label{eq:testing-gap-polynomial}\\
 L&=(1-\eta/2)(1+\eta)^2qa^2 \\
 &\le2qa^2 \\
 &\le\frac{26\eta}{10000} \\
 &\le\frac{13}{5000}<\frac1{384}.
 \label{eq:testing-L-global-comparison}
\end{align}
Here the polynomial in Eq.~\eqref{eq:testing-gap-polynomial} is increasing on $[0,1]$ and equals two at $\eta=1$. The bound on $2qa^2$ uses $q\le13/\eta$ and $a\le\eta/100$. Together, the two distance intervals cover every far input.

For near inputs, Eq.~\eqref{eq:normal-hierarchy-calibration}, followed by Bernoulli's inequality, gives
\begin{align}
 R_{\mathcal T,q}(\psi)
 &\le \mathcal{U}_q(a^2) \nonumber \\
 &=1-(1-a^2)^q \nonumber \\
 &\le qa^2=:\mathcal{U}.
 \label{eq:testing-gap-near-bound}
\end{align}
Using Eq.~\eqref{eq:testing-gap-polynomial} again,
\begin{align}
 L-\mathcal{U}&=qa^2\left(\frac{3\eta}{2}-\frac{\eta^3}{2}\right) \nonumber \\
 &\ge q\eta a^2>0.
 \label{eq:testing-gap-separation}
\end{align}
We can therefore apply Proposition~\ref{prop:testing-statistics}. Since $L\le2qa^2$, the copy count in Eq.~\eqref{eq:testing-tolerant-count} satisfies
\begin{align}
 q\left\lceil\frac{8L}{(L-\mathcal{U})^2}\log\frac2\delta\right\rceil
 &\le q+\frac{8qL}{(L-\mathcal{U})^2}\log\frac2\delta \nonumber \\
 &\le q+\frac{16}{\eta^2a^2}\log\frac2\delta \nonumber \\
 &=q+\frac{16}{(b-a)^2}\log\frac2\delta.
 \label{eq:testing-gap-count-proof}
\end{align}
The first line accounts for rounding up the number of blocks. The second substitutes the bounds on $L$ and $L-\mathcal{U}$, and the last uses $b-a=\eta a$.
\end{proof}

\subsubsection{Bounds from the basic tests}
\label{subsec:testing-baseline}

The basic tests also give tolerant guarantees whenever their
rejection bounds separate the near and far inputs.
Write $\mathcal U_q(s)=1-(1-s)^q$ for the universal upper bound.

\begin{proposition}[Separation conditions for the basic tests]
\label{prop:testing-baseline-radii}
Let $0\le a<b\le1$. For the basic two-copy tests, the near and
far rejection bounds are strictly separated when
\begin{align}
 \mathcal U_2(a^2)&<\ell_\kappa(b^2),
 \label{eq:testing-base-far-radius}
\end{align}
where $\kappa=1/8,1/6,1/8,1/4$ for
$\GF,\Sl,\Gzero,\Coh$, respectively.

For full bosonic Gaussian states, the corresponding conditions
for the three-copy test and the four-copy difference test are,
respectively,
\begin{align}
 \mathcal U_3(a^2)&<\frac49\ell_{\mathrm B}(b^2),
 \label{eq:testing-full-three-radius}\\
 \mathcal U_4(a^2)&<\ell_{\mathrm B}(b^2).
 \label{eq:testing-full-four-radius}
\end{align}
Under each condition, Proposition~\ref{prop:testing-statistics}
gives a tolerant test with the copy count in
Eq.~\eqref{eq:testing-tolerant-count}, taking $\mathcal U$
and $L$ to be the left- and right-hand sides and using the
corresponding block size $q=2,3,4$.

These conditions characterize separation by the stated bounds.
Stronger rejection bounds may certify additional pairs $(a,b)$.
\end{proposition}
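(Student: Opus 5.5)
The plan is to treat this as a direct bookkeeping consequence of the near and far rejection bounds already available, fed into Proposition~\ref{prop:testing-statistics}.

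\textbf{Near inputs.} If $d_{\mathcal T}(\psi)\le a$, the universal upper bound Eq.~\eqref{eq:universal-upper} (equivalently Eq.~\eqref{eq:normal-hierarchy-calibration}) gives $R_{\mathcal T,q}(\psi)\le 1-(1-d_{\mathcal T}^2)^q$. Since $s\mapsto\mathcal U_q(s)$ is nondecreasing, this yields $R_{\mathcal T,q}(\psi)\le\mathcal U_q(a^2)$ for $q=2,3,4$. This is the value we take for $\mathcal U$.

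\textbf{Far inputs.} Here we use the global lower bounds, which are nondecreasing in $d_{\mathcal T}^2$.
\begin{itemize}
\item For the two-copy families, Eq.~\eqref{eq:higher-copy-basic-pair-bounds} gives $R_{\mathcal T,2}(\psi)\ge\ell_{\kappa}(d_{\mathcal T}^2)$, with $\kappa=1/8,1/6,1/8,1/4$ as listed in Theorem~\ref{thm:normal-hierarchy}.
\item For full bosonic Gaussians, Eq.~\eqref{eq:normal-inherited-three} gives the three-copy bound $\tfrac49\ell_{\mathrm B}$.
\item For the four-copy difference test, Eq.~\eqref{eq:higher-copy-basic-four-bound} gives the bound $\ell_{\mathrm B}$.
\end{itemize}
Monotonicity of $\ell_\kappa$ and $\ell_{\mathrm B}$ (stated after Eq.~\eqref{eq:ellB-definition}) then gives $R\ge L$ for every input with $d_{\mathcal T}\ge b$, where $L$ is the right-hand side evaluated at $b^2$.

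\textbf{Conclusion.} Under each stated strict inequality we have $0\le\mathcal U<L\le1$. Each test is perfectly complete, and $L>0$ because $b>0$. Proposition~\ref{prop:testing-statistics} then applies with block size $q$ and the copy count of Eq.~\eqref{eq:testing-tolerant-count}.

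\textbf{Sense of ``characterize''.} The final sentence is meant only relative to these particular bounds. Both bounds are attained as functions of the distance, namely $\sup$ over the near set of the upper bound and $\inf$ over the far set of the lower bound, at the endpoints $a$ and $b$. So the bounds separate exactly when the displayed inequality holds. No claim is made that the true rejection probabilities fail to separate otherwise.

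The only mildly delicate point is the monotonicity of $\ell_{\mathrm B}$ across its branch point. It holds because $\ell_{1/4}\le1/4\le1/2$ keeps the outer $\ell_{1/8}$ on its increasing first branch. This is the step I would check most carefully. Everything else is substitution.
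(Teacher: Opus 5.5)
Your proposal is correct and matches the paper's own proof. On near inputs, perfect completeness and Eq.~\eqref{eq:universal-upper} give the bound $\mathcal U_q(a^2)$. On far inputs, the global lower bounds together with monotonicity of $\ell_\kappa$ and $\ell_{\mathrm B}$ give $L$ evaluated at $b^2$, and Proposition~\ref{prop:testing-statistics} then completes the argument. Your check that $\ell_{1/4}\le 1/4$ keeps the outer $\ell_{1/8}$ on its increasing branch correctly verifies the monotonicity the paper asserts after Eq.~\eqref{eq:ellB-definition}.
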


\begin{proof}
Perfect completeness and Eq.~\eqref{eq:universal-upper} give
rejection at most $\mathcal U_q(a^2)$ whenever
$d_{\mathcal T}\le a$.
The global lower bounds in
Eqs.~\eqref{eq:higher-copy-basic-pair-bounds},
\eqref{eq:normal-inherited-three}, and
\eqref{eq:higher-copy-basic-four-bound} give the respective
far-input bounds.
Since $\ell_\kappa$ and $\ell_{\mathrm B}$ are nondecreasing,
their values at $b^2$ bound rejection from below for every
$d_{\mathcal T}\ge b$.
The displayed conditions therefore give $\mathcal U<L$,
so Proposition~\ref{prop:testing-statistics} applies.
\end{proof}

\subsection{Copy lower bounds}\label{sec:testing-lower-bounds}

To lower-bound the number of copies, we choose one near input and one far input with a large overlap. A tolerant tester must distinguish these two states, so any lower bound for their discrimination also applies to testing the whole family.

\begin{lemma}[States with prescribed target distance]\label{lem:testing-exact-curves}
For each orthonormal pair in the table below, the following states have the stated distance given by
\begin{align}
 \ket{\psi_t}&=\sqrt{1-t^2}\,\ket\phi+t\ket\zeta,
 \label{eq:testing-normal-curve}\\
 d_{\mathcal T}(\psi_t)&=t,\qquad 0\le t\le1/5.
 \label{eq:testing-curve-distance}
\end{align}
\begin{center}
\begin{tabular}{lll}
\toprule
Target&$\ket\phi$&$\ket\zeta$\\
\midrule
Fermionic Gaussian&$\ket{0000}$&$\ket{1111}$\\
Slater determinant&$\ket{e_1\wedge e_2}$&$\ket{e_3\wedge e_4}$\\
Centered bosonic Gaussian / vacuum&$\ket0$&$\ket1$\\
Coherent state&$\ket0$&$\ket2$\\
Full bosonic Gaussian&$\ket0$&$\ket3$\\
\bottomrule
\end{tabular}
\end{center}
The $e_j$ are orthonormal one-particle orbitals. Their wedges denote normalized two-particle Slater states. The fermionic examples use four modes and the bosonic examples one. Adding modes in vacuum preserves the stated distances. Applying one Majorana operator to the fermionic Gaussian example gives an example in the odd-parity sector.
\end{lemma}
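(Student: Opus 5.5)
The plan is to prove the two inequalities $F_{\mathcal T}(\psi_t)\ge1-t^2$ and $F_{\mathcal T}(\psi_t)\le1-t^2$ separately. The first is immediate in every row of the table: $\ket\phi$ is itself a target (vacuum, a Slater determinant, a Gaussian vacuum), so $|\ip\phi{\psi_t}|^2=1-t^2$. For the reverse inequality, fix an arbitrary target $\ket g$ and write $x=|\ip g\phi|$ and $y=|\ip g\zeta|$. By the triangle inequality,
\begin{align*}
 |\ip g{\psi_t}|\le\sqrt{1-t^2}\,x+t\,y ,
\end{align*}
and Bessel's inequality gives $x^2+y^2\le1$. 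The task is therefore to show that, on the target family, $y$ is \emph{second order} in the distance of $\ket g$ from $\ket\phi$. The concrete aim is a family-specific inequality of the form $y\le C(1-x^2)$ with an explicit constant $C$. The intuition is that $\ket\zeta$ lies in the "forbidden" degrees of Table~\ref{table:normal-allowed-degrees}.

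Granting such a bound, the scalar step goes as follows. The function $f(x)=\sqrt{1-t^2}\,x+tC(1-x^2)$ is concave on $[0,1]$. Its derivative at $x=1$ is $\sqrt{1-t^2}-2Ct$, which is nonnegative whenever $t\le(1+4C^2)^{-1/2}$. In that case $\max f=f(1)=\sqrt{1-t^2}$, and the bound $F\le1-t^2$ follows. For $t\le1/5$ it suffices to have $C\le\sqrt6$. Where the family-specific bound is only available in a weaker form, such as $y\le(1-x^2)/(2x)$, I would combine it with $y\le\sqrt{1-x^2}$ and split into the cases $x$ near $1$ and $x$ bounded away from $1$. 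In the second case the crude estimate $\sqrt{1-t^2}x+t\sqrt{1-x^2}\le\sqrt{1-t^2}$ holds as soon as $x\le\sqrt{1-t^2}-O(t)$.

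The family-specific bounds are as follows.
\begin{itemize}
 \item \emph{Centered bosonic Gaussian, one mode.} Every centered target has only even photon components, so $y=0$ outright. Adding vacuum modes changes nothing, since a product with vacuum modes still pairs only with even total degree.
 \item \emph{Fermionic Gaussian.} Odd-parity targets are orthogonal to $\ket\psi_t$. An even-parity target with $x\neq0$ has the form $\propto\exp(\tfrac12\sum Z_{ij}a_i^\dagger a_j^\dagger)\ket\Omega$. In the canonical form of $Z$ (Youla decomposition) with parameters $\lambda_1,\lambda_2$, set $s_k=\lambda_k^2/(1+\lambda_k^2)\in[0,1]$. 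Then the $\ket{1111}$ amplitude is the Pfaffian times $x$, giving $y\le\sqrt{s_1s_2}$ and $1-x^2=1-(1-s_1)(1-s_2)\ge\sqrt{s_1s_2}$, so $C=1$. Targets with $x=0$ follow by continuity.
 \item \emph{Slater determinants.} Only the two-particle sector contributes. With Plücker coordinates $p_{ij}$ of a unit $2$-plane in $\mathbb C^4$, the relation $p_{12}p_{34}=p_{13}p_{24}-p_{14}p_{23}$ and AM--GM give $xy\le\tfrac12(1-x^2-y^2)$, which is the weaker form handled by the case split above.
 \item \emph{Coherent states.} A direct computation gives $x=\e^{-|\alpha|^2/2}$ and $y=x|\alpha|^2/\sqrt2$. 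The inequality $u\e^{-u}\le\sqrt2(1-\e^{-u})$ is elementary for $u\ge0$.
\end{itemize}
For extra vacuum modes I would note that $\ket\phi$ and $\ket\zeta$ live on the first modes tensored with vacuum. The overlaps then factor through the (unnormalized) projection of $\ket g$ onto vacuum in the extra modes. I expect the restriction of a target to remain, up to scaling, a target of the smaller system, or at least to obey the same inequality, via the Gaussian generating-function form in the Bargmann picture of Appendix~\ref{sec:bosons}.

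The main obstacle is the full bosonic Gaussian row with $\ket\zeta=\ket3$. Here a target is the Bargmann function $\propto\exp(\tfrac12 Zz^2+bz)$ with $|Z|<1$, and
\begin{align*}
 x^2=\sqrt{1-|Z|^2}\,\exp\bigl(-\text{(quadratic form in $b$)}\bigr),\qquad y=x\,\bigl|b^3/6+Zb/2\bigr|\cdot\sqrt{6}/\sqrt{6}\,\text{(up to the $1/\sqrt{3!}$ normalisation)} .
\end{align*}
I would first try to bound $y/x$ by $C'\bigl(|b|^2+|Z|\bigr)^{1}$ times a factor decaying with the exponent, and to bound $1-x^2$ below by $c\bigl(|b|^2+|Z|^2\bigr)$. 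One subtlety is that $|Z|b$ must be dominated, which requires using $|Z|\,|b|\le\tfrac12(|Z|^2+|b|^2)$. The other is treating the large-parameter regime separately through $y\le\sqrt{1-x^2}$ and the crude case split. Making the constants fit under $t\le1/5$ is the delicate part. If a clean $C\le\sqrt6$ fails there, I would instead compute $\sup_g|\ip g{\psi_t}|^2$ directly by stationarity: Lemma~\ref{lem:normal-stationarity} forces the optimal target's displacement and squeezing to vanish to first order. I would then verify that the vacuum is the global maximizer for small $t$, using Lemma~\ref{lem:normal-attainment} for existence of the maximizer.
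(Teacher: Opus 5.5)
Your skeleton is the paper's: $\ket\phi$ is a target, so only $|\ip g{\psi_t}|\le\sqrt{1-t^2}\,x+t\,y$ has to be bounded over the family. Most rows then go through by routes that differ from the paper's but are valid.

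\begin{itemize}
\item For $\GF$ you use the Thouless form and a Youla canonical form to get $y\le\sqrt{s_1s_2}\le 1-x^2$. The paper instead reads off the coefficient of $\ket1\otimes\ket{2,3,4}$ in $\Lambda\ket g^{\otimes2}=0$, namely $g_0g_4=g_{12}g_{34}-g_{13}g_{24}+g_{14}g_{23}$, and obtains $(x+y)^2\le1$.
\item Your Pl\"ucker argument for $\Sl$ produces exactly that same inequality $x+y\le1$. So no case split is needed there: $\sqrt{1-t^2}\,x+t\,y\le\sqrt{1-t^2}\,(x+y)$ once $t\le1/\sqrt2$. The paper instead bounds the overlap by the operator norm of the antisymmetric coefficient matrix of $\psi_t$.
\item The centered and coherent rows are fine. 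For the coherent row with $u=|\alpha|^2$, the relevant inequality is $u\e^{-u/2}\le1-\e^{-u}$, i.e.\ $u/2\le\sinh(u/2)$, which gives $C=1/\sqrt2$. Your displayed inequality corresponds to the substitution $u=|\alpha|^2/2$.
\item You should also state that $\gamma_1$ is a unitary carrying the even Gaussian family onto the odd one.
\item You should also show that contracting a target with vacuum on the added modes gives a subnormalized target of the same family. For bosons this is Lemma~\ref{lem:boson-vacuum-projection}; for fermions it follows from $(V^\dagger\otimes V^\dagger)\Lambda_n=\Lambda_m(V^\dagger\otimes V^\dagger)$ and Lemma~\ref{lem:fermions-exact}.
\end{itemize}

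The genuine gap is the full bosonic Gaussian row, the only row that needs a real quantitative estimate, and you leave it open. Your fallback does not close it. Lemmas~\ref{lem:normal-attainment} and~\ref{lem:normal-stationarity} give existence of a maximizer and first-order conditions on it, but they cannot rule out a maximizer other than vacuum, which is exactly what has to be shown.

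Your scalar reduction is also lossy precisely here. The concave step for $y\le C(1-x^2)$ needs $C\le\sqrt6$. A bound of the form $y\le K(1-x)$ only needs $tK\le\sqrt{1-t^2}$, because $\sqrt{1-t^2}\,x+tK(1-x)$ is linear in $x$. Since $1-x^2=(1-x)(1+x)$, the quadratic form loses a factor $2/(1+x)$, and the worst parameters sit near $x\approx\e^{-1}$.

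The missing computation, which is the paper's, goes as follows.
\begin{itemize}
\item Set $s=|Z|^2+|b|^2$. Eq.~\eqref{eq:normal-gaussian-normalizer} gives $x=|C_{Z,b}|\le\e^{-s/4}$.
\item The three-photon amplitude is $C_{Z,b}(b^3+3Zb)/\sqrt6$. Using $|Z||b|\le s/2$, this gives $y\le x\,(s^{3/2}+\tfrac32 s)/\sqrt6$.
\item From $\e^{s/4}-1\ge s/4+s^2/32$ one gets $s^{3/2}+\tfrac32 s\le10(\e^{s/4}-1)$.
\item Hence $y\le\tfrac{10}{\sqrt6}\,x(\e^{s/4}-1)\le\tfrac{10}{\sqrt6}(1-x)$, and the linear criterion holds for $t^2\le3/53$, which contains $t\le1/5$.
\end{itemize}

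If you feed these same estimates into your quadratic form instead, you get only $C=\sup_s (s^{3/2}+\tfrac32 s)/\bigl(\sqrt6\,(\e^{s/4}-\e^{-s/4})\bigr)\approx2.43$. That is barely below $\sqrt6\approx2.45$. Weakening the linear bound to $y\le K(1-x^2)$ gives $C=10/\sqrt6$, which fails at $t=1/5$.

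Your case split also works, with the threshold taken at $x=1-2t^2$, where the crude bound $\sqrt{1-t^2}\,x+t\sqrt{1-x^2}\le\sqrt{1-t^2}$ holds exactly. For $x\ge1-2t^2$ one has $s\le-4\log(1-2t^2)$, and there the local constant drops to about $1.7$.
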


\begin{proof}
The required family-specific overlap calculations are proved later: Proposition~\ref{prop:fermion-sharpness} treats the fermionic examples, and Section~\ref{subsec:boson-distance-curves} treats the bosonic ones. They establish, on the indicated interval,
\begin{align}
 \sup_{g\in\mathcal T}|\ip g{\psi_t}|^2&=1-t^2.
 \label{eq:testing-curve-overlap}
\end{align}
In each case $\ket\phi$ attains this overlap. By Eq.~\eqref{eq:distances}, this proves the claimed distance in Eq.~\eqref{eq:testing-curve-distance}. For the centered Gaussian row, the overlap statement follows directly from parity: every centered pure Gaussian has even photon number, so
\begin{align}
 \ip g{\psi_t}&=\sqrt{1-t^2}\,\ip g0, \nonumber \\
 |\ip g{\psi_t}|^2&\le1-t^2.
\end{align}
Vacuum attains equality. The same calculation applies when vacuum is the only target. This one-photon example establishes the distance needed here. The four-photon example in Section~\ref{subsec:boson-distance-curves} attains the smallest quadratic rejection coefficient for centered Gaussians.

The cited family-specific arguments also prove that adding vacuum modes preserves the maximum overlap. A Majorana operator is unitary and maps even Gaussian states to odd Gaussian states, so applying it to both inputs and targets preserves all overlaps and exchanges parity.
\end{proof}

\begin{theorem}[Copy lower bound]\label{thm:testing-lower-bound}
For a family and mode sector admitting an example in Lemma~\ref{lem:testing-exact-curves}, let $0\le a<b\le1/5$ and $0<\delta<1/2$. Every $(a,b)$ tolerant test with each error at most $\delta$ requires a total number of copies satisfying
\begin{align}
 N&\ge\frac{\log(1/[4\delta(1-\delta)])}
 {-\log\cos^2(\arcsin b-\arcsin a)}.
 \label{eq:testing-copy-lower}
\end{align}
This includes arbitrary joint, adaptive, and randomized measurements on at most $N$ copies. In particular, for $0<\delta\le1/4$,
\begin{align}
 N&=\Omega\!\left((b-a)^{-2}\log(1/\delta)\right).
 \label{eq:testing-copy-lower-order}
\end{align}
Taking $a=0$ gives the non-tolerant lower bound.
\end{theorem}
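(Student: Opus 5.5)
The plan is to reduce tolerant testing to binary discrimination of two fixed pure states, and then apply the Helstrom bound to their tensor powers.

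\emph{Choice of the hard pair.} Fix a family and mode sector with an example from Lemma~\ref{lem:testing-exact-curves}. Set $\alpha=\arcsin a$ and $\beta=\arcsin b$, and take $\ket{\psi_a}=\cos\alpha\ket\phi+\sin\alpha\ket\zeta$ and $\ket{\psi_b}=\cos\beta\ket\phi+\sin\beta\ket\zeta$. Since $a<b\le1/5$, Lemma~\ref{lem:testing-exact-curves} gives $d_{\mathcal T}(\psi_a)=a$ and $d_{\mathcal T}(\psi_b)=b$. So $\psi_a$ is a near input and $\psi_b$ is a far input. Because both vectors lie in the real span of the orthonormal pair, their overlap is $|\ip{\psi_a}{\psi_b}|=\cos(\beta-\alpha)$.

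\emph{Reduction to discrimination.} Any $(a,b)$ tolerant tester using at most $N$ copies, possibly with adaptive or randomized joint measurements, is described overall by a two-outcome POVM $\{E,I-E\}$ on $\mathcal H^{\otimes N}$. Any intermediate adaptivity is absorbed into a single measurement by Naimark dilation and classical postprocessing. If the tester uses fewer than $N$ copies, we pad with unused copies. Correctness with each error at most $\delta$ yields a test that distinguishes $\psi_a^{\otimes N}$ from $\psi_b^{\otimes N}$ with both error probabilities at most $\delta$. Its average error under equal priors is therefore at most $\delta$.

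\emph{Helstrom bound.} For pure states with overlap $F=|\ip{\psi_a}{\psi_b}|^{2N}=\cos^{2N}(\beta-\alpha)$, the minimal equal-prior error is $\tfrac12(1-\sqrt{1-F})$. A cleaner way to handle the two errors separately is to use the two-error form of the Helstrom bound: if the two error probabilities are $\delta_1$ and $\delta_2$, then $\sqrt{\delta_1(1-\delta_2)}+\sqrt{\delta_2(1-\delta_1)}\ge\sqrt F$. This follows from monotonicity of fidelity under the measurement channel. Setting $\delta_1,\delta_2\le\delta<1/2$ and using monotonicity of the left side in the errors gives $2\sqrt{\delta(1-\delta)}\ge\sqrt F$, that is, $F\le4\delta(1-\delta)$. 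Taking logarithms,
\[
N\bigl(-\log\cos^2(\beta-\alpha)\bigr)\ge\log\frac{1}{4\delta(1-\delta)},
\]
which is Eq.~\eqref{eq:testing-copy-lower}. The main technical point is the monotonicity step. I would verify it by noting that the left side is the classical fidelity between the outcome distributions $(1-\delta_1,\delta_1)$ and $(\delta_2,1-\delta_2)$, which increases as the errors increase toward $1/2$.

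\emph{Asymptotic form.} Since $\arcsin$ has derivative between $1$ and $1/\sqrt{1-1/25}$ on $[0,1/5]$, we have $\beta-\alpha\asymp b-a$. Also $-\log\cos^2 x\le 2x^2$ for small $x$; more precisely $-\log\cos^2x=x^2+O(x^4)$, and $x\le\arcsin(1/5)$ stays bounded. Hence the denominator is $O((b-a)^2)$. For $\delta\le1/4$, the numerator satisfies $\log(1/[4\delta(1-\delta)])\ge\log(1/(4\delta))\cdot c$ for some constant $c$. One checks this separately for $\delta$ near $1/4$, where $4\delta(1-\delta)\le3/4$, and for small $\delta$, where the bound behaves like $\log(1/\delta)$. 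Together these give $N=\Omega((b-a)^{-2}\log(1/\delta))$.

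The case $a=0$ uses $\psi_0=\ket\phi\in\mathcal T$ and gives the non-tolerant bound. The only real obstacle is justifying that arbitrary adaptive strategies reduce to a single POVM; this is standard, since the final accept/reject decision is a measurement on the joint state of $\mathcal H^{\otimes N}$ and ancillas.
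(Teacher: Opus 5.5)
Your proof is correct and follows the paper's route. You take the same hard pair from Lemma~\ref{lem:testing-exact-curves}, reduce any tester, adaptive or not, to a single two-outcome measurement distinguishing $\psi_a^{\otimes N}$ from $\psi_b^{\otimes N}$, and derive $\cos^{2N}(\arcsin b-\arcsin a)\le 4\delta(1-\delta)$. Your two-error fidelity-monotonicity form is valid because $\sqrt{x(1-y)}+\sqrt{y(1-x)}$ is nondecreasing in each argument whenever $x+y\le 1$, and it gives exactly the same inequality as the paper's equal-prior Helstrom argument.

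One small fix in the asymptotic step: compare the numerator with $\log(1/\delta)$, not $\log(1/(4\delta))$, since the latter vanishes at $\delta=1/4$. Your separate treatment of $\delta$ near $1/4$, where the numerator is at least $\log(4/3)$, already supplies the needed bound.
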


\begin{proof}
Choose $\ket{\psi_a}$ and $\ket{\psi_b}$ from Eq.~\eqref{eq:testing-normal-curve}. By Eq.~\eqref{eq:testing-curve-distance}, these satisfy the near and far promises, respectively. Orthonormality of $\ket\phi$ and $\ket\zeta$ gives their overlap
\begin{align}
 s&=\ip{\psi_a}{\psi_b} \nonumber \\
 &=\sqrt{(1-a^2)(1-b^2)}+ab \nonumber \\
 &=\cos(\arcsin b-\arcsin a).
 \label{eq:testing-promised-overlap}
\end{align}
The last equality is the cosine subtraction formula. For $N$ copies, inner products multiply, so
\begin{align}
 \ip{\psi_a^{\otimes N}}{\psi_b^{\otimes N}}&=s^N, \nonumber \\
 \frac12\left\|
   (\ket{\psi_a}\bra{\psi_a})^{\otimes N}
   -(\ket{\psi_b}\bra{\psi_b})^{\otimes N}
 \right\|_1
 &=\sqrt{1-s^{2N}}.
 \label{eq:testing-promised-trace-distance}
\end{align}
The second line applies Eq.~\eqref{eq:prelim-trace-distance} to the two normalized tensor-power states. The Helstrom bound~\cite{helstrom} says that, with equal prior probabilities, the minimum average error is one half of one minus this trace distance. Thus every tester obeys
\begin{align}
 P_{\rm err}&\ge\frac{1-\sqrt{1-s^{2N}}}{2}.
 \label{eq:testing-helstrom-error}
\end{align}
If both conditional errors are at most $\delta$, their average is also at most $\delta$. Since $\delta<1/2$, rearranging and squaring gives
\begin{align}
 1-2\delta&\le\sqrt{1-s^{2N}}, \nonumber \\
 s^{2N}&\le1-(1-2\delta)^2 \nonumber \\
 &=4\delta(1-\delta), \nonumber \\
 N(-\log s^2)&\ge\log\frac1{4\delta(1-\delta)}.
 \label{eq:testing-overlap-copy-bound}
\end{align}
Here $0<s<1$, because $a<b\le1/5$. Dividing the last line by $-\log s^2>0$ and substituting Eq.~\eqref{eq:testing-promised-overlap} proves Eq.~\eqref{eq:testing-copy-lower}. An adaptive or randomized procedure also defines a two-outcome measurement on the joint input, so the same discrimination bound applies to those procedures.
To obtain Eq.~\eqref{eq:testing-copy-lower-order}, put
$\Delta=\arcsin b-\arcsin a$.
The mean value theorem on $[0,1/5]$ gives
\begin{align}
 b-a&\le\Delta\le\frac5{\sqrt{24}}(b-a).
 \label{eq:testing-angle-gap}
\end{align}
Since $\Delta\le\arcsin(1/5)$, we have
$\cos^2\Delta\ge24/25$. Using
$-\log(1-x)\le x/(1-x)$ and $\sin\Delta\le\Delta$,
we obtain
\begin{align}
 -\log\cos^2\Delta
 &\le\tan^2\Delta
 \le\frac{25}{24}\Delta^2
 \le\frac{625}{576}(b-a)^2.
 \label{eq:testing-log-overlap-gap}
\end{align}
For $0<\delta\le1/4$, the numerator satisfies
$\log(1/[4\delta(1-\delta)])\ge c\log(1/\delta)$
for a universal constant $c>0$: the ratio of these
expressions is positive and continuous, and tends to one
as $\delta\to0$.
Substituting these estimates into
Eq.~\eqref{eq:testing-copy-lower} proves
Eq.~\eqref{eq:testing-copy-lower-order}.
\end{proof}

Adding vacuum modes extends these examples to all fermionic mode counts $n\ge4$ and bosonic counts $n\ge1$. The Slater example lies in the two-particle sector. The lower bound requires that the promised sector contain such a pair of inputs. In a fixed fermionic parity sector with at most three modes, every pure state is Gaussian. Indeed, in the even sector a state with nonzero vacuum amplitude is a pair-creation exponential: its two-particle amplitudes determine the exponential, and higher even particle numbers are unavailable. States with zero vacuum amplitude follow by taking limits, since the Gaussian family is closed. A Majorana operator gives the same conclusion in the odd sector.

Proposition~\ref{prop:boson-two-copy-impossible} rules out
nontrivial perfectly complete two-copy tests for full bosonic
Gaussianity. Theorem~\ref{thm:testing-lower-bound} instead bounds
the total number of copies needed to achieve specified near
and far distances and error probabilities.

\subsection{Measurement accuracy}\label{sec:testing-implementation}

The copy counts in Proposition~\ref{prop:testing-statistics} depend on the separation between near and far rejection probabilities. We now bound how much an imperfect implementation can reduce that separation.

\begin{proposition}[Robustness to measurement error]\label{prop:testing-measurement-error}
Let $E$ and $\widetilde E$ be the ideal and implemented acceptance operators for one block, with $0\le E,\widetilde E\le I$. Suppose
\begin{align}
 \|\widetilde E-E\|_\infty&\le\xi,
 \label{eq:testing-operator-error}
\end{align}
where $\|\cdot\|_\infty$ is the operator norm and $\xi\ge0$. If the ideal near and far rejection intervals are $[0,\mathcal U]$ and $[L,1]$, with $0\le\mathcal U<L\le1$, then the implemented rejection probabilities satisfy
\begin{align}
 \Tr[(I-\widetilde E)\rho]&\le\mathcal U+\xi
 \qquad\text{on near inputs}, \nonumber \\
 \Tr[(I-\widetilde E)\rho]&\ge L-\xi
 \qquad\text{on far inputs}.
 \label{eq:testing-implemented-endpoints}
\end{align}
Here $\rho$ is the block input density operator, equal to $(\ket\psi\bra\psi)^{\otimes q}$ for the pure inputs considered above. If
\begin{align}
 \xi&<\frac{L-\mathcal U}{2},
 \label{eq:testing-positive-implemented-gap}
\end{align}
independent repetitions of the implemented measurement give a tolerant test with each error at most $\delta/2$, using
\begin{align}
 q\left\lceil
 \frac{8(L-\xi)}{(L-\mathcal U-2\xi)^2}\log\frac2\delta
 \right\rceil
 \label{eq:testing-implemented-count}
\end{align}
copies, for $0<\delta<1$.
\end{proposition}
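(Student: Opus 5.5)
The plan is to compare the two rejection probabilities directly through the operator error, and then reuse the counting argument of Proposition~\ref{prop:testing-statistics} with shifted endpoints. For any density operator $\rho$ on the block,
\begin{align*}
 \bigl|\Tr[(I-\widetilde E)\rho]-\Tr[(I-E)\rho]\bigr|
 =\bigl|\Tr[(E-\widetilde E)\rho]\bigr|
 \le\|\widetilde E-E\|_\infty\|\rho\|_1\le\xi,
\end{align*}
by Hölder's inequality and $\|\rho\|_1=1$.

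The first step is the endpoint bounds. Applying this estimate with $\rho=(\ket\psi\bra\psi)^{\otimes q}$ to near inputs, whose ideal rejection lies in $[0,\mathcal U]$, gives implemented rejection at most $\mathcal U+\xi$. Far inputs, whose ideal rejection is at least $L$, then have implemented rejection at least $L-\xi$. This proves Eq.~\eqref{eq:testing-implemented-endpoints}. No structure of $E$ beyond $0\le E,\widetilde E\le I$ is needed, so the bound covers arbitrary implemented POVM elements, not just projectors.

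The second step is to set $\mathcal U'=\mathcal U+\xi$ and $L'=L-\xi$. Condition~\eqref{eq:testing-positive-implemented-gap} is exactly $\mathcal U'<L'$. Since the blocks are independent and identically prepared, the implemented outcomes $Y_i$ are still i.i.d.\ Bernoulli variables. Their parameter is now the implemented rejection probability, which lies in $[0,\mathcal U']$ for near inputs and in $[L',1]$ for far inputs.

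The main care is that Proposition~\ref{prop:testing-statistics} is stated for a perfectly complete test, with near rejection in $[0,\mathcal U]$. Its tolerant branch, however, uses only the two rejection intervals and the threshold $(L+\mathcal U)/2$. I would therefore re-invoke its concentration argument, or repeat it, with $(\mathcal U',L')$ in place of $(\mathcal U,L)$ and threshold $(L'+\mathcal U')/2=(L+\mathcal U)/2$. A Bernstein or Chernoff bound on $\overline Y$ at deviation $(L'-\mathcal U')/2$ then gives each error at most $\delta/2$ with $\lceil 8L'/(L'-\mathcal U')^2\log(2/\delta)\rceil$ blocks.

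To reach the stated count, note that $L'-\mathcal U'=L-\mathcal U-2\xi$. With $L'=L-\xi$ and each block using $q$ copies, this yields exactly Eq.~\eqref{eq:testing-implemented-count}. One check remains: the variance proxy used in that proposition should be the upper endpoint $L'$, which is at most $1$, so the same constant $8$ carries over unchanged.
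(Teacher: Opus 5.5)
Your proposal is correct and matches the paper's proof: the Hölder bound $|\Tr[(\widetilde E-E)\rho]|\le\|\widetilde E-E\|_\infty\|\rho\|_1\le\xi$ shifts the endpoints to $\mathcal U+\xi$ and $L-\xi$. The tolerant branch of Proposition~\ref{prop:testing-statistics} is then applied with these shifted endpoints and the unchanged threshold $(L+\mathcal U)/2$. You are also right that this branch uses only the two rejection intervals, not perfect completeness, and since $0\le\mathcal U+\xi<L-\xi\le1$ you can invoke it directly rather than re-deriving the concentration bound.
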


\begin{proof}
For any block density operator $\rho\ge0$ with $\Tr\rho=1$, its trace norm is $\|\rho\|_1=1$. The change in rejection probability therefore obeys
\begin{align}
 &\left|\Tr[(I-\widetilde E)\rho]-\Tr[(I-E)\rho]\right| \nonumber
  \\
 &\qquad=\left|\Tr[(\widetilde E-E)\rho]\right| \nonumber\\
 &\qquad\le\|\widetilde E-E\|_\infty\|\rho\|_1 \nonumber\\
 &\qquad\le\xi.
 \label{eq:testing-rejection-perturbation}
\end{align}
The middle inequality is the trace-norm/operator-norm inequality. Thus an ideal rejection probability at most $\mathcal U$ can increase to at most $\mathcal U+\xi$, and one at least $L$ can decrease to at least $L-\xi$. This proves Eq.~\eqref{eq:testing-implemented-endpoints}. Under Eq.~\eqref{eq:testing-positive-implemented-gap},
\begin{align}
 0\le\mathcal U+\xi&<L-\xi\le1, \nonumber \\
 (L-\xi)-(\mathcal U+\xi)&=L-\mathcal U-2\xi>0.
\end{align}
Substituting these endpoints into Eq.~\eqref{eq:testing-tolerant-count} gives Eq.~\eqref{eq:testing-implemented-count}. The midpoint decision threshold remains $(L+\mathcal U)/2$.
\end{proof}

For tolerant testing, a fixed operator error satisfying Eq.~\eqref{eq:testing-positive-implemented-gap} leaves a positive separation, and Eq.~\eqref{eq:testing-implemented-count} accounts for the resulting increase in repetitions. For an ideally perfectly complete non-tolerant procedure that rejects at the first failed block, an ideal target can instead be falsely rejected with probability up to $\xi$ per block. The union bound over $M$ blocks gives
\begin{align}
 \Pr\{\text{at least one block rejects an ideal target}\}&\le M\xi.
 \label{eq:testing-accumulated-error}
\end{align}

For bosonic implementations without an energy bound, a finite photon-count cutoff or a small interferometer-parameter error alone does not guarantee Eq.~\eqref{eq:testing-operator-error} uniformly over all inputs. The probability estimate in Eq.~\eqref{eq:testing-rejection-perturbation} holds for any block density operator. The distance bounds used throughout this paper concern pure input states.

\newpage

\section{Fermions}\label{sec:fermions}
This appendix applies the framework of
Appendix~\ref{sec:preliminaries} to pure fermionic Gaussian states
and Slater determinants. We identify the exactly accepted inputs
of the two-copy tests and prove their global robustness through
gradient flow. We also study the geometry of rejection, determine
the sharp local coefficients for higher-copy tests, and describe
the measurements.

\begin{enumerate}
\item \textbf{Section~\ref{subsec:fermion-targets-tests}: Define
the target families and their tests.}
We introduce the two-copy Gaussian and Slater tests and prove
that their perfectly accepted pure inputs are exactly the
corresponding target states
(Lemmas~\ref{lem:fermions-exact}
and~\ref{lem:fermions-slater-exact}). We then state the global
distance bounds (Theorem~\ref{thm:fermions-calibration}).
These apply without a fixed parity or particle-number promise,
and also within the corresponding fixed sectors.

\item \textbf{Section~\ref{subsec:fermion-first-proof}: Describe
closest targets and the geometry of rejection.}
We show that closest targets exist and identify the components
allowed in deviations from them
(Lemma~\ref{lem:fermion-closest}). These restrictions will be used
to compute the local rejection coefficients. We also prove that
target states are the only local minima of rejection: every
other critical point has a direction of negative curvature.
This conclusion holds both within fixed sectors
(Theorem~\ref{thm:fermion-hessian}) and for superpositions of
sectors (Corollary~\ref{cor:fermion-unpromised-landscape}).

\item \textbf{Section~\ref{subsec:fermion-second-proof}: Prove
the global distance bounds by gradient flow.}
We identify the accepted spaces through rotations of the copy
labels (Lemma~\ref{lem:normal-fermion-fixed-spaces}). Three
overlapping instances of each two-copy test have a spectral gap
independent of the number of physical modes
(Lemmas~\ref{lem:so3} and~\ref{lem:fermions-su3}). The
overlapping-projection and gradient-flow theorems of
Appendix~\ref{sec:preliminaries}
(Theorems~\ref{thm:overlap} and~\ref{thm:restoration}) convert
these spectral estimates into the global distance bounds for
Gaussian states and Slater determinants stated in
Theorem~\ref{thm:fermions-calibration}. The third copy is used
in the proof; the measured tests still act on two copies.

\item \textbf{Section~\ref{subsec:fermion-higher-response}:
Determine the local rejection coefficients and prove sharpness.}
We calculate how the higher-copy tests respond to the deviations
allowed by the closest-target condition. This determines the
smallest leading rejection coefficients in
Table~\ref{table:normal-coefficients}, used in the tolerant-testing
guarantees of Theorem~\ref{thm:testing-optimal-gap}.
Explicit four-mode examples establish sharpness
(Proposition~\ref{prop:fermion-sharpness} and
Corollary~\ref{cor:normal-distance-sharpness}) and provide the
fermionic states used for the paper's copy lower bounds
(Lemma~\ref{lem:testing-exact-curves} and
Theorem~\ref{thm:testing-lower-bound}).

\item \textbf{Section~\ref{subsec:fermion-measurements}:
Describe the measurements.}
We implement the Gaussian test using Bell measurements and
simple classical post-processing. For the Slater test, we
describe a collective measurement of total copy spin that
accepts precisely the spin-zero subspace.
\end{enumerate}

\subsection{Targets and tests}
\label{subsec:fermion-targets-tests}

\subsubsection{Gaussian states}

A Gaussian unitary is a product of unitaries $\exp(\theta\gamma_\mu\gamma_\nu/2)$ with $\mu\ne\nu$ and real $\theta$. These unitaries implement all Majorana rotations in $SO(2n)$: each factor rotates one coordinate plane, and these rotations generate $SO(2n)$. Our fermionic Gaussian target $\GF$ consists of all states obtained by applying Gaussian unitaries to occupation-basis states, with arbitrary global phases included. Its members have definite parity. This is the usual pure-state Gaussian family~\cite{bravyi,hackl}. The notation $\mathcal G_{\mathrm F}^{\sigma}$ restricts this target to parity $\sigma\in\{+1,-1\}$.

\begin{definition}[Two-copy fermionic Gaussian test]
Let $P_{\GF,2}$ be the orthogonal projector onto the kernel of the two-copy operator $\Lambda$, and define the rejection probability by
\begin{align}
 \Lambda&=\sum_{\mu=1}^{2n}\gamma_\mu\otimes\gamma_\mu,
 \label{eq:fermions-bridge}\\
 R_{\GF,2}(\psi)&=\|(I-P_{\GF,2})\ket\psi^{\otimes2}\|_2^2.
 \label{eq:fermions-rejection}
\end{align}
Thus $P_{\GF,2}$ projects onto the states annihilated by $\Lambda$.
\end{definition}
To identify the accepted tensor squares, introduce the real antisymmetric covariance matrix
\begin{align}
 \Gamma_{\mu,\nu}(\psi)
 &=\frac i2\ip\psi{[\gamma_\mu,\gamma_\nu]\psi}.
 \label{eq:fermions-covariance-definition}
\end{align}
Here $[A,B]=AB-BA$, and
\begin{align}
 \|\Gamma\|_{\mathrm F}^2&=\sum_{\mu,\nu}|\Gamma_{\mu,\nu}|^2
\end{align}
is the squared Frobenius norm. The two-copy criterion is due to Bravyi~\cite{bravyi}. The proof below also establishes its implication for pure inputs without a parity promise.

\begin{lemma}[Exact Gaussianity criterion]
\label{lem:fermions-exact}
For every normalized state $\ket\psi$ in fermionic Fock space,
\begin{align}
 \|\Lambda\ket\psi^{\otimes2}\|_2^2
 &=2n-\|\Gamma(\psi)\|_{\mathrm F}^2.
 \label{eq:fermions-covariance}
\end{align}
Consequently $R_{\GF,2}(\psi)=0$ if and only if $\ket\psi\in\GF$. In a fixed parity sector the zero set is $\mathcal G_{\mathrm F}^{\sigma}$.
\end{lemma}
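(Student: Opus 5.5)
The plan is to first prove the norm identity \eqref{eq:fermions-covariance} by direct expansion, and then deduce the zero-set characterization from the standard fact that a pure state is Gaussian iff its covariance matrix is orthogonal.

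\textbf{The identity.} Since each $\gamma_\mu$ is self-adjoint, $\Lambda$ is self-adjoint. Expanding $\Lambda^2$ gives
\begin{align}
 \|\Lambda\ket\psi^{\otimes2}\|_2^2
 =\sum_{\mu,\nu}\ip\psi{\gamma_\mu\gamma_\nu\psi}\,\ip\psi{\gamma_\mu\gamma_\nu\psi}.
\end{align}
For $\mu=\nu$ we have $\gamma_\mu^2=I$, so the diagonal terms contribute $2n$. For $\mu\ne\nu$, $\gamma_\mu\gamma_\nu=\tfrac12[\gamma_\mu,\gamma_\nu]=-i\Gamma$-type operator, so $\ip\psi{\gamma_\mu\gamma_\nu\psi}=-i\Gamma_{\mu\nu}(\psi)$. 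Its square is $-\Gamma_{\mu\nu}^2$. Summing gives $2n-\|\Gamma\|_{\mathrm F}^2$ with no parity assumption. I will double-check this sign against the ordering convention in Eq.~\eqref{eq:fermions-covariance-definition}.

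\textbf{The zero set.} For a pure state, $\Gamma$ is real antisymmetric with $\Gamma^T\Gamma\le I$. This holds because $\ip\psi{(\sum_\mu x_\mu\gamma_\mu)(\sum_\nu \bar x_\nu\gamma_\nu)\psi}\ge0$ for complex $x$, which gives $\|\Gamma\|_\infty\le1$. Hence $\|\Gamma\|_{\mathrm F}^2\le2n$, with equality iff $\Gamma^T\Gamma=I$, i.e.\ $\Gamma$ is orthogonal. The consistency check $\|\Lambda\psi^{\otimes2}\|^2\ge0$ confirms the inequality.

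Conversely, if $\ket\psi\in\GF$, write $\ket\psi=U\ket x$ with $U$ Gaussian, realizing $O\in SO(2n)$, and $\ket x$ an occupation state. The operator $\Lambda$ is invariant under $U\otimes U$ because $O^TO=I$. Also $\Gamma(\ket x)$ is block diagonal with blocks $\pm\begin{pmatrix}0&1\\-1&0\end{pmatrix}$ and is orthogonal. Therefore $\Lambda\ket x^{\otimes2}=0$, and so $R=0$.

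For the reverse direction, assume $\Gamma$ is orthogonal. Bring $\Gamma$ to standard form $O\Gamma O^T=\bigoplus_j\begin{pmatrix}0&1\\-1&0\end{pmatrix}$ with $O\in SO(2n)$, allowing a sign flip in one block to keep $\det O=1$. Let $U$ implement $O$. Then $\phi=U^\dagger\psi$ satisfies $\ip\phi{(-i\gamma_{2j-1}\gamma_{2j})\phi}=\pm1$ for every $j$. Each $-i\gamma_{2j-1}\gamma_{2j}=\pm(I-2a_j^\dagger a_j)$ has spectrum $\pm1$, so $\phi$ is a joint eigenvector. It is therefore an occupation-basis state up to phase, and $\psi=U\phi\in\GF$.

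This argument uses no parity promise, because eigenvalue saturation forces definite occupations. In a fixed parity sector the zero set is then $\GF\cap\{\Pi=\sigma\}=\GFs$, since Gaussian unitaries preserve parity.

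\textbf{Main obstacle.} The main difficulty is bookkeeping rather than substance. It consists of fixing the sign convention so that the off-diagonal terms give exactly $-\Gamma_{\mu\nu}^2$, and carefully justifying the bound $\Gamma^T\Gamma\le I$ for a pure input without a parity assumption. The positivity argument above handles the latter directly.
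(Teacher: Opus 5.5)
Your proposal is correct and follows the paper's proof essentially step for step. Both use the same expansion of $\Lambda^2$ into ordinary squares of $\langle\gamma_\mu\gamma_\nu\rangle$, and both use positivity of $I-i\Gamma$ (your $A^\dagger A$ argument is the paper's Gram-matrix argument) to get $\nu_j\le1$. Both then rotate $\Gamma$ to canonical form with a determinant-corrected $SO(2n)$ Gaussian unitary, and show that saturated $\pm1$ expectations force a joint eigenvector of the mode parities. The only cosmetic difference is in the completeness direction: you use invariance of $\Lambda$ under $U\otimes U$, whereas the paper uses invariance of the singular values of $\Gamma$; the two are equivalent.
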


\begin{proof}
Since $\Lambda$ is self-adjoint, expanding $\Lambda^2$ and factorizing expectations on the two copies gives
\begin{align}
 \|\Lambda\ket\psi^{\otimes2}\|_2^2
 &=\ip{\psi^{\otimes2}}{\Lambda^2\psi^{\otimes2}} \nonumber \\
 &=\sum_{\mu,\nu}\ip\psi{\gamma_\mu\gamma_\nu\psi}^2 \nonumber \\
 &=2n+\sum_{\mu\ne\nu}(-i\Gamma_{\mu,\nu})^2 \nonumber \\
 &=2n-\|\Gamma\|_{\mathrm F}^2.
 \label{eq:fermions-covariance-expansion}
\end{align}
The second equality contains an ordinary square because each copy contributes the same expectation. The third uses $\gamma_\mu^2=I$ and $\ip\psi{\gamma_\mu\gamma_\nu\psi}=-i\Gamma_{\mu,\nu}$ for $\mu\ne\nu$.

The inner products of the vectors $\gamma_\mu\ket\psi$ form a positive semidefinite Gram matrix
\begin{align}
 \ip{\gamma_\mu\psi}{\gamma_\nu\psi}
 &=\delta_{\mu,\nu}-i\Gamma_{\mu,\nu}, \nonumber \\
 I-i\Gamma&\ge0.
\end{align}
Since $\Gamma$ is real antisymmetric, its eigenvalues occur in pairs $\pm i\nu_j$, where $\nu_j\ge0$ are the singular values of $\Gamma$. Thus $I-i\Gamma$ has eigenvalues $1\pm\nu_j$, and positivity implies
\begin{align}
 0&\le\nu_j\le1, \nonumber \\
 2n-\|\Gamma\|_{\mathrm F}^2
 &=2\sum_{j=1}^n(1-\nu_j^2).
 \label{eq:fermions-covariance-saturation}
\end{align}
The last expression vanishes exactly when every $\nu_j=1$. Every real antisymmetric matrix admits an orthogonal canonical transformation, therefore $\Gamma$ is equivalent to a direct sum of $2\times2$ blocks
\begin{align}
 \bigoplus_j^n&\begin{pmatrix}0&s_j \nonumber \\-s_j&0\end{pmatrix},
 \qquad s_j\in\{-1,1\}.
\end{align}
If the transformation has determinant $-1$, reversing one coordinate changes its determinant to $+1$ and flips the sign of one $s_j$. It can therefore be implemented by a Gaussian unitary.

Let $\widetilde\gamma_\mu$ denote the transformed Majorana operators in the new basis that puts the covariance matrix into canonical $2\times2$ blocks. The operators
\begin{align}
 Z_j&=i\widetilde\gamma_{2j-1}\widetilde\gamma_{2j}, \nonumber \\
 Z_j^2&=I, \nonumber \\
 \ip\psi{Z_j\psi}&=s_j
\end{align}
are self-adjoint and commute with one another. Their expectations have magnitude one, so
\begin{align}
 \|(Z_j-s_jI)\ket\psi\|_2^2
 &=\ip\psi{(Z_j-s_jI)^2\psi} \nonumber \\
 &=2-2s_j\ip\psi{Z_j\psi} \nonumber \\
 &=0.
\end{align}
Thus $\ket\psi$ is a simultaneous eigenstate of all the $Z_j$. Each $Z_j$ fixes the occupation of one transformed mode. Specifying all these occupations determines a unique occupation-basis state up to phase. Hence $\ket\psi$ is Gaussian and has definite parity. Conversely, occupation-basis states have $\nu_j=1$ for every $j$, and Gaussian rotations preserve these singular values. Equation~\eqref{eq:fermions-covariance} then gives zero rejection for every Gaussian target.
\end{proof}

The projector $P_{\GF,2}$ is invariant under ordinary swap, since swapping the copies leaves $\Lambda$ unchanged. It also commutes with the parity of either copy: conjugating $\Lambda$ by one parity changes its sign and hence preserves its kernel. Thus the test restricts to either parity sector.

\subsubsection{Slater determinants}

For orthonormal one-particle orbitals $\ket{u_1},\ldots,\ket{u_k}$, the wedge product denotes their normalized antisymmetric state
\begin{align}
 \ket{u_1\wedge\cdots\wedge u_k}
 &=\frac1{\sqrt{k!}}\sum_{\pi\in S_k}\operatorname{sgn}(\pi)
   \ket{u_{\pi(1)}}\otimes\cdots\otimes\ket{u_{\pi(k)}}.
 \label{eq:fermions-wedge-definition}
\end{align}
Here $S_k$ is the set of permutations of $k$ labels, and $\operatorname{sgn}(\pi)$ is $+1$ for an even permutation and $-1$ for an odd permutation. This state has one fermion in each orbital $u_1,\ldots,u_k$. For $k=0$, the empty wedge denotes the vacuum.

\begin{definition}[Slater determinants]
A Slater determinant is the normalized antisymmetric state of orthonormal occupied orbitals, with an arbitrary global phase. Let
\begin{align}
 \Sl&=\bigcup_{k=0}^n
 \left\{\e^{i\theta}\ket{u_1\wedge\cdots\wedge u_k}:
   \theta\in\mathbb R,\ \ip{u_i}{u_j}=\delta_{i,j}\right\}.
\end{align}
\end{definition}

On two ordinary tensor-product copies, define
\begin{align}
 N_1&=N\otimes I,\qquad N_2=I\otimes N, \nonumber \\
 \Pi_1&=\Pi\otimes I,\qquad \Pi_2=I\otimes\Pi, \nonumber \\
 c_{j,1}&=a_j\otimes I, \qquad  c_{j,2}=\Pi\otimes a_j.
\end{align}
The factor $\Pi$ makes fermionic operators in different copies anticommute. Define
\begin{align}
 J_z&=\tfrac12(N_1-N_2),
 \label{eq:fermions-slater-generators}\\
 J_+&=\sum_j c_{j,1}^\dagger c_{j,2}, \\
 J_-&=J_+^\dagger, \\
 J^2&=J_z^2+\tfrac12(J_+J_-+J_-J_+). 
\end{align}
The operators $J_+$ and $J_-$ transfer particles between copies. The anticommutation relations give
\begin{align}
 [J_z,J_\pm]&=\pm J_\pm, \nonumber \\
 [J_+,J_-]&=2J_z.
\end{align}
These are the angular-momentum commutation relations, describing $SU(2)$ transformations that mix the two copy labels. A state unchanged by all these transformations is called a \emph{singlet}. Equivalently, it is annihilated by $J_z,J_+,J_-$. In particular,
\begin{align}
 \ip v{J^2v}
 &=\|J_z\ket v\|_2^2
   +\tfrac12\|J_+\ket v\|_2^2
   +\tfrac12\|J_-\ket v\|_2^2
 \label{eq:fermions-singlet-norm}
\end{align}
shows that the kernel of $J^2$ is precisely their common kernel.

\begin{definition}[Two-copy Slater test]
Let $P_{\Sl,2}$ be the orthogonal projector onto $\ker J^2$, and define
\begin{align}
 R_{\Sl,2}(\psi)&=\|(I-P_{\Sl,2})\ket\psi^{\otimes2}\|_2^2.
 \label{eq:fermions-slater-test}
\end{align}
\end{definition}

\begin{lemma}[Exact Slater criterion]
\label{lem:fermions-slater-exact}
For every normalized state $\ket\psi$ in fermionic Fock space, $R_{\Sl,2}(\psi)=0$ if and only if $\ket\psi\in\Sl$. The projector $P_{\Sl,2}$ is invariant under ordinary swap.
\end{lemma}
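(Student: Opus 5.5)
Since $P_{\Sl,2}$ projects onto $\ker J^2$, and by Eq.~\eqref{eq:fermions-singlet-norm} this is the common kernel of $J_z,J_+,J_-$, rejection zero means exactly that $\ket\psi^{\otimes2}$ is annihilated by all three operators. The plan is to translate these three conditions into statements about $\ket\psi$, handling $J_z$ first and then $J_+$ and $J_-$ together.

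\textbf{The condition from $J_z$.} Write $\ket\psi=\sum_k\ket{\psi_k}$ with $\ket{\psi_k}\in\Lambda^k\mathbb C^n$. Then
\begin{align*}
 \|J_z\ket\psi^{\otimes2}\|_2^2=\tfrac14\sum_{k,l}(k-l)^2\|\psi_k\|^2\|\psi_l\|^2 .
\end{align*}
This vanishes if and only if $\ket\psi$ is supported on a single particle number $k$. Superpositions of different particle numbers are therefore rejected, with no promise on $N$ needed.

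\textbf{The conditions from $J_\pm$ within $\Lambda^k$.} For $\ket\psi\in\Lambda^k$, the parity factor in $c_{j,2}$ contributes only the global sign $(-1)^k$. Hence
\begin{align*}
 J_+\ket\psi^{\otimes2}=\pm\sum_j a_j^\dagger\ket\psi\otimes a_j\ket\psi ,
\end{align*}
and $J_-$ gives the swapped expression $\sum_j a_j\ket\psi\otimes a_j^\dagger\ket\psi$. The vanishing of $\sum_j a_j^\dagger\psi\otimes a_j\psi\in\Lambda^{k+1}\otimes\Lambda^{k-1}$ is the classical coordinate-free form of the quadratic Pl\"ucker relations. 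I would cite~\cite{oszmaniec_kus} and the standard theorem that these relations cut out the decomposable $k$-vectors. To stay self-contained, I would sketch the following argument. Let $W=\{u:\ u\wedge\psi=0\}$.
\begin{itemize}
 \item Contract the second factor against $\bra{\phi}$ for an arbitrary $(k-1)$-form $\phi$. The relation then says $\psi$ equals a combination of wedge products $\iota_\phi$-type vectors $\wedge$ (contractions), so each contraction $\iota_{\phi}\psi$ lies in $W$.
 \item The contractions of $\psi$ by $(k-1)$-forms span a space $U$ of dimension at least $k$ whenever $\psi\neq0$.
 \item Using $\dim W\le k$ for $\psi\ne0$, we get $U=W$ of dimension exactly $k$, and $\psi$ is proportional to the wedge of an orthonormal basis of $W$.
\end{itemize}

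\textbf{Converse.} If $\ket\psi$ is a Slater determinant, apply the passive unitary $\Gamma(u)^{\otimes2}$ with $\Gamma(u)\ket\psi=\ket{e_1\wedge\cdots\wedge e_k}$. This commutes with $J_\pm$ and $J_z$, because $\sum_j c_{j,1}^\dagger c_{j,2}$ is $U(n)$-invariant. For the reference state, $J_z$ vanishes since both copies carry $k$ particles. In $J_+$, every term needs $j$ empty in copy one but occupied in copy two, which never happens; $J_-$ is handled the same way.

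\textbf{Swap invariance.} Ordinary swap $F$ conjugates $J_z\mapsto-J_z$. Using $F(\Pi\otimes a_j)F=a_j\otimes\Pi$ and the identities $a_j\Pi=-\Pi a_j$, one finds that it maps $J_+$ to $\pm J_-$ up to conjugation by the parity operator $\Pi_1$ or $\Pi_2$, which preserves the common kernel. Hence $F$ preserves $\ker J^2$, so $F$ commutes with $P_{\Sl,2}$.

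\textbf{Expected obstacle.} The main care needed is twofold: the parity sign bookkeeping in $c_{j,2}$, both in reducing $J_+$ on $\psi^{\otimes2}$ and in the swap argument, and the proof that the Pl\"ucker-type relation forces decomposability. For the latter I would rely on the cited classical result.
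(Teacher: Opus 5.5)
Your proof is correct, but its central step differs from the paper's. Both arguments use $J_z$ to force a definite particle number $k$ and reduce the rest to $\sum_j a_j^\dagger\ket\psi\otimes a_j\ket\psi=0$.

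The paper never mentions Pl\"ucker relations. It computes the norm exactly as $\|J_+\ket\psi^{\otimes2}\|_2^2=\Tr\rho-\Tr\rho^2=\Tr[\rho(I-\rho)]$, with $\rho$ the one-particle density matrix. Since $0\le\rho\le I$, vanishing makes $\rho$ a rank-$k$ projector onto some $W$. Then $a(v)\ket\psi=0$ for all $v\perp W$ places $\ket\psi$ in the one-dimensional space $\Lambda^kW$.

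You instead contract the second factor against $\bra\phi$. This gives $a(u_\phi)^\dagger\ket\psi=0$ for $(u_\phi)_j=\ip{\phi}{a_j\psi}$, so the span $U$ of these contractions lies in $W=\{u:u\wedge\psi=0\}$. Because $\ip{\phi}{a(v)\psi}=\ip{v}{u_\phi}$, you get $a(v)\ket\psi=0$ for $v\perp U$, hence $\psi\in\Lambda^kU$ and $\dim U\ge k$. The bound $\dim W\le k$ then finishes the argument.

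Your route is the classical positivity-free proof that the Pl\"ucker relations cut out decomposable $k$-vectors, and it ties the test directly to the Grassmannian. It does rest on two auxiliary facts, $\dim W\le k$ and $\psi\in\Lambda^kU$. These should be proved, not cited, and the first bullet of your sketch should be written out cleanly, since it is garbled as stated.

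The paper's route is shorter and quantitative. The identity $\|J_+\ket\psi^{\otimes2}\|_2^2=k-\Tr\rho^2$ is the Slater analogue of $\|\Lambda\ket\psi^{\otimes2}\|_2^2=2n-\|\Gamma\|_{\mathrm F}^2$ and measures how far $\rho$ is from idempotent.

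Your completeness check is a valid alternative to the paper's determinant-one argument. You reduce to $\ket{e_1\wedge\cdots\wedge e_k}$ by an orbital unitary and observe that every term of $J_\pm$ vanishes there.

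In the swap step, the exact identities are $SJ_+S=-\Pi_1\Pi_2J_-$ and $SJ_-S=-\Pi_1\Pi_2J_+$. This is multiplication by the invertible $\Pi_1\Pi_2$, not conjugation by a single parity, which would only give $-J_-$. The kernel-preservation conclusion is unaffected.
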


\begin{proof}
In two copies of a Slater determinant, each occupied orbital occurs once in each copy. Mixing these two occurrences by an $SU(2)$ matrix multiplies their antisymmetric two-particle state by the determinant of that matrix, which is one. Thus the entire tensor square is unchanged, and every Slater square is accepted.

Conversely, suppose the tensor square is accepted. Decompose the input according to particle number
\begin{align}
 \ket\psi&=\sum_{k=0}^n\ket{\psi_k},\qquad
 \ket{\psi_k}\in\Lambda^k\mathbb C^n.
\end{align}
Since $J_z\ket\psi^{\otimes2}=0$,
\begin{align}
 0&=\frac12\sum_{k,l}(k-l)\ket{\psi_k}\otimes\ket{\psi_l}, \nonumber \\
 0&=\|J_z\ket\psi^{\otimes2}\|_2^2 \nonumber \\
  &=\frac14\sum_{k,l}(k-l)^2
       \|\ket{\psi_k}\|_2^2\|\ket{\psi_l}\|_2^2.
\end{align}
The terms with different particle numbers in either copy are orthogonal, which gives the last equality. Every summand is nonnegative, so two distinct particle-number components cannot both be nonzero. Thus $\ket\psi$ has a definite particle number $k$.

In this proof, $\rho$ denotes the one-particle density matrix, whose trace is the particle number
\begin{align}
 \rho_{i,j}&=\ip\psi{a_j^\dagger a_i\psi}, \nonumber \\
 \Tr\rho&=k.
 \label{eq:fermions-one-particle-density}
\end{align}
For a one-particle state (orbital) $\ket{v} = \sum_{j=1}^n v_j \ket{e_j} \in \mathbb
C^n$, define the corresponding annihilation operator
\begin{align}
 a(v)&=\sum_{j=1}^n\overline{v_j}a_j.
\end{align}
The anticommutation relations imply
\begin{align}
 \ip v{\rho v}&=\|a(v)\ket\psi\|_2^2\ge0, \nonumber \\
 \ip v{(I-\rho)v}&=\|a(v)^\dagger\ket\psi\|_2^2\ge0.
\end{align}
Hence $0\le\rho\le I$.

Because $\Pi\ket\psi=(-1)^k\ket\psi$, the transfer operator acts as
\begin{align}
 J_+\ket\psi^{\otimes2}
 &=(-1)^k\sum_j a_j^\dagger\ket\psi\otimes a_j\ket\psi.
\end{align}
The parity sign disappears on taking the squared norm. Expanding the inner products then gives
\begin{align}
 \|J_+\ket\psi^{\otimes2}\|_2^2
 &=\sum_{i,j}\ip\psi{a_i a_j^\dagger\psi}
              \ip\psi{a_i^\dagger a_j\psi} \nonumber \\
 &=\sum_{i,j}(\delta_{i,j}-\rho_{i,j})\rho_{j,i} \nonumber \\
 &=\Tr\rho-\Tr\rho^2 \nonumber \\
 &=k-\Tr\rho^2.
 \label{eq:fermions-slater-rdm}
\end{align}
The second equality uses $a_i a_j^\dagger=\delta_{i,j}I-a_j^\dagger a_i$. Acceptance implies $J_+\ket\psi^{\otimes2}=0$, so $\Tr[\rho(I-\rho)]=0$. Since every eigenvalue of $\rho$ is in $[0,1]$, each contribution to this trace is nonnegative and vanishes only at zero or one. Thus $\rho$ is a rank-$k$ projector.

Let $W$ be the $k$-dimensional space of orbitals onto which $\rho$ projects. For $v\perp W$,
\begin{align}
 \|a(v)\ket\psi\|_2^2&=\ip v{\rho v}=0.
\end{align}
Thus $a(v)\ket\psi=0$ for every $v\in W^\perp$, so $\ket{\psi}$ has no occupation amplitude in any one-particle orbital outside $W$, therefore, all occupied orbitals lie in $W$. As the state has exactly $k$ particles, it belongs to $\Lambda^kW$. This space is one-dimensional: its normalized states are the wedge product of any orthonormal basis of $W$, up to global phase. Hence $\ket\psi$ is a Slater determinant.

Finally, let $S$ swap the two ordinary tensor-product copies. Using the parity factors in $c_{j,2}$ gives
\begin{align}
 SJ_zS&=-J_z, \nonumber \\
 SJ_+S&=-\Pi_1\Pi_2J_-, \nonumber \\
 SJ_-S&=-\Pi_1\Pi_2J_+.
\end{align}
These identities show that swapping a state annihilated by $J_z,J_+,J_-$ gives another such state. Their common kernel is therefore preserved by swap, and so is its orthogonal projector $P_{\Sl,2}$.
\end{proof}

We use $\ell_\kappa$ from Eq.~\eqref{eq:ell-definition} and the upper-bound function $\mathcal U_q (d_\mathcal{T}^2) = 1-(1-d_\mathcal{T}^2)^q$ from Eq.~\eqref{eq:normal-hierarchy-calibration}. These functions give the following bounds for the two tests.

\begin{theorem}[Fermionic distance bounds]
\label{thm:fermions-calibration}
Every normalized fermionic input $\ket\psi$, without a sector promise, satisfies
\begin{align}
 \ell_{1/8}\bigl(d_{\GF}(\psi)^2\bigr)
 &\le R_{\GF,2}(\psi)\le\mathcal U_2\bigl(d_{\GF}(\psi)^2\bigr),
 \label{eq:fermions-calibration}\\
 \ell_{1/6}\bigl(d_{\Sl}(\psi)^2\bigr)
 &\le R_{\Sl,2}(\psi)\le\mathcal U_2\bigl(d_{\Sl}(\psi)^2\bigr).
 \label{eq:fermions-slater-calibration}
\end{align}
The Gaussian assertion also holds for inputs of fixed parity, with distance measured to $\mathcal G_{\mathrm F}^{\sigma}$. The Slater assertion holds for inputs of fixed particle number, with distance measured to Slater determinants of that particle number.
\end{theorem}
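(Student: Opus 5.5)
The upper bounds follow at once from perfect completeness. By Lemmas~\ref{lem:fermions-exact} and~\ref{lem:fermions-slater-exact}, every target is accepted with certainty, so Eq.~\eqref{eq:universal-upper} with $q=2$ gives $R\le 1-(1-d^2)^2=\mathcal U_2(d^2)$. This holds both without a sector promise and within a fixed sector, because restricting the target family only changes $d_{\mathcal T}$, and completeness still holds on the smaller family.

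For the lower bounds, I would use the gradient-flow machinery: Theorem~\ref{thm:overlap} followed by Theorem~\ref{thm:restoration} with $q=2$. Both projectors are swap invariant (Lemmas~\ref{lem:fermions-exact} and~\ref{lem:fermions-slater-exact}), so Lemma~\ref{lem:first-variation} applies.

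On three copies, I take the three pairs $B_1=\{1,2\}$, $B_2=\{1,3\}$, $B_3=\{2,3\}$, so $m=3$ and any two pairs share exactly one copy. It then suffices to establish $\Spec(T)\subset\{1\}\cup[0,\beta]$ with $\beta<2/3$.

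\emph{Gaussian case.} Here I identify each $P_{B_i}$ with the projector onto vectors fixed by copy-label rotations about one coordinate axis of $SO(3)$. The $SO(3)$ acts on the Majorana copy index through the generators $J_{a,b}\propto\Pi\Lambda_{a,b}$, as in Lemma~\ref{lem:normal-fermion-fixed-spaces}. The kernel of $\Lambda_{a,b}$ then equals the fixed space of rotations in the $(a,b)$ plane, i.e.\ about the third axis. Lemma~\ref{lem:so3} then gives $\beta=7/12$, hence $\kappa=1/8$ by Eq.~\eqref{eq:so3-gradient-constant}.

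\emph{Slater case.} The three pairwise particle-transfer $SU(2)$'s sit inside $SU(3)$ acting on the copy labels, and Lemma~\ref{lem:fermions-su3} gives $\beta=5/9$. Then $\kappa=1-\tfrac32\cdot\tfrac59=1/6$.

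In both cases the bound is uniform over irreps, so it is independent of $n$.

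With these gradient inequalities, Theorem~\ref{thm:restoration} yields $R\ge\kappa\sin^2(2\theta_{\mathcal Z})$ for $\theta_{\mathcal Z}\le\pi/4$, and $R\ge\kappa$ otherwise. By exactness, $\mathcal Z=\GF$ or $\Sl$. Writing $d=\sin\theta$ gives $\sin^2 2\theta=4d^2(1-d^2)$, which is exactly $\ell_\kappa(d^2)$ from Eq.~\eqref{eq:ell-definition}; the threshold $\theta=\pi/4$ corresponds to $d^2=1/2$.

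For fixed-sector statements, note that $P_{\GF,2}$ commutes with each copy's parity and $P_{\Sl,2}$ commutes with $N_1,N_2$. The flow $\dot\psi=4\chi_\psi$ therefore preserves the sector: $\chi_\psi$ is built by contracting $P\psi^{\otimes2}$ against $\psi$, and both commute with the sector projector on the remaining copy. The limit thus lies in the intersection of the zero set with that sector, which is $\mathcal G_{\mathrm F}^\sigma$ by Lemma~\ref{lem:fermions-exact}, or the Slaters of fixed particle number by Lemma~\ref{lem:fermions-slater-exact}. Consequently the path-length bound controls the distance to the restricted family.

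The main obstacle is the identification step. One must check, with the Jordan--Wigner parity factors on three tensor-product copies, that the pair projectors $P_{\GF,2}^{(a,b)}$ coincide exactly with rotation-fixed projectors of a single unitary $SO(3)$ (respectively $SU(3)$) representation on $\mathcal H^{\otimes3}$. Only then do Lemmas~\ref{lem:so3} and~\ref{lem:fermions-su3} apply. I would handle this by building copy-mode operators $c_{\mu,a}$ with parity strings so that Majoranas in different copies anticommute. The quadratic copy generators $\sum_\mu\gamma_{\mu,a}\gamma_{\mu,b}$ then satisfy $\mathfrak{so}(3)$ relations, and each one differs from $\Lambda_{a,b}$ only by a parity factor that commutes with it and preserves its kernel.
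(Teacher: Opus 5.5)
Your proposal is correct and follows the paper's proof essentially step for step. The upper bounds come from perfect completeness. On three copies, the pair projectors are identified through Lemma~\ref{lem:normal-fermion-fixed-spaces} with the fixed-space projectors of Lemmas~\ref{lem:so3} and~\ref{lem:fermions-su3}, giving $\kappa=1/8$ and $1/6$. Theorems~\ref{thm:overlap} and~\ref{thm:restoration} are then applied with $q=2$, and the same sector-preservation argument handles the restricted statements.

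One small correction: you say the parity factor commutes with $\Lambda_{a,b}$, but the parity string in $J_{a,b}=-\tfrac12(\Pi_a\cdots\Pi_{b-1})\Lambda_{a,b}$ actually anticommutes with it. This must be so, since $J_{a,b}$ is anti-Hermitian while $\Lambda_{a,b}$ is Hermitian. Only the invertibility of the parity string is needed to get $\ker J_{a,b}=\ker\Lambda_{a,b}$, so your argument is unaffected.
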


Perfect completeness and Eq.~\eqref{eq:universal-upper} give the upper bounds. We prove the lower bounds first by expanding around a closest target in Section~\ref{subsec:fermion-first-proof}, and then by a gradient estimate in Section~\ref{subsec:fermion-second-proof}. The first argument also determines the local minima of rejection, and the second constructs a path of bounded length to a target.

\subsection{Closest-target geometry and rejection landscape}
\label{subsec:fermion-first-proof}

We record two geometric properties of the fermionic tests.
First, choosing a closest target restricts the possible deviations
from it. These restrictions will be used to compute the local
rejection coefficients in Section~\ref{subsec:fermion-higher-response}.
Second, we show that every critical point with positive rejection
has a direction of negative curvature, so the target states are
the only local minima. The global distance bounds are proved
separately by gradient flow in
Section~\ref{subsec:fermion-second-proof}.

\subsubsection{Expansion around a closest target}

\begin{lemma}[Expansion around a closest fermionic target]
\label{lem:fermion-closest}
For Gaussian testing in a fixed parity sector, a Gaussian unitary,
a choice of global phase, and, for odd parity, a parity-changing
unitary put the input in the form
\begin{align}
 \ket\psi
 &=\sqrt{1-d_{\mathcal T}(\psi)^2}\ket\Omega
   +d_{\mathcal T}(\psi)\ket\eta,
 \label{eq:fermion-closest-normal}\\
 \ket\eta
 &=\sum_{R\ge2}\ket{\eta_R}, \nonumber\\
 \ket{\eta_R}
 &\in\Lambda^{2R}\mathbb C^n,
\end{align}
where $\ket{\eta_R}$ denotes the component of the deviation with
exactly $2R$ particles relative to the vacuum.
Let $\mathcal T$ be the target family, and $\ket\Omega$ be the vacuum.
Let $\ket\Omega$ be the
closest target to the transformed input.

For Slater testing in $\Lambda^k\mathbb C^n$, a unitary change
of one-particle orbitals gives the same decomposition, with
\begin{align}
 \ket\Omega
 &=\ket{e_1\wedge\cdots\wedge e_k},
 \label{eq:slater-closest-normal}\\
 O&=\Span\{\ket{e_1},\ldots,\ket{e_k}\},
 \label{eq:slater-occupied-orbitals}\\
 \ket{\eta_R}
 &\in\Lambda^{k-R}O\wedge\Lambda^R O^\perp.
\end{align}
The $e_j$ are orthonormal orbitals. The last space contains states
obtained by replacing $R$ occupied orbitals of the reference
determinant with $R$ orbitals orthogonal to $O$. In both cases,
\begin{align}
 \ip\Omega\eta&=0, \nonumber\\
 \|\ket\eta\|_2&=1
 \quad\bigl(d_{\mathcal T}(\psi)>0\bigr), \nonumber\\
 \ket\eta&=0
 \quad\bigl(d_{\mathcal T}(\psi)=0\bigr).
\end{align}

For inputs without a sector promise, a closest target still exists
and the same reduction to a reference state applies. For Gaussian states,
the deviation has no zero- or two-particle component, but
odd-particle components are allowed. For Slaters, if the closest
target has $k$ particles, the deviation's $k$-particle component
contains at least two orbital replacements; components with
particle number different from $k$ are also allowed.
\end{lemma}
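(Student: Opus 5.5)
The plan is to obtain existence from compactness, move the closest target to the reference state by symmetry, and then invoke Lemma~\ref{lem:normal-stationarity} to remove the degrees listed in Table~\ref{table:normal-allowed-degrees}.

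\textbf{Existence and reduction.} Each parity sector $\GFs$ is the orbit of a reference state under the compact group $\mathrm{Spin}(2n)$, and each $k$-particle Slater family is the orbit under $U(n)$. The unpromised families are finite unions of such orbits. Hence they are compact, and the continuous map $g\mapsto|\ip g\psi|$ attains its maximum, as noted after Table~\ref{tab:math-orbits}.

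Let $g$ be a maximizer. I would choose a Gaussian unitary $U$ with $U g=\ket\Omega$; in the odd sector I first apply a Majorana operator $\gamma_1$, which is unitary and exchanges the parity families. I then replace $\psi$ by $U\psi$ and fix the global phase so that $\ip\Omega{U\psi}>0$. Since $U$ maps the target family onto itself, $\ket\Omega$ is a closest target of the transformed input and distances are preserved. Writing the input in the form \eqref{eq:normal-frame} defines $\ket\eta$ with $\ip\Omega\eta=0$. It has unit norm if $d_{\mathcal T}>0$, and we set $\ket\eta=0$ if $d_{\mathcal T}=0$. For Slaters the same reduction uses $\Gamma(u)$ with $u\in U(n)$, sending the occupied space to $O$.

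\textbf{Tangent directions.} Next I apply Lemma~\ref{lem:normal-stationarity} with curves $e^{tX}\ket\Omega$, taking $X$ in the relevant Lie algebra.

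For $\GF$, the generators $\gamma_\mu\gamma_\nu$ span the same complex span as $a_i^\dagger a_j^\dagger$, $a_ia_j$ and $a_i^\dagger a_j-\tfrac12\delta_{ij}$. Acting on the vacuum, annihilators give zero and the number-conserving terms give multiples of $\ket\Omega$. What remains is the full space $\Lambda^2\mathbb C^n=\Span\{a_i^\dagger a_j^\dagger\ket\Omega\}$, after removing the component along $g$.

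This span is complex. Replacing $v$ by $iv$ corresponds to using the real combination $i(a_i^\dagger a_j^\dagger+a_ja_i)$, which is again a real generator. The curve $e^{t(a^\dagger a^\dagger-aa)}$ gives $v$ and the curve $e^{it(a^\dagger a^\dagger+aa)}$ gives $iv$, both equal to first order. So the hypothesis of the lemma holds, and $\ket\eta\perp\Lambda^2$. Together with $\ip\Omega\eta=0$ this shows that $\ket\eta$ has no components of degree $0$ or $2$. In a fixed parity sector only even degrees occur, so $\eta=\sum_{R\ge2}\eta_R$.

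For Slaters, $a_i^\dagger a_j\ket\Omega$ with $i\notin O$ and $j\in O$ gives every single replacement. The Hermitian combinations $a_i^\dagger a_j+a_j^\dagger a_i$ and $i(a_i^\dagger a_j-a_j^\dagger a_i)$ yield $v$ and $iv$. Hence $\eta_1=0$, and $\Lambda^k\mathbb C^n=\bigoplus_R\Lambda^{k-R}O\wedge\Lambda^RO^\perp$ gives the claimed form.

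\textbf{Unpromised case.} Without a sector promise, the closest target $g$ lies in a single sector, and the same tangent computation still applies, since curves stay inside that sector's orbit. The only change is that the input now has components in other parity or particle-number sectors. These components are orthogonal to all tangent vectors and are unconstrained. This gives the stated allowed components: degrees $\ne0,2$ for $\GF$, and for Slaters either particle number $\ne k$ or at least two replacements.

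\textbf{Main obstacle.} The step needing most care is checking the complex-span hypothesis of Lemma~\ref{lem:normal-stationarity} through the real Lie algebra $\mathfrak{so}(2n)$, and confirming that the two curves give $v$ and $iv$ to first order after fixing the phase. Handling the odd sector via the Majorana map is routine by comparison.
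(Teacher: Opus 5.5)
Your proposal is correct and follows essentially the same route as the paper: compactness of the orbits (or finite unions of orbits) for existence, reduction of the closest target to $\ket\Omega$ via $\gamma_1$ and a Gaussian or orbital unitary, and Lemma~\ref{lem:normal-stationarity} applied to generators that produce both $v=a_i^\dagger a_j^\dagger\ket\Omega$ (respectively, a single orbital replacement) and $iv$.

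The one step you should add is $F_{\mathcal T}(\psi)>0$, which the paper checks explicitly: occupation-basis targets span the sector, or all of Fock space without a promise. This is what makes your phase choice $\ip\Omega{U\psi}>0$ possible, and it supplies the hypothesis $d_{\mathcal T}<1$ of Lemma~\ref{lem:normal-stationarity}. The paper also records something later sections rely on, though it is not needed for the decomposition itself: these transformations preserve the test projectors, including $\gamma_1$ via $(\gamma_1\otimes\gamma_1)\Lambda(\gamma_1\otimes\gamma_1)=\Lambda$.
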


\begin{proof}
Within the prescribed sector, every target is obtained from a
reference state by a Gaussian unitary or a unitary change of
orbitals, together with a global phase. These transformations form
compact groups, so the target families are compact and a closest
target exists. Without a sector promise, the target family is a
finite union of compact orbits, so the same conclusion holds.
Moreover, occupation-basis states are targets and span the
prescribed sector, or the full Fock space when no sector is
prescribed. At least one has nonzero overlap with the input, so
$F_{\mathcal T}(\psi)>0$.

If the closest Gaussian target has odd parity, first apply
$\gamma_1$. It exchanges parity and conjugates the Majoranas by
an orthogonal transformation. Consequently,
\begin{align}
 (\gamma_1\otimes\gamma_1)\Lambda
 (\gamma_1\otimes\gamma_1)&=\Lambda.
\end{align}
It also maps Gaussian targets to Gaussian targets, so it preserves
rejection and distance to the corresponding target family.
We may therefore take the closest Gaussian target to have even
parity.

In the Gaussian case, apply a Gaussian unitary that maps a closest
target to vacuum. In the Slater case, use a unitary change of
orbitals that maps a closest target to the reference determinant.
Apply the same transformation to the input and choose its phase
so that its overlap with $\ket\Omega$ is
$\sqrt{1-d_{\mathcal T}(\psi)^2}$.
These transformations preserve the pair projectors: Gaussian
unitaries rotate both Majorana lists in $\Lambda$ by the same
orthogonal matrix, and orbital unitaries preserve the Slater
copy-transfer generators.

If $d_{\mathcal T}(\psi)=0$, set $\ket\eta=0$.
Otherwise, let $\ket{g(s)}$ be a smooth normalized target state
with $\ket{g(0)}=\ket\Omega$ and
$\ket\zeta=\left.\frac{\dd}{\dd s}\ket{g(s)}\right|_{s=0}$,
with $\ip\Omega\zeta=0$.
The closest-target condition in
Lemma~\ref{lem:normal-stationarity} gives
\begin{align}
 0
 &=\left.\frac{\dd}{\dd s}
      |\ip{g(s)}\psi|^2\right|_{s=0} \nonumber\\
 &=2\Re\!\left[
   \overline{\ip\Omega\psi}\,
   \ip\zeta\psi\right] \nonumber\\
 &=2d_{\mathcal T}(\psi)\sqrt{1-d_{\mathcal T}(\psi)^2}
   \,\Re\ip\zeta\eta.
 \label{eq:fermion-nearest-derivative}
\end{align}
The following operators generate such first-order changes.
For the Gaussian vacuum, with $i<j$,
\begin{align}
 (a_i^\dagger a_j^\dagger-a_ja_i)\ket\Omega
 &=a_i^\dagger a_j^\dagger\ket\Omega, \nonumber\\
 i(a_i^\dagger a_j^\dagger+a_ja_i)\ket\Omega
 &=i a_i^\dagger a_j^\dagger\ket\Omega.
\end{align}
For the Slater reference, with $i\le k<b$,
\begin{align}
 (a_b^\dagger a_i-a_i^\dagger a_b)\ket\Omega
 &=a_b^\dagger a_i\ket\Omega, \nonumber\\
 i(a_b^\dagger a_i+a_i^\dagger a_b)\ket\Omega
 &=i a_b^\dagger a_i\ket\Omega.
\end{align}
Each operator on the left is anti-Hermitian, so its exponential
is an allowed unitary. Applying
Eq.~\eqref{eq:fermion-nearest-derivative} to both changes in each
pair sets both the real and imaginary parts of the corresponding
overlap to zero:
\begin{align}
 \ip{a_i^\dagger a_j^\dagger\Omega}\eta&=0
 \qquad\text{(Gaussian states)}, \nonumber\\
 \ip{a_b^\dagger a_i\Omega}\eta&=0
 \qquad\text{(Slater )}.
\end{align}
Thus the Gaussian deviation has no zero- or two-particle component.
With a fixed parity promise, the transformed input has even
parity, so only degrees $2R\ge4$ remain. Without that promise,
odd-particle components are also allowed.

For Slaters, the deviation has no reference component or single
occupied-orbital replacement. Hence its component in the reference
particle-number sector has only $R\ge2$ replacements. Components
in other particle-number sectors are allowed when the input has
no sector promise.
\end{proof}

\subsubsection{Local geometry of fermions}

We now study the critical points of rejection. We show that any
critical point outside the target family has a direction in which
the second derivative is negative. Thus no non-target state can
be a local minimum. We first establish a general Hessian identity,
apply it within each fixed parity or particle-number sector,
and then extend the conclusion to superpositions of sectors.

A critical point is a normalized state where the first derivative
of rejection vanishes for every smooth change through normalized
states. At such a point,
$\operatorname{Hess}_S R(\psi)[v,v]$ denotes the second derivative
along a normalized path whose initial derivative is $\ket v$.
The subscript $S$ indicates the unit sphere. At a critical point
this second derivative depends only on $\ket v$, so a negative
value excludes a local minimum. The following identity produces
such a value.

\begin{theorem}[Hessian identity]
\label{thm:compact-maximum}
Let $\mathcal H$ be finite-dimensional, let $q\ge2$ be an integer,
and let $H_1,\ldots,H_s$ be Hermitian operators satisfying
\begin{align}
 \sum_jH_j^2&=cI.
\end{align}
On $q$ copies, define
\begin{align}
 T_j&=\sum_{a=1}^qH_j^{(a)}, \nonumber\\
 \mathcal C&=\sum_jT_j^2.
\end{align}
Here $H_j^{(a)}$ applies $H_j$ to copy $a$ and the identity to the
others. Let $P$ project onto the eigenspace of the largest
eigenvalue $C_*$ of $\mathcal C$, assume $[P,T_j]=0$ for every $j$,
and suppose
\begin{align}
 p(\psi)&=\|P\ket\psi^{\otimes q}\|_2^2>0
 \quad\text{for every normalized }\ket\psi.
 \label{eq:compact-positive}
\end{align}
At a critical point of $R=1-p$, put
\begin{align}
 \mu_j&=\ip\psi{H_j\psi}, \nonumber\\
 \ket{v_j}&=(H_j-\mu_jI)\ket\psi.
\end{align}
Then
\begin{align}
 \sum_j\operatorname{Hess}_S R(\psi)[v_j,v_j]
 &=-4p(\psi)\ip{\psi^{\otimes q}}
       {(C_*I-\mathcal C)\psi^{\otimes q}}.
 \label{eq:compact-hessian}
\end{align}
Consequently, every critical point with $R>0$ can be changed
smoothly, while remaining normalized, so that the second
derivative of rejection is strictly negative.
\end{theorem}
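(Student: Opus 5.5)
The plan is a direct second-variation computation along great circles, summed over $j$, using criticality only through the vanishing of $\ket{\chi_\psi}$. For a tangent vector $\ket v$ with $\ip\psi v=0$, I take the great circle $\psi_t=\cos(t\|v\|_2)\,\psi+\sin(t\|v\|_2)\,v/\|v\|_2$. It has velocity $v$ and acceleration $-\|v\|_2^2\psi$ at $t=0$. Differentiating $p(\psi_t)=\ip{\psi_t^{\otimes q}}{P\psi_t^{\otimes q}}$ twice gives two contributions: the real part of the inner product of $P\psi^{\otimes q}$ with the second derivative of the tensor power, and the term $\ip{D(\psi^{\otimes q})[v]}{P\,D(\psi^{\otimes q})[v]}$. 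The second derivative of the tensor power contains the acceleration inserted in one slot, plus the sum over pairs $a\ne b$ of $v$ in slots $a,b$.

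Since $\mathcal C$ commutes with copy permutations, so does $P$. Then
\begin{align*}
 \operatorname{Hess}_Sp(\psi)[v,v]
 ={}&-2q\,p\|v\|_2^2
 +2\Re\Bigl\langle P\psi^{\otimes q},\textstyle\sum_{a\ne b}v^{(a)}v^{(b)}\psi^{\otimes(q-2)}\Bigr\rangle\\
 &+2\ip{D(\psi^{\otimes q})[v]}{P\,D(\psi^{\otimes q})[v]}.
\end{align*}
Here $v^{(a)}v^{(b)}\psi^{\otimes(q-2)}$ denotes the product vector with $v$ in copies $a,b$ and $\psi$ elsewhere. The only consequence of criticality I need is Lemma~\ref{lem:first-variation}: at a critical point $\ket{\chi_\psi}=0$, so $(\bra\psi^{\otimes(q-1)}\otimes I)P\ket\psi^{\otimes q}=p\ket\psi$. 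Hence $\ip{P\psi^{\otimes q}}{\psi^{\otimes(q-1)}\otimes X\psi}=p\ip\psi{X\psi}$ for any operator $X$. By permutation invariance this gives $\ip{P\psi^{\otimes q}}{T_j\psi^{\otimes q}}=qp\mu_j$.

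The key observation is that for $v_j=(H_j-\mu_j)\psi$ (which satisfies $\ip\psi{v_j}=0$),
\begin{align*}
 D(\psi^{\otimes q})[v_j]=(T_j-q\mu_j)\psi^{\otimes q}.
\end{align*}
Since $[P,T_j]=0$, the last Hessian term becomes $2\ip{\psi^{\otimes q}}{P(T_j-q\mu_j)^2\psi^{\otimes q}}$. The pair term is rewritten by expanding $(T_j-q\mu_j)^2\psi^{\otimes q}$. This expansion equals the same double sum of $v_j$'s plus $\sum_a[(H_j-\mu_j)^2]^{(a)}\psi^{\otimes q}$. So the pair term equals $2\Re\ip{P\psi^{\otimes q}}{(T_j-q\mu_j)^2\psi^{\otimes q}}$ minus $2q\,p\ip\psi{(H_j-\mu_j)^2\psi}$, where I used criticality in the last step.

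Summing over $j$ and using $\sum_jH_j^2=cI$:
\begin{itemize}
 \item $\sum_j\|v_j\|_2^2=\sum_j\ip\psi{(H_j-\mu_j)^2\psi}=c-\sum_j\mu_j^2$, so the single-slot pieces total $-4qp\,(c-\sum_j\mu_j^2)$.
 \item Expand $\sum_j(T_j-q\mu_j)^2=\mathcal C-2q\sum_j\mu_jT_j+q^2\sum_j\mu_j^2$. Use $P\mathcal C=C_*P$ and the criticality identity for $T_j$. The two $P$-terms together give $4p\,(C_*-q^2\sum_j\mu_j^2)$.
 \item A direct computation on the product state gives $\ip{\psi^{\otimes q}}{\mathcal C\psi^{\otimes q}}=qc+q(q-1)\sum_j\mu_j^2$.
\end{itemize}
Comparing these yields $\sum_j\operatorname{Hess}_Sp[v_j,v_j]=4p\,\ip{\psi^{\otimes q}}{(C_*I-\mathcal C)\psi^{\otimes q}}$. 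Since $R=1-p$, this is Eq.~\eqref{eq:compact-hessian}.

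For the consequence, note that $C_*I-\mathcal C\ge0$. Its kernel is $\Ran P$, and $\mathcal H$ is finite-dimensional, so this operator has a positive gap above zero on $\Ran(I-P)$. If $R(\psi)>0$, the component $(I-P)\psi^{\otimes q}$ is nonzero, and so the expectation of $C_*I-\mathcal C$ is strictly positive. Together with $p>0$ from Eq.~\eqref{eq:compact-positive}, the sum of Hessians is strictly negative. Hence some $\operatorname{Hess}_SR(\psi)[v_j,v_j]<0$, and in particular that $v_j\ne0$. Moving along the great circle in that direction gives the claimed normalized deformation.

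The main obstacle is bookkeeping. One must make sure that the acceleration term is handled correctly, since its normal part is what contributes $-2qp\|v\|_2^2$; the general formula shows the Hessian is independent of the chosen curve. One must also check that every use of criticality goes only through $\ket{\chi_\psi}=0$, and that permutation invariance of $P$ is justified from its definition as a spectral projector of the symmetric operator $\mathcal C$.
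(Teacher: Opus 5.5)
Your proof is correct. It reaches the identity by a different route from the paper, although the per-$j$ formulas agree: both give $\operatorname{Hess}_S p(\psi)[v_j,v_j]=-4qp\|\ket{v_j}\|_2^2+4\ip{\psi^{\otimes q}}{P(T_j-q\mu_j)^2\psi^{\otimes q}}$.

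\textbf{The paper's route.} The paper never writes a general second-variation formula. It follows the non-unitary normalized paths $e^{sH_j}\ket\psi/\|e^{sH_j}\ket\psi\|_2$, whose tensor powers are generated by $e^{sT_j}$. Because $[P,T_j]=0$, acceptance along each path factors as $p(\psi)\|e^{sT_j}\ket{\widehat k}\|_2^2/\|e^{sH_j}\ket\psi\|_2^{2q}$, with $\ket{\widehat k}=P\ket\psi^{\otimes q}/\sqrt{p(\psi)}$. Taking logarithms, the second derivative becomes a difference of variances: of $T_j$ in $\ket{\widehat k}$ and of $H_j$ in $\ket\psi$. Criticality enters there only through the first-derivative condition $\ip{\widehat k}{T_j\widehat k}=q\mu_j$ along these paths, and again when the path's second derivative is identified with the Hessian. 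The sum over $j$ then collapses via $\mathcal C\ket{\widehat k}=C_*\ket{\widehat k}$.

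\textbf{Your route.} You use geodesics and the explicit second variation of $p$. Criticality enters uniformly as $\ket{\chi_\psi}=0$ from Lemma~\ref{lem:first-variation}. The identity $D(\psi^{\otimes q})[v_j]=(T_j-q\mu_j)\ket\psi^{\otimes q}$ is what lets you exploit $[P,T_j]=0$. Your relation $\ip{P\psi^{\otimes q}}{T_j\psi^{\otimes q}}=qp\mu_j$ is exactly the paper's $\ip{\widehat k}{T_j\widehat k}=q\mu_j$.

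\textbf{Trade-offs.} The paper's route is shorter. It avoids the pair-term bookkeeping and never needs permutation invariance of $P$, only the commutation with each $T_j$. Your route makes the Riemannian-Hessian interpretation transparent, since it uses a geodesic and states curve-independence at critical points explicitly. It also ties the argument to the gradient formula used elsewhere in the paper. The cost is that you need $P$ to commute with copy permutations, which you correctly justify: $P$ is a spectral projector of the permutation-symmetric operator $\mathcal C$.
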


\begin{proof}
For a Hermitian operator $A$ and a normalized vector $\ket u$,
we have $\|\e^{sA}\ket u\|_2^2=\ip u{\e^{2sA}u}$.
Differentiating its logarithm gives
\begin{align}
 \left.\frac{\dd}{\dd s}\log\|\e^{sA}\ket u\|_2^2\right|_{s=0}
 &=2\ip u{Au}, \nonumber\\
 \left.\frac{\dd^2}{\dd s^2}
       \log\|\e^{sA}\ket u\|_2^2\right|_{s=0}
 &=4\bigl(\ip u{A^2u}-\ip u{Au}^2\bigr).
 \label{eq:compact-log-norm-derivatives}
\end{align}
Normalize the accepted component and choose the paths
\begin{align}
 \ket{\widehat k}
 &=\frac{P\ket\psi^{\otimes q}}{\sqrt{p(\psi)}}, \nonumber\\
 \ket{\psi_j(s)}
 &=\frac{\e^{sH_j}\ket\psi}{\|\e^{sH_j}\ket\psi\|_2}, \nonumber\\
 \left.\frac{\dd}{\dd s}\ket{\psi_j(s)}\right|_{s=0}
 &=(H_j-\mu_jI)\ket\psi=\ket{v_j}.
 \label{eq:compact-normalized-paths}
\end{align}
Because $P$ commutes with $T_j$ and
$(\e^{sH_j})^{\otimes q}=\e^{sT_j}$,
\begin{align}
 p(\psi_j(s))
 &=p(\psi)\frac{\|\e^{sT_j}\ket{\widehat k}\|_2^2}
                  {\|\e^{sH_j}\ket\psi\|_2^{2q}}, \nonumber\\
 \log p(\psi_j(s))
 &=\log p(\psi)+\log\|\e^{sT_j}\ket{\widehat k}\|_2^2
             -q\log\|\e^{sH_j}\ket\psi\|_2^2.
 \label{eq:compact-log}
\end{align}
At a critical point the first derivative vanishes, so
\begin{align}
 \left.\frac{\dd}{\dd s}\log p(\psi_j(s))\right|_{s=0}
 &=2\ip{\widehat k}{T_j\widehat k}-2q\mu_j=0, \nonumber\\
 \ip{\widehat k}{T_j\widehat k}&=q\mu_j.
\end{align}
Using Eq.~\eqref{eq:compact-log-norm-derivatives}, the second
derivative is
\begin{align}
 \left.\frac{\dd^2}{\dd s^2}\log p(\psi_j(s))\right|_{s=0}
 &=4\left(\ip{\widehat k}{T_j^2\widehat k}
       -q^2\mu_j^2-q\ip\psi{H_j^2\psi}+q\mu_j^2\right).
\end{align}
Summing over $j$ gives
\begin{align}
 \sum_j\left.\frac{\dd^2}{\dd s^2}
       \log p(\psi_j(s))\right|_{s=0}
 &=4\sum_j\left(\ip{\widehat k}{T_j^2\widehat k}
    -q\ip\psi{H_j^2\psi}-q(q-1)\mu_j^2\right) \nonumber\\
 &=4\left(C_*-qc-q(q-1)\sum_j\mu_j^2\right) \nonumber\\
 &=4\ip{\psi^{\otimes q}}
       {(C_*I-\mathcal C)\psi^{\otimes q}}.
 \label{eq:compact-logtrace}
\end{align}
The second equality uses
$\mathcal C\ket{\widehat k}=C_*\ket{\widehat k}$ and
$\sum_jH_j^2=cI$. For the last equality, expand each $T_j^2$.
Its $q$ same-copy terms contribute $\ip\psi{H_j^2\psi}$ each,
and its $q(q-1)$ different-copy terms factorize into $\mu_j^2$.
Therefore
\begin{align}
 \ip{\psi^{\otimes q}}{\mathcal C\psi^{\otimes q}}
 &=\sum_j\left(q\ip\psi{H_j^2\psi}
              +q(q-1)\mu_j^2\right).
\end{align}
At a critical point, the first derivatives of $p(\psi_j(s))$
vanish. Differentiating $R=1-p$ then gives
\begin{align}
 \left.\frac{\dd^2}{\dd s^2}R(\psi_j(s))\right|_{s=0}
 &=-p(\psi)
   \left.\frac{\dd^2}{\dd s^2}\log p(\psi_j(s))\right|_{s=0}
   \nonumber\\
 &=\operatorname{Hess}_S R(\psi)[v_j,v_j].
\end{align}
Together with Eq.~\eqref{eq:compact-logtrace}, this proves
Eq.~\eqref{eq:compact-hessian}.
If $R>0$, the tensor power has a nonzero component outside the
largest-eigenvalue eigenspace. Since $C_*I-\mathcal C$ is strictly
positive on that orthogonal complement,
\begin{align}
 \ip{\psi^{\otimes q}}
    {(C_*I-\mathcal C)\psi^{\otimes q}}&>0.
\end{align}
The sum in Eq.~\eqref{eq:compact-hessian} is therefore negative,
and at least one second derivative is negative.
\end{proof}

The positivity assumption in Eq.~\eqref{eq:compact-positive}
holds if the accepted target vectors span the one-copy space.
For any normalized accepted target $\ket g$,
\begin{align}
 p(\psi)&\ge|\ip{g^{\otimes q}}{\psi^{\otimes q}}|^2
 \nonumber\\
 &=|\ip g\psi|^{2q}.
\end{align}
A spanning family contains at least one vector with nonzero
overlap with each nonzero input. Thus $p(\psi)>0$.

For the fermionic tests, we now identify operators $\mathcal C$
whose largest-eigenvalue eigenspaces are exactly the accepted
subspaces.

\begin{theorem}[Negative second derivative away from fermionic targets]
\label{thm:fermion-hessian}
\label{cor:fermion-negative-curvature}
For Gaussian testing in either fixed parity sector, or Slater
testing at fixed particle number, every critical point outside
the target admits a smooth path of normalized states with
strictly negative second derivative of rejection.
Thus the target states are exactly the local minima.
\end{theorem}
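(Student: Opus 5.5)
The plan is to derive both cases from the Hessian identity, Theorem~\ref{thm:compact-maximum}, applied on the one-copy space $\mathcal H$ taken to be the relevant fixed sector. For Gaussian testing this is the parity-$\sigma$ subspace; for Slater testing it is $\Lambda^k\mathbb C^n$. Both are finite-dimensional, and we use $q=2$. For each case I need three ingredients:
\begin{itemize}
\item Hermitian operators $H_j$ on the sector with $\sum_jH_j^2=cI$.
\item An identification of the top eigenspace of $\mathcal C=\sum_jT_j^2$, restricted to the two-copy sector, with the accepted subspace.
\item The commutation $[P,T_j]=0$ together with the positivity condition~\eqref{eq:compact-positive}.
\end{itemize}
Positivity is immediate in both cases by the remark after Theorem~\ref{thm:compact-maximum}. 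Occupation-basis states are Gaussian targets and span each parity sector. Wedges of basis orbitals are Slater determinants and span $\Lambda^k\mathbb C^n$. Once the hypotheses hold, the theorem gives a strictly negative second derivative at every critical point with $R>0$. The targets are exactly the zero set by Lemmas~\ref{lem:fermions-exact} and~\ref{lem:fermions-slater-exact}, and $R\ge0$, so the targets are precisely the local minima.

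For the Gaussian case I take the Hermitian $\mathfrak{so}(2n)$ generators $H_{\mu\nu}=\tfrac i2\gamma_\mu\gamma_\nu$ for $\mu<\nu$. Each satisfies $H_{\mu\nu}^2=\tfrac14I$, so $\sum H_{\mu\nu}^2=\tfrac{n(2n-1)}4I$. These generators preserve parity. Setting $T_{\mu\nu}=H_{\mu\nu}\otimes I+I\otimes H_{\mu\nu}$, each $T_{\mu\nu}$ generates $U\otimes U$ for a Gaussian unitary $U$. Since Gaussian unitaries rotate both Majorana lists in $\Lambda$ by the same orthogonal matrix, $\Lambda$ commutes with every $T_{\mu\nu}$, and hence so does its kernel projector $P$.

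The key computation is the cross term
\[
\sum_{\mu<\nu}H_{\mu\nu}\otimes H_{\mu\nu}=-\tfrac14\sum_{\mu<\nu}\gamma_\mu\gamma_\nu\otimes\gamma_\mu\gamma_\nu .
\]
It should be expressed through $\Lambda^2=2n+\sum_{\mu\ne\nu}\gamma_\mu\gamma_\nu\otimes\gamma_\mu\gamma_\nu$. This yields $\mathcal C=\alpha I-\beta\Lambda^2$ with $\beta>0$, and I would check the sign of $\beta$ carefully. Because $\Lambda^2\ge0$, the top eigenspace of $\mathcal C$ is then $\ker\Lambda$ intersected with the two-copy sector, provided that intersection is nonzero. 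It is nonzero, since it contains $\ket\Omega^{\otimes2}$ (after a parity-changing $\gamma_1$ in the odd case). So the top eigenspace equals the accepted subspace within the sector, and the projector in Theorem~\ref{thm:compact-maximum} coincides with $P_{\GF,2}$ there. Conceptually this is the statement that the Cartan component of the tensor square has maximal Casimir, but the direct identity avoids needing that.

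For the Slater case I take a Hermitian basis of $\mathfrak u(n)$ built from $E_{ij}=a_i^\dagger a_j$, namely $E_{ii}$, $E_{ij}+E_{ji}$ and $i(E_{ij}-E_{ji})$ for $i<j$. Their squares sum to a multiple of $\sum_{ij}E_{ij}E_{ji}$. By the anticommutation relations this equals $N(n+1)-N^2$, which is constant on $\Lambda^k\mathbb C^n$. Orbital unitaries preserve the copy-transfer generators, so $[P_{\Sl,2},T_j]=0$.

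The main obstacle is the cross term $\sum_{ij}E_{ij}\otimes E_{ji}$, which must be written in terms of $J^2$ with the correct parity factors. Since $\Pi$ is the constant $(-1)^k$ on the sector, I expect
\[
\sum_{ij}a_i^\dagger a_j\otimes a_j^\dagger a_i=\pm\bigl(J_+J_-+\text{number terms}\bigr),
\]
where the number terms are fixed by $N_1=N_2=k$ on the relevant block. This would give $\mathcal C=\alpha'I-\beta'J^2$ with $\beta'>0$, up to terms in $J_z$ that are constant on each block. If $J_z$ terms survive, I would restrict to the $N_1=N_2=k$ block, which contains all tensor squares of the sector. The top eigenspace is then $\ker J^2$, again nonzero because it contains Slater squares. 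I would verify this identity and its sign first, on a small case such as $n=2$, $k=1$.
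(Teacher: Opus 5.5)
Your proposal follows the paper's proof: Theorem~\ref{thm:compact-maximum} with $q=2$ on the fixed sector, $H_{\mu\nu}=\tfrac i2\gamma_\mu\gamma_\nu$ giving $\mathcal C=n^2I-\tfrac14\Lambda^2$, and for Slaters a $\mathfrak u(n)$ basis giving $\mathcal C=2k(n+2-k)I-2J^2$ on $\Lambda^k\mathbb C^n\otimes\Lambda^k\mathbb C^n$. This holds because there $J_z=0$ and $J^2=J_+J_-=h^\dagger h=kI-\sum_{i,l}E_{il}\otimes E_{li}$ with $h=\sum_i a_i\otimes a_i^\dagger$, which is exactly the identity and sign you anticipated (the paper phrases it through $\ker h$). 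The one fix is normalization: the basis must be trace-orthonormal, i.e.\ $(E_{ij}+E_{ji})/\sqrt2$ and $i(E_{ij}-E_{ji})/\sqrt2$; without the $1/\sqrt2$ the cross term becomes $2\sum_{i,l}E_{il}\otimes E_{li}-\sum_i n_i\otimes n_i$, which is not $U(n)$-invariant, and then the top eigenspace of $\mathcal C$ need not be the singlet space.
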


\begin{proof}
We apply Theorem~\ref{thm:compact-maximum} with $q=2$.
For the Hermitian one-copy operators specified below, we use
\begin{align}
 T_j&=H_j\otimes I+I\otimes H_j, \nonumber\\
 \mathcal C&=\sum_jT_j^2.
\end{align}
We verify that $\sum_jH_j^2$ is a scalar multiple of the identity,
that $P$ commutes with each $T_j$, and that $P$ projects onto the
largest-eigenvalue eigenspace of $\mathcal C$.
Positivity of acceptance follows from the occupation-basis
targets, since some such $\ket g$ satisfies
\begin{align}
 p(\psi)&\ge|\ip g\psi|^4>0.
\end{align}
The variations are the normalized paths in
Eq.~\eqref{eq:compact-normalized-paths}, with
$\ket{v_j}=(H_j-\ip\psi{H_j\psi}I)\ket\psi$.

\emph{Fermionic Gaussians.}
On either parity sector, take
\begin{align}
 H_{a,b}&=\frac i2\gamma_a\gamma_b
 \qquad(a<b), \nonumber\\
 T_{a,b}&=H_{a,b}\otimes I+I\otimes H_{a,b}.
\end{align}
The Majorana anticommutation relations give
\begin{align}
 H_{a,b}^2&=I/4, \nonumber\\
 \sum_{a<b}H_{a,b}^2&=\frac{n(2n-1)}4I, \nonumber\\
 \Lambda^2
 &=2nI+2\sum_{a<b}
       \gamma_a\gamma_b\otimes\gamma_a\gamma_b.
\end{align}
Expanding the squares of $T_{a,b}$ therefore yields
\begin{align}
 \mathcal C
 &=\frac{n(2n-1)}2I
   -\frac12\sum_{a<b}
       \gamma_a\gamma_b\otimes\gamma_a\gamma_b \nonumber\\
 &=n^2I-\frac14\Lambda^2.
 \label{eq:compact-spin-casimir}
\end{align}
Its largest eigenvalue is $C_*=n^2$, attained on the
occupation-basis tensor squares, and its corresponding eigenspace
is $\ker\Lambda$. Simultaneous Gaussian transformations rotate
both Majorana lists by the same orthogonal matrix, preserving
$\Lambda$ and its kernel. Consequently $P_{\GF,2}$ commutes with
every $T_{a,b}$.
Equation~\eqref{eq:compact-hessian}, with
$\mu_{a,b}=\ip\psi{H_{a,b}\psi}$, now gives
\begin{align}
 &\sum_{a<b}\operatorname{Hess}_{S}R(\psi)
 [(H_{a,b}-\mu_{a,b}I)\psi,
  (H_{a,b}-\mu_{a,b}I)\psi] \nonumber\\
 &\qquad=-p(\psi)\|\Lambda\ket\psi^{\otimes2}\|_2^2.
 \label{eq:compact-fermion-strict}
\end{align}
The right side is strictly negative whenever $R>0$,
by Lemma~\ref{lem:fermions-exact}.

\emph{Slater determinants.}
Work on $H=\Lambda^k\mathbb C^n$ and put
\begin{align}
 E_{i,l}&=a_i^\dagger a_l, \nonumber\\
 h&=\sum_i a_i\otimes a_i^\dagger.
\end{align}
On $H\otimes H$, we have $J_z=0$ and $J_-=(-1)^{k-1}h$.
The angular-momentum commutation relations give
\begin{align*}
 \|J_+\ket v\|_2^2-\|J_-\ket v\|_2^2
 &=-2\ip v{J_zv}=0.
\end{align*}
Thus $h\ket v=0$ is equivalent to
$J_z\ket v=J_+\ket v=J_-\ket v=0$.
By Eq.~\eqref{eq:fermions-singlet-norm},
$P_{\Sl,2}$ on $H\otimes H$ therefore projects onto $\ker h$.

Choose a Hermitian basis $(b_j)$ of $n\times n$ matrices,
orthonormal for the trace inner product, and define
\begin{align}
 \Tr(b_jb_l)&=\delta_{j,l}, \nonumber\\
 H_j&=\sum_{i,l}(b_j)_{i,l}E_{i,l}.
\end{align}
These operators generate the same change of one-particle orbitals
on each copy. The matrix-basis completeness identity is
\begin{align}
 \sum_j(b_j)_{a,b}(b_j)_{c,d}&=\delta_{a,d}\delta_{b,c}.
\end{align}
With $n_i=a_i^\dagger a_i$, it gives
\begin{align}
 \sum_jH_j^2
 &=\sum_{i,l}E_{i,l}E_{l,i} \nonumber\\
 &=\sum_i n_i+\sum_{i\ne l}n_i(1-n_l) \nonumber\\
 &=N+N(n-N) \nonumber\\
 &=k(n+1-k)I, \nonumber\\
 \sum_jH_j\otimes H_j
 &=\sum_{i,l}E_{i,l}\otimes E_{l,i}.
\end{align}
The second equality uses the anticommutation relations, and the
fourth uses the fixed particle number. Applying the same
relations to $h$ gives
\begin{align}
 h^\dagger h
 &=\sum_{i,l}E_{i,l}\otimes(\delta_{i,l}I-E_{l,i})
 \nonumber\\
 &=kI-\sum_{i,l}E_{i,l}\otimes E_{l,i}.
\end{align}
Consequently,
\begin{align}
 \mathcal C
 &=\sum_j(H_j\otimes I+I\otimes H_j)^2 \nonumber\\
 &=2k(n+1-k)I
   +2\sum_{i,l}E_{i,l}\otimes E_{l,i} \nonumber\\
 &=2k(n+2-k)I-2h^\dagger h.
 \label{eq:compact-passive-casimir}
\end{align}
Applying the same orbital unitary in both copies preserves $h$,
and therefore preserves its kernel. Thus $P_{\Sl,2}$ commutes
with every $T_j$. Occupation-basis determinants span $H$, and
their tensor squares lie in $\ker h$, so the largest eigenvalue
is $C_*=2k(n+2-k)$.
Equation~\eqref{eq:compact-hessian} becomes
\begin{align}
 \sum_j\operatorname{Hess}_{S}R(\psi)[v_j,v_j]
 &=-8p(\psi)\|h\ket\psi^{\otimes2}\|_2^2<0
\end{align}
at every critical point with $R(\psi)>0$.

In both cases, at least one smooth path of normalized states has
negative second derivative. Every local minimum is critical, so
none can have positive rejection.
Lemmas~\ref{lem:fermions-exact} and
\ref{lem:fermions-slater-exact} identify the zero-rejection states
as the respective targets, completing the proof.
\end{proof}

The same conclusion holds without a sector promise. We use the
fact that both tests reject tensor products whose two factors
belong to different sectors, and examine changes in the weights
of the input's sector components.

\begin{corollary}[Unpromised fermionic landscape]
\label{cor:fermion-unpromised-landscape}
For either fermionic test on the full Fock space, every critical
point with positive rejection admits a smooth path of normalized
states with strictly negative second derivative of rejection.
Its only local minima are target states.
\end{corollary}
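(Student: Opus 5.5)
The plan is to reduce to the fixed-sector statement of Theorem~\ref{thm:fermion-hessian} by decomposing the input into sectors and analysing separately the directions that stay inside a sector and the directions that transfer weight between sectors. The sectors are parity sectors $\sigma\in\{+,-\}$ for the Gaussian test and particle-number sectors $k$ for the Slater test. Write $\ket\psi=\sum_\sigma \sqrt{w_\sigma}\,\e^{i\phi_\sigma}\ket{\psi_\sigma}$ with normalized $\ket{\psi_\sigma}$ in sector $\sigma$ and $\sum_\sigma w_\sigma=1$.

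\emph{Step 1 (block structure).} The projector $P$ commutes with the sector operators on each copy. For $P_{\GF,2}$ this is the statement after Lemma~\ref{lem:fermions-exact} that $P_{\GF,2}$ commutes with $\Pi_1$ and $\Pi_2$. For $P_{\Sl,2}$ it holds because $J_z,J_\pm$ preserve $N_1+N_2$ and $J_z$ is diagonal in $(N_1,N_2)$. Hence $P$ is block diagonal in the sector labels $(\sigma,\tau)$ of the two copies. Next I would show that $P$ vanishes on every mixed block $\sigma\neq\tau$.
\begin{itemize}
\item For Slaters this is immediate: on the block with particle numbers $(k,l)$ we have $J_z=(k-l)/2\neq0$, so the block contains no singlets.
\item For Gaussians I would use the Casimir identity $\mathcal C=n^2I-\frac14\Lambda^2$ of Eq.~\eqref{eq:compact-spin-casimir}. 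Here $\mathcal C$ is the quadratic Casimir of the diagonal $\mathfrak{so}(2n)$ action, so $\ker\Lambda$ is the eigenspace $\mathcal C=n^2$, which is the maximal possible value. The even-times-odd block carries $S_+\otimes S_-$, whose highest weight is $(1,\dots,1,0)$. The Casimir there equals $n^2-1$, which is strictly less than the value $n^2$ attained only by $(1,\dots,1,\pm1)$ in $S_\pm\otimes S_\pm$. Since every weight of $S_+\otimes S_-$ is dominated by $(1,\dots,1,0)$, the Casimir is at most $n^2-1$ on the whole block, so $\Lambda$ has trivial kernel there.
\end{itemize}
Granting this, expanding $\ket\psi^{\otimes2}$ in blocks gives
\begin{align}
p(\psi)=\sum_\sigma w_\sigma^2\,p_\sigma(\psi_\sigma),
\end{align}
where $p_\sigma$ is the acceptance probability of the same test restricted to sector $\sigma$.

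\emph{Step 2 (intra-sector directions).} Let $\ket\psi$ be a critical point with $R>0$. Perturbing a single $\ket{\psi_\sigma}$ along normalized paths inside its sector, with all weights fixed, changes $p$ by $w_\sigma^2\,p_\sigma$. Criticality therefore makes each $\ket{\psi_\sigma}$ with $w_\sigma>0$ a critical point of $R_\sigma=1-p_\sigma$.

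If some such $\ket{\psi_\sigma}$ has $R_\sigma>0$, Theorem~\ref{thm:fermion-hessian} supplies a normalized path $\ket{\psi_\sigma(s)}$ along which $R_\sigma$ has strictly negative second derivative. Substituting this path into the decomposition yields a normalized path for $\ket\psi$. Along it the second derivative of $R$ is $w_\sigma^2$ times that of $R_\sigma$, hence negative.

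\emph{Step 3 (weight transfer).} Otherwise every occupied sector is a target and $p=\sum_\sigma w_\sigma^2$. Since $R>0$, at least two weights are positive. Consider the smooth normalized path that keeps the $\ket{\psi_\sigma}$ and phases fixed and moves the weights as $w_\sigma+t$, $w_\tau-t$ for two occupied sectors $\sigma,\tau$. The derivative of $p$ is $2(w_\sigma-w_\tau)$, so criticality forces $w_\sigma=w_\tau$. The second derivative of $p$ is $+4$, so $R$ has second derivative $-4<0$.

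Together, Steps 2 and 3 show that every critical point with positive rejection is not a local minimum. Hence the only local minima are states with $R=0$, and Lemmas~\ref{lem:fermions-exact} and \ref{lem:fermions-slater-exact} identify these as exactly the targets.

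The main obstacle is the Gaussian half of Step 1, namely that $\ker\Lambda$ meets the even-times-odd block trivially. I would settle it by the highest-weight computation above. As a fallback, I would show directly from the canonical form that $\Lambda^2$ is bounded below by a positive constant on that block.
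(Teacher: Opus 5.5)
Your proof is correct, but it differs from the paper's in two places.

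First, for the Gaussian mixed-parity block the paper avoids representation theory. Since the $M_\mu=\gamma_\mu\otimes\gamma_\mu$ are commuting self-adjoint involutions, $\e^{i\pi\Lambda/2}=\prod_\mu iM_\mu=\Pi_1\Pi_2$. Hence every vector of $\ker\Lambda$ satisfies $\Pi_1\Pi_2=+1$, and unequal parities are rejected. Your Casimir bound reaches the same conclusion, with the same gap $\Lambda^2\ge4$ on that block. It is correct: the irreducible components of $S_+\otimes S_-$ are the $\Lambda^k\mathbb C^{2n}$ with $k\le n-1$ and $k\equiv n-1\pmod 2$, on which the Casimir equals $k(2n-k)\le n^2-1$. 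However, it imports external facts, namely this decomposition and the strict monotonicity of the Casimir in the dominance order. The one-line identity makes them unnecessary, and your suggested fallback would essentially rediscover it.

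Second, you split the cases differently. Whenever at least two sectors are occupied, the paper uses the weight-transfer path directly. It gets second derivative $-2(p_i+p_j)<0$ from positivity of every sector acceptance $p_j$, which holds because occupation-basis targets span each sector. It invokes Theorem~\ref{thm:fermion-hessian} only when a single sector is present. You instead first show that criticality descends to each sector component. You then apply the fixed-sector theorem whenever some occupied component is a non-target, and use weight transfer only when all components are targets, where $p_\sigma=1$.

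Both routes are valid. The paper's is shorter and needs no descent argument. Yours uses weight transfer only in the simplest situation and never needs positivity of $p_\sigma$ for non-target components.
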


\begin{proof}
We first check how the tests act on different sectors.
For the Gaussian test, the operators
$M_\mu=\gamma_\mu\otimes\gamma_\mu$ are commuting self-adjoint
involutions. Since $\Lambda=\sum_{\mu=1}^{2n}M_\mu$,
\begin{align*}
 \e^{i\pi\Lambda/2}
 &=\prod_{\mu=1}^{2n}(iM_\mu)=\Pi_1\Pi_2.
\end{align*}
Here we used $\Pi=(-i)^n\gamma_1\cdots\gamma_{2n}$ for one-copy
parity. Every vector in $\ker\Lambda$ is fixed by this exponential,
so it lies in the $+1$ eigenspace of $\Pi_1\Pi_2$.
Thus $P_{\GF,2}$ rejects unequal parities.
Conjugating $\Lambda$ by the parity of either copy changes its
sign and preserves its kernel, so $P_{\GF,2}$ also preserves
each equal-parity block.

For Slaters, the accepted subspace satisfies $J_z=0$, so unequal
particle numbers are rejected. Since $J^2$ commutes with total
particle number, its kernel projector preserves each equal-number
block.

Decompose the input into its nonzero sector components, using
parity sectors for Gaussians and particle-number sectors for Slaters:
\begin{align*}
 \ket\psi&=\sum_j\sqrt{w_j}\ket{\psi_j},\\
 w_j&>0,\qquad \sum_jw_j=1,\qquad \|\ket{\psi_j}\|_2=1.
\end{align*}
Put $r_j=R(\psi_j)$ and $p_j=1-r_j$.
Unequal-sector tensor components are rejected, and accepted
components from different equal-sector blocks are orthogonal.
Therefore
\begin{align}
 R(\psi)&=1-\sum_jw_j^2+\sum_jw_j^2r_j.
 \label{eq:compact-sector-mixing}
\end{align}
Equivalently, $p(\psi)=\sum_jw_j^2p_j$.
Each $p_j$ is positive because occupation-basis targets span
that sector.

If two weights $w_i,w_j$ are positive, the normalized path
\begin{align}
 \ket{\psi(s)}
 &=\sqrt{w_i+s}\ket{\psi_i}
   +\sqrt{w_j-s}\ket{\psi_j}
   +\sum_{l\ne i,j}\sqrt{w_l}\ket{\psi_l}
\end{align}
is defined for sufficiently small $|s|$.
The sector states remain fixed, so its acceptance probability
satisfies
\begin{align}
 p(\psi(s))
 &=(w_i+s)^2p_i+(w_j-s)^2p_j
   +\sum_{l\ne i,j}w_l^2p_l, \nonumber\\
 \left.\frac{\dd^2}{\dd s^2}p(\psi(s))\right|_{s=0}
 &=2(p_i+p_j)>0, \nonumber\\
 \left.\frac{\dd^2}{\dd s^2}R(\psi(s))\right|_{s=0}
 &=-2(p_i+p_j)<0.
\end{align}
At a critical point this is the Hessian evaluated on the initial
derivative of the path. If only one sector is present, the point
is also critical under variations restricted to that sector,
and Theorem~\ref{thm:fermion-hessian} supplies a negative second
derivative. Finally, Lemmas~\ref{lem:fermions-exact} and
\ref{lem:fermions-slater-exact} identify the zero-rejection states,
which are exactly the local minima.
\end{proof}

We next prove the global distance bounds using the gradient of
rejection and the flow from Theorem~\ref{thm:restoration}.

\subsection{Gradient-flow proof}
\label{subsec:fermion-second-proof}

We now give a proof of Theorem~\ref{thm:fermions-calibration} based on the gradient flow, in Theorem~\ref{thm:restoration}, of the two-copy rejection probability. The guiding idea is that a state with sufficiently small rejection can be continuously deformed into the target family while controlling the total distance travelled. Theorem~\ref{thm:restoration} makes this idea quantitative once a suitable lower bound on the norm of the rejection gradient is available. To establish this bound, we combine the expression for the gradient in Lemma~\ref{lem:first-variation} with Theorem~\ref{thm:overlap}, which relates it to the spectrum of the average of three pair projectors acting on pairs $12$, $13$, and $23$. The third copy is an auxiliary tool in the proof. The measured test still acts on two copies.

In Section~\ref{subsec:Rotations-of-the-copy-labels}, we identify the symmetry underlying these projectors. Lemma~\ref{lem:normal-fermion-fixed-spaces} shows that the Gaussian and Slater accepted spaces are precisely the spaces invariant under $SO(q)$ and $SU(q)$ rotations of the copy labels, respectively.
Section~\ref{subsec:Pairwise-rotations-within} establishes the required $SU(3)$ estimate. Lemma~\ref{lem:fermions-su3} bounds the spectrum of the average of the projectors onto states invariant under each of the three pairwise $SU(2)$ subgroups.

We then combine the two spectral estimates with Theorem~\ref{thm:overlap} to obtain the gradient inequalities for both tests. Applying Theorem~\ref{thm:restoration} converts these inequalities into the distance bounds of Theorem~\ref{thm:fermions-calibration}. We also show that the argument applies within fixed-parity sectors for Gaussian states and fixed-particle-number sectors for Slater determinants. Finally, we record an explicit upper bound on the total length of the gradient-flow trajectory from a state with sufficiently small rejection to the corresponding target family.

\subsubsection{Rotations of the copy labels}\label{subsec:Rotations-of-the-copy-labels}

\begin{lemma}[Rotation-invariant fermionic fixed spaces]
\label{lem:normal-fermion-fixed-spaces}
On the ordinary Hilbert-space tensor product, $P_{\GF,q}$ projects onto the states unchanged by all $SO(q)$ rotations of the copy labels, and $P_{\Sl,q}$ projects onto those unchanged by all $SU(q)$ rotations. The fermionic operators implementing these rotations include the parity factors defined below.
\end{lemma}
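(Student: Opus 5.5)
We would prove both identifications by working on the $q$-copy space as a single fermionic system of $nq$ modes with Majoranas $\gamma_{\mu,a}$. Jordan--Wigner-type parity factors make Majoranas of different copies anticommute. Concretely, we set $\widehat\gamma_{\mu,a}=\Pi^{\otimes(a-1)}\otimes\gamma_\mu\otimes I^{\otimes(q-a)}$, and similarly $c_{j,a}$ for the creation/annihilation operators.

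\emph{Gaussian case.} The copy rotations are the Gaussian unitaries $\Gamma(O)$, $O\in SO(q)$, acting on the copy index by $\widehat\gamma_{\mu,a}\mapsto\sum_b O_{b,a}\widehat\gamma_{\mu,b}$, identically for every $\mu$. Since $SO(q)$ is connected, this lifts to $\mathrm{Spin}(q)$. The Lie algebra is spanned by
\begin{align*}
 J_{a,b}=\tfrac{i}{4}\sum_\mu\bigl(\widehat\gamma_{\mu,a}\widehat\gamma_{\mu,b}-\widehat\gamma_{\mu,b}\widehat\gamma_{\mu,a}\bigr)=\tfrac i2\sum_\mu\widehat\gamma_{\mu,a}\widehat\gamma_{\mu,b},
\end{align*}
for $a<b$. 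A vector is invariant under the connected group exactly when it lies in $\bigcap_{a<b}\ker J_{a,b}$. (If the spin lift acts with a sign, one must argue that the kernel is still fixed; we believe this holds because the fixed space of the one-parameter subgroups equals the joint kernel of the generators.)

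The key computation is to rewrite $J_{a,b}$ in the ordinary tensor product. Moving $\Pi$ factors past $\gamma_\mu$ uses $\Pi\gamma_\mu=-\gamma_\mu\Pi$. For $a<b$ this gives $\widehat\gamma_{\mu,a}\widehat\gamma_{\mu,b}=\pm\,\Pi_a\,\gamma_{\mu,a}\gamma_{\mu,b}$ up to a $\mu$-independent unitary. The parities of the intermediate copies square away, leaving a factor $\Pi_a$ that anticommutes with $\gamma_{\mu,a}$. Hence $J_{a,b}=-\tfrac12 U_{a}\Lambda_{a,b}$ with $U_a$ unitary and independent of $\mu$, matching the two-copy identity $J_{1,2}=-\tfrac12\Pi_1\Lambda$ quoted earlier. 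Therefore $\ker J_{a,b}=\ker\Lambda_{a,b}$, and the joint fixed space equals the intersection in Definition~\ref{def:normal-projectors}(i). The Haar average over the group is the projector onto this space, as in Eq.~\eqref{eq:math-haar}.

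\emph{Slater case.} Here the group is $SU(q)$, acting by $c^\dagger_{j,a}\mapsto\sum_b U_{b,a}c^\dagger_{j,b}$, again identically for every $j$. Its generators are the transfer operators $E_{a,b}=\sum_j c^\dagger_{j,a}c_{j,b}$ for $a\neq b$, together with the traceless combinations of $N_a-N_b$. Invariance under connected $SU(q)$ is equivalent to annihilation by all these generators. The algebra is generated by the pairwise $SU(2)$ subalgebras $\{E_{a,b},E_{b,a},\tfrac12(N_a-N_b)\}$, since root vectors for all pairs span $\mathfrak{sl}(q)$. So the fixed space is the intersection over pairs of the kernels of the pairwise copies of $J_z,J_\pm$. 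Each pairwise copy coincides with the two-copy operators of Eq.~\eqref{eq:fermions-slater-generators} embedded on copies $a,b$; the parity strings on the intermediate copies again contribute only a unitary factor that does not change the kernel. By Eq.~\eqref{eq:fermions-singlet-norm}, each such kernel is $\ker J^2_{a,b}$, giving Definition~\ref{def:normal-projectors}(ii).

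\emph{Main obstacle.} The delicate step is the sign bookkeeping. The intermediate-copy parities must only multiply the embedded generators by operators that commute with them and are invertible, so that the kernels agree. We would check this directly by commuting $\Pi_c$ for $a<c<b$ through the bilinears, which are even in copy $c$.
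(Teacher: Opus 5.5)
Your proposal is correct and follows essentially the same route as the paper. Both use Jordan--Wigner-dressed copy operators, the identities $J_{a,b}\propto(\Pi_a\Pi_{a+1}\cdots\Pi_{b-1})\Lambda_{a,b}$ and $\mathsf E_{a,b}=(\Pi_{a+1}\cdots\Pi_{b-1})J_+^{ab}$, invertibility of the parity string (plus commutation with the copy-$a,b$ operators, needed for the adjoints) to equate kernels, and intersection over pairs.

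Two small corrections. First, it is the parities of copies $1,\ldots,a-1$ that cancel; those of the intermediate copies $a+1,\ldots,b-1$ survive in the unitary prefactor. Second, your $SO(q)$-versus-$\mathrm{Spin}(q)$ hedge is settled in the paper by a direct computation. With the anti-Hermitian generator $J_{a,b}=\tfrac12\sum_\mu c_{\mu,a}c_{\mu,b}$, one has $\e^{2\pi J_{a,b}}=\prod_\mu\e^{\pi c_{\mu,a}c_{\mu,b}}=(-1)^{2n}I=I$. Hence the copy rotations form a genuine $SO(q)$ representation, which the later Haar averages over $SO(q)$ rely on.
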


\begin{proof}
Let $\Pi_a$ be the parity operator in copy $a$, and let $\gamma_{\mu,a}$ apply Majorana operator $\mu$ to that copy and the identity to all other copies. Define
\begin{align}
 c_{\mu,a}&=\Pi_1\cdots\Pi_{a-1}\gamma_{\mu,a}.
\end{align}
An empty product is the identity. These parity factors, often called Jordan--Wigner strings, ensure that the operators anticommute between different copies as well as within each copy
\begin{align}
 \{c_{\mu,a},c_{\nu,b}\}&=2\delta_{\mu,\nu}\delta_{a,b}I.
\end{align}
For $a<b$, the rotation generators satisfy
\begin{align}
 J_{a,b}&=\frac12\sum_{\mu=1}^{2n}c_{\mu,a}c_{\mu,b} \nonumber \\
 &=-\frac12(\Pi_a\Pi_{a+1}\cdots\Pi_{b-1})\Lambda_{a,b}.
 \label{eq:normal-fermion-generators}
\end{align}
Here $\Lambda_{a,b}$ is the ordinary placement of $\Lambda$ on copies $a,b$, as defined in Eq.~\eqref{eq:higher-copy-gaussian-constraints}. The minus sign comes from moving $\gamma_{\mu,a}$ past $\Pi_a$. Since the product of parity operators is invertible,
\begin{align}
 \ker J_{a,b}&=\ker\Lambda_{a,b}.
\end{align}
The unitaries generated by the $J_{a,b}$ rotate the copy indices of each Majorana coordinate. A full rotation in one copy plane acts as
\begin{align}
 \e^{2\pi J_{a,b}}
 &=\prod_{\mu=1}^{2n}\e^{\pi c_{\mu,a}c_{\mu,b}} \nonumber \\
 &=(-1)^{2n}I=I.
\end{align}
The factors commute, and 
each $c_{\mu,a}c_{\mu,b}$ squares to $-I$. Thus these unitaries give a representation of $SO(q)$. A state is unchanged by every such rotation precisely when all the generators annihilate it. The intersection of their kernels is the accepted space of $P_{\GF,q}$ from Definition~\ref{def:normal-projectors}.

For Slaters, let $a_{i,a}$ annihilate a particle in orbital $i$ of copy $a$, acting as the identity on the other copies. Define
\begin{align}
 \widetilde a_{i,a}&=\Pi_1\cdots\Pi_{a-1}a_{i,a}, \nonumber \\
 \mathsf E_{a,b}&=\sum_i\widetilde a_{i,a}^{\dagger}\widetilde a_{i,b}, \nonumber \\
 \mathsf E_{a,a}&=N_a.
\end{align}
The operator $N_a$ counts particles in copy $a$. The combinations $\mathsf E_{a,b}-\mathsf E_{b,a}$, $i(\mathsf E_{a,b}+\mathsf E_{b,a})$, and $i(N_a-N_b)$ generate particle-number-preserving $SU(q)$ transformations of the copy labels. For $a<b$, the ordinary placement of the two-copy transfer operator is
\begin{align}
 J_+^{ab}&=\sum_i a_{i,a}^{\dagger}\Pi_a a_{i,b}.
\end{align}
Let $P_{\Sl,2}^{ab}$ be the corresponding placement of the singlet projector from Eq.~\eqref{eq:fermions-slater-test}. Direct substitution gives
\begin{align}
 \mathsf E_{a,b}&=(\Pi_{a+1}\cdots\Pi_{b-1})J_+^{ab}, \nonumber \\
 \mathsf E_{b,a}&=(\Pi_{a+1}\cdots\Pi_{b-1})(J_+^{ab})^\dagger.
\end{align}
The intervening parity product is invertible and commutes with the operators on copies $a,b$. Therefore the states unchanged by the $SU(2)$ rotations mixing these two copies form the space
\begin{align}
 \ker(N_a-N_b)\cap\ker\mathsf E_{a,b}\cap\ker\mathsf E_{b,a}
 &=\{\ket v:P_{\Sl,2}^{ab}\ket v=\ket v\}.
\end{align}
Intersecting over all pairs gives the states unchanged by $SU(q)$. This is the accepted space of $P_{\Sl,q}$ in Definition~\ref{def:normal-projectors}.
\end{proof}

For $q=3$, the Gaussian pair projectors thus project onto states unchanged by rotations in the $12$, $23$, and $13$ coordinate planes. These are rotations about three orthogonal axes. In particular, Eq.~\eqref{eq:normal-fermion-generators} gives
\begin{align}
 J_{1,2}&=-\tfrac12\Pi_1\Lambda_{1,2},
 \label{eq:fermions-kernel-placement}\\
 J_{2,3}&=-\tfrac12\Pi_2\Lambda_{2,3}, \nonumber \\
 J_{1,3}&=-\tfrac12\Pi_1\Pi_2\Lambda_{1,3}. 
\end{align}
Thus the identification also holds for the nonadjacent pair $13$. Lemma~\ref{lem:so3}, specifically Eq.~\eqref{eq:fermions-so3-bound}, bounds the spectrum of their average by $\{1\}\cup[0,7/12]$. For Slaters we need the analogous estimate for the three $SU(2)$ subgroups of $SU(3)$ that mix pairs of copy labels.

\subsubsection{Pairwise rotations within \texorpdfstring{$SU(3)$}{SU(3)}}\label{subsec:Pairwise-rotations-within}

The subgroup labeled $12$ mixes the first two copy labels and leaves the third unchanged. The labels $13$ and $23$ have the corresponding meaning. We bound the spectrum of the average of their invariant-state projectors.

\begin{lemma}[Spectral bound for pairwise rotations in $SU(3)$]
\label{lem:fermions-su3}\label{lem:su3}
In any finite-dimensional unitary representation of $SU(3)$, let $P_{1,2},P_{1,3},P_{2,3}$ project onto the states unchanged by the three pairwise $SU(2)$ subgroups. Then
\begin{align}
 \Spec\!\left(\frac{P_{1,2}+P_{1,3}+P_{2,3}}3\right)
 &\subseteq\{1\}\cup[0,5/9].
 \label{eq:fermions-su3-bound}
\end{align}
Here $\Spec$ denotes the set of eigenvalues.
\end{lemma}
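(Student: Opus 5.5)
The plan is to imitate the proof of Lemma~\ref{lem:so3}. Reduce to a single irreducible representation $V_{(p,r)}$ of $SU(3)$, where the three invariant spaces have an explicit basis. Then compute the Gram matrix of these bases and show that every nonzero eigenvalue other than $1$ is at most $5/9$. As in Lemma~\ref{lem:so3}, the average $T=\frac13(P_{1,2}+P_{1,3}+P_{2,3})$ is block diagonal over isotypic components, so it suffices to treat one irrep. In the trivial irrep the spectrum is $\{1\}$. In a nontrivial irrep the common invariant space is zero, because the three pairwise $SU(2)$ subgroups generate $SU(3)$. So there I must show $\Spec(T)\subseteq[0,5/9]$.

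\textbf{Step 1: describe the ranges.} By the branching rule $SU(3)\downarrow U(2)$, the $SU(2)_{12}$-singlets in $V_{(p,r)}$ form a space of dimension $\min(p,r)+1$. Its basis vectors $s^{12}_k$ carry distinct eigenvalues of the commuting hypercharge generator $Y_{12}=\diag(1,1,-2)$, one per value. The basis vectors for the other two pairs are images of these under Weyl permutations of the copy labels, since permuting the copy labels conjugates the pairwise subgroups into one another. I would realize $V_{(p,r)}$ concretely as harmonic polynomials in $z\in\mathbb C^3$ and $\bar z$ of bidegree $(p,r)$. In that model the $SU(2)_{12}$-singlets are, up to normalization, the harmonic projections of $z_3^{\,p-k}\bar z_3^{\,r-k}(|z_1|^2+|z_2|^2)^k$ (the relevant $k$ after harmonic projection). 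This is the $SU(3)$ analogue of the zonal states $P_\ell(u\cdot\hat{\mathbf r})$ used for $SO(3)$.

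\textbf{Step 2: the cross Gram blocks are diagonal.} A torus element $\diag(\e^{ia},\e^{ia},\e^{-2ia})$ commutes with $SU(2)_{12}$ and normalizes $SU(2)_{13}$. Hence conjugation by it preserves $P_{1,3}$, and the operator $P_{1,2}P_{1,3}P_{1,2}$ commutes with $Y_{12}$ on $\Ran P_{1,2}$. That operator is therefore diagonal in the charge basis. Using such torus and Weyl symmetries, I expect every cross Gram block $\ip{s^{12}_k}{s^{13}_{k'}}$ to reduce to a family of scalar spherical functions $c_k(p,r)$. These should be computable as integrals, in analogy with the Legendre integral representation~\eqref{eq:fermions-legendre-integral}, or as Jacobi-polynomial values.

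The full $3(\min(p,r)+1)$-dimensional Gram matrix is then invariant under the $S_3$ action permuting the three pairs. Its nonzero spectrum, which coincides with that of $3T$, splits into the $S_3$-isotypic pieces, exactly as the vector $(1,1,1)^{\mathsf T}$ and its complement did in Lemma~\ref{lem:so3}. A sanity check is the adjoint irrep $(1,1)$. There the three singlets are $\diag(1,1,-2)$ and its permutations, with pairwise overlaps $-1/2$, so the eigenvalues of $T$ are $0$ and $1/2$.

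\textbf{Main obstacle.} The hard part is bounding the resulting eigenvalues, of the form $\frac13(1+2c)$ and $\frac13(1-c)$ or their block analogues, uniformly over all $(p,r)$. I would first try to prove monotone decay of $|c_k(p,r)|$ along each family, as the ratio \eqref{eq:so3-overlap-magnitude-ratio} did for Legendre values. The bound would then follow from a few small irreps, presumably $(2,0)$, $(1,1)$ and $(2,2)$, with $5/9$ attained at the extremal one.

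If the exact blocks are intractable, I would use the cruder Friedrichs-angle estimate. Write $y_i=P_ix$ for an eigenvector $x$ with eigenvalue $t$. Then $9t^2\|x\|^2=\sum_i\|y_i\|^2+2\sum_{i<j}\Re\ip{y_i}{y_j}$ and $3t\|x\|^2=\sum_i\|y_i\|^2$. These give $t\le(1+2c)/3$, where $c$ is the largest positive pairwise cosine between the ranges. The target $5/9$ then needs $c\le1/3$. This controls positive cosines only, since negative ones, like $-1/2$ in the adjoint irrep, only help.
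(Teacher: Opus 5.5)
Your overall frame is right: reduce to the irreducibles $\mathcal H_{p,r}$, take the vectors fixed by each pairwise $SU(2)$, and diagonalize the $S_3$-symmetric Gram matrix as in Lemma~\ref{lem:so3}. However, Step~1 is wrong, and the decisive computation is missing.

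In an irreducible $V_{(p,r)}$ the $SU(2)_{12}$-fixed space is one-dimensional, not $(\min(p,r)+1)$-dimensional. Gelfand--Tsetlin interlacing $p+r\ge m_{12}\ge r\ge m_{22}\ge0$ forces $m_{12}=m_{22}=r$ for a singlet. In the harmonic model, a fixed vector has the form $h_3=\sum_sA_sz_3^{p-s}\bar z_3^{r-s}y^s$ with $y=|z_1|^2+|z_2|^2$. The condition $Dh_3=0$ gives $(s+1)(s+2)A_{s+1}+(p-s)(r-s)A_s=0$, which determines every coefficient from $A_0$. The $\min(p,r)+1$ polynomials you list span the fixed vectors of the whole bidegree-$(p,r)$ polynomial space, which has one in each summand $L^s\mathcal H_{p-s,r-s}$. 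Their projections onto $\mathcal H_{p,r}$ are all proportional. They also carry a single $Y_{12}$-charge: $\diag(\e^{ia},\e^{ia},\e^{-2ia})$ multiplies each of them by the same phase, depending only on $p-r$. So there is no charge basis. Your own adjoint check, with one singlet per pair, already contradicts $\min(1,1)+1=2$. Hence the Gram matrix is $3\times3$, and the block structure of Step~2 is vacuous.

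Everything then rests on one number per irrep: the overlap $c(p,r)$ of normalized fixed vectors for two different pairs. You leave this as the ``main obstacle'' with a conjectured monotone decay. Two ideas settle it.

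First, $\diag(\e^{i\theta_1},\e^{i\theta_2},\e^{i\theta_3})$ with $\sum_a\theta_a=0$ multiplies the fixed vector $h_a$ of the pair omitting $a$ by $\e^{i(p-r)\theta_a}$. For $p\ne r$ these characters can be separated, so the three fixed vectors are orthogonal and the only nonzero eigenvalue of $T$ is $1/3$.

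Second, for $p=r=j$, normalize $h_a(e_a)=1$. Because $P_{1,2}$ averages over a subgroup fixing $e_3$, we have $(P_{1,2}h_2)(e_3)=h_2(e_3)$. With equal norms this gives $\ip{h_3}{h_2}/\|h_3\|_2^2=h_2(e_3)=A_j=(-1)^j/(j+1)$ by the recurrence. The eigenvalues $\frac13\bigl(1+2(-1)^j/(j+1)\bigr)$ and $\frac13\bigl(1-(-1)^j/(j+1)\bigr)$ are then at most $5/9$, attained at $j=2$.

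Your Friedrichs-angle fallback does not replace this computation. It still needs these values of $c$, and it cannot use their sign. The $y_i$ are arbitrary complex multiples of the fixed vectors, so only $|\Re\ip{y_i}{y_j}|\le|c|\,\|y_i\|_2\|y_j\|_2$ is available. In the adjoint irrep $|c|=1/2$, and the estimate gives only $t\le2/3$.
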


\begin{proof}
A finite-dimensional unitary representation decomposes into orthogonal irreducible subspaces, each containing no smaller nonzero subspace preserved by all group transformations. We first describe these subspaces by polynomials. We then find the state fixed by each pairwise subgroup and compute the overlaps between the three states.

For nonnegative integers $p,r$, let $\mathbb C[z,\bar z]_{p,r}$ be the polynomials of degree $p$ in $z_1,z_2,z_3$ and degree $r$ in their complex conjugates. Set
\begin{align}
 D&=\sum_{i=1}^3\partial_{z_i}\partial_{\bar z_i}, \nonumber \\
 \mathcal H_{p,r}&=\{f\in\mathbb C[z,\bar z]_{p,r}:Df=0\}.
\end{align}
Polynomials satisfying $Df=0$ are called \emph{harmonic}. The group action is
\begin{align}
 (U\cdot f)(z,\bar z)&=f(U^{\mathsf T}z,\bar U^{\mathsf T}\bar z),
 \qquad U\in SU(3).
\end{align}
The inner product is the integral of $\overline f g$ over the unit sphere in $\mathbb C^3$, with uniform measure. It is preserved by the group action.

For completeness, these spaces provide all the irreducible representations. The raising operators, which move a coordinate label towards the first coordinate, are
\begin{align}
 E_{1,2}&=z_1\partial_{z_2}-\bar z_2\partial_{\bar z_1}, \nonumber \\
 E_{2,3}&=z_2\partial_{z_3}-\bar z_3\partial_{\bar z_2}, \nonumber \\
 E_{1,3}&=[E_{1,2},E_{2,3}], \nonumber \\
 L&=\sum_i z_i\bar z_i.
\end{align}
A polynomial annihilated by $E_{1,2}$ and $E_{2,3}$ is called a \emph{highest weight vector}. All three operators preserve $z_1,\bar z_3,L$. Where $z_1\bar z_3\ne0$, their exponentials can successively set $z_2$, $\bar z_2$, and $z_3$ to zero, using $E_{1,2}$, $E_{2,3}$, and $E_{1,3}$, respectively. They transform the coordinates to
\begin{align}
 (z,\bar z)&\longmapsto
 \left(z_1,0,0;\frac{L}{z_1},0,\bar z_3\right).
\end{align}
For this polynomial calculation, the barred and unbarred coordinates are treated as independent variables. A polynomial annihilated by the raising operators is constant under these transformations and therefore can be expressed in terms of $z_1,\bar z_3,L$, initially allowing negative powers of $z_1$ and $\bar z_3$. Since the original expression is a polynomial, these negative powers must cancel: on $z_1=0$, $\bar z_3$ and $L$ remain independent, and on $\bar z_3=0$, $z_1$ and $L$ remain independent. Thus the common polynomial kernel is $\mathbb C[z_1,\bar z_3,L]$.

At degrees $(p,r)$, this kernel has basis
\begin{align}
 &z_1^{p-s}\bar z_3^{r-s}L^s,
 \qquad 0\le s\le\min(p,r), \nonumber \\
 D(z_1^{p-s}\bar z_3^{r-s}L^s)
 &=s(p+r-s+2)z_1^{p-s}\bar z_3^{r-s}L^{s-1}.
\end{align}
The images with $s>0$ are linearly independent and have nonzero coefficients. Hence only the $s=0$ term is harmonic: the highest weight vectors in $\mathcal H_{p,r}$ are multiples of $z_1^p\bar z_3^r$. Every irreducible component has a highest weight vector, so this one-dimensional space implies that $\mathcal H_{p,r}$ is irreducible. Conversely, the angular-momentum ladders for the two adjacent $SU(2)$ subgroups imply that any irreducible representation has nonnegative integer highest weight differences $p,r$. These differences specify the representation: successive lowering operators span it, and their inner products are determined by commuting the adjoint raising operators back to the highest weight vector. Thus every irreducible representation is one of the spaces $\mathcal H_{p,r}$.

We next find the state unchanged by the subgroup acting on coordinates $1,2$. This subgroup can map any vector in $\mathbb C^2$ to any other vector with the same norm. An invariant polynomial therefore depends on those coordinates only through
\begin{align}
 y&=|z_1|^2+|z_2|^2.
\end{align}
At degrees $(p,r)$ it therefore has the form
\begin{align}
 h_3(z)&=\sum_{s=0}^{\min(p,r)}
 A_s z_3^{p-s}\bar z_3^{r-s}y^s.
 \label{eq:fermions-su3-invariant}
\end{align}
The subscript indicates the coordinate left unchanged. Differentiation gives
\begin{align}
 D(y^s)&=s(s+1)y^{s-1}, \nonumber \\
 D\bigl(z_3^{p-s}\bar z_3^{r-s}y^s\bigr)
 &=(p-s)(r-s)z_3^{p-s-1}\bar z_3^{r-s-1}y^s
   +s(s+1)z_3^{p-s}\bar z_3^{r-s}y^{s-1}.
\end{align}
Consequently $Dh_3=0$ requires
\begin{align}
 (s+1)(s+2)A_{s+1}+(p-s)(r-s)A_s&=0,
 \qquad 0\le s<\min(p,r).
 \label{eq:fermions-su3-recurrence}
\end{align}
This determines every coefficient from $A_0$. The invariant subspace is therefore one-dimensional, spanned by $h_3$. Relabeling the coordinates gives $h_1$ and $h_2$ for the other two subgroups.

If $p\ne r$, apply the diagonal phase rotations
\begin{align}
 U&=\diag(\e^{i\theta_1},\e^{i\theta_2},\e^{i\theta_3}),
 \qquad \theta_1+\theta_2+\theta_3=0, \nonumber \\
 U\cdot h_a&=\e^{i(p-r)\theta_a}h_a.
\end{align}
For any two distinct coordinate labels, the phases can be chosen to give distinct eigenvalues. Eigenvectors of a unitary with distinct eigenvalues are orthogonal. Thus $h_1,h_2,h_3$ are mutually orthogonal, and the average of their projectors has nonzero eigenvalue $1/3$.

If $p=r=j$, choose $h_3(e_3)=A_0=1$, where $e_1,e_2,e_3$ are the coordinate unit vectors in $\mathbb C^3$. Equation~\eqref{eq:fermions-su3-recurrence} gives
\begin{align}
 A_j&=(-1)^j\prod_{s=0}^{j-1}
       \frac{(j-s)^2}{(s+1)(s+2)} \nonumber \\
 &=\frac{(-1)^j}{j+1}.
\end{align}
At $e_1$ or $e_2$, only the $s=j$ term of $h_3$ survives. Choose $h_1,h_2$ by relabeling coordinates, so that
\begin{align}
 h_a(e_a)&=1, \nonumber \\
 h_a(e_b)&=\frac{(-1)^j}{j+1}\qquad(a\ne b), \nonumber \\
 \|h_1\|_2&=\|h_2\|_2=\|h_3\|_2.
\end{align}
Here the norm is that of the polynomial Hilbert space defined above. The equal norms follow because coordinate relabeling is unitary.

To obtain the overlaps, apply $P_{1,2}$ to $h_2$. Its range is spanned by $h_3$, so
\begin{align}
 P_{1,2}h_2&=h_3\frac{\ip{h_3}{h_2}}{\|h_3\|_2^2}.
\end{align}
The projector averages the subgroup that fixes $e_3$. Evaluating this average at $e_3$ therefore leaves the value of $h_2$ unchanged. Since $h_3(e_3)=1$ and the norms are equal,
\begin{align}
 \frac{\ip{h_3}{h_2}}{\|h_3\|_2\|h_2\|_2}
 &=(P_{1,2}h_2)(e_3) \nonumber \\
 &=h_2(e_3) \nonumber \\
 &=\frac{(-1)^j}{j+1}.
\end{align}
The same calculation applies to the other pairs. Thus the matrix of inner products of the three normalized states is
\begin{align}
 G_j&=\begin{pmatrix}
 1&\dfrac{(-1)^j}{j+1}&\dfrac{(-1)^j}{j+1} \nonumber \\[1.2ex]
 \dfrac{(-1)^j}{j+1}&1&\dfrac{(-1)^j}{j+1} \nonumber \\[1.2ex]
 \dfrac{(-1)^j}{j+1}&\dfrac{(-1)^j}{j+1}&1
 \end{pmatrix}.
\end{align}
As in Eq.~\eqref{eq:so3-average-factorization}, the nonzero eigenvalues of the average of the three rank-one projectors are those of $G_j/3$. The vector $(1,1,1)^{\mathsf T}$ and its two-dimensional orthogonal complement give the respective eigenvalues
\begin{align}
 &\frac{1+2(-1)^j/(j+1)}3, \nonumber \\
 &\frac{1-(-1)^j/(j+1)}3\qquad\text{(twice)},
\end{align}
with zero eigenvalues discarded. For $j=0$, the representation is one-dimensional and every projector is the identity. For $j\ge1$, the largest eigenvalue is
\begin{align}
 &\begin{cases}
 \dfrac{1+1/(j+1)}3\le\dfrac12,&j\text{ odd}, \nonumber \\[1ex]
 \dfrac{1+2/(j+1)}3\le\dfrac59,&j\text{ even}.
 \end{cases}
\end{align}
In the even case $j\ge2$, with equality at $j=2$. This proves the bound on every irreducible subspace and hence on their orthogonal sum.
\end{proof}

The eigenvalue $5/9$ occurs in the fermionic three-copy representation when $n\ge4$. To see this, occupy orbitals $1,2,3,4$ in copy one, orbitals $1,2$ in copy two, and none in copy three. Every raising operator annihilates this occupation state: moving a particle from copy two to copy one is blocked by an already occupied orbital, and all moves out of copy three act on an empty copy. The adjacent differences of its copy occupations are
\begin{align}
 (4-2,2-0)&=(2,2).
\end{align}
It therefore generates $\mathcal H_{2,2}$, the representation attaining $5/9$ in the preceding calculation.

\begin{proof}[Second proof of Theorem~\ref{thm:fermions-calibration}]
Lemma~\ref{lem:normal-fermion-fixed-spaces} identifies the ordinary pair placements of each test with the subgroup projectors in Lemmas~\ref{lem:so3} and~\ref{lem:fermions-su3}. The pairs $12$, $13$, and $23$ each contain two copies and intersect pairwise in exactly one copy. The original two-copy projectors are swap invariant by Lemmas~\ref{lem:fermions-exact} and \ref{lem:fermions-slater-exact} and the paragraph following the former. Thus the hypotheses of Theorem~\ref{thm:overlap} hold with $m=3$ and $q=2$. Both spectral bounds are strictly below $(m-1)/m=2/3$. Equation~\eqref{eq:overlap-gradient-constant} gives
\begin{align}
 \kappa_{\GF}&=1-\frac32\frac7{12}=\frac18, \nonumber \\
 \kappa_{\Sl}&=1-\frac32\frac59=\frac16.
\end{align}
Using the vector $\ket{\chi_\psi}$ for the respective projector, Eq.~\eqref{eq:overlap-e} therefore gives
\begin{align}
 \|\ket{\chi_\psi}\|_2^2
 &\ge R_{\GF,2}(\psi)\bigl(\tfrac18-R_{\GF,2}(\psi)\bigr)
 \qquad(P=P_{\GF,2}),
 \label{eq:fermion-slope-both}\\
 \|\ket{\chi_\psi}\|_2^2
 &\ge R_{\Sl,2}(\psi)\bigl(\tfrac16-R_{\Sl,2}(\psi)\bigr)
 \qquad(P=P_{\Sl,2}). 
\end{align}
Multiplying by $16$, since $\nabla R(\psi)=-4\ket{\chi_\psi}$, yields
\begin{align}
 \|\nabla R_{\GF,2}(\psi)\|_2^2
 &\ge16R_{\GF,2}(\psi)\bigl(\tfrac18-R_{\GF,2}(\psi)\bigr), \nonumber \\
 \|\nabla R_{\Sl,2}(\psi)\|_2^2
 &\ge16R_{\Sl,2}(\psi)\bigl(\tfrac16-R_{\Sl,2}(\psi)\bigr).
\end{align}
These are the assumptions of Theorem~\ref{thm:restoration} with $q=2$. Lemmas~\ref{lem:fermions-exact} and \ref{lem:fermions-slater-exact} identify the nonempty zero-rejection sets as the respective target families. The two-copy consequence in Eq.~\eqref{eq:two-copy-flow-bound} therefore gives
\begin{align}
 R_{\GF,2}(\psi)&\ge\ell_{1/8}\bigl(d_{\GF}(\psi)^2\bigr), \nonumber \\
 R_{\Sl,2}(\psi)&\ge\ell_{1/6}\bigl(d_{\Sl}(\psi)^2\bigr).
\end{align}
Perfect completeness and Eq.~\eqref{eq:universal-upper} give the upper bounds in Theorem~\ref{thm:fermions-calibration}.

For fixed parity, each Gaussian pair projector commutes with the parity of each copy. For fixed particle number, each Slater pair projector vanishes on unequal-number blocks and preserves equal-number blocks, so it commutes with the number operator in each copy. Consequently the three-copy average preserves the prescribed sector in each copy, and restricting to that space preserves the spectral bounds. The vector $\ket{\chi_\psi}$ also stays in the prescribed one-copy sector, because it is obtained by contracting a projected state in the same sector in each copy. We can therefore apply Theorem~\ref{thm:restoration} within that sector, using its exact target family. This proves the restricted assertions as well.
\end{proof}

For either test, let $\kappa$ be its value above. If $R(\psi(0))<\kappa$, Eq.~\eqref{eq:restoration-integral} with $q=2$ bounds the total length of the evolution by
\begin{align}
 \int_0^\infty
 \left\|\frac{\dd}{\dd t}\ket{\psi(t)}\right\|_2\,\dd t
 &\le\frac12\arcsin\sqrt{\frac{R(\psi(0))}{\kappa}}.
\end{align}

 This establishes optimality of the constants in the gradient inequality. The best rejection bound at every distance may be stronger.

\subsection{Rejection coefficients and sharpness}\label{subsec:fermion-higher-response}

Let $q\ge2$, let $\ket g$ be a closest target, and let $\ket\eta$ be the normalized orthogonal deviation in Eq.~\eqref{eq:normal-frame}. We compute the fermionic coefficients in Table~\ref{table:normal-coefficients} by bounding
\begin{align}
 \left\|(I-P_{\mathcal T,q})
       \bigl(D(g^{\otimes q})[\eta]\bigr)\right\|_2^2.
\end{align}
This is the coefficient of the quadratic term in the rejection expansion in Eq.~\eqref{eq:normal-W}. We then give four-mode examples that attain the first branches of the two-copy distance bounds.

\subsubsection{Higher-copy coefficients}

\smallskip\noindent
\emph{Fermionic Gaussians.}
Apply a Gaussian unitary that sends the closest target to the vacuum $\ket\Omega$, first exchanging parity if necessary. These transformations preserve every pair kernel and hence $P_{\GF,q}$, as in the proof of Lemma~\ref{lem:fermion-closest}. By Lemma~\ref{lem:normal-stationarity}, the deviation has no zero-particle or two-particle component. If the input has no parity promise, odd-particle components are also allowed.

By Lemma~\ref{lem:normal-fermion-fixed-spaces}, $P_{\GF,q}$ averages $SO(q)$ rotations of the copy labels. For a uniformly distributed rotation $R\in SO(q)$, each column is uniform on the unit sphere in $\mathbb R^q$, and all entries have the same distribution. For a $k$-particle component $\ket{\eta_k}$, Eq.~\eqref{eq:higher-copy-first-order-acceptance} and linearity give
\begin{align}
 \frac1q\left\|P_{\GF,q}
        \bigl(D(\Omega^{\otimes q})[\eta_k]\bigr)\right\|_2^2
 &=\frac1q\sum_{i,j=1}^q\mathbb E\bigl[R_{j,i}^{\,k}\bigr]
                 \|\ket{\eta_k}\|_2^2 \nonumber\\
 &=q\,\mathbb E\bigl[R_{1,1}^{\,k}\bigr]\|\ket{\eta_k}\|_2^2.
 \label{eq:normal-real-acceptance}
\end{align}
For unit $\ket{\eta_k}$, this is the acceptance probability of $D(\Omega^{\otimes q})[\eta_k]/\sqrt q$.

Odd moments vanish by symmetry. To calculate the fourth moment, rotate the first two coordinates of the first column by $\pi/4$
\begin{align}
 \mathbb E\bigl[R_{1,1}^4\bigr]
 &=\mathbb E\left[\left(\frac{R_{1,1}+R_{2,1}}{\sqrt2}\right)^4\right] \nonumber\\
 &=\frac12\mathbb E\bigl[R_{1,1}^4\bigr]
   +\frac32\mathbb E\bigl[R_{1,1}^2R_{2,1}^2\bigr], \nonumber\\
 \mathbb E\bigl[R_{1,1}^4\bigr]
 &=3\mathbb E\bigl[R_{1,1}^2R_{2,1}^2\bigr].
\end{align}
The terms with an odd power of either coordinate vanish, and the two fourth moments are equal. The column has norm one, so
\begin{align}
 1&=\mathbb E\left[\left(\sum_{i=1}^qR_{i,1}^2\right)^2\right] \nonumber\\
 &=q\mathbb E\bigl[R_{1,1}^4\bigr]
   +q(q-1)\mathbb E\bigl[R_{1,1}^2R_{2,1}^2\bigr] \nonumber\\
 &=\frac{q(q+2)}3\mathbb E\bigl[R_{1,1}^4\bigr].
\end{align}
Consequently,
\begin{align}
 \mathbb E\bigl[R_{1,1}^4\bigr]&=\frac3{q(q+2)},
 \label{eq:normal-real-moments}\\
 \mathbb E\bigl[R_{1,1}^{2r}\bigr]
 &\le\mathbb E\bigl[R_{1,1}^4\bigr]
 \qquad(r\ge2).
\end{align}
The inequality follows pointwise from $|R_{1,1}|\le1$. Thus odd-particle components have zero acceptance, and allowed even-particle components have acceptance at most $3/(q+2)$.

Decompose $\ket\eta=\sum_k\ket{\eta_k}$ by particle number. Copy rotations preserve total particle number, so the projected components remain orthogonal. Using Eq.~\eqref{eq:tensor-derivative} for the norm of the derivative gives
\begin{align}
 &\left\|(I-P_{\GF,q})
       \bigl(D(\Omega^{\otimes q})[\eta]\bigr)\right\|_2^2 \nonumber\\
 &\quad=q\sum_k\left(1-q\mathbb E\bigl[R_{1,1}^{\,k}\bigr]\right)
                        \|\ket{\eta_k}\|_2^2 \nonumber\\
 &\quad\ge q\left(1-\frac3{q+2}\right)
                  \sum_k\|\ket{\eta_k}\|_2^2 \nonumber\\
 &\quad=\frac{q(q-1)}{q+2}.
\end{align}
The last equality uses $\|\ket\eta\|_2=1$. Equality holds for every normalized four-particle deviation, available when $n\ge4$. The same sphere moments are used for centered bosonic Gaussians in Section~\ref{subsec:boson-higher-response}.

\smallskip\noindent
\emph{Slater determinants.}
Choose orbitals so that the closest target is
\begin{align}
 \ket g&=\ket{e_1\wedge\cdots\wedge e_k}.
\end{align}
In $\ket g^{\otimes q}$, each occupied orbital appears in every copy. An $SU(q)$ transformation of those copy labels multiplies their antisymmetric state by its determinant, which is one. Thus the reference tensor power is unchanged.

First consider a deviation component $\ket{\eta_m}$ with $m$ particles. For a diagonal copy rotation with phases satisfying $\sum_{a=1}^q\theta_a=0$,
\begin{align}
 U_\theta\bigl(\ket g^{\otimes(i-1)}\otimes\ket{\eta_m}
                  \otimes\ket g^{\otimes(q-i)}\bigr)
 &=\e^{i(m-k)\theta_i}
   \ket g^{\otimes(i-1)}\otimes\ket{\eta_m}
                  \otimes\ket g^{\otimes(q-i)}.
\end{align}
The phase is the contribution of $m$ particles in copy $i$ and $k$ particles in every other copy. For $m\ne k$, it can be chosen different from one. Since $P_{\Sl,q}$ retains only states unchanged by every copy rotation, it annihilates each such tensor product. Summing over the insertion position gives
\begin{align}
 P_{\Sl,q}\bigl(D(g^{\otimes q})[\eta_m]\bigr)&=0
 \qquad(m\ne k).
\end{align}

Within the $k$-particle sector, decompose the deviation by the number $r$ of occupied orbitals replaced with unoccupied ones. Lemma~\ref{lem:normal-stationarity} excludes $r=0,1$, so $r\ge2$. Particle creation transforms by $U$, and removing a particle from a filled orbital transforms by $\overline U$. For fixed particle and hole orbital labels, moving all $r$ replacements from copy $i$ to copy $j$ therefore has amplitude
\begin{align}
 (U_{j,i})^r(\overline{U_{j,i}})^r&=|U_{j,i}|^{2r}.
\end{align}
Each replacement contains two fermionic operators, so moving the replacement operators past other copies introduces no extra parity sign. Different sets of particle and hole labels remain orthogonal, because copy rotations preserve the total occupancy of each physical orbital. Averaging over $U\in SU(q)$ therefore gives, for any normalized $\ket\eta$ consisting of $r$ replacements,
\begin{align}
 \frac1q\left\|P_{\Sl,q}\bigl(D(g^{\otimes q})[\eta]\bigr)\right\|_2^2
 &=\frac1q\sum_{i,j=1}^q\mathbb E|U_{j,i}|^{2r} \nonumber\\
 &=q\,\mathbb E|U_{1,1}|^{2r}.
 \label{eq:normal-complex-acceptance}
\end{align}

To evaluate this moment, the first column of a uniformly distributed $U\in SU(q)$ is uniform on the complex unit sphere. Its squared moduli are distributed as independent mean-one exponential variables $X_1,\ldots,X_q$ divided by their sum. Thus
\begin{align}
 \mathbb E|U_{1,1}|^{2r}
 &=\mathbb E\left[\left(\frac{X_1}{\sum_{i=1}^qX_i}\right)^r\right].
\end{align}
The sum of the last $q-1$ variables has the gamma density with shape $q-1$. Changing variables by
\begin{align}
 X_1&=st,\qquad \sum_{i=2}^qX_i=s(1-t),
 \qquad s>0,\quad 0<t<1,
\end{align}
with Jacobian $s$, gives
\begin{align}
 \mathbb E|U_{1,1}|^{2r}
 &=\frac1{(q-2)!}\int_0^\infty\int_0^1
      t^r \e^{-s}s^{q-1}(1-t)^{q-2}\,\dd t\,\dd s \nonumber\\
 &=(q-1)\int_0^1t^r(1-t)^{q-2}\,\dd t \nonumber\\
 &=\frac{r!(q-1)!}{(q+r-1)!}.
 \label{eq:normal-complex-moments}
\end{align}
The second equality integrates over $s$, giving $(q-1)!$, and the last is the beta integral. The ratio of successive moments is
\begin{align}
 \frac{\mathbb E|U_{1,1}|^{2r+2}}{\mathbb E|U_{1,1}|^{2r}}
 &=\frac{r+1}{q+r}\le1.
\end{align}
Hence the largest allowed acceptance occurs at $r=2$
\begin{align}
 q\,\mathbb E|U_{1,1}|^{2r}
 &\le q\,\mathbb E|U_{1,1}|^4 \nonumber\\
 &=\frac2{q+1}.
\end{align}
This bound applies to every allowed deviation by orthogonality of the particle-number and orbital-occupancy components. Since the full derivative has squared norm $q$, Eq.~\eqref{eq:tensor-derivative} gives
\begin{align}
 &\left\|(I-P_{\Sl,q})\bigl(D(g^{\otimes q})[\eta]\bigr)\right\|_2^2 \nonumber\\
 &\quad=q-\left\|P_{\Sl,q}\bigl(D(g^{\otimes q})[\eta]\bigr)\right\|_2^2 \nonumber\\
 &\quad\ge q\left(1-\frac2{q+1}\right) \nonumber\\
 &\quad=\frac{q(q-1)}{q+1}.
\end{align}
Equality holds for every normalized deviation with two occupied-orbital replacements, whenever $k\ge2$ and $n-k\ge2$.

\subsubsection{Four-mode examples}\label{subsec:fermion-distance-curves}

The following superpositions attain the first branches of the two-copy bounds in Theorem~\ref{thm:fermions-calibration}. Adding modes in vacuum preserves their distance and rejection. They therefore also supply the fermionic examples used in Lemma~\ref{lem:testing-exact-curves} for the copy lower bounds.

\begin{proposition}[Sharp four-mode fermionic examples]
\label{prop:fermion-sharpness}
On four modes, choose
\begin{align}
 (\ket\phi,\ket\zeta)&=(\ket{0000},\ket{1111})
 \qquad\text{for }\GF, \nonumber\\
 (\ket\phi,\ket\zeta)&=(\ket{e_1\wedge e_2},\ket{e_3\wedge e_4})
 \qquad\text{for }\Sl.
\end{align}
For $0\le t\le1/\sqrt2$, define
\begin{align}
 \ket{\psi_t}&=\sqrt{1-t^2}\ket\phi+t\ket\zeta.
 \label{eq:fermion-four-mode-state}
\end{align}
Then
\begin{align}
 d_{\mathcal T}(\psi_t)&=t, \nonumber\\
 R_{\GF,2}(\psi_t)&=\tfrac12t^2(1-t^2)
 \qquad\text{for the Gaussian example}, \nonumber\\
 R_{\Sl,2}(\psi_t)&=\tfrac23t^2(1-t^2)
 \qquad\text{for the Slater example}.
 \label{eq:fermion-four-mode-rejections}
\end{align}
The same statements hold after adding modes in vacuum, and after exchanging fermionic parity in the Gaussian example. Thus both first branches are attained throughout $0\le d_{\mathcal T}^2\le1/2$, and the plateau values $1/8$ and $1/6$ are attained at $d_{\mathcal T}^2=1/2$.
\end{proposition}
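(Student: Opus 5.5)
The plan is to prove the two halves of the statement separately: first the rejection formulas, by the copy-rotation calculus of Lemma~\ref{lem:normal-transition}, and then the distance identity $d_{\mathcal T}(\psi_t)=t$, by a direct maximization of the overlap over the target family.

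\emph{Rejection.} Expand
\begin{align*}
 \ket{\psi_t}^{\otimes2}
 =(1-t^2)\ket{\phi\phi}
 +t\sqrt{1-t^2}\,\bigl(\ket{\phi\zeta}+\ket{\zeta\phi}\bigr)
 +t^2\ket{\zeta\zeta}.
\end{align*}
Both $\ket\phi$ and $\ket\zeta$ are targets: $\ket{1111}$ is an occupation-basis Gaussian and $\ket{e_3\wedge e_4}$ is a Slater determinant. Hence $\ket{\phi\phi}$ and $\ket{\zeta\zeta}$ lie in $\Ran P$, and they are orthogonal to everything in $\Ran Q$. Therefore
\begin{align*}
 R(\psi_t)=t^2(1-t^2)\,\bigl\|Q\,D(\phi^{\otimes2})[\zeta]\bigr\|_2^2 .
\end{align*}
For the Gaussian case, $\ket\phi$ is the vacuum and $\ket\zeta$ is a normalized four-particle state. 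By Lemma~\ref{lem:normal-fermion-fixed-spaces}, $P_{\GF,2}$ averages $SO(2)$ copy rotations. Lemma~\ref{lem:normal-transition} with $q=2$ and $k=4$, together with Eq.~\eqref{eq:normal-real-moments}, gives acceptance $2\,\mathbb E[R_{1,1}^4]=3/4$ for the normalized derivative. The squared norm is therefore $2\cdot\tfrac14=\tfrac12$, which yields $R=\tfrac12t^2(1-t^2)$.

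For the Slater case, $\ket\zeta$ consists of $r=2$ orbital replacements. Eqs.~\eqref{eq:normal-complex-acceptance} and \eqref{eq:normal-complex-moments} give acceptance $2\,\mathbb E|U_{1,1}|^4=2/3$. The squared norm is then $2\cdot\tfrac13=\tfrac23$, which yields $R=\tfrac23t^2(1-t^2)$.

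Adding vacuum modes changes none of these computations, since they depend only on particle and replacement counts. Applying $\gamma_1$ to both the input and the targets preserves rejection and overlaps, as in Lemma~\ref{lem:fermion-closest}.

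\emph{Distance, Gaussian case.} Odd Gaussian states are orthogonal to $\ket{\psi_t}$. An even pure Gaussian state on four modes with nonzero vacuum amplitude has the form $c\exp(\tfrac12\sum Z_{ij}a_i^\dagger a_j^\dagger)\ket\Omega$ with $Z$ antisymmetric. Bring $Z$ to canonical form with singular values $z_1,z_2$ (each appearing twice). Then $|c|^{-2}=(1+z_1^2)(1+z_2^2)$, and the $\ket{1111}$ amplitude equals $c\,\mathrm{Pf}(Z)$ with $|\mathrm{Pf}(Z)|=z_1z_2$. Put $x=z_1z_2$ and use $(1+z_1^2)(1+z_2^2)\ge(1+x)^2$. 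This gives
\begin{align*}
 |\ip g{\psi_t}|
 \le\frac{\sqrt{1-t^2}+t\,x}{1+x}.
\end{align*}
For $t\le1/\sqrt2$ we have $\sqrt{1-t^2}\ge t$, so the right-hand side is nonincreasing in $x\ge0$. It therefore attains its maximum $\sqrt{1-t^2}$ at $x=0$, that is, at $\ket g=\ket\phi$. States with zero vacuum amplitude are limits of the above family and obey the same bound.

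\emph{Distance, Slater case.} Only two-particle Slaters overlap $\ket{\psi_t}$. Their Plücker coordinates satisfy
\begin{align*}
 p_{12}p_{34}-p_{13}p_{24}+p_{14}p_{23}=0,
 \qquad
 \sum_{i<j}|p_{ij}|^2=1 .
\end{align*}
By the AM–GM inequality, $|p_{12}p_{34}|\le\tfrac12\bigl(1-|p_{12}|^2-|p_{34}|^2\bigr)$, and hence $|p_{12}|+|p_{34}|\le1$. The overlap is $\sqrt{1-t^2}\,p_{12}+t\,p_{34}$, whose modulus is therefore at most $\max\{\sqrt{1-t^2},t\}=\sqrt{1-t^2}$, attained at $\ket g=\ket\phi$.

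\emph{Extra modes.} For additional vacuum modes, contract each target with the vacuum on the extra modes. The result is a subnormalized four-mode target of the same type: for Gaussians, the submatrix of $Z$ with a smaller normalization; for Slaters, the Plücker relations restricted to a subset of coordinates. The bounds above are homogeneous in the target, so they still apply. The plateau values follow by setting $t^2=1/2$.

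The main obstacle is the distance identity. In particular, the normalization inequality for Gaussians and the inheritance of these bounds once vacuum modes are added need care; the rejection computation is routine given the earlier lemmas.
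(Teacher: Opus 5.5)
Your proposal is correct, and it splits into a half that matches the paper and a half that does not.

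\textbf{Rejection.} This half is essentially the paper's argument. You use the same expansion of $\ket{\psi_t}^{\otimes2}$, in which the $\ket{\phi\phi}$ and $\ket{\zeta\zeta}$ terms are annihilated by $Q$. You then use the same copy-rotation moments, which give accepted fractions $3/4$ and $2/3$ for the normalized first-order vector.

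\textbf{Gaussian distance.} Here your route differs. The paper never parametrizes the targets. It reads off the coefficient of $\ket1\otimes\ket{2,3,4}$ in $\Lambda\ket g^{\otimes2}=0$ to get $g_0g_4=g_{12}g_{34}-g_{13}g_{24}+g_{14}g_{23}$ for every normalized even Gaussian. AM--GM then gives $|g_0|+|g_4|\le1$. This covers targets with vanishing vacuum amplitude at no extra cost.

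You instead use the pair-exponential form, the canonical form of the antisymmetric $Z$, $|\mathrm{Pf}\,Z|=z_1z_2$, and $(1+z_1^2)(1+z_2^2)\ge(1+z_1z_2)^2$. This is more explicit, but it imports the Thouless parametrization and the Youla decomposition. It also needs density of the even Gaussians with nonzero vacuum amplitude, which you assert without proof. The claim is true: $\ip\Omega{U\Omega}$ is a real-analytic, not identically zero function on the connected group of Gaussian unitaries, so its zero set has empty interior.

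\textbf{Slater distance.} Your Pl\"ucker-plus-AM--GM argument is in fact the paper's Gaussian argument transplanted to Slaters. The paper handles Slaters differently, by writing $|\ip{u\wedge v}{\psi_t}|=|\ip u{M\bar v}|\le\|M\|_\infty=\sqrt{1-t^2}$ for the antisymmetric coefficient matrix $M$. That route is shorter. Since zero-padding $M$ leaves its operator norm unchanged, it also survives added vacuum modes without further work.

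\textbf{Added vacuum modes.} The paper shows that $V^\dagger\ket g$ is a subnormalized target via $(V^\dagger\otimes V^\dagger)\Lambda_n=\Lambda_m(V^\dagger\otimes V^\dagger)$ and Lemma~\ref{lem:fermions-exact}. This again avoids any parametrization or limiting argument. Your submatrix-of-$Z$ and restricted-Pl\"ucker arguments reach the same conclusion.
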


\begin{proof}
We first show that $\ket\phi$ is a closest target along each curve. For a normalized four-mode even Gaussian, write
\begin{align}
 \ket g&=g_0\ket\phi+\sum_{i<j}g_{i,j}\ket{i,j}+g_4\ket\zeta.
\end{align}
Here $\ket{i,j}$ has precisely modes $i,j$ occupied, in increasing order. The same convention applies to longer lists of occupied modes. By Lemma~\ref{lem:fermions-exact}, $\Lambda\ket g^{\otimes2}=0$. Its coefficient of $\ket1\otimes\ket{2,3,4}$ gives
\begin{align}
 g_0g_4&=g_{1,2}g_{3,4}-g_{1,3}g_{2,4}+g_{1,4}g_{2,3}.
\end{align}
Taking absolute values and applying $2|uv|\le|u|^2+|v|^2$ to each product yields
\begin{align}
 2|g_0g_4|&\le\sum_{i<j}|g_{i,j}|^2, \nonumber\\
 \nonumber
 (|g_0|+|g_4|)^2
 &\le|g_0|^2+\sum_{i<j}|g_{i,j}|^2+|g_4|^2\\
  \nonumber
 &=1.
  \nonumber
\end{align}
Since $t\le\sqrt{1-t^2}$ in the stated interval,
\begin{align}
 |\ip g{\psi_t}|
 &\le\sqrt{1-t^2}|g_0|+t|g_4| \nonumber\\
  \nonumber
 &\le\sqrt{1-t^2}(|g_0|+|g_4|)\\
  \nonumber
 &\le\sqrt{1-t^2}.
  \nonumber
\end{align}
Odd Gaussian targets have zero overlap with this even state, and the vacuum attains the upper bound. Therefore
\begin{align}
 F_{\GF}(\psi_t)&=1-t^2, \nonumber\\
  \nonumber
 d_{\GF}(\psi_t)&=t.
\end{align}
For Slaters, only two-particle targets can have nonzero overlap with the example. Write such a target as $\ket{u\wedge v}$, with $\ket u,\ket v$ orthonormal one-particle orbitals. The antisymmetric coefficient matrix of the example is
\begin{align}
 M&=\begin{pmatrix}0&\sqrt{1-t^2} \nonumber\\-\sqrt{1-t^2}&0\end{pmatrix}
       \mathbin{\oplus}\begin{pmatrix}0&t \nonumber\\-t&0\end{pmatrix}.
\end{align}
Expanding the wedge product from Eq.~\eqref{eq:fermions-wedge-definition} gives
\begin{align}
 |\ip{u\wedge v}{\psi_t}|
 &=|\ip u{M\bar v}| \nonumber\\
 \nonumber
 &\le\|\ket u\|_2\,\|M\|_\infty\,\|\ket{\bar v}\|_2\\
  \nonumber
 &=\|M\|_\infty\\
 &=\sqrt{1-t^2}.
  \nonumber
\end{align}
Here $\bar v$ is the coordinatewise complex conjugate of $v$, and $\|M\|_\infty$ is the operator norm. Its singular values are $\sqrt{1-t^2}$ and $t$, each twice. The target $\ket{e_1\wedge e_2}$ attains the bound, so
\begin{align}
 F_{\Sl}(\psi_t)&=1-t^2, \nonumber\\
 d_{\Sl}(\psi_t)&=t.
\end{align}

We next calculate rejection. Both $\ket\phi$ and $\ket\zeta$ are targets, so their tensor squares are accepted. Expanding the input therefore gives
\begin{align}
 Q\ket{\psi_t}^{\otimes2}
 &=t\sqrt{1-t^2}\,Q(\ket\phi\otimes\ket\zeta
                         +\ket\zeta\otimes\ket\phi), \nonumber\\
 R(\psi_t)
 &=t^2(1-t^2)
   \|Q(\ket\phi\otimes\ket\zeta+\ket\zeta\otimes\ket\phi)\|_2^2,
 \label{eq:fermion-four-mode-cross-term}\\
 \|\ket\phi\otimes\ket\zeta+\ket\zeta\otimes\ket\phi\|_2^2&=2.
\end{align}
The vector
$\ket\phi\otimes\ket\zeta+\ket\zeta\otimes\ket\phi
=D(\phi^{\otimes2})[\zeta]$
is the first-order tensor term considered in the higher-copy
calculations. We apply those formulas with $q=2$.

For the Gaussian example, $\ket\phi$ is vacuum and $\ket\zeta$
has four particles. Equations~\eqref{eq:normal-real-acceptance}
and~\eqref{eq:normal-real-moments} give
\begin{align*}
 \|P_{\GF,2}(\ket\phi\otimes\ket\zeta
                  +\ket\zeta\otimes\ket\phi)\|_2^2
 &=4\cdot\frac{3}{2(2+2)}
 =\frac32.
\end{align*}
Since the unprojected vector has squared norm $2$ and
$Q_{\GF,2}=I-P_{\GF,2}$,
\begin{align*}
 \|Q_{\GF,2}(\ket\phi\otimes\ket\zeta
                  +\ket\zeta\otimes\ket\phi)\|_2^2
 &=2-\frac32=\frac12.
\end{align*}

For the Slater example, $\ket\zeta$ replaces the two occupied
orbitals of $\ket\phi$ with two unoccupied orbitals.
Equations~\eqref{eq:normal-complex-acceptance}
and~\eqref{eq:normal-complex-moments}, with $q=2$ and $r=2$,
give
\begin{align*}
 \|P_{\Sl,2}(\ket\phi\otimes\ket\zeta
                  +\ket\zeta\otimes\ket\phi)\|_2^2
 &=4\cdot\frac{2!}{3!}
 =\frac43.
\end{align*}
Consequently,
\begin{align*}
 \|Q_{\Sl,2}(\ket\phi\otimes\ket\zeta
                  +\ket\zeta\otimes\ket\phi)\|_2^2
 &=2-\frac43=\frac23.
\end{align*}
Substituting these values into
Eq.~\eqref{eq:fermion-four-mode-cross-term} proves the two
rejection formulas in Eq.~\eqref{eq:fermion-four-mode-rejections}.

Finally, consider adding modes in vacuum. Order the $m$ active modes first, and let $V$ be the isometry that appends vacuum in the remaining $n-m$ modes. For the present examples $m=4$. Its adjoint $V^\dagger$ takes the overlap with vacuum in those added modes. From the expression
\begin{align}
 \Lambda_n&=2\sum_{j=1}^n
          (a_j^\dagger\otimes a_j+a_j\otimes a_j^\dagger)
\end{align}
we obtain
\begin{align}
 (V^\dagger\otimes V^\dagger)\Lambda_n
 &=\Lambda_m(V^\dagger\otimes V^\dagger).
 \label{eq:fermion-vacuum-contraction}
\end{align}
Every term for an added mode vanishes after contraction, because one factor creates a particle and hence has zero overlap with vacuum. For a Gaussian target $\ket g$ on all $n$ modes, this gives
\begin{align}
 \Lambda_m(V^\dagger\ket g)^{\otimes2}
 &=(V^\dagger\otimes V^\dagger)\Lambda_n\ket g^{\otimes2} \nonumber\\
 &=0.
\end{align}
By Lemma~\ref{lem:fermions-exact}, $V^\dagger\ket g$ is zero or a multiple of a normalized Gaussian target, with norm at most one. For a Slater target, the same contraction projects each occupied orbital onto the active modes and takes their wedge product. The result is zero or a multiple of a normalized Slater determinant, again with norm at most one. The four-mode overlap bound therefore implies
\begin{align}
 |\ip g{V\psi_t}|
 &=|\ip{V^\dagger g}{\psi_t}| \nonumber\\
 &\le\sqrt{1-t^2}\,\|V^\dagger\ket g\|_2 \nonumber\\
 &\le\sqrt{1-t^2}.
\end{align}
The target $V\ket\phi$ attains this bound, proving that adding vacuum modes preserves the distance.

Rejection is also unchanged. Copy rotations act trivially on the added vacuum modes, so on the active-state subspace their average projector is exactly the original projector. Finally, exchanging Gaussian parity preserves both fidelity and rejection, as proved in Lemma~\ref{lem:fermion-closest}.
\end{proof}

These examples also limit a global bound proportional to squared distance. If $R_{\GF,2}(\psi)\ge c\,d_{\GF}(\psi)^2$ for every input, the example at $t=1/\sqrt2$ gives
\begin{align}
 \frac18&\ge c\,\frac12, \nonumber\\
 c&\le\frac14.
\end{align}
Theorem~\ref{thm:fermions-calibration} and Eq.~\eqref{eq:ell-definition} already give the valid coefficient $c=1/8$. Thus the optimal universal coefficient remains in $[1/8,1/4]$.

\subsection{Measurements}\label{subsec:fermion-measurements}

For the fermionic Gaussianity test, use a Jordan--Wigner encoding,
\begin{align}
 \gamma_{2j-1}&=Z_1\cdots Z_{j-1}X_j, \nonumber\\
 \gamma_{2j}&=Z_1\cdots Z_{j-1}Y_j.
\end{align}
On two copies, the Majorana products commute, since for $\mu\ne\nu$,
\begin{align}
 (\gamma_\mu\otimes\gamma_\mu)(\gamma_\nu\otimes\gamma_\nu)
 &=(-\gamma_\nu\gamma_\mu)\otimes(-\gamma_\nu\gamma_\mu) \nonumber\\
 &=(\gamma_\nu\otimes\gamma_\nu)(\gamma_\mu\otimes\gamma_\mu).
\end{align}
A Bell measurement on the corresponding pair of qubits gives simultaneous eigenvalues $x_j,z_j$ of $X\otimes X,Z\otimes Z$ and the eigenvalue $y_j=-x_jz_j$ of $Y\otimes Y$. The two Majorana-product eigenvalues at orbital $j$ are then
\begin{align}
 &x_j\prod_{k<j}z_k, \nonumber\\
 &y_j\prod_{k<j}z_k.
\end{align}
Their sum over $j$ is the measured eigenvalue of the Gaussianity operator $\Lambda$. Accepting exactly when this sum is zero measures its kernel projector. This uses $n$ Bell measurements and $O(n)$ arithmetic steps.

For Slaters, the measurement is the spectral measurement of the total copy spin $J^2$, accepting spin zero. At each physical orbital, empty and doubly occupied copy configurations are spin-zero vectors. The two singly occupied configurations form a spin-$1/2$ pair. The accepting space is the spin-zero part of their tensor product, with the fermionic ordering convention used in the definition of the copy action. This describes a collective measurement across orbitals.

\newpage
\section{Bosons}\label{sec:bosons}

This appendix applies the framework of Appendix~\ref{sec:preliminaries} to centered pure Gaussian states, coherent states, and pure Gaussian states with arbitrary displacement. We work with finitely many physical modes and impose no energy or photon-number assumption on the input. The tests first recognize Gaussian states through symmetries of several identical copies. Their accepted spaces are linear, whereas the inputs form a nonlinear set of tensor powers. We prove that a tensor power close to an accepted space comes from a state close to the corresponding Gaussian family.

\begin{enumerate}

\item \textbf{Section~\ref{subsec:boson-coordinates}: Express states and copy symmetries in Bargmann coordinates.}
We first introduce the Bargmann representation and the target families. This representation expresses the symmetry conditions as identities between entire functions (that is, functions complex differentiable everywhere). Every normalized Fock state defines such a function, even when its mean photon number is infinite. We can therefore differentiate these identities pointwise without assuming finite energy. Decomposing the Fock space into photon-number sectors allows uniform estimates on these sectors to extend to arbitrary inputs (Eq.~\eqref{eq:boson-sector-inequality}).

\item \textbf{Section~\ref{subsec:boson-base-tests}: Turn copy symmetries into global distance bounds for centered and coherent states.}
Bargmann coordinates identify the exactly accepted inputs of the centered-Gaussian and coherent-state tests. For each two-copy test, three overlapping instances have a spectral gap uniform in photon number and physical-mode count. The overlapping-projection and gradient-flow theorems (Theorems~\ref{thm:overlap} and~\ref{thm:restoration}) convert these gaps into global bounds on distance from the corresponding target family (Theorem~\ref{thm:boson-base}).

\item \textbf{Section~\ref{subsec:boson-frames}: Establish the needed unitary identities.}
We show how Gaussian unitaries relate the target states to vacuum and coherent states, and how these unitaries interact with mixing the copies (Lemma~\ref{lem:boson-frames}). These identities allow us to choose a convenient Gaussian frame when proving global distance bounds for arbitrarily displaced Gaussian states and when establishing sharpness.

\item \textbf{Section~\ref{sec:boson-full}: Include arbitrary displacement.}
We mix two identical inputs and examine the difference output, where their common displacement cancels. This lets us use centered-state testing without knowing the displacement. Extending the centered bound to this possibly mixed output, and combining it with the coherent bound and the Gaussian-unitary identities of Section~\ref{subsec:boson-frames}, gives a four-copy bound for arbitrary pure Gaussian states.We then show that the symmetry of three copies detects departures from Gaussianity strongly enough to inherit the four-copy guarantee (Lemma~\ref{lem:boson-tetrahedron}). Finally, we prove that three copies are necessary for a nontrivial test with perfect completeness (Proposition~\ref{prop:boson-two-copy-impossible}).

\item \textbf{Section~\ref{subsec:boson-rejection-sharpness}: Determine the local rejection coefficients and prove sharpness.}
The global bounds establish robustness; here we determine how sensitively the tests detect small departures from Gaussianity. We prove that closest targets exist (Lemma~\ref{lem:normal-attainment}) and identify the photon-number components allowed in deviations from them. This analysis establishes sharp local rejection coefficients for the tolerant-testing guarantees and provides explicit states used for the paper's copy lower bounds (Lemma~\ref{lem:testing-exact-curves} and Theorem~\ref{thm:testing-lower-bound}).

\item \textbf{Section~\ref{subsec:boson-measurements}: Implement the basic tests optically.}
Finally, we implement the basic tests with passive interferometers and photon detection.

\end{enumerate}

\subsection{Bargmann representation}
\label{subsec:boson-coordinates}

We begin by introducing the Bargmann representation, which describes bosonic state vectors as analytic functions. In this representation, we can write the condition of perfect acceptance by a test as an identity between Bargmann functions. For every Fock-space input, these functions can be differentiated at every point, even when the expected photon number is infinite. For a detailed introduction and
the standard properties recalled below, see
Hall~\cite{hall_holomorphic_2000} and
Chabaud and Mehraban~\cite[Secs.~4.1--4.2]{chabaud2022holomorphic}.
The Bargmann function is called the \emph{stellar function} in the
latter reference.

\begin{definition}[Bargmann function]
Let $\mathcal F_n=\bigoplus_{k\ge0}\Sym^k\mathbb C^n$. 
The \emph{Bargmann function} of an $n$-mode Fock-space vector
$\ket f=\sum_{\alpha\in\mathbb N^n}
f_\alpha\ket{\alpha_1,\ldots,\alpha_n}$ is
\begin{align*}
 f(z)
 &=\sum_{\alpha\in\mathbb N^n}
 f_\alpha\frac{z^\alpha}{\sqrt{\alpha!}},
 \qquad z\in\mathbb C^n,
\end{align*}
where $z^\alpha=\prod_{j=1}^n z_j^{\alpha_j}$ and
$\alpha!=\prod_{j=1}^n\alpha_j!$.
This function is \emph{entire}, meaning holomorphic on all of
$\mathbb C^n$. The map $\ket f\mapsto f$ is called the
\emph{Bargmann representation}.
\end{definition}

We recall four basic properties.
\begin{enumerate}
\item \textbf{Inner products.}
The map is unitary onto the entire functions of finite Gaussian norm:
\begin{align}
 \ip fg
 &=\pi^{-n}\int_{\mathbb C^n}
 \overline{f(z)}\,g(z)\,\e^{-\|z\|_2^2}\,\dd^{2n}z.
 \label{eq:boson-bargmann}
\end{align}
The vacuum $\ket\Omega$ corresponds to the constant function $1$.

\item \textbf{Pointwise evaluation and differentiation.}
For every $z\in\mathbb C^n$,
\begin{align}
 |f(z)|
 &\le \|\ket f\|_2\,\e^{\|z\|_2^2/2}.
 \label{eq:boson-evaluation}
\end{align}
Two Fock-space vectors are equal if and only if their Bargmann
functions agree at every point. Identities between these functions
can be differentiated at each point, even when the input has
infinite mean photon number. The derivative functions need not
have finite Gaussian norm.

\item \textbf{Tensor products.}
Tensor products correspond to products of functions in separate
variables:
\begin{align*}
 \ket f\otimes\ket g
 &\longleftrightarrow f(z)g(w).
\end{align*}
In particular, $q$ identical copies of an $n$-mode state have the following
Bargmann function with one group of $n$ variables for each copy.
\begin{align*}
 \ket f^{\otimes q}
 &\longleftrightarrow
 \prod_{r=1}^{q}f\!\left(z^{(r)}\right),
 \qquad z^{(r)}\in\mathbb C^n,
\end{align*}
\item 
\textbf{Photon number.} Vectors with exactly $k$ photons correspond to homogeneous polynomials of total degree $k$: Bargmann function of this vector is a polynomial in which the exponents in every term add up to $k$, so every monomial has degree $k$.

\end{enumerate}

We will use these coordinates to express perfect acceptance by
the tests as identities between functions. We first record the
Bargmann forms of the target families.

\begin{lemma}[Gaussian coordinates]\label{lem:boson-gaussian-coordinates}
Up to a global phase, the normalized Bargmann functions of the centered, coherent, and arbitrary pure Gaussian families are
\begin{align}
 g_Z(z)&=C_Z\exp\!\left(\tfrac12z^TZz\right),
  \quad&Z=Z^T,\ &\|Z\|_\infty<1,
 \label{eq:boson-centered-form}\\
 c_\alpha(z)&=\exp\!\left(-\tfrac12\|\alpha\|_2^2+\alpha^Tz\right),
 \quad& & \qquad\qquad\quad\alpha\in\mathbb C^n, 
 \label{eq:boson-coherent-form}\\
 g_{Z,b}(z)&=C_{Z,b}\exp\!\left(\tfrac12z^TZz+b^Tz\right),
 \quad &Z=Z^T,\ &\|Z\|_\infty<1,\ b\in\mathbb C^n.
 \label{eq:boson-full-form}
\end{align}
Here $\|Z\|_\infty$ is the operator norm. The constants are nonzero, with $|C_Z|=\det(I-Z^\dagger Z)^{1/4}$. 
\end{lemma}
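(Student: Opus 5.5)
The plan is to compute the Bargmann functions of the generating states directly, starting from the vacuum (the constant $1$). I will then use the standard fact that Gaussian unitaries act on Bargmann functions by explicit substitution and multiplication rules. Since every pure Gaussian state is a Gaussian unitary applied to $\ket\Omega$, it suffices to track three pieces: displacements, passive unitaries and single-mode squeezers. Combining them, via a Bloch--Messiah (Euler) decomposition of the symplectic part, covers the whole group.

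\textbf{Coherent states.} I will write $\ket{c_\alpha}=\e^{-\|\alpha\|_2^2/2}\sum_\beta \alpha^\beta/\sqrt{\beta!}\,\ket\beta$, which follows from $D(\alpha)=\e^{-\|\alpha\|^2/2}\e^{\alpha\cdot a^\dagger}\e^{-\bar\alpha\cdot a}$ acting on vacuum. Substituting into the definition of the Bargmann function gives $\e^{-\|\alpha\|^2/2}\sum_\beta\alpha^\beta z^\beta/\beta!=\e^{-\|\alpha\|^2/2}\e^{\alpha^Tz}$, which is Eq.~\eqref{eq:boson-coherent-form}.

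\textbf{Centered Gaussian states.} I will use the normal-ordered form of a metaplectic unitary on the vacuum, $U\ket\Omega=C\exp(\tfrac12 a^{\dagger T}Za^\dagger)\ket\Omega$. Since $a_j^\dagger$ acts as multiplication by $z_j$, this gives $g_Z$. To justify this form I would avoid a disentangling identity and argue in Bargmann coordinates instead. The transformed annihilators $UaU^\dagger=\bar A a - \bar B a^\dagger$ (with $A^\dagger A-B^TB$ symplectic relations) kill $U\ket\Omega$. In Bargmann form this reads $(\bar A\partial_z-\bar Bz)g=0$, so $\partial_z\log g=Zz$ with $Z=\bar A^{-1}\bar B$. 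The symplectic relations make $Z$ symmetric with $\|Z\|_\infty<1$. The equation integrates to $g=C\exp(\tfrac12z^TZz)$ with $C\neq0$. Conversely, every symmetric $Z$ with $\|Z\|_\infty<1$ arises: Takagi factorization $Z=V\diag(\tanh r_j)V^T$ reduces it to passive unitaries and single-mode squeezers, whose functions are checked directly.

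\textbf{Normalization and the general family.} For $|C_Z|$ I will evaluate the Gaussian integral \eqref{eq:boson-bargmann} of $|\exp(\tfrac12z^TZz)|^2$. After Takagi diagonalization, using that $z\mapsto Vz$ is unitary and preserves the measure, it factorizes into one-mode integrals equal to $(1-\tanh^2r_j)^{-1/2}$. This gives $|C_Z|^{-2}=\det(I-Z^\dagger Z)^{-1/2}$. The displaced family follows by applying $D(\alpha)$ to $g_Z$. In Bargmann coordinates, $D(\alpha)$ acts as $f(z)\mapsto \e^{-\|\alpha\|^2/2+\alpha^Tz}f(z-\bar\alpha)$. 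Applied to $g_Z$, this yields $\exp(\tfrac12 z^TZz+(\alpha-Z\bar\alpha)^Tz)$ times a constant, so $b=\alpha-Z\bar\alpha$. Since $\bar\alpha\mapsto\alpha-Z\bar\alpha$ is an invertible real-linear map (because $\|Z\|_\infty<1$), every $b$ occurs, and the finiteness of the norm gives $C_{Z,b}\neq0$.

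The main obstacle is the centered case: deriving the annihilator condition cleanly and confirming that the resulting $Z$ satisfies $\|Z\|_\infty<1$. If that becomes messy, I would fall back on the Bloch--Messiah reduction together with the explicit single-mode squeezed vacuum function $(\cosh r)^{-1/2}\exp(\tfrac12\tanh r\,z^2)$.
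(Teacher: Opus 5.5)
Your proposal is correct, but it reaches the lemma by a different route from the paper.

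\textbf{The paper's route.} The proof there is essentially a normalization computation. It takes the exponential--quadratic form of the Bargmann functions as given and Takagi-diagonalizes $Z$. It then completes the square in the Gaussian integral for general $(Z,b)$, which yields $|C_{Z,b}|$ exactly together with the bound $|C_{Z,b}|\le\det(I-Z^\dagger Z)^{1/4}\e^{-\|b\|_2^2/4}$ of Eq.~\eqref{eq:normal-gaussian-normalizer}. The condition $\|Z\|_\infty<1$ comes from normalizability alone: a singular value $s_j\ge1$ destroys integrability. The identification of the operator-defined families with these functions is deferred to Lemma~\ref{lem:boson-frames}. That lemma supplies the squeezed-vacuum formula, passive substitution, the displacement rule with $b=\alpha-Z\bar\alpha$, and a citation that passive transformations and squeezers generate the homogeneous Gaussian unitaries.

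\textbf{Your route.} You derive the form from the group-theoretic definition instead. The transformed annihilators $\tilde a=Ma+Na^\dagger$ kill $U\ket\Omega$, giving $\nabla g=Zz\,g$ with $Z=-M^{-1}N$. This is the same first-order-ODE trick the paper uses in Lemma~\ref{lem:boson-centered-zero}. The commutation relations of the $\tilde a$ then do the rest.
\begin{itemize}
\item $[\tilde a_i,\tilde a_j]=0$ gives $MN^{T}=NM^{T}$, hence $Z=Z^T$.
\item $[\tilde a_i,\tilde a_j^\dagger]=\delta_{ij}$ gives $MM^\dagger-NN^\dagger=I$. Hence $M$ is invertible, and $ZZ^\dagger=I-(M^\dagger M)^{-1}<I$.
\end{itemize}
This row relation is the one you actually need, rather than the column form you wrote.

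\textbf{What each route buys.} Your argument proves, inside the lemma, that every centered Gaussian state has this form with a strict contraction. Your Takagi/squeezer converse shows every such $Z$ occurs. The paper's integrability argument instead shows that no other normalizable exponential--quadratic function exists. What your route does not produce is $|C_{Z,b}|$ for $b\ne0$ and the bound~\eqref{eq:normal-gaussian-normalizer}. The paper reuses that bound in Lemma~\ref{lem:normal-attainment} and in the three-photon sharpness estimate. It is not part of the statement, but you would need the completing-the-square computation there anyway.

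A minor correction: the surjectivity map is $\alpha\mapsto\alpha-Z\bar\alpha$. It is real-linear, and injective because $\alpha=Z\bar\alpha$ forces $\|\alpha\|_2\le\|Z\|_\infty\|\alpha\|_2$.
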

\begin{proof}
Takagi factorization expresses $Z$ in terms of its singular values $s_j$
\begin{align}
 Z&=V\diag(s_j)V^T,\qquad s_j\ge0,\qquad V^\dagger V=I.
\end{align}
The change $z=\bar V\zeta$ preserves Eq.~\eqref{eq:boson-bargmann}. Write $b'=V^\dagger b$, $b'_j=u_j+iv_j$, and $\zeta_j=x_j+iy_j$, with $u_j,v_j,x_j,y_j$ real. The exponent in the squared norm separates as
\begin{align}
 \Re(\zeta^T\diag(s_j)\zeta+2b'^T\zeta)-\|\zeta\|_2^2
 &=\sum_j\bigl[-(1-s_j)x_j^2-(1+s_j)y_j^2+2u_jx_j-2v_jy_j\bigr].
\end{align}
For $s_j<1$, completing the squares gives
\begin{align}
 |C_{Z,b}|&=\prod_j(1-s_j^2)^{1/4}
 \exp\!\left[-\frac12\sum_j
   \left(\frac{u_j^2}{1-s_j}+\frac{v_j^2}{1+s_j}\right)\right]\\
 &\le\det(I-Z^\dagger Z)^{1/4}\exp(-\|b\|_2^2/4).
 \label{eq:normal-gaussian-normalizer}
\end{align}
The inequality uses $1\pm s_j\le2$ and $\|b'\|_2=\|b\|_2$. Integrability requires every $s_j<1$: otherwise one real direction has zero or negative quadratic decay, and a linear term cannot make both tails integrable. Setting $b=0$ or $Z=0$ gives the other normalizations. 
\end{proof}

We will also use the fact that projecting some modes onto the vacuum preserves each of these Gaussian families after normalization.

\begin{lemma}[Vacuum projection] \label{lem:boson-vacuum-projection} Let $\ket g$ be a pure Gaussian state on disjoint sets of modes $A\sqcup B$. Taking the vacuum overlap on $B$ gives \begin{align*} \ket h:=(I_A\otimes\bra\Omega_B)\ket g \quad\longleftrightarrow\quad h(z_A)=g(z_A,0). \end{align*} Then $\ket h\ne0$, and after normalization is pure Gaussian. If $\ket g$ is centered or coherent, so is the normalized $\ket h$. \end{lemma}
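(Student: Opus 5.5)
The proof works directly in Bargmann coordinates, using the explicit forms of Lemma~\ref{lem:boson-gaussian-coordinates}. Write $\ket g$ as $g_{Z,b}$ on $A\sqcup B$ and split
\begin{align*}
 Z&=\begin{pmatrix}Z_{AA}&Z_{AB}\\ Z_{AB}^T&Z_{BB}\end{pmatrix},
 &
 b&=(b_A,b_B).
\end{align*}
The first step is to check that $(I_A\otimes\bra\Omega_B)$ really acts by setting $z_B=0$. Expand $\ket g$ in the number basis and note that $\bra\Omega_B$ keeps exactly the coefficients $f_{(\alpha_A,0)}$. These are precisely the monomials surviving at $z_B=0$, so $h(z_A)=g(z_A,0)$.

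Substituting $z_B=0$ gives
\begin{align*}
 h(z_A)=C_{Z,b}\exp\!\left(\tfrac12 z_A^TZ_{AA}z_A+b_A^Tz_A\right).
\end{align*}
Since $C_{Z,b}\ne0$ and the exponential never vanishes, $h\ne0$. Hence $\ket h\ne0$, because the Bargmann map is injective.

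For Gaussianity it remains to place the new parameters in the admissible class. The matrix $Z_{AA}$ is symmetric, being a principal block of a symmetric matrix. It is a compression of $Z$, $Z_{AA}=\Pi_AZ\Pi_A^T$ with $\Pi_A$ the coordinate projection, so $\|Z_{AA}\|_\infty\le\|Z\|_\infty<1$. By Lemma~\ref{lem:boson-gaussian-coordinates}, the function $\exp(\tfrac12z_A^TZ_{AA}z_A+b_A^Tz_A)$ then has finite Gaussian norm. It is therefore a nonzero multiple of the normalized function $g_{Z_{AA},b_A}$, and after normalization $\ket h$ is pure Gaussian on $A$.

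The special cases follow from the same formula. If $b=0$, then $b_A=0$ and the result is centered. If $Z=0$, then $Z_{AA}=0$ and $h\propto\exp(\alpha_A^Tz_A)$, which is a coherent state $c_{\alpha_A}$ after normalization.

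The only delicate point is the global-phase convention: the lemma expresses every pure Gaussian state in one of these forms up to phase, and a global phase does not affect the argument.
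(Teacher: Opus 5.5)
Your proposal is correct and follows essentially the same route as the paper's proof: set $z_B=0$ in the Bargmann form $C\exp(\tfrac12 z^TZz+b^Tz)$, use $C\neq0$ for nonvanishing, note that $Z_{AA}$ is symmetric with $\|Z_{AA}\|_\infty\le\|Z\|_\infty<1$, and read off the centered and coherent cases from $b_A=0$ or $Z_{AA}=0$. Your extra justifications (why the vacuum overlap on $B$ amounts to setting $z_B=0$ at the level of coefficients, and why the resulting function has finite Gaussian norm) are only elaborations of steps the paper states briefly.
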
 \begin{proof} Vacuum overlap selects the zero-occupation coefficients in modes $B$, which amounts to setting $z_B=0$. Writing $g(z)=C\exp(\tfrac12 z^T Zz+b^Tz)$, we obtain \begin{align*} h(z_A) &=C\exp\!\left(\tfrac12 z_A^T Z_{AA}z_A+b_A^Tz_A\right), \end{align*} where $Z_{AA}$ and $b_A$ retain the indices in $A$. Since $C\ne0$, this function is nonzero. Moreover, $Z_{AA}$ is symmetric and $\|Z_{AA}\|_\infty\le\|Z\|_\infty<1$, so $h$ has pure Gaussian form. Finally, $b=0$ implies $b_A=0$, and $Z=0$ implies $Z_{AA}=0$, proving the centered and coherent cases. \end{proof}

For $O\in O(q)$, $\Gamma(I_n\otimes O)$ denotes the passive unitary that applies the same orthogonal mixing of the copy labels to every physical mode. On $k$-photon states it acts as $(I_n\otimes O)^{\otimes k}$ restricted to the symmetric subspace. Orthogonal substitution preserves the 
Gaussian measure in Eq.~\eqref{eq:boson-bargmann} and total 
photon number. Thus
\begin{align}
 \mathcal F_n^{\otimes q}
  &=\bigoplus_{k\ge0}\Sym^k(\mathbb C^n\otimes\mathbb C^q)
\end{align}
decomposes the space 
into finite-dimensional subspaces of fixed photon number, each preserved by these unitaries. For a compact subgroup $G\le O(q)$, average using normalized Haar measure $\dd O$
\begin{align}
 P_G\ket\xi&=\int_G\Gamma(I_n\otimes O)\ket\xi\,\dd O.
 \label{eq:boson-group-average}
\end{align}
The integral is taken in the vector 2-norm. Haar invariance gives
\begin{align}
 P_G^\dagger&=P_G=P_G^2, \nonumber\\
 P_G\ket\xi=\ket\xi
 &\quad\Longleftrightarrow\quad
 \Gamma(I_n\otimes O)\ket\xi=\ket\xi\quad\text{for every }O\in G.
\end{align}
Hence $P_G$ projects onto the states unchanged by every transformation in $G$.

Finally, let bounded self-adjoint operators $A,B$ preserve total photon number, and let $A_k,B_k$ denote their restrictions to the $k$-photon subspace. If $A_k\le B_k$ for every $k$, then, writing $\ket\xi=\sum_{k\ge0}\ket{\xi_k}$ for the photon-number decomposition,
\begin{align}
 \ip\xi{(B-A)\xi}
   &=\sum_{k\ge0}\ip{\xi_k}{(B_k-A_k)\xi_k} \nonumber\\
   &\ge0.
 \label{eq:boson-sector-inequality}
\end{align}
Boundedness makes the sum convergent. The inequalities in each photon-number sector therefore hold on the full Fock space.

\subsection{Centered and coherent tests}
\label{subsec:boson-base-tests}

We now define the two-copy tests. The centered test checks
invariance under rotations of the copies. Tensor squares of centered Gaussian states are unchanged by rotations mixing the two copies. Averaging these rotations gives the test.

Let $U_\theta$ mix the two copies. In the Bargmann representation, for $z,w\in\mathbb C^n$,
\begin{align}
 U_\theta(f\otimes f)(z,w)
    &=f(z\cos\theta+w\sin\theta)f(-z\sin\theta+w\cos\theta).
 \label{eq:boson-copy-rotation}
\end{align}
As in Eq.~\eqref{eq:boson-group-average}, the average projects onto states unchanged by every $U_\theta$. Define the acceptance projector and rejection probability, respectively.
\begin{align}
 P_{\Gzero,2}&=\frac1{2\pi}\int_0^{2\pi}U_\theta\,\dd\theta,
 \label{eq:boson-centered-projector}\\
 R_{\Gzero,2}(\psi)&=1-\|P_{\Gzero,2}\ket\psi^{\otimes2}\|_2^2.
 \label{eq:boson-centered-rejection}
\end{align}
Swapping the inputs conjugates $U_\theta$ to $U_{-\theta}$, so $P_{\Gzero,2}$ is invariant under the ordinary tensor swap. We first show that perfect acceptance characterizes the centered family. This criterion appears in
Girardi et al.~\cite[Theorem~16]{girardi} under a
finite-second-moment assumption. The Bargmann argument below
applies to every normalized Fock-space vector: all
differentiations are pointwise, as justified by the properties of Bargmann function
recalled previously.

\begin{lemma}[Exact centered-Gaussian criterion]
\label{lem:boson-centered-zero}
For a normalized $\ket\psi\in\mathcal F_n$, $R_{\Gzero,2}(\psi)=0$ if and only if $\ket\psi\in\Gzero$.
\end{lemma}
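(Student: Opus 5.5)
The proof splits into the two directions, completeness and exactness, both carried out in the Bargmann representation.

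\emph{Completeness.} For $g_Z$ as in Lemma~\ref{lem:boson-gaussian-coordinates}, the tensor square has Bargmann function
\[
C_Z^2\exp\!\bigl(\tfrac12 z^TZz+\tfrac12 w^TZw\bigr).
\]
Under the substitution of Eq.~\eqref{eq:boson-copy-rotation}, the quadratic form $z^TZz+w^TZw$ is invariant. Indeed, this form equals $\sum_{ij}Z_{ij}(z_iz_j+w_iw_j)$, and $z_iz_j+w_iw_j$ is the dot product of the two-vectors $(z_i,w_i)$ and $(z_j,w_j)$, which is preserved by a real rotation. Hence $U_\theta\ket{g}^{\otimes2}=\ket{g}^{\otimes2}$ for all $\theta$, so $P_{\Gzero,2}$ fixes it and $R=0$. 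A global phase changes nothing.

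\emph{Exactness.} Suppose $R_{\Gzero,2}(\psi)=0$. Then $\ket\psi^{\otimes2}$ is fixed by every $U_\theta$. Since Bargmann functions determine vectors pointwise, this gives the functional identity
\[
f(z)f(w)=f(z\cos\theta+w\sin\theta)\,f(-z\sin\theta+w\cos\theta)
\]
for all $z,w\in\mathbb C^n$ and all $\theta$. This is Maxwell's equation for entire functions. The argument then proceeds in three steps.

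\begin{enumerate}
\item \textbf{$f(0)\neq0$.} Setting $w=0$ gives $f(z)f(0)=f(z\cos\theta)f(-z\sin\theta)$. At $\theta=\pi/4$ this reads $f(z)f(0)=f(z/\sqrt2)f(-z/\sqrt2)$. If $f(0)=0$, then for every $z$ either $f(z/\sqrt2)=0$ or $f(-z/\sqrt2)=0$. The product of the two entire functions $f(u)$ and $f(-u)$ then vanishes identically, so one of them does, and hence $f\equiv0$. This contradicts normalization. After rescaling we may assume $f(0)=1$.
\item \textbf{Local logarithm.} Near $0$ write $f=e^{\phi}$ with $\phi$ holomorphic and $\phi(0)=0$. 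Differentiating the identity in $\theta$ at $\theta=0$ gives the pointwise differential relation
\[
w\cdot\nabla\phi(z)=z\cdot\nabla\phi(w)
\]
near the origin. Expanding $\phi=\sum_k\phi_k$ into homogeneous parts, $w\cdot\nabla\phi_k(z)$ has bidegree $(k-1,1)$ while $z\cdot\nabla\phi_k(w)$ has bidegree $(1,k-1)$. Comparing bidegrees forces $\phi_k=0$ unless $k-1=1$, i.e.\ $k=2$. (The case $k=1$ is also excluded: $w\cdot b=z\cdot b$ for all $z,w$ forces $b=0$.) Hence $\phi(z)=\tfrac12 z^TZz$ with $Z$ symmetric.
\item \textbf{Globalization and normalizability.} The identity $f=\exp(\tfrac12 z^TZz)$ holds near $0$, and both sides are entire, so it holds everywhere by analytic continuation. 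Finiteness of the Gaussian norm then forces $\|Z\|_\infty<1$, exactly as in the integrability argument in the proof of Lemma~\ref{lem:boson-gaussian-coordinates}. Therefore $\ket\psi=g_Z\in\Gzero$ up to phase.
\end{enumerate}

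All derivatives above are taken pointwise on entire functions. This uses Eq.~\eqref{eq:boson-evaluation} and the pointwise-differentiability property recalled earlier, so no energy assumption is needed.

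The main obstacle is step 1, excluding $f(0)=0$. Once the logarithm exists near the origin, the homogeneous-degree comparison in step 2 is routine. An alternative route to the same point, in case the above argument needs adjusting, is to differentiate the identity twice in $\theta$ and set $z=w$. The argument above, using the product of entire functions and the identity theorem, should suffice.
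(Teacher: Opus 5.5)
Your proof is correct and follows essentially the paper's route: you use the Bargmann functional equation, first establish $f(0)\neq0$, differentiate in $\theta$ at $\theta=0$, let normalizability force $\|Z\|_\infty<1$, and prove the converse from the invariance of $z^TZz+w^TZw$ under real copy rotations. The differences are only tactical: the paper gets $f(0)\neq0$ more directly from the substitution $w=z$, $\theta=\pi/4$, which gives $f(\sqrt2 z)f(0)=f(z)^2$; and instead of your local logarithm with a bidegree comparison, it sets $w=0$ and differentiates in $w_j$ to obtain the global linear equation $\nabla f(z)=Zz\,f(z)$, which integrates directly and makes your analytic-continuation step unnecessary.
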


\begin{proof}
For the entire function $f$ representing $\ket\psi$, group averaging makes zero rejection equivalent to
\begin{align}
 f(z\cos\theta+w\sin\theta)f(-z\sin\theta+w\cos\theta)&=f(z)f(w)
 \quad\text{for every }z,w,\theta.
 \label{eq:boson-centered-functional}
\end{align}
Setting $w=z$ and $\theta=\pi/4$ gives
\begin{align}
 f(\sqrt2z)f(0)&=f(z)^2.
\end{align}
Since $f$ is not identically zero, $f(0)\ne0$. Differentiate Eq.~\eqref{eq:boson-centered-functional} with respect to $\theta$ at zero, with $\nabla f=(\partial_1f,\ldots,\partial_nf)$
\begin{align}
 (w\cdot\nabla f(z))f(w)
       &=f(z)(z\cdot\nabla f(w)).
\end{align}
Setting $w=0$ gives $\nabla f(0)=0$. Differentiation in $w_j$ at zero then yields
\begin{align}
 f(0)\partial_jf(z)&=f(z)\sum_i z_i\partial_i\partial_jf(0).
\end{align}
Define the symmetric matrix $Z$ by
\begin{align}
 Z_{i,j}&=\frac{\partial_i\partial_jf(0)}{f(0)},\qquad Z=Z^T.
\end{align}
Consequently $\nabla f(z)=Zzf(z)$, so
\begin{align}
 \nabla\!\left(\e^{-z^TZz/2}f(z)\right)
   &=\e^{-z^TZz/2}\bigl(\nabla f(z)-Zzf(z)\bigr) \nonumber\\
   &=0, \nonumber\\
 f(z)&=f(0)\e^{z^TZz/2}.
\end{align}
Lemma~\ref{lem:boson-gaussian-coordinates} imposes $\|Z\|_\infty<1$. Conversely, the quadratic cross terms cancel in Eq.~\eqref{eq:boson-centered-functional} for every such $Z$.
\end{proof}

For coherent states, a balanced beam splitter sends the difference of two identical inputs to vacuum. In the Bargmann representation it acts as
\begin{align}
 B(f\otimes f)(z,w)
       &=f\!\left(\frac{z+w}{\sqrt2}\right)
        f\!\left(\frac{z-w}{\sqrt2}\right).
 \label{eq:boson-beam-splitter}
\end{align}
The test accepts when the difference output is vacuum
\begin{align}
 P_{\Coh,2}&=B^\dagger(I\otimes\ket\Omega\bra\Omega)B,
 \label{eq:boson-coherent-projector}\\
 R_{\Coh,2}(\psi)&=1-\|P_{\Coh,2}\ket\psi^{\otimes2}\|_2^2.
 \label{eq:boson-coherent-rejection}
\end{align}
The projector is swap invariant: swapping inputs reverses the sign of the difference output, which leaves vacuum fixed.

\begin{lemma}[Exact coherent-state criterion]
\label{lem:boson-coherent-zero}
For a normalized $\ket\psi\in\mathcal F_n$, $R_{\Coh,2}(\psi)=0$ if and only if $\ket\psi\in\Coh$.
\end{lemma}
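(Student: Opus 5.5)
We mirror the proof of Lemma~\ref{lem:boson-centered-zero}, working with the entire function $f$ representing $\ket\psi$. By Eq.~\eqref{eq:boson-beam-splitter}, $R_{\Coh,2}(\psi)=0$ means that the difference output of $B\ket\psi^{\otimes2}$ is vacuum. Then $B(f\otimes f)(z,w)=h(z)$ for some entire $h$ of the sum variables alone, so the function is independent of $w$. Evaluating at $w=0$ identifies $h(z)=f(z/\sqrt2)^2$. The acceptance condition is therefore the functional identity
\begin{align*}
 f\!\left(\frac{z+w}{\sqrt2}\right)f\!\left(\frac{z-w}{\sqrt2}\right)&=f\!\left(\frac{z}{\sqrt2}\right)^2
 \qquad\text{for all }z,w\in\mathbb C^n.
\end{align*}
Conversely, equality of Bargmann functions at every point gives equality of vectors, by the pointwise-evaluation property. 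Accepted means exactly that $B\ket\psi^{\otimes2}=\ket h\otimes\ket\Omega$, and the $w$-independence is precisely what characterizes this.

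\textbf{Step 1: normalize at the origin.} Substituting $u=(z+w)/\sqrt2$ and $v=(z-w)/\sqrt2$ turns the identity into $f(u)f(v)=f\bigl((u+v)/2\bigr)^2$. Setting $u=v=0$ gives nothing new. However, $f\not\equiv0$, so picking $v$ with $f(v)\ne0$ shows that $f$ has no zeros. To see this, take $u=2x-v$ with $f(x)\ne0$: the right side $f(x)^2$ is nonzero, so $f(2x-v)\ne0$. Since the nonvanishing set of $f$ is open and dense, a continuity argument then shows $f$ vanishes nowhere, and in particular $f(0)\ne0$.

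\textbf{Step 2: derive the linear ODE.} We differentiate in $w$ at $w=0$, which is allowed pointwise. Differentiating once gives $\nabla f(u)f(v)-f(u)\nabla f(v)=0$ at $u=v=z/\sqrt2$, which is trivial. So instead we use the form $f(u)f(v)=f(m)^2$ with $m=(u+v)/2$ and differentiate in $u_j$:
\begin{align*}
 \partial_jf(u)f(v)&=f(m)\,\partial_jf(m).
\end{align*}
Set $v=0$ and use $f(0)\ne0$. Then $\partial_jf(u)=f(u/2)\partial_jf(u/2)/f(0)$. Differentiating instead in $v_j$ and setting $u=v$ gives consistency only. The cleaner route is to differentiate the identity in $u_j$ and then in $v_k$, which yields
\begin{align*}
 \partial_jf(u)\,\partial_kf(v)&=\tfrac12\bigl[\partial_kf(m)\partial_jf(m)+f(m)\partial_j\partial_kf(m)\bigr].
\end{align*}
The right side depends only on $u+v$. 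Dividing by $f(u)f(v)=f(m)^2\ne0$ shows that $\partial_j\log f(u)\,\partial_k\log f(v)$ is a function of $u+v$ alone. Since $\log f$ is defined globally because $f$ is nonvanishing, setting $g=\nabla\log f$ gives the product relation $g_j(u)g_k(v)=G_{jk}(u+v)$.

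Differentiating this relation in $u$ and comparing with the derivative in $v$ gives $\partial_ig_j(u)g_k(v)=g_j(u)\partial_ig_k(v)$. Hence the second derivatives of $\log f$ are constant where they are nonzero. The simplest approach is to note that the Hessian $H(u)$ of $\log f$ satisfies the additivity relation $\log f(u)+\log f(v)=2\log f\bigl((u+v)/2\bigr)$ up to $2\pi i$ constants. This is the Jensen equation, so $\log f$ is affine: $\log f(z)=c+\alpha^Tz$. Indeed, taking Hessians of both sides at $v=u$ gives $2H(u)=H(u)$ after differentiating twice in $u$ only, hence $H\equiv0$.

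\textbf{Step 3: conclude.} Step 2 gives $f(z)=f(0)\e^{\alpha^Tz}$. By Lemma~\ref{lem:boson-gaussian-coordinates} with $Z=0$, this is, after normalization, $c_\alpha$ up to phase, so $\ket\psi\in\Coh$. Conversely, for $f=c_\alpha$ the left side of the acceptance identity equals $\e^{-\|\alpha\|_2^2}\e^{\sqrt2\alpha^Tz}$, which is independent of $w$, so every coherent state is accepted.

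The main obstacle is Step 2: extracting affinity of $\log f$ cleanly and pointwise without moment assumptions. The plan is to rely on nonvanishing of $f$ so that $\log f$ exists globally on the simply connected space $\mathbb C^n$, and then to differentiate the Jensen-type identity twice in $u$ with $v$ fixed. That gives $H(u)=\tfrac14H(m)$ for all $u,v$. Hence $H$ is constant and equals $\tfrac14$ of itself, so $H=0$.
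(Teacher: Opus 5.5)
\textbf{Verdict.} Your proof is correct, but its key step takes a longer route than the paper's.

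\textbf{Comparison with the paper.} The paper differentiates the $w$-independence of $f\bigl((z+w)/\sqrt2\bigr)f\bigl((z-w)/\sqrt2\bigr)$ at an arbitrary point, not only at $w=0$. In the independent variables $u,v$ this gives $(\partial_jf)(u)f(v)=f(u)(\partial_jf)(v)$ for all $u,v$. Fixing a single $v$ with $f(v)\ne0$ then yields $\partial_jf=\alpha_jf$ with constant $\alpha$, hence $f(z)=f(0)\e^{\alpha^Tz}$.

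You dismissed this first-order identity as trivial only because you evaluated it at $u=v$. In fact, if you subtract your own $u_j$- and $v_j$-derivatives of $f(u)f(v)=f\bigl((u+v)/2\bigr)^2$ without setting $u=v$, you get exactly the paper's identity.

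Your route instead shows that $f$ has no zeros, takes a global holomorphic logarithm $\phi$ on the simply connected $\mathbb C^n$, and solves the Jensen equation by a Hessian argument. This is valid, and it makes explicit the nice fact that an accepted Bargmann function has no zeros. However, it needs global nonvanishing, a logarithm, and removal of the $2\pi i\mathbb Z$ ambiguity. The paper needs only one point where $f\ne0$.

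\textbf{Details to tidy.}

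(1) Nonvanishing follows directly from $f(y)f(2x-y)=f(x)^2\ne0$ for every $y$. The ``open and dense plus continuity'' remark is unnecessary, and on its own it would not prove the claim.

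(2) The function $\phi(u)+\phi(v)-2\phi\bigl((u+v)/2\bigr)$ is continuous with values in $2\pi i\mathbb Z$ on the connected space $\mathbb C^{2n}$, and it vanishes on $u=v$. It is therefore identically zero. State this before differentiating.

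(3) Differentiating twice in $u$ gives $H(u)=\tfrac12H\bigl((u+v)/2\bigr)$, not $\tfrac14H\bigl((u+v)/2\bigr)$. Setting $v=u$ still forces $H\equiv0$, so the conclusion is unaffected.

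(4) The detour through $g_j(u)g_k(v)=G_{jk}(u+v)$ is unused. The claim that the second derivatives of $\log f$ are ``constant where they are nonzero'' does not follow from it. Drop both.
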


\begin{proof}
Perfect acceptance says that the right side of Eq.~\eqref{eq:boson-beam-splitter} is independent of $w$. Differentiating in $w_j$ and using the independent variables $u=(z+w)/\sqrt2$, $v=(z-w)/\sqrt2$ gives
\begin{align}
 (\partial_j f)(u)f(v)&=f(u)(\partial_j f)(v).
\end{align}
Choose $v$ with $f(v)\ne0$ and set $\alpha_j=\partial_jf(v)/f(v)$. Then
\begin{align}
 \partial_jf(z)&=\alpha_j f(z), \nonumber\\
 \nabla\!\left(\e^{-\alpha^Tz}f(z)\right)&=0, \nonumber\\
 f(z)&=f(0)\e^{\alpha^Tz}.
\end{align}
Normalization gives Eq.~\eqref{eq:boson-coherent-form} up to a global phase. Conversely,
\begin{align}
 B\ket{c_\alpha}^{\otimes2}
 &=\ket{c_{\sqrt2\alpha}}\otimes\ket\Omega.
\end{align}
\end{proof}

To obtain distance bounds, we compare the three pairwise tests on three auxiliary copies. The following bound will verify the spectral condition in Eq.~\eqref{eq:overlap-spectrum} and hence the gradient bound in Eq.~\eqref{eq:overlap-gradient-bound}.

\begin{lemma}[Overlap bounds for centered and coherent tests]
\label{lem:boson-overlaps}
For either $P=P_{\Gzero,2}$ or $P=P_{\Coh,2}$, place $P$ on each pair of three copies and let $\Pi$ project onto the intersection of their accepted subspaces. Then
\begin{align}
 \frac{P_{1,2}+P_{1,3}+P_{2,3}}3
 &\leq\Pi+\beta(I-\Pi),
 \label{eq:boson-pair-average-bound}\\
 \beta&=\begin{cases}7/12,&P=P_{\Gzero,2}, \nonumber\\1/2,&P=P_{\Coh,2}.
 \end{cases}
\end{align}
\end{lemma}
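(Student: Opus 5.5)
The plan is to reduce both cases to the representation-theoretic spectral bounds already proved, working one total-photon-number sector at a time. Every operator in the statement is a passive copy-mixing unitary or an average of such unitaries, so it preserves total photon number on $\mathcal F_n^{\otimes3}=\bigoplus_k\Sym^k(\mathbb C^n\otimes\mathbb C^3)$. By Eq.~\eqref{eq:boson-sector-inequality}, it therefore suffices to prove the operator inequality on each finite-dimensional $k$-photon sector, uniformly in $k$ and $n$. On a single sector, the inequality $T\le\Pi+\beta(I-\Pi)$ with $T=\tfrac13(P_{1,2}+P_{1,3}+P_{2,3})$ is equivalent to $\Spec(T)\subset\{1\}\cup[0,\beta]$. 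The reason is that the eigenvalue-$1$ eigenspace of an average of projectors is exactly the intersection $\Ran\Pi$ of their ranges, and $T$ commutes with $\Pi$.

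\emph{Centered case.} The group $SO(3)$ acts on the copy labels through $\Gamma(I_n\otimes O)$. The generator of rotations in the $(a,b)$ copy plane is the beam-splitter generator on copies $a,b$, tensored with the identity on the third copy. By Eq.~\eqref{eq:boson-centered-projector}, $P_{a,b}$ is the Haar average over the $U(1)$ of rotations in that plane, which is the same as rotations about the orthogonal axis $e_c$. Each photon-number sector is a finite-dimensional unitary representation of $SO(3)$. Lemma~\ref{lem:so3} then gives $\Spec(T)\subseteq\{1\}\cup[0,7/12]$ on every sector, and hence $\beta=7/12$. The only points to check are that the pair placement on copies $1,3$ is the rotation about $e_2$ (up to orientation, which does not affect the average) and that the bound is sector-uniform. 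Both follow directly from the lemma.

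\emph{Coherent case.} I would use Lemma~\ref{lem:normal-transition}-type mode calculus instead of a group. For each physical mode, let $d_{ab}=(a_a-a_b)/\sqrt2$ denote the difference mode of copies $a,b$. Then $P_{a,b}$ projects onto vacuum in all difference modes $d_{ab,\mu}$, $\mu=1,\dots,n$, with no constraint on the sum modes. The three difference vectors $(1,-1,0),(1,0,-1),(0,1,-1)$, divided by $\sqrt2$, lie in the two-dimensional plane $\mathbf u^\perp$ and sit at mutual angles $60^\circ/120^\circ$. I would apply a fixed orthogonal copy change taking $\mathbf u$ to the first coordinate. The collective mode then factors out entirely, and each $P_{a,b}$ becomes ``vacuum in the modes $v_{ab}\cdot(b_2,b_3)$'', where $v_{ab}$ are three unit vectors in $\mathbb R^2$ with $|v\cdot v'|=1/2$. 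On the relative-mode Fock space, vacuum in the modes along a unit vector $v$ means that the number operator $N_v=\sum_\mu(v\cdot b_\mu)^\dagger(v\cdot b_\mu)$ vanishes.

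To bound the spectrum, decompose the relative space under $U(2)$ acting on the two relative copy directions. A $k$-photon sector splits by the $SU(2)$ spin of the two-dimensional copy space. The projector onto $N_v=0$ is the projector onto the lowest-weight vector along the direction orthogonal to $v$: in a spin-$j$ irrep with $k$ photons, it projects onto the state with all photons in the direction $v^\perp$ (real unit vector). Each $P_{a,b}$ is thus rank one in each irrep with $j=k/2$ (and zero where $j<k/2$), namely the projector onto $|w\rangle^{\otimes 2j}$ with $w=v_{ab}^\perp$. The overlaps are $(w\cdot w')^{2j}=(\pm1/2)^{2j}$. Applying the Gram-matrix argument of Lemma~\ref{lem:so3} to the three lines, with off-diagonal entries $c=\pm2^{-2j}$, the nonzero eigenvalues are $(1+2c)/3$ and $(1-c)/3$. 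These are at most $(1+1/2)/3=1/2$ once $j\ge1/2$, while $j=0$ gives the trivial eigenvalue $1$, the vacuum. For $j<k/2$ the average vanishes on the corresponding components. This gives $\beta=1/2$.

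The main obstacle I expect is in the coherent case. Care is needed over whether the ranges of the $P_{a,b}$ in a given irrep are really one-dimensional when $n>1$. Since there are several physical modes, each isotypic component carries a multiplicity space, and the relevant copy group is $U(2)$, which acts only on the copy index, commuting with the $U(n)$ acting on modes. I would handle this via Schur–Weyl duality for $\Sym^k(\mathbb C^n\otimes\mathbb C^2)$: the ranges become the one-line projectors tensored with the identity on the multiplicity space, and the scalar Gram argument applies blockwise. The signs of $w\cdot w'$ also need to be tracked, but the bound holds for either sign.
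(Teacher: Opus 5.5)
Your centered case is the paper's argument: each pair test is the projector onto states fixed by rotations about the orthogonal copy axis, and Lemma~\ref{lem:so3} is applied photon-number sector by sector.

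For the coherent case you reach the same mechanism as the paper, but by a heavier route. That mechanism is three rank-one projectors tensored with the identity on a multiplicity space, whose average has the nonzero spectrum of $\tfrac13 G$. The paper never decomposes $\Sym^k(\mathbb C^n\otimes\mathbf u^\perp)$. Instead it defines isometries $J_a:\Sym^k\mathbb C^n\to\Sym^k(\mathbb C^n\otimes\mathbf u^\perp)$ that put every photon in the allowed copy direction $v_a$. It then notes $J_a^\dagger J_b=\langle v_a,v_b\rangle^k I$ and compares $LL^\dagger$ with $L^\dagger L=\tfrac13 G\otimes I$. This makes the multiplicity issue you flagged automatic. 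It also removes any need to show that the pair projectors vanish on the components with $j<k/2$.

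Your Howe-duality picture is correct. The range of each pair test in the $k$-photon relative sector is $\Sym^k\mathbb C^n\otimes w_a^{\otimes k}$, a spin-coherent state in the spin-$k/2$ irrep, which nicely parallels the zonal states of the centered case. It does not, however, buy anything extra for this bound.

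One step is wrong as stated: the bound does \emph{not} hold for ``either sign''. At $k=1$, off-diagonal Gram entries all equal to $c=+1/2$ would give the eigenvalue $(1+2c)/3=2/3$. That exceeds $1/2$ and hits the threshold $(m-1)/m$ of Theorem~\ref{thm:overlap}, so $\kappa=0$.

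What rescues the argument is that the sign is not free. Flipping the orientation of one $w_a$ flips two Gram entries, so only the product $(w_1\cdot w_2)(w_1\cdot w_3)(w_2\cdot w_3)=-1/8$ is invariant, and it is negative. You should orient the allowed directions as the paper does:
\[
v_{1,2}=(1,1,-2)/\sqrt6,\quad v_{1,3}=(1,-2,1)/\sqrt6,\quad v_{2,3}=(-2,1,1)/\sqrt6,
\]
which have all pairwise inner products $-1/2$. Then $c=(-1/2)^k$ exactly. The value $1/2$ is attained at $k=1$ through $(1-c)/3$ and at $k=2$ through $(1+2c)/3$, and everything is smaller for $k\ge3$. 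Equivalently, at $k=1$ the three projectors onto lines at $60^\circ$ in $\mathbb C^2$ sum to $\tfrac32 I$. For $k\ge2$ one has $|c|\le1/4$, and there your remark that either sign works is true.
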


\begin{proof}
For $P_{\Gzero,2}$, the three pair rotations generate $SO(3)$ on the copy labels. Their axes are orthogonal, so Eq.~\eqref{eq:fermions-so3-bound} in Lemma~\ref{lem:so3} gives $\beta=7/12$ in each fixed-photon-number subspace.

For $P_{\Coh,2}$, we compute the overlap directly. The equally weighted combination $\mathbf u=(1,1,1)/\sqrt3$ from Eq.~\eqref{eq:higher-copy-collective-vector} is unrestricted by all three tests. Among the two orthogonal combinations, the test on pair $(i,j)$ requires its difference output to be vacuum. Photons can then occupy only the combination with coefficient vector $v_{i,j}$, where
\begin{align}
 v_{1,2}&=\frac{(1,1,-2)}{\sqrt6}, \nonumber\\
 v_{1,3}&=\frac{(1,-2,1)}{\sqrt6}, \nonumber\\
 v_{2,3}&=\frac{(-2,1,1)}{\sqrt6}.
\end{align}
For pair labels $a,b\in\{(1,2),(1,3),(2,3)\}$, $\ip{v_a}{v_b}=-1/2$ when $a\ne b$. At total photon number $k$ in these two combinations, let
\begin{align}
 J_a:\Sym^k\mathbb C^n
       &\longrightarrow\Sym^k(\mathbb C^n\otimes\mathbf u^\perp)
\end{align}
assign the copy coefficients $v_a$ to every photon, leaving its physical-mode state unchanged. Each $J_a$ is an isometry, and each photon contributes one factor to the overlap
\begin{align}
 J_a^\dagger J_a&=I, \nonumber\\
 J_a^\dagger J_b&=\ip{v_a}{v_b}^k I \nonumber\\
              &=(-1/2)^kI\qquad(a\ne b).
\end{align}
Thus $J_aJ_a^\dagger$ is the pair acceptance projector on this subspace. For the block operator $L=(J_{1,2},J_{1,3},J_{2,3})/\sqrt3$,
\begin{align}
 LL^\dagger&=\frac13\sum_aJ_aJ_a^\dagger, \nonumber\\
 L^\dagger L&=\frac13
 \begin{pmatrix}
 1&(-1/2)^k&(-1/2)^k \nonumber\\
 (-1/2)^k&1&(-1/2)^k \nonumber\\
 (-1/2)^k&(-1/2)^k&1
 \end{pmatrix}\otimes I.
\end{align}
The nonzero eigenvalues of $LL^\dagger$ and $L^\dagger L$ coincide. The displayed $3\times3$ matrix, including the factor $1/3$, has eigenvalues
\begin{align}
 \frac{1+2(-1/2)^k}{3},\qquad
 \frac{1-(-1/2)^k}{3}\quad\text{(multiplicity two)}.
\end{align}
The identity factor repeats them for each physical-mode state. For $k=0$ the two combinations are in vacuum, giving the common accepted subspace. For $k=1,2$ the largest eigenvalue is $1/2$. For $k\ge3$, $|(-1/2)^k|\le1/8$ gives a strictly smaller value. Thus $\beta=1/2$. Equation~\eqref{eq:boson-sector-inequality} extends these uniform photon-number bounds to the full Fock space.
\end{proof}

Combining Eq.~\eqref{eq:boson-pair-average-bound} with Theorems~\ref{thm:overlap} and \ref{thm:restoration} gives the distance bounds. This theorem turns exact acceptance into a quantitative distance guarantee.

\begin{theorem}[Centered and coherent distance bounds]
\label{thm:boson-base}
Every normalized state $\ket\psi\in\mathcal F_n$ satisfies
\begin{align}
 \ell_{1/8}\bigl(d_{\Gzero}(\psi)^2\bigr)
       &\leq R_{\Gzero,2}(\psi)
       \leq\mathcal U_2\bigl(d_{\Gzero}(\psi)^2\bigr),
       \label{eq:boson-centered-calibration}\\
 \ell_{1/4}\bigl(d_{\Coh}(\psi)^2\bigr)
       &\leq R_{\Coh,2}(\psi)
       \leq\mathcal U_2\bigl(d_{\Coh}(\psi)^2\bigr).
       \label{eq:boson-coherent-calibration}
\end{align}
Here $\ell_\kappa$ is defined in Eq.~\eqref{eq:ell-definition}, and $\mathcal U_2$ is the two-copy upper bound in Eq.~\eqref{eq:universal-upper}. The one-copy vacuum test has the exact rejection probability
\begin{align}
 R_{\Omega}(\psi)&=d_{\{\Omega\}}(\psi)^2.
\end{align}
The constants do not depend on $n$.  There is no photon-number, energy, parity, or first-moment promise on the input.
\end{theorem}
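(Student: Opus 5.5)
The proof will mirror the second proof of Theorem~\ref{thm:fermions-calibration}: we combine Lemma~\ref{lem:boson-overlaps} with Theorems~\ref{thm:overlap} and~\ref{thm:restoration}, and the exactness lemmas identify the zero sets. The upper bounds need no new work. Both tests are perfectly complete by Lemmas~\ref{lem:boson-centered-zero} and~\ref{lem:boson-coherent-zero}, so Eq.~\eqref{eq:universal-upper} gives $R\le 1-F^2=\mathcal U_2(d^2)$.

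For the lower bounds, we first check the hypotheses of Theorem~\ref{thm:overlap} with $m=3$, $q=2$, using the blocks $\{1,2\},\{1,3\},\{2,3\}$. Each pair meets each other pair in exactly one copy. Both $P_{\Gzero,2}$ and $P_{\Coh,2}$ are swap invariant, as noted after their definitions. Lemma~\ref{lem:boson-overlaps} gives $T\le\Pi+\beta(I-\Pi)$. Since $T$ is an average of projectors, its eigenvalue-one space is exactly $\Ran\Pi$, so $\Spec(T)\subset\{1\}\cup[0,\beta]$. On the full Fock space this holds sector by sector, because $T$ preserves total photon number and Eq.~\eqref{eq:boson-sector-inequality} applies. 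With $\beta=7/12$ and $\beta=1/2$, both below $2/3$, Eq.~\eqref{eq:overlap-gradient-constant} gives $\kappa=1/8$ and $\kappa=1/4$.

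The gradient inequality~\eqref{eq:overlap-gradient-bound} is then $\|\nabla R\|^2\ge16R(\kappa-R)$. Theorem~\ref{thm:restoration} applies verbatim in the infinite-dimensional space $\mathcal F_n$: its proof covered existence of the flow there, and it needs neither compactness nor a closest target. It yields $R\ge\kappa\sin^2(2\theta_{\mathcal Z})$ for $\theta_{\mathcal Z}\le\pi/4$ and $R\ge\kappa$ otherwise. By exactness $\mathcal Z=\Gzero$ or $\mathcal Z=\Coh$, and $\sin^2 2\theta=4d^2(1-d^2)$ gives $\ell_\kappa(d^2)$ as in Eq.~\eqref{eq:two-copy-flow-bound}.

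The one real subtlety is justifying Lemma~\ref{lem:first-variation} and the smoothness of $R$ without an energy assumption. This holds because $R$ is a bounded polynomial in $\psi$ on the unit ball, so only bounded operators and tensor powers are involved. Otherwise the work was already done in Lemma~\ref{lem:boson-overlaps}, and I expect no further obstacle.

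For the vacuum test, $P=\ket\Omega\bra\Omega$ on one copy, so $R_\Omega(\psi)=1-|\ip\Omega\psi|^2$, which is exactly $d_{\{\Omega\}}^2$ by Eq.~\eqref{eq:distances}. Finally, no constant depends on $n$ or on photon number, because the bounds in Lemma~\ref{lem:boson-overlaps} are uniform over sectors.
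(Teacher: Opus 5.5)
Your proposal is correct and follows essentially the same route as the paper: swap invariance and Lemma~\ref{lem:boson-overlaps} feed Theorem~\ref{thm:overlap} with $m=3$, $q=2$ to give $\kappa=1/8$ and $\kappa=1/4$; Theorem~\ref{thm:restoration} together with the exactness lemmas then yields $\ell_\kappa$; and perfect completeness gives $\mathcal U_2$. You also spell out two points the paper leaves implicit: that $T$ acts as the identity on $\Ran\Pi$, hence commutes with $\Pi$, which converts the operator inequality into the spectral condition; and that the hypothesis $\mathcal Z\neq\varnothing$ of Theorem~\ref{thm:restoration} holds because both families contain the vacuum.
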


\begin{proof}
Both projectors commute with the tensor swap. Lemmas~\ref{lem:boson-centered-zero} and \ref{lem:boson-coherent-zero} identify their zero-rejection states as the corresponding target families, each containing vacuum. Lemma~\ref{lem:boson-overlaps} verifies Eq.~\eqref{eq:overlap-spectrum} for the three pairs of copies. Theorem~\ref{thm:overlap}, with $q=2$ and $m=3$, therefore gives
\begin{align}
 \kappa&=1-\tfrac32\beta
 =\begin{cases}1/8,&P=P_{\Gzero,2}, \nonumber\\1/4,&P=P_{\Coh,2},
 \end{cases} \nonumber\\
 \|\nabla R(\psi)\|_2^2
 &\ge16R(\psi)\bigl(\kappa-R(\psi)\bigr).
 \label{eq:boson-base-gradient-bound}
\end{align}
This is the assumption of Theorem~\ref{thm:restoration}. Its two-copy conclusion, Eq.~\eqref{eq:two-copy-flow-bound}, gives both lower bounds. Perfect completeness and Eq.~\eqref{eq:universal-upper} give the upper bounds. Finally, Eq.~\eqref{eq:distances} gives
\begin{align}
 R_{\Omega}(\psi)&=1-|\ip\Omega\psi|^2 \nonumber\\
 &=d_{\{\Omega\}}(\psi)^2.
\end{align}
\end{proof}

\subsection{Gaussian unitaries}
\label{subsec:boson-frames}

A Gaussian unitary without displacement can send any centered Gaussian state to vacuum. Applying the same such unitary to every copy commutes with real orthogonal mixing of the copies.

\begin{lemma}[Gaussian unitaries without displacement]
\label{lem:boson-frames}
The Gaussian unitaries with no displacement, called homogeneous
Gaussian unitaries, form a group with these properties.
\begin{enumerate}[label=(\roman*)]
\item Every centered Gaussian has the form $U\ket\Omega$.

\item Applying the same $U$ to every copy commutes with every
real orthogonal mixing of the copies.

\item For $\alpha\in\mathbb C^n$, the displacement operators
\begin{align}
 (D(\alpha)f)(z)
 &=\e^{-\|\alpha\|_2^2/2+\alpha^Tz}f(z-\bar\alpha)
 \label{eq:boson-displacement}
\end{align}
are unitary, and $UD(\alpha)U^\dagger$ is another displacement,
up to a scalar phase.

\item The full Gaussian family satisfies
\begin{align}
 \GB&=\{D(\alpha)U\ket\Omega\} \nonumber\\
    &=\{U^\dagger\ket{c_\alpha}\},
\end{align}
where $\ket{c_\alpha}=D(\alpha)\ket\Omega$, $U$ ranges over this
group, and $\alpha\in\mathbb C^n$.
\end{enumerate}
These unitaries preserve the centered family. Any two centered
Gaussian states have nonzero inner product.
\end{lemma}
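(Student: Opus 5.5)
I would work entirely in the Bargmann representation and define the homogeneous Gaussian unitaries concretely. They are the metaplectic operators $U_S$ with $S\in Sp(2n,\mathbb R)$, generated by $\exp(-iH)$ with $H$ a Hermitian quadratic form in $a_j,a_j^\dagger$ and no linear part. In Bargmann coordinates $a_j^\dagger$ is multiplication by $z_j$ and $a_j$ is $\partial_{z_j}$. It is then convenient to write each $U$ as a product of two kinds of factors. The passive factors $\Gamma(u)$, $u\in U(n)$, act by the substitution $f(z)\mapsto f(u^{T}z)$. The single-mode squeezers $\exp\bigl(\tfrac r2(a_j^{\dagger2}-a_j^2)\bigr)$ generate the rest, by Bloch--Messiah. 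The group property then holds by definition, as the group generated by these exponentials.

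For (i), take $g_Z$ from Lemma~\ref{lem:boson-gaussian-coordinates} and Takagi-factor $Z=V\diag(s_j)V^T$ with $s_j<1$. Write $s_j=\tanh r_j$. A product of single-mode squeezers with parameters $r_j$ sends $\ket\Omega$ to $\prod_j C_j\exp(\tfrac12 s_j z_j^2)$, which is the standard single-mode computation, up to sign conventions. Applying the passive unitary $\Gamma(V)$ then turns this into $g_Z$ up to phase. For (ii), each generator of $U$ on $q$ copies is $\sum_r H^{(r)}$, where $H$ is quadratic in the $a_{\mu,r}$. Under $\Gamma(I_n\otimes O)$ the operators $a_{\mu,r}$ transform as $\sum_s O_{s,r}a_{\mu,s}$. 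A term such as $\sum_r a_{\mu,r}a_{\nu,r}$ therefore becomes $\sum_{s,s'}(OO^T)_{s,s'}a_{\mu,s}a_{\nu,s'}$, which equals the original term because $O$ is real and orthogonal. The terms $a^\dagger a^\dagger$ and $a^\dagger a$ work the same way. So the generator commutes with $\Gamma(I_n\otimes O)$, and so does its exponential, $U^{\otimes q}$. Reality of $O$ is needed here, since $a^\dagger$ picks up $\bar O$. Strong commutation of the unbounded generators is handled by checking the identity on the finite-photon dense set, on which quadratic generators act analytically.

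For (iii), unitarity of $D(\alpha)$ follows by changing variables $z\mapsto z+\bar\alpha$ in Eq.~\eqref{eq:boson-bargmann}, or by identifying Eq.~\eqref{eq:boson-displacement} with $\exp(\alpha\cdot a^\dagger-\bar\alpha\cdot a)$. The Heisenberg action $U^\dagger a U=Aa+Ba^\dagger$, with $(A,B)$ a symplectic block pair, carries the linear generator $\alpha\cdot a^\dagger-\bar\alpha\cdot a$ to another linear anti-Hermitian generator. This gives $UD(\alpha)U^\dagger=D(\alpha')$. The Weyl relation shows that any phase ambiguity is scalar. For (iv), $\GB$ consists of the states $g_{Z,b}$. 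Applying $D(\alpha)$ to $g_Z$ via Eq.~\eqref{eq:boson-displacement} gives $\exp(\tfrac12 z^TZz+(\alpha-Z\bar\alpha)^Tz)$ up to a constant. Every $b$ is reached by solving $\alpha-Z\bar\alpha=b$, which is possible because $\|Z\|_\infty<1$ makes $\alpha\mapsto\alpha-Z\bar\alpha$ an invertible real-linear map. Together with (i) this gives $\GB=\{D(\alpha)U\ket\Omega\}$. Rewriting $D(\alpha)U=U\,D(\alpha'')$ by (iii) and renaming $U\to U^\dagger$ gives the second description.

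Finally, these unitaries preserve $\Gzero$ by (i) and the group property. For two centered states, $\ip{g_{Z}}{g_{Z'}}=\ip{\Omega}{U^\dagger U'\Omega}=\overline{C}\,C'\ip{\Omega}{g_{Z''}}$ by (i), and $\ip\Omega{g_{Z''}}=g_{Z''}(0)=C_{Z''}\neq0$. I expect the main obstacle to be the analytic bookkeeping on infinite-dimensional Fock space in (ii) and (iii). This means making the commutation of unbounded quadratic generators and the metaplectic phase rigorous. I would handle it sector by sector on polynomials and then extend by density and continuity.
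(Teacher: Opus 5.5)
Your treatment of (i), of unitarity of $D(\alpha)$, of solving $\alpha-Z\bar\alpha=b$ in (iv), and of the nonvanishing vacuum coefficient matches the paper. For (ii) and (iii) you take a different route. You work with the generators: invariance of $\sum_rH^{(r)}$ under $OO^T=I$, and the Bogoliubov action $U^\dagger aU=Aa+Ba^\dagger$. The paper never touches unbounded generators. It passes to the position representation through the Bargmann transform. There, real orthogonal copy mixing and squeezing are both linear changes of variables, with commuting matrices $I_n\otimes O$ and $\diag(\e^{-r_j})\otimes I_q$. Displacements become explicit modulation--translation operators, and their conjugation by squeezers and passive maps is computed directly. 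Your version is more uniform but needs the core/essential-self-adjointness bookkeeping you mention. The paper's version is elementary and avoids domain questions.

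There is one step that does not follow as written. You say the unitaries preserve $\Gzero$ ``by (i) and the group property,'' but (i) only gives $\Gzero\subseteq\{U\ket\Omega\}$. Preservation needs the converse, $U\ket\Omega\in\Gzero$ for every $U$ in the group; otherwise $UU_1\ket\Omega$ is not known to be centered.

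You also use this converse silently in two other places:
\begin{itemize}
\item the inclusion $\{D(\alpha)U\ket\Omega\}\subseteq\GB$ in (iv), since your formula for $D(\alpha)g_Z$ presupposes $U\ket\Omega=g_Z$;
\item the step $U^\dagger U'\ket\Omega=g_{Z''}$ in the overlap argument.
\end{itemize}

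You can supply it from the strong form of Bloch--Messiah, $U=\e^{i\phi}\Gamma(V_1)\prod_jS_{r_j}\Gamma(V_2)$. Then $U\ket\Omega$ is, up to phase, $\Gamma(V_1)$ applied to a product of squeezed vacua. By your computation in (i), this is $g_Z$ with $Z=V_1\diag(\tanh r_j)V_1^T$. Knowing only that passive maps and squeezers \emph{generate} the group would not suffice, unless you also computed how a squeezer acts on a non-vacuum $g_Z$.

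Alternatively, follow the paper. Part (ii) gives $[U^{\otimes2},P_{\Gzero,2}]=0$, so $U$ maps accepted tensor squares to accepted ones. Lemma~\ref{lem:boson-centered-zero} applied to $U$ and $U^\dagger$ then yields $U\Gzero=\Gzero$.
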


\begin{proof}
A passive transformation with unitary mode matrix $V$ acts as
\begin{align}
 (U_Vf)(z)&=f(V^Tz).
 \label{eq:boson-passive-action}
\end{align}
We construct squeezing using the single-mode Bargmann transform
\begin{align}
 (\mathcal Bh)(z)
 &=\pi^{-1/4}\int_{\mathbb R}
   \exp\!\left(-\frac{x^2}{2}+\sqrt2xz-\frac{z^2}{2}\right)
   h(x)\,\dd x.
 \label{eq:boson-bargmann-transform}
\end{align}
With the Gaussian norm in Eq.~\eqref{eq:boson-bargmann}, this
transform is unitary from $L^2(\mathbb R)$ onto the single-mode
Bargmann space~\cite[Theorem~6.2 and Eq.~(6.1), with
$\hbar=1$]{hall_holomorphic_2000}. It sends the normalized vacuum
wavefunction
\begin{align*}
 h_0(x)&=\pi^{-1/4}\e^{-x^2/2}
\end{align*}
to the constant function $1$. Tensor products give the
corresponding unitary identification for any finite number
of modes.

The unitary rescaling
\begin{align*}
 (S_rh)(x)&=\e^{-r/2}h(\e^{-r}x),
 \qquad r\in\mathbb R,
\end{align*}
implements single-mode squeezing. Gaussian integration gives
its action on vacuum:
\begin{align}
 (\mathcal B S_rh_0)(z)
 &=(\cosh r)^{-1/2}
   \exp\!\left(\tfrac12\tanh(r)z^2\right).
 \label{eq:boson-squeezed-vacuum}
\end{align}
Every centered pure Gaussian can be prepared by squeezing the
vacuum independently in each mode and then applying a passive
transformation, up to a global phase. More generally, passive
transformations, single-mode squeezers, and scalar phases generate
all homogeneous Gaussian unitaries; see
\cite[Sec.~2.3, Eqs.~(32)--(34)]{chabaud2022holomorphic}.
Squeezing phases can be included in the passive transformations.
This proves (i).

We next check commutation with copy mixing. For a passive
transformation, the single-photon matrices satisfy
\begin{align*}
 (V\otimes I_q)(I_n\otimes O)
 &=(I_n\otimes O)(V\otimes I_q).
\end{align*}
Real orthogonal copy mixing also acts by the same change of
variables in position coordinates. Indeed, a change of variables
in the product Bargmann integral gives, for $M=I_n\otimes O$,
\begin{align*}
 \mathcal B^{\otimes nq}(h\circ M^T)
 &=(\mathcal B^{\otimes nq}h)\circ M^T.
\end{align*}
Identical squeezers therefore commute with copy mixing as well:
their position rescaling matrix
$\diag(\e^{-r_1},\ldots,\e^{-r_n})\otimes I_q$
commutes with $I_n\otimes O$. Products of these operations satisfy
\begin{align}
 [U^{\otimes q},\Gamma(I_n\otimes O)]&=0,
 \label{eq:boson-unitary-copy-commutation}
\end{align}
which proves (ii). Averaging the two-copy rotations gives
\begin{align*}
 [U^{\otimes2},P_{\Gzero,2}]&=0.
\end{align*}
Lemma~\ref{lem:boson-centered-zero}, applied to $U$ and
$U^\dagger$, therefore gives $U\Gzero=\Gzero$.

For displacement, Eq.~\eqref{eq:boson-displacement} gives
\begin{align*}
 |(D(\alpha)f)(z)|^2\e^{-\|z\|_2^2}
 &=|f(z-\bar\alpha)|^2\e^{-\|z-\bar\alpha\|_2^2},\\
 D(\alpha)^{-1}&=D(-\alpha).
\end{align*}
Integration in Eq.~\eqref{eq:boson-bargmann} shows that
$D(\alpha)$ preserves the norm. The inverse identity then proves
unitarity.

To check how squeezing transforms displacements, define
translations and multiplication by a phase on $L^2(\mathbb R)$:
\begin{align*}
 (T_a h)(x)&=h(x-a),\\
 (M_b h)(x)&=\e^{ibx}h(x),
 \qquad a,b\in\mathbb R.
\end{align*}
Equations~\eqref{eq:boson-bargmann-transform}
and~\eqref{eq:boson-displacement} give
\begin{align}
 \mathcal B^\dagger D(\alpha)\mathcal B
 &=\e^{-i(\Re\alpha)(\Im\alpha)}
   M_{\sqrt2\Im\alpha}T_{\sqrt2\Re\alpha},
 \label{eq:boson-position-displacement}\\
 S_rM_bT_aS_r^\dagger
 &=M_{\e^{-r}b}T_{\e^r a}.
 \label{eq:boson-squeezing-displacement}
\end{align}
Multimode displacement is the product of these single-mode
operators. Direct substitution in
Eq.~\eqref{eq:boson-passive-action} also gives
\begin{align}
 U_VD(\alpha)U_V^\dagger&=D(V\alpha).
 \label{eq:boson-passive-displacement}
\end{align}
These identities prove (iii).

Finally, write $g_Z=g_{Z,0}$ in
Eq.~\eqref{eq:boson-full-form}. Substitution into
Eq.~\eqref{eq:boson-displacement} gives
\begin{align*}
 (D(\alpha)g_Z)(z)
 &=(D(\alpha)g_Z)(0)
   \exp\!\left(
   \tfrac12z^TZz+(\alpha-Z\bar\alpha)^Tz
   \right).
\end{align*}
For every $b$, the map $\alpha\mapsto b+Z\bar\alpha$ has
Lipschitz constant $\|Z\|_\infty<1$ and hence a unique fixed
point. Thus $b=\alpha-Z\bar\alpha$ has a unique solution.
Every full pure Gaussian is consequently a displacement of a
centered pure Gaussian. Combining this with (i) gives the first
description in (iv), and moving the displacement through $U$
using (iii) gives the second.

For centered states $\ket g,\ket h$, choose $U$ such that
$U\ket g=\ket\Omega$. Since $U\ket h$ is centered,
Eq.~\eqref{eq:boson-full-form} with $b=0$ shows that its
constant Bargmann coefficient is nonzero. Hence
\begin{align*}
 \ip gh&=\ip\Omega{Uh}\ne0.
\end{align*}
\end{proof}

\subsection{Displacement invariance}
\label{sec:boson-full}
This section extends the centered and coherent results to all pure
Gaussian states, with unknown displacement. It first establishes a
four-copy test, transfers its guarantee to three copies, and then
proves that two copies cannot suffice for a nontrivial test with
perfect completeness.

Mixing two identical inputs cancels their common displacement in
the difference output, by the displacement transformation rule in
Eq.~\eqref{eq:boson-passive-displacement}. However, this output may
be mixed, so the pure-state centered bound in
Eq.~\eqref{eq:boson-centered-calibration} cannot be applied directly.

We resolve this by keeping both outputs and analyzing their joint
pure state, with the sum output left unrestricted. The
centered-test overlap estimate remains valid on this enlarged
space. The overlapping-projection and gradient-flow theorems
(Theorems~\ref{thm:overlap} and~\ref{thm:restoration}) then show
that small rejection forces the joint state to be close to a
product whose difference factor is pure centered Gaussian
(Lemma~\ref{lem:boson-lift-zero},
Eq.~\eqref{eq:boson-lift-calibration}).

Finally, the Gaussian-unitary identities in
Lemma~\ref{lem:boson-quotients} connect this conclusion to the
coherent-state bound in Eq.~\eqref{eq:boson-coherent-calibration}.
Together, these give the four-copy bound on the original input's
distance from the full Gaussian family
(Eq.~\eqref{eq:boson-four-calibration}).

\subsubsection{The sum and difference outputs}
\label{subsec:boson-lift}

We first mix two identical inputs using the balanced beam splitter
$B$. For each physical mode $\mu$, it forms the normalized sum and
difference of the two input modes,
\begin{align*}
 \frac{a_{\mu,1}+a_{\mu,2}}{\sqrt2},
 \qquad
 \frac{a_{\mu,1}-a_{\mu,2}}{\sqrt2}.
\end{align*}
These give two groups of $n$ output modes, which we call the
\emph{sum output} and the \emph{difference output}, labeled $+$
and $-$.

For a normalized input $\ket\psi$, write
\begin{align}
 \ket{\Xi_\psi}
 &=B\ket\psi^{\otimes2}
   \in\mathcal F_{n,+}\otimes\mathcal F_{n,-},
 \label{eq:boson-lift}\\
 \rho_-(\psi)
 &=\Tr_+\ket{\Xi_\psi}\bra{\Xi_\psi}.
 \label{eq:boson-difference-state}
\end{align}
The joint output $\ket{\Xi_\psi}$ is pure, while the state
$\rho_-(\psi)$ of the difference output alone may be mixed.

We repeat this construction on a second pair of inputs and apply
the centered test to the two difference outputs, leaving the sum
outputs unrestricted. This uses four copies of the original input.
After grouping the sum factors first, the acceptance projector on
the output space is
\begin{align*}
 \widetilde P
 &=I_+^{\otimes2}\otimes P_{\Gzero,2}.
\end{align*}
Its rejection probability 
is
\begin{align}
 R_{\mathrm{diff},4}(\psi)
 &=1-\|\widetilde P\ket{\Xi_\psi}^{\otimes2}\|_2^2
 \nonumber\\
 &=1-\Tr\!\left(
 P_{\Gzero,2}\rho_-(\psi)^{\otimes2}
 \right).
 \label{eq:boson-four-rejection}
\end{align}
Conjugating this projector back to the input space through the
beam splitters and regrouping defines the four-copy acceptance
projector $P_{\mathrm{diff},4}$.

The test accepts perfectly whenever the joint output is a product
of an arbitrary sum state and a centered pure Gaussian difference
state. These states form the family
\begin{align}
 \mathcal Z
 &=\left\{
 \ket\chi\otimes\ket g:
 \|\ket\chi\|_2=1,\quad \ket g\in\Gzero
 \right\}.
 \label{eq:boson-product-target}
\end{align}
The next lemma shows that these are the only perfectly accepted
joint pure states. Thus, although the test acts only on the
difference outputs, perfect acceptance also forces them to be
unentangled with the sum outputs.

\begin{lemma}[Distance bound for the joint output state]
\label{lem:boson-lift-zero}
For a normalized joint state $\ket\Xi$, $\widetilde P\ket\Xi^{\otimes2}=\ket\Xi^{\otimes2}$ if and only if $\ket\Xi\in\mathcal Z$. Moreover,
\begin{align}
 1-\|\widetilde P\ket\Xi^{\otimes2}\|_2^2
       &\ge\ell_{1/8}\bigl(d_{\mathcal Z}(\Xi)^2\bigr).
 \label{eq:boson-lift-calibration}
\end{align}
\end{lemma}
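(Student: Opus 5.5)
The plan follows the same pattern as Theorem~\ref{thm:boson-base}. I would apply Theorem~\ref{thm:overlap} and Theorem~\ref{thm:restoration} to the two-copy projector $\widetilde P=I_+^{\otimes2}\otimes P_{\Gzero,2}$ on the enlarged one-copy space $\mathcal F_{n,+}\otimes\mathcal F_{n,-}$. The zero set of this projector is identified separately.

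\textbf{Exactness.} If $\ket\Xi\in\mathcal Z$, then $\ket\Xi^{\otimes2}$ is a product of $\ket\chi^{\otimes2}$ on the sum factors and $\ket g^{\otimes2}$ on the difference factors. The latter is fixed by $P_{\Gzero,2}$ by Lemma~\ref{lem:boson-centered-zero}, so $\ket\Xi$ is accepted. For the converse, pass to Bargmann functions $F(x,z)$, with $x$ the sum variables and $z$ the difference variables. Acceptance reads
\begin{align*}
F(x,z\cos\theta+w\sin\theta)\,F(y,-z\sin\theta+w\cos\theta)=F(x,z)F(y,w)
\end{align*}
for all $x,y,z,w,\theta$. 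Fix $x=y=x_0$ with $F(x_0,\cdot)\not\equiv0$. Then $z\mapsto F(x_0,z)$ satisfies Eq.~\eqref{eq:boson-centered-functional}, hence equals $F(x_0,0)e^{z^TZ_{x_0}z/2}$, and in particular $F(x_0,0)\neq0$. Next set $w=0$, $\theta=\pi/2$ with $y$ free, or equivalently differentiate in $\theta$ at $0$ as in the proof of Lemma~\ref{lem:boson-centered-zero} with $y$ and $x$ independent. This gives $\nabla_zF(x,z)F(y,0)=F(x,z)\,\partial_w\partial_z$-data at $(y,0)$. It forces $\nabla_z\log F(x,z)=Zz$ with $Z$ independent of $x$ on the dense open set where $F\neq0$. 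Integrating yields $F(x,z)=\chi(x)e^{z^TZz/2}$, i.e.\ a product. Normalizability then gives $\|Z\|_\infty<1$ by Lemma~\ref{lem:boson-gaussian-coordinates}. I expect the subtle point to be the $x$-independence of $Z$; it follows by comparing the identity at two different $x,y$ after dividing by the nonzero values $F(x,0)$ and $F(y,0)$.

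\textbf{Spectral gap.} Place $\widetilde P$ on the pairs $12,13,23$ of three copies of the joint space. Each $\widetilde P_{ij}$ is the average of copy rotations acting only on the difference modes, with identity on the sum modes. So the three pair projectors are the $SO(3)$ axis projectors for the representation $I\otimes\Gamma(I_n\otimes O)$ on $\mathcal F_+^{\otimes3}\otimes\mathcal F_-^{\otimes3}$. This representation is an orthogonal direct sum over sum-photon-number and difference-photon-number sectors, each finite-dimensional. Lemma~\ref{lem:so3} then gives the spectral bound $\{1\}\cup[0,7/12]$, and Eq.~\eqref{eq:boson-sector-inequality} extends it to the full space. $\widetilde P$ is swap invariant because $P_{\Gzero,2}$ is. Theorem~\ref{thm:overlap} with $m=3,q=2$ gives $\kappa=1/8$ and the gradient inequality.

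\textbf{Conclusion.} Theorem~\ref{thm:restoration} now applies with the zero set equal to $\mathcal Z$. This set is nonempty, since it contains $\ket\Omega\otimes\ket\Omega$, and it is phase invariant. The two-copy envelope in Eq.~\eqref{eq:two-copy-flow-bound} then gives Eq.~\eqref{eq:boson-lift-calibration} directly. No energy assumption is needed, because the flow argument works in infinite dimensions. The main obstacle is the converse direction of exactness, specifically ruling out entanglement between the sum and difference outputs, as discussed above.
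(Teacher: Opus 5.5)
Your proposal is correct. The spectral-gap step and the gradient-flow conclusion coincide with the paper's: the paper regroups the three-copy average as $I_+^{\otimes3}\otimes T_-$, bounds it via Lemma~\ref{lem:boson-overlaps} with $\beta=7/12$, and applies Theorems~\ref{thm:overlap} and~\ref{thm:restoration} with $\kappa=1/8$.

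You differ from the paper only in the converse of exactness.

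\textbf{The paper's route.} It takes a Schmidt decomposition $\ket\Xi=\sum_j\sqrt{p_j}\ket{\chi_j}\otimes\ket{g_j}$. Orthonormality of the $\ket{\chi_j}$ splits the rejection as $\sum_{i,j}p_ip_j\|(I-P_{\Gzero,2})(\ket{g_i}\otimes\ket{g_j})\|_2^2$. Zero rejection therefore kills every diagonal term, so each $\ket{g_j}$ is centered Gaussian by Lemma~\ref{lem:boson-centered-zero}. Since no two centered Gaussians are orthogonal (Lemma~\ref{lem:boson-frames}), there is only one Schmidt term. This uses the one-party lemma as a black box, but relies on the Gaussian-unitary machinery for the non-orthogonality fact.

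\textbf{Your route.} Your Bargmann argument is self-contained and needs neither the Schmidt decomposition nor that fact. As written, however, the middle step is garbled, and the ``subtle point'' you flag is not actually there.
\begin{itemize}
\item Differentiating in $\theta$ at $0$ and then in $w_j$ at $w=0$ gives $F(y,0)\nabla_zF(x,z)=F(x,z)H(y)z$, where $H(y)$ is the Hessian of $z\mapsto F(y,z)$ at $z=0$.
\item Fix one $y_0$ with $F(y_0,0)\neq0$, which your first step supplies. Then $\nabla_zF(x,z)=ZzF(x,z)$ with $Z=H(y_0)/F(y_0,0)$. This $Z$ is independent of $x$ simply because $x$ and $y$ are independent variables; no comparison at two points is needed.
\item The identity $\nabla_z(\e^{-z^TZz/2}F(x,z))=0$ holds everywhere, so no restriction to the set where $F\neq0$ is needed.
\item More directly, the specialization $\theta=\pi/2$, $w=0$ alone gives $F(x,0)F(y,-z)=F(x,z)F(y,0)$. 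Hence $F(x,z)=F(x,0)\,F(y_0,-z)/F(y_0,0)$ with no differentiation at all, and the second factor is centered Gaussian by your first step.
\end{itemize}
The two specializations you call ``equivalent'' are not equivalent, though either one works.

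Finally, state explicitly how you pass from the factorized function to $\ket\Xi\in\mathcal Z$, and to Lemma~\ref{lem:boson-gaussian-coordinates}. Tonelli gives $\|F\|^2=\|\chi\|^2\,\|\phi\|^2$ in the Gaussian norms, and both factors are nonzero. So each factor is a normalizable Fock vector, and $\|Z\|_\infty<1$ follows.
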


\begin{proof}
Write a Schmidt decomposition
\begin{align}
 \ket\Xi&=\sum_j\sqrt{p_j}\,\ket{\chi_j}\otimes\ket{g_j},
 \qquad p_j>0,
\end{align}
where the two families are orthonormal. Orthogonality of the sum-output states gives the convergent nonnegative sum
\begin{align}
 \|(I-\widetilde P)\ket\Xi^{\otimes2}\|_2^2
   &=\sum_{i,j}p_ip_j
       \|(I-P_{\Gzero,2})(\ket{g_i}\otimes\ket{g_j})\|_2^2.
\end{align}
If it vanishes, every diagonal term vanishes. Each $\ket{g_j}$ is therefore centered Gaussian by Lemma~\ref{lem:boson-centered-zero}. Lemma~\ref{lem:boson-frames} shows that two such states cannot be orthogonal, so there is only one Schmidt term. The converse follows from perfect completeness of the centered test.

Swap invariance of $P_{\Gzero,2}$ makes $\widetilde P$ swap invariant on the joint copies. On three difference outputs, let
\begin{align}
 T_-&=\frac{P_{1,2}+P_{1,3}+P_{2,3}}3,
\end{align}
where $P_{i,j}$ applies $P_{\Gzero,2}$ to outputs $i,j$, and let $\Pi_-$ project onto their common accepted subspace. After regrouping three joint copies, Eq.~\eqref{eq:boson-pair-average-bound} gives
\begin{align}
 \widetilde T&=I_+^{\otimes3}\otimes T_-, \nonumber\\
 \widetilde\Pi&=I_+^{\otimes3}\otimes\Pi_-, \nonumber\\
 \widetilde T&\le\widetilde\Pi+\frac7{12}(I-\widetilde\Pi).
\end{align}
Theorem~\ref{thm:overlap} applies to these three pair projectors with
\begin{align}
 \kappa&=1-\frac32\frac7{12}=\frac18.
\end{align}
The nonempty zero-rejection set is $\mathcal Z$, as proved above. The two-copy conclusion of Theorem~\ref{thm:restoration}, Eq.~\eqref{eq:two-copy-flow-bound}, proves Eq.~\eqref{eq:boson-lift-calibration}.
\end{proof}

The evolution in Theorem~\ref{thm:restoration} acts on joint pure states and need not preserve the form $B\ket\psi^{\otimes2}$. The following identities relate distance in the joint output space to distance from the original Gaussian family.

\begin{lemma}[Distances under Gaussian unitaries]
\label{lem:boson-quotients}
Let $U$ range over all Gaussian unitaries without displacement. Then
\begin{align}
 d_{\mathcal Z}(\Xi_\psi)^2
     &=\inf_U R_{\Coh,2}(U\psi),
 \label{eq:boson-quotient-identities}\\
 d_{\GB}(\psi)^2
     &=\inf_U d_{\Coh}(U\psi)^2.
 \label{eq:boson-gaussian-coherent-distance}
\end{align}
\end{lemma}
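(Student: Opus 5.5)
We prove both identities by working in Bargmann coordinates. Throughout, $U$ is a homogeneous Gaussian unitary, and we use Lemma~\ref{lem:boson-frames} to move such unitaries through the copy mixing.

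\textbf{Second identity.} By Lemma~\ref{lem:boson-frames}(iv), $\GB=\{U^\dagger\ket{c_\alpha}\}$. Hence
\begin{align*}
 F_{\GB}(\psi)=\sup_{U,\alpha}|\ip{U^\dagger c_\alpha}{\psi}|^2=\sup_U\sup_\alpha|\ip{c_\alpha}{U\psi}|^2=\sup_U F_{\Coh}(U\psi).
\end{align*}
Taking $1-F$ gives Eq.~\eqref{eq:boson-gaussian-coherent-distance}. Since the group is closed under inverses, the relabeling $U\leftrightarrow U^\dagger$ is harmless.

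\textbf{First identity.} The key observation is that, by Lemma~\ref{lem:boson-frames}(ii), $U^{\otimes2}$ commutes with the balanced beam splitter $B$, so $B(U\psi)^{\otimes2}=(U\otimes U)\ket{\Xi_\psi}$ in the $(+,-)$ grouping. First compute
\begin{align*}
 F_{\mathcal Z}(\Xi)=\sup_{g\in\Gzero}\sup_{\|\chi\|=1}|\ip{\chi\otimes g}{\Xi}|^2=\sup_{g\in\Gzero}\|(I_+\otimes\bra g)\ket\Xi\|_2^2.
\end{align*}
By Lemma~\ref{lem:boson-frames}(i), write $\ket g=U^\dagger\ket\Omega$. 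Then
\begin{align*}
 \|(I\otimes\bra\Omega U)\Xi_\psi\|^2=\|(I\otimes\bra\Omega)(U\otimes U)\Xi_\psi\|^2=\|(I\otimes\ket\Omega\bra\Omega)B(U\psi)^{\otimes2}\|^2=p_{\Coh,2}(U\psi).
\end{align*}
The first equality holds because $U\otimes I$ is unitary on the sum factor and leaves the norm unchanged; the last equality uses Eq.~\eqref{eq:boson-coherent-projector}. Taking the supremum over $U$ and subtracting from one gives Eq.~\eqref{eq:boson-quotient-identities}.

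\textbf{Main obstacle.} The delicate point is the commutation $[U^{\otimes2},B]=0$. This needs $B$ to be a real orthogonal copy mixing, whereas the convention in Eq.~\eqref{eq:boson-beam-splitter} uses the matrix $\tfrac1{\sqrt2}\begin{pmatrix}1&1\\1&-1\end{pmatrix}$. This matrix is real orthogonal with determinant $-1$. Lemma~\ref{lem:boson-frames}(ii) covers all of $O(q)$, so the commutation holds, but I would double-check the position-variable argument for reflections. A second point to check is that every $g\in\Gzero$ has the form $U^\dagger\ket\Omega$, up to a phase that does not affect the modulus.
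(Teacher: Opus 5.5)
Your proof is correct and follows essentially the same route as the paper. In both, the supremum over the sum-output state reduces to $\|(I\otimes\bra g)\ket{\Xi_\psi}\|_2^2$; you then write $\ket g=U^\dagger\ket\Omega$ and commute $U\otimes U$ through $B$ to reach the coherent acceptance probability (the paper phrases this step as $\rho_-(U\psi)=U\rho_-(\psi)U^\dagger$), and the second identity is the same regrouping of suprema using Lemma~\ref{lem:boson-frames}(iv). Your worry about $\det B=-1$ is unnecessary: the change-of-variables argument in Lemma~\ref{lem:boson-frames}(ii) uses only $M^TM=I$, so it covers reflections as well.
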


\begin{proof}
For a fixed normalized $\ket g\in\Gzero$, maximizing over the sum-output state gives
\begin{align}
 \sup_{\|\ket\chi\|_2=1}|\ip{\chi\otimes g}{\Xi_\psi}|^2
   &=\|(I\otimes\bra g)\ket{\Xi_\psi}\|_2^2 \nonumber\\
   &=\ip g{\rho_-(\psi)g}.
\end{align}
Write $\ket g=U^\dagger\ket\Omega$ using Lemma~\ref{lem:boson-frames}. Equation~\eqref{eq:boson-unitary-copy-commutation} implies that $U\otimes U$ commutes with $B$, so
\begin{align}
 \rho_-(U\psi)&=U\rho_-(\psi)U^\dagger, \nonumber\\
 \ip g{\rho_-(\psi)g}
   &=\ip\Omega{\rho_-(U\psi)\Omega} \nonumber\\
   &=1-R_{\Coh,2}(U\psi).
\end{align}
The last equality uses Eq.~\eqref{eq:boson-coherent-projector}. Taking the supremum over centered $\ket g$ proves Eq.~\eqref{eq:boson-quotient-identities}. The full-family description in Lemma~\ref{lem:boson-frames}(iv) gives
\begin{align}
 d_{\GB}(\psi)^2
   &=1-\sup_{U,\alpha}|\ip{U^\dagger c_\alpha}\psi|^2 \nonumber\\
   &=\inf_U\left(1-\sup_\alpha|\ip{c_\alpha}{U\psi}|^2\right) \nonumber\\
   &=\inf_U d_{\Coh}(U\psi)^2.
\end{align}
Neither identity requires an attained optimum.
\end{proof}

The coherent and centered lower bounds compose to the function from Eq.~\eqref{eq:ellB-definition}
\begin{align}
 \ell_{\mathrm B}\bigl(d_{\GB}(\psi)^2\bigr)
   &=\ell_{1/8}\!\left(\ell_{1/4}\bigl(d_{\GB}(\psi)^2\bigr)\right).
 \label{eq:boson-full-envelope}
\end{align}

\subsubsection{Three-copy comparison}
\label{subsec:boson-tetrahedron}

We now show that the three-copy test inherits the distance guarantee
of the four-copy difference-output test. Following the higher-copy
construction in Section~\ref{sec:higher-copy-tests}
(Definition~\ref{def:normal-projectors}), we average copy rotations
that leave the equally weighted combination unchanged.

For three copies, the coefficient vector
\begin{align*}
 \mathbf u&=\frac{(1,1,1)}{\sqrt3}
\end{align*}
describes the normalized sum of corresponding modes across the
copies (Eq.~\eqref{eq:higher-copy-collective-vector}). Equal
displacements affect only this combination. We therefore leave
it fixed and rotate the two orthogonal combinations.

Let $r_\theta\in SO(3)$ fix $\mathbf u$ and rotate its orthogonal
complement by angle $\theta$. The acceptance projector and
rejection probability are
\begin{align}
 P_{\GB,3}
 &=\frac1{2\pi}\int_0^{2\pi}
   \Gamma(I_n\otimes r_\theta)\,\dd\theta,
 \label{eq:boson-three-projector}\\
 R_{\GB,3}(\psi)
 &=1-\|P_{\GB,3}\ket\psi^{\otimes3}\|_2^2.
\end{align}
This projector accepts $\ket g^{\otimes3}$ for every pure Gaussian
state $\ket g$: vacuum is invariant, identical Gaussian unitaries
without displacement commute with copy mixing by
Eq.~\eqref{eq:boson-unitary-copy-commutation}, and equal
displacements affect only the fixed combination $\mathbf u$.

Every copy permutation also fixes $\mathbf u$ and conjugates the
rotations $r_\theta$ to the same set. Thus the acceptance projector
commutes with every copy permutation:
\begin{align*}
 [P_{\GB,3},U_\pi]&=0.
\end{align*}

To compare the three-copy and four-copy tests as operators, we
place both on the four-copy space. Here
\begin{align*}
 \mathbf u&=\frac{(1,1,1,1)}2,
\end{align*}
and the three-dimensional space $\mathbf u^\perp$ contains the
combinations orthogonal to the common sum. Let $P_{\rm all}$ be
the projector obtained by averaging all rotations of this space,
which form an $SO(3)$ group.

For each $i=1,2,3,4$, let $P_{\widehat i}$ apply the three-copy
test to the triple omitting copy $i$, leaving that copy
unrestricted. Its rotations fix the omitted copy direction $e_i$
as well as $\mathbf u$. Their axis within $\mathbf u^\perp$ is
therefore the normalized projection of $e_i$ onto this space:
\begin{align*}
 v_i&=\frac2{\sqrt3}\left(e_i-\frac{\mathbf u}{2}\right).
\end{align*}
These are unit vectors satisfying
\begin{align*}
 \ip{v_i}{v_j}&=-\frac13
 \qquad(i\ne j),
\end{align*}
so they point toward the vertices of a regular tetrahedron. We compare the average of the four triple-test projectors,
\begin{align*}
 T_{\rm tet}&=\frac14\sum_{i=1}^4P_{\widehat i},
\end{align*}
with the four-copy difference-output projector. On four identical
inputs, every term in this average has the acceptance probability
of the three-copy test. An operator bound for $T_{\rm tet}$ will
therefore give the desired comparison of rejection probabilities.

\begin{lemma}[Tetrahedral comparison]
\label{lem:boson-tetrahedron}
On four-copy Fock space,
\begin{align}
 I-T_{\rm tet}&\ge\frac49(I-P_{\rm all}) \nonumber\\
             &\ge\frac49(I-P_{\mathrm{diff},4}).
 \label{eq:boson-tetrahedral-operator}
\end{align}
Consequently,
\begin{align}
 R_{\GB,3}(\psi)&\ge\frac49R_{\mathrm{diff},4}(\psi).
 \label{eq:boson-three-four-comparison}
\end{align}
\end{lemma}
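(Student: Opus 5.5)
The plan is to reduce the operator inequality to a spectral bound for an average of axis projectors in $SO(3)$ representations, exactly as in Lemma~\ref{lem:so3}, but now with four tetrahedral axes instead of three orthogonal ones.

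\emph{Setting up the $SO(3)$ action.} Real rotations of the three-dimensional space $\mathbf u^\perp\subset\mathbb R^4$ act on four-copy Fock space through $\Gamma(I_n\otimes O)$, where $O$ acts as $r\oplus 1$ on $\mathbf u^\perp\oplus\mathbb R\mathbf u$. This gives a unitary representation of $SO(3)$. It preserves total photon number, so each sector is finite-dimensional and splits into integer-spin irreps $V_\ell$.

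\emph{Identifying the three projectors with this action.}
\begin{itemize}
 \item $P_{\rm all}$ is the projector onto the spin-$0$ part.
 \item $P_{\widehat i}$ averages the rotations fixing both $e_i$ and $\mathbf u$. These are exactly the rotations of $\mathbf u^\perp$ about the axis $v_i$. Restricted to the three copies other than $i$, they are the rotations $r_\theta$ defining $P_{\GB,3}$. So $P_{\widehat i}$ is the projector onto states unchanged by rotations about $v_i$.
 \item The rotations defining $P_{\mathrm{diff},4}$ act in a plane inside $\mathbf u^\perp$, so they form a subgroup of this $SO(3)$. A state fixed by all of $SO(3)$ is fixed by the subgroup, hence $P_{\rm all}\le P_{\mathrm{diff},4}$. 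This gives the second inequality in Eq.~\eqref{eq:boson-tetrahedral-operator}.
\end{itemize}

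\emph{Spectral bound in each irrep.} For the first inequality it suffices to show
\[
 T_{\rm tet}\le P_{\rm all}+\tfrac59(I-P_{\rm all})
\]
in every irrep; Eq.~\eqref{eq:boson-sector-inequality} then extends it to the full Fock space. On $V_0$ every projector is the identity. On $V_\ell$ with $\ell\ge1$, Lemma~\ref{lem:so3} gives $P_{\widehat i}=\ket{z_{v_i}}\bra{z_{v_i}}$ with $\ip{z_{v_i}}{z_{v_j}}=P_\ell(-1/3)$ for $i\ne j$. As in Eq.~\eqref{eq:so3-average-factorization}, the nonzero eigenvalues of $T_{\rm tet}$ are those of the $4\times4$ Gram matrix divided by four:
\[
 \frac{1+3P_\ell(-1/3)}{4}\ \text{(once)},\qquad \frac{1-P_\ell(-1/3)}{4}\ \text{(three times)}.
\]
Since $|P_\ell|\le1$, the second value is at most $1/2<5/9$. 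The first is at most $5/9$ exactly when $P_\ell(-1/3)\le 11/27$.

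\emph{Checking $P_\ell(-1/3)\le 11/27$; this is the main obstacle.} Small $\ell$ are checked directly:
\[
 P_1(-\tfrac13)=-\tfrac13,\qquad P_2(-\tfrac13)=-\tfrac13,\qquad P_3(-\tfrac13)=\tfrac{11}{27}.
\]
So the bound $5/9$ is attained at $\ell=3$ and cannot be improved. For $\ell\ge4$, use the integral representation Eq.~\eqref{eq:fermions-legendre-integral}. The integrand has modulus $\bigl(t^2+(1-t^2)\cos^2\varphi\bigr)^{\ell/2}\le1$, which is nonincreasing in $\ell$. Hence for $\ell\ge4$,
\[
 |P_\ell(t)|\le\frac1{2\pi}\int_0^{2\pi}\bigl(t^2+(1-t^2)\cos^2\varphi\bigr)^2\,\dd\varphi .
\]
At $t^2=1/9$ this averages to
\[
 \tfrac1{81}+\tfrac{8}{81}+\tfrac{64}{81}\cdot\tfrac38=\tfrac{33}{81}=\tfrac{11}{27}.
\]
The whole bound is therefore tight but still valid. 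I would double-check both this computation and the value $P_3(-1/3)$, since the argument has no slack.

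\emph{Consequence for rejection probabilities.} With the operator bound on every irrep, taking expectations in $\ket\psi^{\otimes4}$ gives Eq.~\eqref{eq:boson-three-four-comparison}. Each $\ip{\psi^{\otimes4}}{P_{\widehat i}\psi^{\otimes4}}$ equals $1-R_{\GB,3}(\psi)$, because the omitted copy is normalized and $P_{\GB,3}$ commutes with copy permutations. The right-hand side gives $\tfrac49 R_{\mathrm{diff},4}(\psi)$ by definition of $P_{\mathrm{diff},4}$.
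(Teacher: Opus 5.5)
Your proof is correct, and its overall structure is the paper's: the $SO(3)$ action on $\mathbf u^\perp$, the identification of each $P_{\widehat i}$ with the axis projector for the tetrahedral axis $v_i$, the $4\times4$ Gram matrix with off-diagonal entries $p_\ell=P_\ell(-1/3)$ and eigenvalues $(1+3p_\ell)/4$, $(1-p_\ell)/4$, the subgroup inclusion $P_{\rm all}\le P_{\mathrm{diff},4}$, and the extension over photon-number sectors. Your two arithmetic checks are right: $P_3(-1/3)=11/27$, and $t^2+\tfrac38(1-t^2)^2=33/81=11/27$ at $t^2=1/9$.

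The one real difference is how you verify $p_\ell\le 11/27$ for large $\ell$. The paper computes $p_1,\ldots,p_{10}$ from the three-term recurrence. For $\ell\ge11$ it then bounds the same Laplace integral by a Gaussian integral, giving $|P_\ell(-1/3)|\le 3\sqrt\pi/(4\sqrt\ell)$, which drops below $11/27$ only from $\ell=11$ on.

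You instead use that the integrand has modulus at most one. This dominates every $\ell\ge4$ by the single fourth absolute moment, which happens to equal the threshold $11/27$ exactly. That replaces seven tabulated values and a tail estimate with one computation. The cost is that your bound has no margin at the threshold; since only $\le$ is needed, this does not matter.

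Your cruder estimate $(1-p_\ell)/4\le 1/2$, from $|P_\ell|\le1$, is also sufficient. The paper gets $(1-p_\ell)/4\le 1/3$ from $p_\ell\ge-1/3$, but either lies below $5/9$.
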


\begin{proof}
In a spin-$\ell$ representation, Eq.~\eqref{eq:so3-axis-projector} expresses each axis projector as a rank-one projector. The normalized states have pairwise inner products
\begin{align}
 p_\ell&=P_\ell(-1/3)
\end{align}
by Eq.~\eqref{eq:fermions-zonal-overlap}. Their Gram matrix is
\begin{align}
 \begin{pmatrix}
 1&p_\ell&p_\ell&p_\ell \nonumber\\
 p_\ell&1&p_\ell&p_\ell \nonumber\\
 p_\ell&p_\ell&1&p_\ell \nonumber\\
 p_\ell&p_\ell&p_\ell&1
 \end{pmatrix}.
\end{align}
The nonzero eigenvalues of the average of the four projectors coincide with those of one quarter of this matrix. The eigenvalues of the scaled matrix are
\begin{align}
 \frac{1+3p_\ell}{4},\qquad
 \frac{1-p_\ell}{4}\quad\text{(multiplicity three)}.
 \label{eq:boson-tetrahedral-spectrum}
\end{align}
Positivity of the Gram matrix gives $p_\ell\ge-1/3$. For every $\ell\ge1$, we also have $p_\ell\le11/27$, as follows. The Legendre recurrence
\begin{align}
 (\ell+1)P_{\ell+1}(t)
     &=(2\ell+1)tP_\ell(t)-\ell P_{\ell-1}(t), \nonumber\\
 P_0(t)&=1, \nonumber\\
 P_1(t)&=t
\end{align}
gives
\begin{align}
 (p_1,p_2,p_3,p_4,p_5)
   &=\left(-\frac13,-\frac13,\frac{11}{27},\frac1{81},-\frac13\right), \nonumber\\
 (p_6,p_7,p_8,p_9,p_{10})
   &=\left(\frac{47}{243},\frac{121}{729},-\frac{199}{729},
            \frac{479}{19683},\frac{13597}{59049}\right).
\end{align}
These values are at most $11/27$. For $\ell\ge11$, the integral formula in Eq.~\eqref{eq:fermions-legendre-integral} gives
\begin{align}
 |P_\ell(-1/3)|
 &\le\frac1\pi\int_0^\pi
          \bigl[1-\tfrac89\sin^2t\bigr]^{\ell/2}\,\dd t \nonumber\\
 &\le\frac2\pi\int_0^{\pi/2}
          \exp\bigl(-\tfrac{4\ell}{9}\sin^2t\bigr)\,\dd t \nonumber\\
 &\le\frac2\pi\int_0^\infty
          \exp\bigl(-\tfrac{16\ell}{9\pi^2}t^2\bigr)\,\dd t \nonumber\\
 &=\frac{3\sqrt\pi}{4\sqrt\ell} \nonumber\\
 &<\frac{11}{27}.
\end{align}
In this chain of steps,
we have used 
the inequalities $1-y\le\e^{-y}$ and $\sin t\ge2t/\pi$ on $[0,\pi/2]$. Together with $p_\ell\ge-1/3$, this gives
\begin{align}
 \max\left\{\frac{1+3p_\ell}{4},\frac{1-p_\ell}{4}\right\}
 &\le\max\left\{\frac59,\frac13\right\}=\frac59
 \qquad(\ell\ge1).
\end{align}
The spin-zero subspaces are exactly the states accepted by $P_{\rm all}$. The bound is uniform in spin and photon number, so Eq.~\eqref{eq:boson-sector-inequality} proves the first inequality in Eq.~\eqref{eq:boson-tetrahedral-operator} on the full Fock space.
The four-copy difference-output test averages rotations in the plane spanned by
\begin{align}
 d_1&=(1,-1,0,0)/\sqrt2, \nonumber\\
 d_2&=(0,0,1,-1)/\sqrt2.
 \label{eq:boson-difference-plane}
\end{align}
Both vectors are orthogonal to $\mathbf u$. The rotation group defining $P_{\rm all}$ therefore includes these rotations, so every state it accepts is also accepted by $P_{\mathrm{diff},4}$
\begin{align}
 P_{\rm all}&\le P_{\mathrm{diff},4}.
\end{align}
This proves the second inequality. Finally, each $P_{\widehat i}$ acts as the identity on its omitted copy, so
\begin{align}
 \ip{\psi^{\otimes4}}{(I-T_{\rm tet})\psi^{\otimes4}}
   &=\frac14\sum_{i=1}^4
      \ip{\psi^{\otimes4}}{(I-P_{\widehat i})\psi^{\otimes4}} \nonumber\\
   &=R_{\GB,3}(\psi).
\end{align}
Taking the expectation of Eq.~\eqref{eq:boson-tetrahedral-operator} proves Eq.~\eqref{eq:boson-three-four-comparison}.
\end{proof}

Equations~\eqref{eq:boson-lift-calibration} and \eqref{eq:boson-three-four-comparison} now give the full-Gaussian bounds.

\begin{theorem}[Full Gaussian distance bounds]\label{thm:boson-full}
\label{prop:boson-four-calibration}\label{lem:boson-full-zero}
For every normalized input $\ket\psi$, the four-copy difference-output test and the three-copy test satisfy
\begin{align}
 \ell_{\mathrm B}\bigl(d_{\GB}(\psi)^2\bigr)
   &\le R_{\mathrm{diff},4}(\psi)
      \le\mathcal U_4\bigl(d_{\GB}(\psi)^2\bigr),
 \label{eq:boson-four-calibration}\\
 \tfrac49\ell_{\mathrm B}\bigl(d_{\GB}(\psi)^2\bigr)
   &\le\tfrac49R_{\mathrm{diff},4}(\psi)
      \le R_{\GB,3}(\psi)\le\mathcal U_3\bigl(d_{\GB}(\psi)^2\bigr).
 \label{eq:boson-three-calibration}
\end{align}
In particular,
\begin{align}
 R_{\mathrm{diff},4}(\psi)&\ge\frac3{32}d_{\GB}(\psi)^2, \nonumber\\
 R_{\GB,3}(\psi)&\ge\frac1{24}d_{\GB}(\psi)^2.
\end{align}
Both tests have zero rejection exactly on $\GB$.
\end{theorem}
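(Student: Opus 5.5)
The plan is to chain three results already in place: the joint-output bound of Lemma~\ref{lem:boson-lift-zero}, the two quotient identities of Lemma~\ref{lem:boson-quotients}, and the coherent bound of Theorem~\ref{thm:boson-base}. The point is to turn the four-copy rejection into a statement about distance from $\GB$.

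\textbf{Four-copy lower bound.} By Eq.~\eqref{eq:boson-four-rejection}, $R_{\mathrm{diff},4}(\psi)=1-\|\widetilde P\ket{\Xi_\psi}^{\otimes2}\|_2^2$. Lemma~\ref{lem:boson-lift-zero} then gives
\begin{align*}
R_{\mathrm{diff},4}(\psi)\ge\ell_{1/8}\bigl(d_{\mathcal Z}(\Xi_\psi)^2\bigr).
\end{align*}
Eq.~\eqref{eq:boson-quotient-identities} identifies $d_{\mathcal Z}(\Xi_\psi)^2=\inf_U R_{\Coh,2}(U\psi)$. For each homogeneous Gaussian $U$, Theorem~\ref{thm:boson-base} gives $R_{\Coh,2}(U\psi)\ge\ell_{1/4}(d_{\Coh}(U\psi)^2)$. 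Eq.~\eqref{eq:boson-gaussian-coherent-distance} gives $d_{\Coh}(U\psi)^2\ge d_{\GB}(\psi)^2$.

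Both $\ell$ functions are continuous and nondecreasing. So we may take the infimum through $\ell_{1/4}$ and then apply $\ell_{1/8}$ monotonically. This yields
\begin{align*}
R_{\mathrm{diff},4}(\psi)\ge\ell_{1/8}\bigl(\ell_{1/4}(d_{\GB}^2)\bigr)=\ell_{\mathrm B}(d_{\GB}^2),
\end{align*}
using Eq.~\eqref{eq:boson-full-envelope}. No attained optimum is needed, because each inequality holds for every $U$ before the infimum is taken.

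\textbf{Upper bounds.} Both tests are perfectly complete on $\GB$. For $P_{\GB,3}$ this is the observation stated just after Eq.~\eqref{eq:boson-three-projector}. For $P_{\mathrm{diff},4}$, completeness follows from Lemma~\ref{lem:boson-passive-displacement}-type reasoning: $B$ maps $\ket g^{\otimes2}$ for $g=D(\alpha)U^\dagger\Omega$ to a product state with a centered difference factor $U^\dagger\Omega$. This uses Eq.~\eqref{eq:boson-passive-displacement} and the commutation in Eq.~\eqref{eq:boson-unitary-copy-commutation}, which places the output in $\mathcal Z$. With completeness in hand, Eq.~\eqref{eq:universal-upper} supplies $\mathcal U_4$ and $\mathcal U_3$.

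\textbf{Three-copy bound.} The middle inequality in Eq.~\eqref{eq:boson-three-calibration} is Lemma~\ref{lem:boson-tetrahedron}. The linear corollaries follow from Eq.~\eqref{eq:bosonic-envelope-linear-bound}: $\ell_{\mathrm B}\ge 3d^2/32$, and $\tfrac49\cdot\tfrac3{32}=\tfrac1{24}$.

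\textbf{Zero set.} For the exactness claim, take $d_{\GB}(\psi)>0$. Then $\ell_{\mathrm B}>0$, so both rejections are positive. If instead $d_{\GB}(\psi)=0$, we still need a target that is actually attained, since a supremum alone is not enough. Here I would invoke closedness of $\GB$, via Lemma~\ref{lem:normal-attainment}, so that $\psi\in\GB$; completeness then gives zero rejection.

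\textbf{Main obstacle.} The delicate step is the composition. It needs $\ell_{1/8}$ to be applied at $d_{\mathcal Z}(\Xi_\psi)^2$, which lies in $[0,1]$, and $\ell_{1/4}$ to commute with the infimum over the noncompact group. I would handle both through monotonicity and continuity of $\ell_\kappa$, checking that $\ell_{1/4}$ takes values in $[0,1/4]$, so that the outer function is evaluated on its first branch.
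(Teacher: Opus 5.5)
Your proposal is correct and follows the paper's proof essentially step for step: the chain Lemma~\ref{lem:boson-lift-zero} $\to$ Lemma~\ref{lem:boson-quotients} $\to$ Theorem~\ref{thm:boson-base}, where, as you note, monotonicity of $\ell_{1/4}$ and $\ell_{1/8}$ suffices for the inequality. Completeness also matches, via the beam splitter sending equal displacements to the sum output, and so does the use of Lemma~\ref{lem:boson-tetrahedron} for three copies. The only difference is the zero-set claim. The paper argues structurally: zero rejection forces $\ket{\Xi_\psi}=\ket\chi\otimes\ket g$ with $\ket g\in\Gzero$, and then $U\ket\psi\in\Coh$ by Lemma~\ref{lem:boson-coherent-zero}. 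You instead combine positivity of $\ell_{\mathrm B}$ on $(0,1]$ with attainment from Lemma~\ref{lem:normal-attainment}. Both routes are valid, and yours is not circular, since that lemma's proof is independent of this theorem.
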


\begin{proof}
The coherent bound in Eq.~\eqref{eq:boson-coherent-calibration} and Lemma~\ref{lem:boson-quotients} give
\begin{align}
 \nonumber
 d_{\mathcal Z}(\Xi_\psi)^2
 &=\inf_U R_{\Coh,2}(U\psi)\\
 \nonumber
 &\ge\inf_U\ell_{1/4}\bigl(d_{\Coh}(U\psi)^2\bigr)\\
  \nonumber
 &=\ell_{1/4}\bigl(d_{\GB}(\psi)^2\bigr).
 \label{eq:boson-output-distance-bound}
\end{align}
The last equality uses Eq.~\eqref{eq:boson-gaussian-coherent-distance} and continuity and monotonicity of $\ell_{1/4}$, so it holds even if the infimum over $U$ is not attained. Equations~\eqref{eq:boson-four-rejection} and \eqref{eq:boson-lift-calibration} then give
\begin{align}
 R_{\mathrm{diff},4}(\psi)
 &\ge\ell_{1/8}\bigl(d_{\mathcal Z}(\Xi_\psi)^2\bigr) \nonumber\\
 &\ge\ell_{1/8}\!\left(\ell_{1/4}\bigl(d_{\GB}(\psi)^2\bigr)\right) \nonumber\\
 &=\ell_{\mathrm B}\bigl(d_{\GB}(\psi)^2\bigr).
\end{align}
Equation~\eqref{eq:boson-three-four-comparison} gives the three-copy lower bound.

For perfect completeness, write $\ket\gamma=D(\alpha)U\ket\Omega$ using Lemma~\ref{lem:boson-frames}. The beam splitter sends equal displacements to the sum output, and commutes with $U\otimes U$ by Eq.~\eqref{eq:boson-unitary-copy-commutation}. Thus, up to a global phase,
\begin{align}
 B\ket\gamma^{\otimes2}
   &=D(\sqrt2\alpha)U\ket\Omega\otimes U\ket\Omega.
\end{align}
The centered difference factor is accepted by the four-copy test. Perfect completeness of $P_{\GB,3}$ was established after Eq.~\eqref{eq:boson-three-projector}. Equation~\eqref{eq:universal-upper} therefore gives the upper bounds $\mathcal U_4$ and $\mathcal U_3$.

If either rejection vanishes, Eq.~\eqref{eq:boson-three-four-comparison} gives $R_{\mathrm{diff},4}(\psi)=0$. Lemma~\ref{lem:boson-lift-zero} then gives $\ket{\Xi_\psi}=\ket\chi\otimes\ket g$ with $\ket g\in\Gzero$. Choose a Gaussian unitary without displacement such that $U\ket g=\ket\Omega$. Commutation with $B$ gives
\begin{align}
 B(U\ket\psi)^{\otimes2}
   &=(U\otimes U)\ket{\Xi_\psi} \nonumber\\
   &=U\ket\chi\otimes\ket\Omega.
\end{align}
Lemma~\ref{lem:boson-coherent-zero} implies $U\ket\psi\in\Coh$. Lemma~\ref{lem:boson-frames}(iv) then gives $\ket\psi\in\GB$. The converse is perfect completeness.

Finally, Eq.~\eqref{eq:bosonic-envelope-linear-bound} gives
\begin{align}
 \ell_{\mathrm B}\bigl(d_{\GB}(\psi)^2\bigr)
   &\ge\frac3{32}d_{\GB}(\psi)^2.
\end{align}
Combining it with the two rejection lower bounds proves the stated linear inequalities.
\end{proof}

The coefficient $4/9$ is sharp for pure inputs. This is proved later by the expansions in Eq.~\eqref{eq:normal-three-four-sharpness} of Proposition~\ref{prop:normal-circle-four}, whose ratio tends to $4/9$.

\subsubsection{Two-copy impossibility}
\label{subsec:boson-minimality}

For two copies, preserving the equally weighted sum leaves only the identity or a sign change on the difference output. The sign change corresponds to swapping the inputs and leaves every tensor square unchanged. The following obstruction holds for every two-copy acceptance operator, including non-optical measurements.

\begin{proposition}[Impossibility of a two-copy full-Gaussian test]
\label{prop:boson-two-copy-impossible}
Let $0\le E\le I$ be an acceptance operator on $\mathcal F_n^{\otimes2}$. If
\begin{align}
 \ip{g^{\otimes2}}{Eg^{\otimes2}}&=1
        \qquad\text{for every }\ket g\in\GB,
\end{align}
then $E$ acts as the identity on $\Sym^2(\mathcal F_n)$. It therefore accepts every pure tensor square.
\end{proposition}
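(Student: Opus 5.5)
The plan is to show that the conditions $\ip{g^{\otimes2}}{Eg^{\otimes2}}=1$ for all $\ket g\in\GB$ force $(I-E)$ to vanish on a dense subset of $\Sym^2(\mathcal F_n)$. First, since $0\le I-E\le I$, the condition $\ip{g^{\otimes2}}{(I-E)g^{\otimes2}}=0$ gives $\|(I-E)^{1/2}\ket g^{\otimes2}\|_2=0$, hence $(I-E)\ket g^{\otimes2}=0$. So $E$ fixes every vector in the closed linear span $W$ of $\{\ket g^{\otimes2}:\ket g\in\GB\}$. It therefore suffices to prove $W=\Sym^2(\mathcal F_n)$. Then $E$ acts as the identity there, and every pure tensor square lies in $\Sym^2$.

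To identify $W$ I work in Bargmann coordinates. By Lemma~\ref{lem:boson-gaussian-coordinates}, up to nonzero constants, $g_{Z,b}(z)g_{Z,b}(w)=C\exp\bigl(\tfrac12z^TZz+\tfrac12w^TZw+b^T(z+w)\bigr)$. I pass to sum and difference coordinates $u=(z+w)/\sqrt2$ and $v=(z-w)/\sqrt2$, which is the beam splitter $B$ in Eq.~\eqref{eq:boson-beam-splitter} and is unitary. In these coordinates the exponent becomes $\tfrac12u^TZu+\tfrac12v^TZv+\sqrt2\,b^Tu$. The point is that the displacement now acts only on the sum variable, while $Z$ acts identically on both. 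The symmetric subspace corresponds to functions even in $v$.

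Next I extract spanning elements by differentiating within the family. Setting $Z=0$ shows $W$ contains $\exp(\sqrt2 b^Tu)$ for all $b$. Differentiating in the entries of $Z$ at $Z=0$ shows that $W$ contains all products $\exp(\sqrt2b^Tu)\,p(u,v)$, where $p$ is a polynomial built from monomials in $u_iu_j+v_iv_j$. This is not enough on its own, so instead I use the full exponential in $Z$. Derivatives in $b$ at $b=0$ generate arbitrary polynomial factors in $u$. The family $\exp(\tfrac12 v^TZv)$ with $Z$ ranging over the open set $\|Z\|_\infty<1$ has derivatives at $Z=0$ spanning all even polynomials in $v$. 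The coupling term $\exp(\tfrac12u^TZu)$ can be removed using the $u$-polynomial freedom, since the $b$-derivatives give every polynomial in $u$ times the same $Z$-exponential.

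Concretely, I show by induction on degree that every product $p(u)q(v)$ with $p$ arbitrary and $q$ even lies in $W$. Derivatives of order $k$ in $Z$ and order $m$ in $b$ at $(0,0)$ give $q_k(v)p_m(u)$ plus lower-order corrections that lie in $W$ by induction. These products span a dense subset of the even-in-$v$ Bargmann space, which is $\Sym^2$ after undoing $B$.

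The main obstacle is justifying that these derivatives stay in the closed span $W$ and that the induction cleanly separates the $u$ and $v$ dependence. For the first point, difference quotients of $Z\mapsto g_{Z,b}^{\otimes2}$ converge in norm near $Z=0$ by dominated convergence against the Gaussian weight. For the second point, the correction terms must be tracked carefully.
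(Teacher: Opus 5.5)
Your proof is correct and is essentially the paper's argument: pass to beam-splitter coordinates so the displacement sits only in the sum variable, justify the parameter derivatives of the Gaussian tensor squares at $Z=b=0$ by dominated convergence, and extract every product of an arbitrary polynomial in $u$ with an even polynomial in $v$; these are dense in the image of $\Sym^2(\mathcal F_n)$, so $(I-E)^{1/2}$ vanishes there. The paper replaces your induction with a one-line observation: the derivatives generate an algebra containing $u_iu_j$ and $u_iu_j+v_iv_j$, hence also $v_iv_j$. If you keep the induction, run it on the $v$-degree rather than on total degree, because the correction terms have the same total degree but strictly lower $v$-degree.
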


\begin{proof}
We prove that Gaussian tensor squares have dense linear span in $\Sym^2(\mathcal F_n)$. After the balanced beam splitter, their Bargmann functions, up to nonzero scalar factors, are
\begin{align}
 F_{Z,b}(z,w)&=\exp\!\left(\tfrac12z^TZz+\tfrac12w^TZw
                          +\sqrt2b^Tz\right),
 \label{eq:boson-dense-family}
\end{align}
where $Z=Z^T$, $\|Z\|_\infty<1$, and $b\in\mathbb C^n$. Near $Z=b=0$, their parameter Taylor series converges in the 2-norm. Indeed, restrict $\|Z\|_\infty\le\rho<1$ and $\|b\|_2\le M$. The squared integrand and every fixed parameter derivative are bounded by a polynomial times
\begin{align}
 \exp\!\left[-(1-\rho)(\|z\|_2^2+\|w\|_2^2)
                  +2\sqrt2M\|z\|_2\right],
\end{align}
which is integrable. Dominated convergence therefore justifies the parameter derivatives in the Fock-space 2-norm.

Each derivative at zero is a 2-norm limit of linear combinations of the functions in Eq.~\eqref{eq:boson-dense-family}, and hence belongs to their closed linear span. Differentiation gives
\begin{align}
 \left.\partial_{b_i}F_{Z,b}\right|_{Z=b=0}&=\sqrt2z_i, \nonumber\\
 \left.\partial_{Z_{i,j}}F_{Z,b}\right|_{Z=b=0}
  &=\begin{cases}z_i z_j+w_iw_j,&i<j, \nonumber\\
                (z_i^2+w_i^2)/2,&i=j.
    \end{cases}
\end{align}
Mixed derivatives give every product of these polynomials, up to a nonzero scalar. Their linear combinations therefore belong to the closed span. Since
\begin{align}
 w_iw_j&=(z_i z_j+w_iw_j)-z_i z_j,
\end{align}
these combinations include every polynomial containing only even total powers of $w$, with arbitrary polynomial dependence on $z$. They are dense in the subspace with even photon number in the difference output. Under $B$, swapping the inputs sends $w$ to $-w$, so this subspace is $B\Sym^2(\mathcal F_n)$. Nonzero normalization factors do not change the span, proving the density claim for normalized Gaussian states.

Positivity and perfect completeness give
\begin{align}
 \|(I-E)^{1/2}\ket g^{\otimes2}\|_2^2
   &=1-\ip{g^{\otimes2}}{Eg^{\otimes2}} \nonumber\\
   &=0.
\end{align}
The bounded operator $(I-E)^{1/2}$ vanishes on their linear span and hence on its closure, $\Sym^2(\mathcal F_n)$. Thus $E$ is the identity on that subspace.
\end{proof}

Together with Theorem~\ref{thm:boson-full}, this proves that three copies are necessary and sufficient for a nontrivial perfectly complete test of all pure bosonic Gaussians. Any procedure using at most two copies, including ancillas, randomness, or adaptation, has an overall two-copy acceptance operator. An unused copy can be adjoined if needed. The conclusion does not extend to tests with completeness error.

\subsection{Rejection coefficients and sharpness}
\label{subsec:boson-rejection-sharpness}

The previous sections establish global distance bounds. We now
determine how sensitively the tests detect small departures from
the target families, including for higher-copy tests. We separate
deviations into photon-number components and identify those with
the smallest rejection at leading order
(Eq.~\eqref{eq:boson-normal-degrees} and
Section~\ref{subsec:boson-higher-response}). Explicit states with
these leading coefficients prove local sharpness and provide
examples for the paper's copy lower bounds. These calculations
also give the coefficients used in the tolerant-testing guarantees.

The analysis starts from a closest target. We therefore first show
that such a target exists for every normalized bosonic input state,
without assuming a bound on its expected photon number.

\begin{lemma}[Existence of closest bosonic targets]\label{lem:normal-attainment}
For a fixed finite number of modes, every normalized input has a closest coherent, centered pure Gaussian, and arbitrary pure Gaussian state, without an expected photon-number assumption.
\end{lemma}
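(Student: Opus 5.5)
The plan is to write the overlap as a continuous function on the parameter space, show it is small outside a compact set, and then maximize over that compact set. By Lemma~\ref{lem:boson-gaussian-coordinates} and Lemma~\ref{lem:boson-frames}(iv), every target is parametrized up to phase. Coherent states are $c_\alpha$ with $\alpha\in\mathbb C^n$. Centered states are $g_Z$ with $Z$ in the open set $\mathcal D=\{Z=Z^T:\|Z\|_\infty<1\}$. Full Gaussians are $g_{Z,b}$ with $(Z,b)\in\mathcal D\times\mathbb C^n$. On each parameter space the map to Fock space is norm-continuous, by dominated convergence in the Bargmann integral~\eqref{eq:boson-bargmann} on compact parameter sets. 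Hence $F(\text{param})=|\ip{g}{\psi}|^2$ is continuous. Since $F_{\mathcal T}(\psi)>0$ (the vacuum already has nonzero overlap unless $\psi\perp\Omega$; in general some target has nonzero overlap because coherent states span densely), it suffices to show that $F\to0$ whenever the parameter leaves every compact set: $\|\alpha\|\to\infty$, $\|b\|\to\infty$, or $\|Z\|_\infty\to1$. A maximizing sequence then stays in a compact set, and continuity gives attainment.

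The key tool is weak convergence to zero. If $g_j$ are unit vectors whose parameters escape, I will show $g_j\rightharpoonup0$ weakly. Then $\ip{g_j}{\psi}\to0$ for every fixed $\psi$, with no energy assumption on $\psi$. It suffices to test against the dense set of finite-photon vectors, that is, polynomials in the Bargmann picture. This reduces to showing that every Taylor coefficient of the normalized Bargmann function tends to zero. Each coefficient is a fixed polynomial in $(Z,b)$ evaluated at the parameters, multiplied by the normalizer $C_{Z,b}$. In the coherent case the coefficient is $e^{-\|\alpha\|^2/2}\alpha^\beta/\sqrt{\beta!}\to0$ as $\|\alpha\|\to\infty$, since the polynomial growth is beaten by the Gaussian factor. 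In the centered case $|C_Z|=\det(I-Z^\dagger Z)^{1/4}\to0$ as $\|Z\|_\infty\to1$, while the coefficients of $\exp(\tfrac12 z^TZz)$ stay bounded because $\|Z\|\le1$. Hence every coefficient tends to zero.

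The main obstacle is the full Gaussian family, where $Z$ and $b$ can escape jointly, for example $\|b\|\to\infty$ while $\|Z\|_\infty\to1$ along a direction that partially cancels the decay. Here I will use the bound~\eqref{eq:normal-gaussian-normalizer}. After Takagi diagonalization, the coefficient of $\zeta^\beta$ involves terms of the form $u_j,v_j$ to polynomial powers times $\prod_j(1-s_j^2)^{1/4}\exp[-\tfrac12\sum_j(u_j^2/(1-s_j)+v_j^2/(1+s_j))]$. Working one mode at a time, the factor $(1-s)^{1/4}e^{-u^2/(2(1-s))}\,u^k$ tends to zero as $s\to1$ or $|u|\to\infty$, uniformly in the other variable, by elementary one-variable bounds. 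The factor $e^{-v^2/4}v^k$ tends to zero as $|v|\to\infty$. The Takagi unitary $V$ ranges over a compact group, so it only mixes coefficients with bounded weights. Passing to a subsequence where $V$ converges handles this mixing.

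An alternative, should the coefficient bookkeeping become messy, is to use Lemma~\ref{lem:boson-frames}: write $g=D(\alpha)U\Omega$ and reduce to showing that squeezing or large displacement sends the state weakly to zero, via the position-representation formulas~\eqref{eq:boson-position-displacement}--\eqref{eq:boson-squeezing-displacement}. Dilation by $e^{r}$ with $r\to\pm\infty$ and translation by $a\to\infty$ both converge weakly to zero in $L^2$.
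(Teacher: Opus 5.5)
Your proof is correct and follows essentially the same route as the paper. Fixed occupation coefficients of the normalized targets vanish as the parameters leave every compact set. Truncation in photon number (your weak-convergence step) turns this into decay of the overlap with an arbitrary input, and dominated convergence gives continuity, so a maximizing sequence converges inside the domain.

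There are two small points. First, the joint escape of $(Z,b)$ that you single out as the main obstacle is already settled by the bound $|C_{Z,b}|\le\det(I-Z^\dagger Z)^{1/4}\e^{-\|b\|_2^2/4}$ you cite. When $\|Z\|_\infty<1$, each coefficient polynomial is at most $C_\alpha(1+\|b\|_2)^{|\alpha|}$, so the decay cannot be cancelled. The per-mode Takagi bookkeeping, and any subsequence for $V$, is therefore unnecessary.

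Second, your claim $F_{\mathcal T}(\psi)>0$ fails for the centered family. Centered Gaussians contain only even photon numbers, so they are all orthogonal to, for example, a one-photon input. In that case the overlap vanishes for every centered Gaussian, and every one of them is trivially a closest target. The paper closes that case explicitly, and you need the same one-line addition.
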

\begin{proof}
For an occupation state $\ket\alpha$, the coefficient $\ip\alpha{g_{Z,b}}$ in Eq.~\eqref{eq:boson-full-form} is $C_{Z,b}$ times a polynomial in $Z,b$, of degree at most $|\alpha|$ in $b$. Since $\|Z\|_\infty<1$, Eq.~\eqref{eq:normal-gaussian-normalizer} gives a constant $C_\alpha$, independent of $Z,b$, such that
\begin{align}
 |\ip\alpha{g_{Z,b}}|
 &\le C_\alpha\det(I-Z^\dagger Z)^{1/4}
       (1+\|b\|_2)^{|\alpha|}\e^{-\|b\|_2^2/4}.
 \label{eq:boson-coefficient-escape}
\end{align}
This bound tends to zero as $\|b\|_2\to\infty$, uniformly in $Z$. It also tends to zero as $\|Z\|_\infty\to1$, uniformly in $b$: the determinant vanishes and the polynomial times the Gaussian is bounded. Thus every fixed occupation coefficient tends to zero when the parameters leave every compact subset of their domain.

The overlap with any fixed input then tends to zero as well. To see this, let $\Pi_{\le N}$ project onto total photon number at most $N$ and write $\ket\psi=\sum_\alpha\psi_\alpha\ket\alpha$. Cauchy--Schwarz gives
\begin{align}
 |\ip\psi{g_{Z,b}}|
 &\le\|(I-\Pi_{\le N})\ket\psi\|_2
    +\sum_{|\alpha|\le N}|\psi_\alpha|\,|\ip\alpha{g_{Z,b}}|.
\end{align}
Choose $N$ to make the first term small. The finite sum then tends to zero by Eq.~\eqref{eq:boson-coefficient-escape}.

Coherent states have dense linear span: a vector orthogonal to all of them has identically zero Bargmann function. Hence the supremum of the coherent or arbitrary-Gaussian overlap with a normalized input is positive. A sequence approaching this supremum therefore has a subsequence whose parameters converge inside the domain. On a compact parameter subset, choose $\|Z\|_\infty\le\rho<1$ and $\|b\|_2\le M$, and take $C_{Z,b}>0$. The squared Bargmann functions, including the integration weight in Eq.~\eqref{eq:boson-bargmann}, are bounded by a constant times
\begin{align}
 \exp\!\left[-(1-\rho)\|z\|_2^2+2M\|z\|_2\right].
\end{align}
This function is integrable, so dominated convergence gives continuity in the 2-norm and an attained maximum. Restricting to $Z=0$ proves the coherent case and to $b=0$ the centered case whenever its supremum is positive. If the centered supremum is zero, every centered Gaussian is a maximizer.
\end{proof}

At vacuum, varying the parameters in Eqs.~\eqref{eq:boson-centered-form}--\eqref{eq:boson-full-form} produces two-photon components to first order for centered Gaussians, one-photon components for coherent states, and both for arbitrary Gaussians. The parameter amplitudes are complex, so each component and its multiple by $i$ are allowed. For $0<d_{\mathcal T}(\psi)<1$, send a closest target to vacuum. Lemma~\ref{lem:normal-stationarity} restricts the deviation $\ket\eta$ in Eq.~\eqref{eq:normal-frame} to
\begin{align}
 \Gzero:&\quad k\ne0,2, \nonumber\\
 \Coh:&\quad k\ge2, \nonumber\\
 \GB:&\quad k\ge3.
\end{align}
Here $k$ is the total photon number.

These transformations preserve the corresponding tests. For centered Gaussians, applying the same Gaussian unitary without displacement to every copy commutes with copy rotations by Eq.~\eqref{eq:boson-unitary-copy-commutation}. For coherent states, equal displacements affect only the unrestricted equally weighted modes in Eq.~\eqref{eq:higher-copy-collective-mode}. Full Gaussians use both properties. Since each projector also accepts $\ket\Omega^{\otimes q}$, this proves perfect completeness of the bosonic projectors in Definition~\ref{def:normal-projectors}.

For a normalized $k$-photon state $\ket{\eta_k}$, let $a_k$ be the acceptance probability of $D(\Omega^{\otimes q})[\eta_k]/\sqrt q$. The calculations below show that $a_k$ depends only on $k$. All the projectors preserve total photon number. Thus, for a normalized deviation $\ket\eta=\sum_k\ket{\eta_k}$, orthogonality of different photon-number components and Eq.~\eqref{eq:tensor-derivative} give
\begin{align}
 \left\|Q\bigl(D(\Omega^{\otimes q})[\eta]\bigr)\right\|_2^2
 &=q\sum_k(1-a_k)\|\ket{\eta_k}\|_2^2,
 \label{eq:boson-normal-degrees}\\
 \sum_k\|\ket{\eta_k}\|_2^2&=1.
\end{align}
Consequently, the smallest coefficient of $t^2$ in Eq.~\eqref{eq:normal-W} is the minimum of $q(1-a_k)$ over the permitted photon numbers.

\subsubsection{Higher-copy tests}
\label{subsec:boson-higher-response}\label{subsec:boson-distance-curves}

We now prove the bosonic coefficients stated in Table~\ref{table:normal-coefficients} and give states attaining them. All examples below use $0\le t\le1$ and one physical mode. For additional modes, taking the overlap with vacuum in those modes gives, by Lemma~\ref{lem:boson-gaussian-coordinates}, a nonzero scalar multiple of a normalized target of the same family. The scalar has magnitude at most one. Thus the one-mode upper bounds on overlap also hold with vacuum in any additional modes.

\smallskip\noindent
\emph{Centered Gaussians.}
For Haar-distributed $SO(q)$ rotations with $q\ge2$, Eq.~\eqref{eq:higher-copy-first-order-acceptance} gives $a_k=q\mathbb E T^k$, where $T$ is one component of a uniformly distributed unit vector in $\mathbb R^q$. Odd moments vanish by symmetry, and Eq.~\eqref{eq:normal-real-moments} gives
\begin{align}
 a_k&=0\qquad(k\text{ odd}), \nonumber\\
 a_4&=q\frac3{q(q+2)}=\frac3{q+2}, \nonumber\\
 a_k&\le a_4\qquad(k\ge4\text{ even}).
\end{align}
Equation~\eqref{eq:boson-normal-degrees} therefore gives
\begin{align}
 \lambda_{\Gzero,q}&=q(1-a_4)=\frac{q(q-1)}{q+2},
\end{align}
attained by a four-photon component.

To verify sharpness as a function of distance, consider
\begin{align}
 \ket{\psi_t}&=\sqrt{1-t^2}\ket0+t\ket4.
\end{align}
By Eq.~\eqref{eq:boson-centered-form}, a one-mode centered Gaussian has vacuum and four-photon coefficients $(1-|Z|^2)^{1/4}$ and $(1-|Z|^2)^{1/4}\sqrt6Z^2/4$, up to a common phase. Concavity gives
\begin{align}
 |\ip g{\psi_t}|
 &\le(1-|Z|^2)^{1/4}
       \left(\sqrt{1-t^2}+\frac{\sqrt6}{4}t|Z|^2\right) \nonumber\\
 &\le\left(1-\frac{|Z|^2}{4}\right)
       \left(\sqrt{1-t^2}+\frac{\sqrt6}{4}t|Z|^2\right) \nonumber\\
 &=\sqrt{1-t^2}
   +\frac{\sqrt6t-\sqrt{1-t^2}}4|Z|^2
   -\frac{\sqrt6t}{16}|Z|^4 \nonumber\\
 &\le\sqrt{1-t^2}
 \qquad\left(t^2\le\frac17\right).
\end{align}
Vacuum attains this overlap, so $d_{\Gzero}(\psi_t)=t$ on this interval. For the centered and vacuum examples in Lemma~\ref{lem:testing-exact-curves}, one may instead use $\sqrt{1-t^2}\ket0+t\ket1$ for any $0\le t\le1$. Centered Gaussians have even photon number, so
\begin{align}
 \sup_{\ket g\in\Gzero}
 \left|\sqrt{1-t^2}\,\ip g0+t\,\ip g1\right|
 &=\sqrt{1-t^2}, \nonumber\\
 \left|\sqrt{1-t^2}\,\ip\Omega0+t\,\ip\Omega1\right|
 &=\sqrt{1-t^2}.
\end{align}

\smallskip\noindent
\emph{Coherent states.}
For $q\ge2$, projecting a $k$-photon state in one copy onto the equally weighted modes has amplitude $q^{-k/2}$. Each of the $q$ terms in $D(\Omega^{\otimes q})[\eta_k]$ projects onto the same state with this amplitude. Hence
\begin{align}
 a_k&=\frac1q\left(q\,q^{-k/2}\right)^2=q^{1-k} \nonumber\\
    &\le\frac1q\qquad(k\ge2), \nonumber\\
 \lambda_{\Coh,q}&=q\left(1-\frac1q\right)=q-1.
\end{align}
The minimum occurs for two photons. Consider
\begin{align}
 \ket{\psi_t}&=\sqrt{1-t^2}\ket0+t\ket2.
\end{align}
For a coherent state $\ket{c_\alpha}$, put $y=|\alpha|^2$. Equation~\eqref{eq:boson-coherent-form} gives
\begin{align}
 |\ip{c_\alpha}{\psi_t}|
 &\le\e^{-y/2}\left(\sqrt{1-t^2}+\frac{ty}{\sqrt2}\right), \nonumber\\
 \frac{\dd}{\dd y}
 \left[\e^{-y/2}\left(\sqrt{1-t^2}+\frac{ty}{\sqrt2}\right)\right]
 &=\e^{-y/2}\left(\frac t{\sqrt2}-\frac{\sqrt{1-t^2}}2
                         -\frac{ty}{2\sqrt2}\right) \nonumber\\
 &\le0\qquad\left(t^2\le\frac13,\quad y\ge0\right).
\end{align}
The upper bound is therefore maximized at $y=0$, where vacuum attains it. Thus $d_{\Coh}(\psi_t)=t$ for $t^2\le1/3$.

\smallskip\noindent
\emph{Arbitrary Gaussians.}
Take $q\ge3$. A Haar-distributed copy rotation fixing $\mathbf u$ applies a uniformly distributed $SO(q-1)$ rotation to the combinations orthogonal to $\mathbf u$. Each standard basis vector has component $1/\sqrt q$ along $\mathbf u$, and its remaining component has 2-norm $\sqrt{(q-1)/q}$. Thus every matrix entry has the distribution
\begin{align}
 X&=\frac1q+\frac{q-1}{q}\,T,
 \label{eq:normal-full-entry}
\end{align}
where $T$ is one component of a uniformly distributed unit vector in $\mathbb R^{q-1}$. Equation~\eqref{eq:higher-copy-first-order-acceptance} gives $a_k=q\mathbb E X^k$. Symmetry and the moments in Eq.~\eqref{eq:normal-real-moments}, with $q-1$ in place of $q$, give
\begin{align}
 \mathbb E X^3
 &=\frac1{q^3}+\frac{3(q-1)}{q^3}
   =\frac{3q-2}{q^3},
 \label{eq:normal-full-third}\\
 \mathbb E X^4
 &=\frac1{q^4}+\frac{6(q-1)}{q^4}
             +\frac{3(q-1)^3}{q^4(q+1)} \\
 &=\frac{3q^3-3q^2+10q-8}{q^4(q+1)}, \nonumber\\
 \mathbb E X^3-\mathbb E X^4
 &=\frac{4(q-1)(q-2)}{q^4(q+1)}>0.
\end{align}
The second moment used here is $\mathbb E T^2=1/(q-1)$, since the $q-1$ components have equal second moments and squared sum one. Because $|X|\le1$,
\begin{align}
 \mathbb E X^k&\le\mathbb E|X|^k \nonumber\\
             &\le\mathbb E X^4<\mathbb E X^3\qquad(k\ge4).
\end{align}
Thus the smallest rejection coefficient occurs for three photons
\begin{align}
 \lambda_{\GB,q}
 &=q(1-q\mathbb E X^3)=\frac{(q-1)(q-2)}q.
 \label{eq:boson-full-rejection-coefficient}
\end{align}

For sharpness as a function of distance, take
\begin{align}
 \ket{\psi_t}&=\sqrt{1-t^2}\ket0+t\ket3.
\end{align}
Write a normalized one-mode Gaussian as $g(z)=C_{Z,b}\exp(Zz^2/2+bz)$, as in Eq.~\eqref{eq:boson-full-form}, and put $y=|Z|^2+|b|^2$. Equation~\eqref{eq:normal-gaussian-normalizer} gives
\begin{align}
 |C_{Z,b}|&\le(1-|Z|^2)^{1/4}\e^{-|b|^2/4} \nonumber\\
          &\le\e^{-y/4}.
 \label{eq:testing-gaussian-C}
\end{align}
The three-photon coefficient is $C_{Z,b}(b^3+3Zb)/\sqrt6$, so
\begin{align}
 |\ip3g|
 &\le\frac{|C_{Z,b}|}{\sqrt6}\left(|b|^3+3|Z|\,|b|\right) \nonumber\\
 &\le\frac{|C_{Z,b}|}{\sqrt6}\left(y^{3/2}+\frac32y\right).
\end{align}
For $y>0$,
\begin{align}
 \frac{y^{3/2}+3y/2}{\e^{y/4}-1}
 &\le\frac{32(\sqrt y+3/2)}{y+8} \nonumber\\
 &\le10.
\end{align}
The first inequality uses $\e^{y/4}-1\ge y/4+y^2/32$. The second follows from
\begin{align}
 10(y+8)-32(\sqrt y+3/2)
 &=10\left(\sqrt y-\frac85\right)^2+\frac{32}5>0.
\end{align}
Together with $|C_{Z,b}|\e^{y/4}\le1$, these estimates imply
\begin{align}
 |\ip3g|&\le\frac{10}{\sqrt6}(1-|C_{Z,b}|), \nonumber\\
 |\ip g{\psi_t}|
 &\le\sqrt{1-t^2}|C_{Z,b}|+\frac{10t}{\sqrt6}(1-|C_{Z,b}|) \nonumber\\
 &\le\sqrt{1-t^2}\qquad\left(t^2\le\frac3{53}\right).
\end{align}
The case $y=0$ is vacuum and satisfies the same bound. Vacuum attains this overlap, so $d_{\GB}(\psi_t)=t$ on this interval, which contains $0\le t\le1/5$.

The four-photon, two-photon, and three-photon examples attain the respective minimum coefficients. Their exact distances and the tensor expansion in Eq.~\eqref{eq:normal-W} prove the sharp leading coefficients in Corollary~\ref{cor:normal-distance-sharpness}.

\subsubsection{Rotations of the four-copy difference outputs}

The full four-copy projector averages all rotations fixing $\mathbf u$. The four-copy test in Theorem~\ref{thm:boson-full} averages only rotations in the plane in Eq.~\eqref{eq:boson-difference-plane}. We now show that this test has the same smallest quadratic rejection coefficient, giving an improved distance bound near the Gaussian family.

\begin{proposition}[Four-copy difference-output bound]
\label{prop:normal-circle-four}
Every normalized input satisfies
\begin{align}
 R_{\mathrm{diff},4}(\psi)
 &\ge\max\left\{\ell_{\mathrm B}\bigl(d_{\GB}(\psi)^2\bigr),
                  h_{4,3/2}\bigl(d_{\GB}(\psi)^2\bigr)\right\}.
 \label{eq:normal-circle-four}
\end{align}
The smallest coefficient of $t^2$ in Eq.~\eqref{eq:normal-W}, over all normalized deviations from a closest full Gaussian, is $3/2$. For $\ket{\psi_t}=\sqrt{1-t^2}\ket0+t\ket3$,
\begin{subequations}\label{eq:normal-three-four-sharpness}
\begin{align}
 R_{\mathrm{diff},4}(\psi_t)&=\frac32t^2+o(t^2), \nonumber\\
 R_{\GB,3}(\psi_t)&=\frac23t^2+o(t^2).
 \label{eq:boson-three-copy-sharpness}
\end{align}
\end{subequations}
\end{proposition}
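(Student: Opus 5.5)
The first lower bound $\ell_{\mathrm B}$ is already Eq.~\eqref{eq:boson-four-calibration} of Theorem~\ref{thm:boson-full}. The plan therefore reduces to three steps: compute the smallest first-order coefficient $\lambda=3/2$ for $P_{\mathrm{diff},4}$, apply Lemma~\ref{lem:normal-bound} with $q=4$, and evaluate the two explicit expansions in Eq.~\eqref{eq:normal-three-four-sharpness}.

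\textbf{Reduction to vacuum.} For a closest full Gaussian $D(\alpha)U\ket\Omega$, we move it to vacuum by applying $(D(\alpha)U)^{-1}$ to every copy. We must check that $P_{\mathrm{diff},4}$ commutes with this product action. For $U^{\otimes4}$ this follows from Eq.~\eqref{eq:boson-unitary-copy-commutation}. For equal displacements it holds because the rotations in the plane of Eq.~\eqref{eq:boson-difference-plane} fix the collective direction $\mathbf u$, and equal displacements act only on $\mathbf u$. Lemma~\ref{lem:normal-stationarity}, as summarized in Table~\ref{table:normal-allowed-degrees}, then restricts $\ket\eta$ to photon degrees $k\ge3$.

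\textbf{Coefficient via Lemma~\ref{lem:normal-transition}.} The difference-output projector averages a compact group of real orthogonal matrices $O=I-d_1d_1^{\mathsf T}-d_2d_2^{\mathsf T}+R_\theta$, where $R_\theta$ rotates the plane spanned by $d_1,d_2$. By Eq.~\eqref{eq:higher-copy-first-order-acceptance}, the acceptance of a $k$-photon component is $a_k=\tfrac14\sum_{i,j}\mathbb E_\theta O_{j,i}^k$. The matrix entries follow from $d_1,d_2$.
\begin{itemize}
\item Diagonal entries: $O_{ii}=\tfrac12(1+\cos\theta)$.
\item Partner entries: $O_{12}=O_{21}=\tfrac12(1-\cos\theta)$, and similarly $O_{34}=O_{43}$.
\item Cross-pair entries such as $O_{13}$ and $O_{31}$: these are $\pm\tfrac12\sin\theta$ with alternating signs.
\end{itemize}
For $k=3$ the odd powers of $\sin\theta$ average to zero. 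Using $\mathbb E(1\pm\cos\theta)^3=1+\tfrac32=\tfrac52$, we get $a_3=\tfrac14\cdot\tfrac18\bigl(4\cdot\tfrac52+4\cdot\tfrac52\bigr)=\tfrac58$, so $4(1-a_3)=\tfrac32$.

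For $k\ge4$ we need $a_k\le a_3$. Every entry has $|O_{ji}|\le1$, and the fourth moments are computable directly, so a monotonicity or direct bound $\mathbb E|O_{ji}|^k\le\mathbb E|O_{ji}|^4$ should suffice. Summing these yields $a_4=\tfrac14\bigl[8\cdot\tfrac{35}{128}+8\cdot\tfrac{3}{128}\bigr]=\tfrac{19}{32}<\tfrac58$.

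Equation~\eqref{eq:boson-normal-degrees} then gives $\lambda=3/2$, attained on three-photon components. Lemma~\ref{lem:normal-bound} yields $R_{\mathrm{diff},4}\ge h_{4,3/2}$, and taking the maximum with $\ell_{\mathrm B}$ proves Eq.~\eqref{eq:normal-circle-four}.

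\textbf{Sharpness.} For $\ket{\psi_t}=\sqrt{1-t^2}\ket0+t\ket3$, vacuum is a closest target by Section~\ref{subsec:boson-distance-curves}. The expansion Eq.~\eqref{eq:normal-W} gives $R_{\mathrm{diff},4}=4(1-a_3)t^2+o(t^2)=\tfrac32t^2+o(t^2)$. For the three-copy test, Eq.~\eqref{eq:boson-full-rejection-coefficient} with $q=3$ gives $\lambda_{\GB,3}=\tfrac{2\cdot1}{3}=\tfrac23$, and the same three-photon example attains it. The ratio of the two leading coefficients is $(2/3)/(3/2)=4/9$, which shows that the constant in Lemma~\ref{lem:boson-tetrahedron} is sharp.

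\textbf{Main obstacle.} The delicate step is the uniform bound $a_k\le a_3$ for every $k\ge4$. Some entries take the value $\tfrac12(1-\cos\theta)$, so their moments do not vanish for odd $k$. A crude bound $\mathbb E|O|^k\le\mathbb E|O|^4$ loses the sign cancellations, so I would verify explicitly that the resulting total $a_4$ computed with absolute values still lies below $\tfrac58$. If it does not, I would instead compute $a_k$ exactly using the beta-type integrals $\mathbb E\cos^{2m}\theta$.
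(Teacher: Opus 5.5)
Your proposal is correct and follows essentially the same route as the paper: move a closest full Gaussian to vacuum, use Lemma~\ref{lem:normal-transition} with the row entries $\tfrac12(1\pm\cos\theta)$ and $\pm\tfrac12\sin\theta$ to get $a_3=5/8$ and $a_k<a_3$ for $k\ge4$, then apply Lemma~\ref{lem:normal-bound} with $\lambda=3/2$ and read off the two expansions from the three-photon example and Eq.~\eqref{eq:boson-full-rejection-coefficient}. The step you flag as the main obstacle is already settled by your own computation: since $|O_{ji}|\le1$, every $k\ge4$ (odd or even) satisfies $a_k\le\frac14\sum_{i,j}\mathbb E|O_{ji}|^k\le\frac14\sum_{i,j}\mathbb E\,O_{ji}^4=a_4=19/32<5/8$, because the fourth power carries no signs to lose, and this single bound replaces the paper's separate odd/even treatment.
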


\begin{proof}
Send a closest full Gaussian to vacuum using the test-preserving transformations described above. The deviation then contains only photon numbers $k\ge3$. Each row of a rotation in the plane in Eq.~\eqref{eq:boson-difference-plane} contains, up to permutation,
\begin{align}
 A&=\frac{1+\cos\theta}{2},\qquad
 B=\frac{1-\cos\theta}{2}, \nonumber\\
 C&=\frac{\sin\theta}{2},\qquad -C.
\end{align}
Equation~\eqref{eq:higher-copy-first-order-acceptance} gives
\begin{align}
 a_k&=\mathbb E_\theta\bigl[A^k+B^k+C^k+(-C)^k\bigr],
 \qquad\theta\text{ uniform on }[0,2\pi].
\end{align}
For odd $k\ge3$, the last two terms cancel. Since $A,B\in[0,1]$,
\begin{align}
 a_k&\le a_3 \nonumber\\
    &=\mathbb E_\theta\frac{1+3\cos^2\theta}{4}=\frac58.
\end{align}
For even $k\ge4$, each nonnegative term decreases with $k$, giving
\begin{align}
 a_k&\le a_4 \nonumber\\
 &=\mathbb E_\theta\left[
   \frac{1+6\cos^2\theta+\cos^4\theta}{8}
                    +\frac{\sin^4\theta}{8}\right] \nonumber\\
 &=\frac18\left(1+3+\frac38+\frac38\right) \nonumber\\
 &=\frac{19}{32}<\frac58.
\end{align}
Equation~\eqref{eq:boson-normal-degrees} therefore gives a minimum coefficient $4(1-5/8)=3/2$, attained by a three-photon component. Lemma~\ref{lem:normal-bound} gives the lower bound $h_{4,3/2}(d_{\GB}(\psi)^2)$, and combining it with Eq.~\eqref{eq:boson-four-calibration} proves Eq.~\eqref{eq:normal-circle-four}.

For an accepted target $\ket g$ and a fixed normalized deviation $\ket\eta$ with $\ip g\eta=0$, Eq.~\eqref{eq:tensor-derivative} gives
\begin{align}
 Q\bigl(\sqrt{1-t^2}\ket g+t\ket\eta\bigr)^{\otimes q}
 &=tQ\left(\sum_{a=1}^q\ket g^{\otimes(a-1)}\otimes\ket\eta
                         \otimes\ket g^{\otimes(q-a)}\right)+O_q(t^2), \nonumber\\
 R\bigl(\sqrt{1-t^2}\ket g+t\ket\eta\bigr)
 &=t^2\left\|Q\bigl(D(g^{\otimes q})[\eta]\bigr)\right\|_2^2+o(t^2).
\end{align}
The first remainder is in the 2-norm. The three-photon coefficient just computed gives the first expansion in Eq.~\eqref{eq:normal-three-four-sharpness}. Equation~\eqref{eq:boson-full-rejection-coefficient} at $q=3$ gives Eq.~\eqref{eq:boson-three-copy-sharpness}.
\end{proof}

The composition bound in Eq.~\eqref{eq:ellB-definition} has leading coefficient $1/2$, while the same four-copy experiment has sharp local coefficient $3/2$. Equation~\eqref{eq:normal-circle-four} combines that global bound with the stronger bound near the target family.

\subsection{Optical measurements}\label{subsec:boson-measurements}

We now turn to discussing optical measurements. For centered bosons, the two-copy rotation generator is, up to sign,
\begin{equation}
 G=i\sum_{j=1}^n(a_jb_j^\dagger-a_j^\dagger b_j),
\end{equation}
where $a_j,b_j$ annihilate mode $j$ in the two copies. Its spectral projection at zero is the bounded group-average projector $P_{\Gzero,2}$. A phase shift and a balanced beam splitter diagonalize the two-dimensional single-particle rotation matrix. If $V$ is the resulting passive unitary and $N_\pm$ are the two total output photon-number operators, then, with a choice of output ordering,
\begin{align}
 VGV^\dagger&=N_+-N_-,
 \label{eq:boson-optical-generator}\\
 VP_{\Gzero,2}V^\dagger&=\mathbf1_{\{0\}}(N_+-N_-).
 \label{eq:boson-optical-projector}
\end{align}
Here $\mathbf1_{\{0\}}(N_+-N_-)$ denotes the orthogonal projector onto the zero-eigenvalue subspace of $N_+-N_-$. Photon counting implements this projector by accepting exactly when the total photon counts in the two output groups are equal.

The full bosonic three-copy measurement has the same form after a three-port change of copy coordinates. The equally weighted combination of the three copies, with coefficient vector $(1,1,1)/\sqrt3$, is left unrestricted. On the two orthogonal combinations the rotation generator has eigenvalues $+1,-1$ in a suitable complex basis. For the corresponding passive unitary $V_3$,
\begin{equation}
 V_3P_{\GB,3}V_3^\dagger
    =I_{\rm collective}\otimes\mathbf1_{\{0\}}(N_{\rm rel,+}-N_{\rm rel,-}).
\end{equation}
Acceptance is equality of the two total relative-output photon counts.

The four-copy difference-port test first applies balanced beam splitters to pairs $(1,2)$ and $(3,4)$. It then applies the centered rotation test to the two difference outputs and discards the sum outputs. This realizes the projector used in the composition proof. The full four-copy $SO(3)$ invariant projector is a different, stronger measurement.

Coherent-state testing uses a balanced beam splitter and accepts precisely when the entire difference output is vacuum. Vacuum testing directly measures its one-copy vacuum projector. These two tests need only distinguish vacuum from a nonzero photon count. They do not need to resolve the magnitude of that count.

For general $q$, the hierarchy specifies joint measurements by bounded orthogonal projectors onto invariant spaces, or onto relative vacuum for coherent states. Their existence gives an information-theoretic copy bound. No general efficient circuit construction for every higher-copy projector is required or proved here.

\end{document}